\def\@copyrightspace{\relax}
\newtheorem{theorem}{Theorem}[section]
\newtheorem{corollary}[theorem]{Corollary}
\newtheorem{lemma}[theorem]{Lemma}
\newtheorem{proposition}[theorem]{Proposition}
\newtheorem{definition}[theorem]{Definition}
\title{On the Capacity Regions of\\Single-Channel and Multi-Channel Full-Duplex Links}
\author{\alignauthor 
Jelena Mara\v{s}evi\'{c} and Gil Zussman\\
      \affaddr{Department of Electrical Engineering,}\\ 
      \affaddr{Columbia University}\\
      \affaddr{New York, NY, 10027, USA}\\
      \email{\{jelena, gil\}@ee.columbia.edu}
} 
\newif\iffullpaper
\begin{document}
\maketitle

\begin{abstract}
We study the achievable \emph{capacity regions of full-duplex links} in the \emph{single- and multi-channel cases} (in the latter case, the channels are assumed to be orthogonal -- e.g., OFDM). We present analytical results that characterize the uplink and downlink capacity region and efficient algorithms for computing rate pairs at the region's boundary. We also provide near-optimal and heuristic algorithms that ``convexify'' the capacity region when it is not convex. The convexified region corresponds to a combination of a few full-duplex rates (i.e., to time sharing between different operation modes). The algorithms can be used for theoretical characterization of the capacity region as well as for resource (time, power, and channel) allocation with the objective of maximizing the sum of the rates when one of them (uplink or downlink) must be guaranteed (e.g., due to QoS considerations). We numerically illustrate the capacity regions and the rate gains (compared to time division duplex) for various channel and cancellation scenarios. The analytical results provide insights into the properties of the full-duplex capacity region and are essential for future development of scheduling, channel allocation, and power control algorithms.
\end{abstract}

\section{Introduction}

Existing wireless systems are Half-Duplex (HD), where separating the  transmitted and received signal in either frequency or time causes inefficient utilization of the wireless resources. An emerging technology that can substantially improve spectrum efficiency is Full-Duplex (FD) wireless, namely, simultaneous transmission and reception on the same frequency channel~\cite{FullDuplex_RiceU_JSCAinvited14}.
The main challenge in implementing FD devices is the high Self-Interference (SI) caused by signal leakage from the transmitter into the receiver. The SI signal is usually many orders of magnitude higher than the desired signal at the receiver's input, requiring over 100dB of Self-Interference Cancellation (SIC).\footnote{The SI signal power has to be reduced by $10^{10}$ times.} 
Recently, several groups demonstrated that combining techniques in the analog and digital domains can provide SIC that can support practical applications (e.g., \cite{choi2010achieving, jain2011practical,  khojastepour2011case, aryafar2012midu,Zhou_WBSIC_ISSCC15,bharadia2013full, duarte2012characterization}).

\begin{figure}[t!]
\centering
\hspace{-15pt}\subfloat[]{\label{fig:asymmetric-req}\includegraphics[height = .9in]{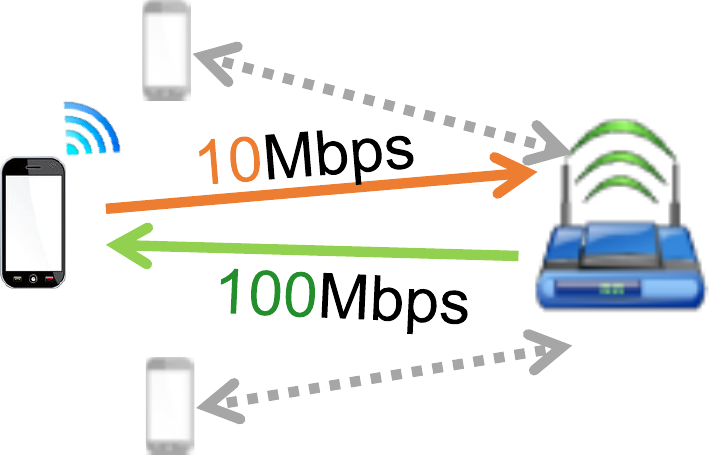}}
\hspace{20pt}\subfloat[]{\label{fig:PA}\includegraphics[height = 1in]{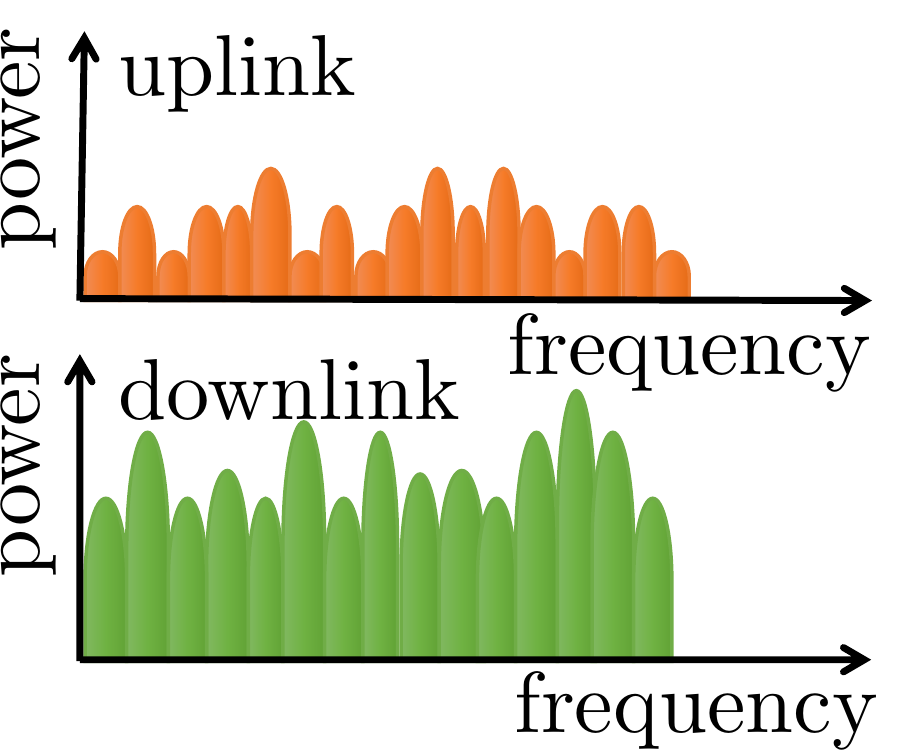}}\hspace{\fill}\\\vspace{-10pt}
\hspace{\fill}\subfloat[]{\label{fig:channels-off}\includegraphics[height = 1in]{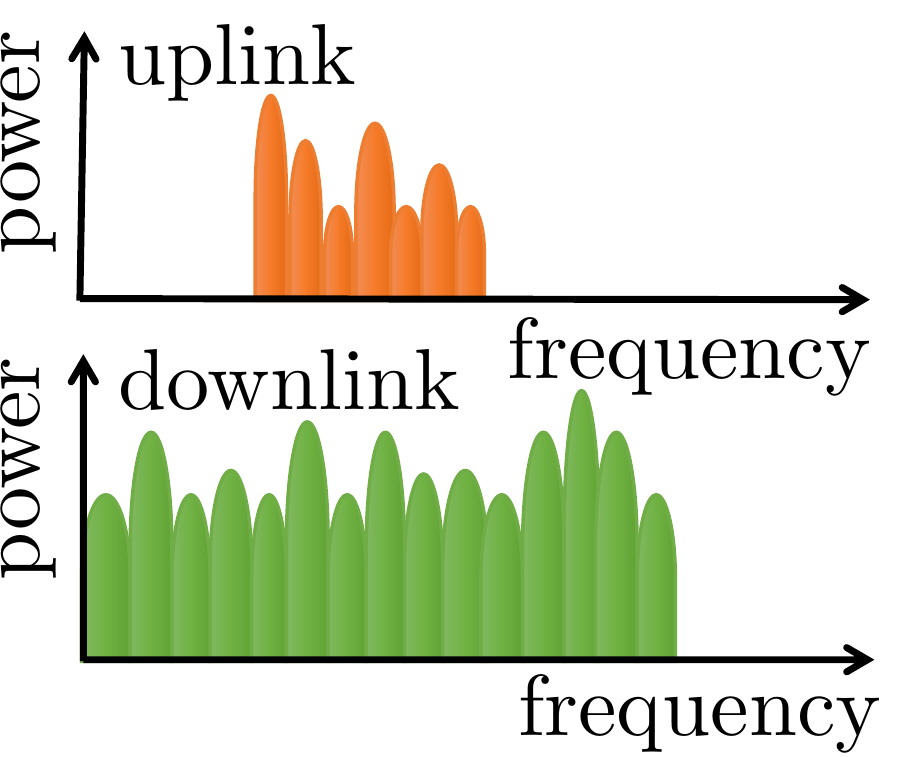}}\hspace{\fill}
\subfloat[]{\label{fig:scheduling}\includegraphics[height = 1in]{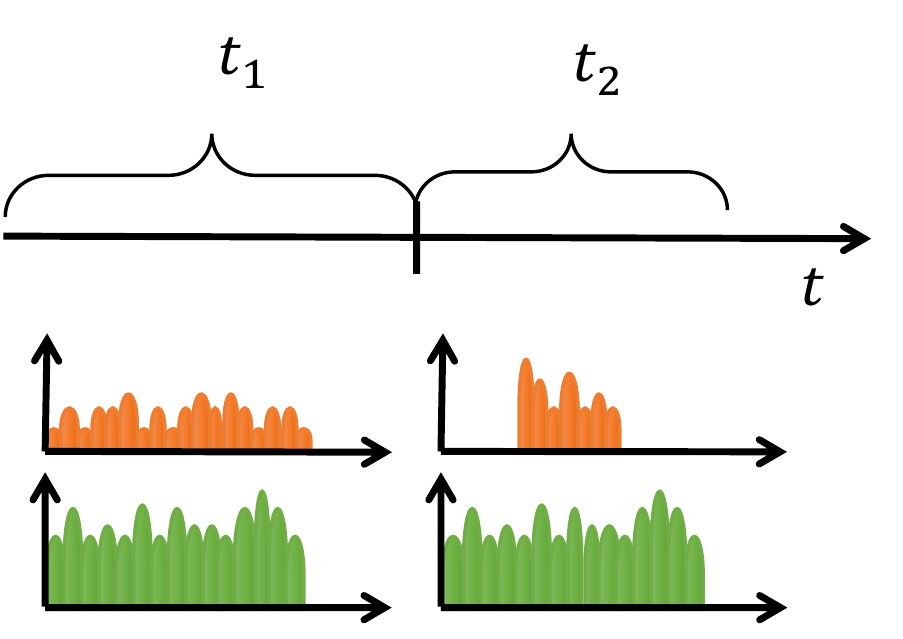}}\hspace{\fill}\vspace{-5pt}
\caption{\protect\subref{fig:asymmetric-req} An example of different rate requirements on a full-duplex link and possible policies to meet the requirements: \protect\subref{fig:PA} reduction of the the power levels on the UL channels, \protect\subref{fig:channels-off} allocation of a subset of the channels to the UL, and \protect\subref{fig:scheduling} time-sharing between two FD rate pairs (TDFD).}
\label{fig:motivation}\vspace{-10pt}
\end{figure}

The first implementations of FD receivers optimistically envisioned $2\times$ data rate improvement (e.g., \cite{jain2011practical, bharadia2013full}). However, such a rate increase requires perfect SIC, which is extremely challenging to achieve. 
While a few recent papers considered non-negligible SI and the resulting rate gains~\cite{ahmed2013rate, li2014rate, cheng2013optimal,full-duplex-sigmetrics}, \emph{there is still no explicit characterization of the FD capacity region for a given profile of residual SI over frequency\footnote{In compact FD radio implementations (e.g.,~\cite{Zhou_WBSIC_ISSCC15}), the residual SI can vary wildly with the frequency.} and parameters of the wireless signal}. 
Most recent research has focused on maximizing the total throughput without considering Quality of Service (QoS) requirements. Namely, there has been very limited work on asymmetric traffic requirements on the uplink (UL) and downlink (DL)~\cite{full-duplex-sigmetrics,li2014rate,bi2015rate,korpi2015achievable}. 

While in Time Division Duplex (TDD) systems asymmetric traffic can be supported via time-sharing between the UL and DL, in FD the dependence of the bi-directional rates on the transmission power levels and Signal-to-Noise Ratio (SNR) levels is much more complex. As shown in Fig.~\ref{fig:motivation}, any (combination) of the following policies can be used: (i) FD with reduced transmission power at one of the stations, (ii) FD with fewer channels allocated to one of the stations, and (iii) time sharing between a few types of FD transmissions. 

We study asymmetric link traffic and analytically characterize the capacity region (i.e., all possible combinations of UL and DL rates) under non-negligible SI. Such characterization has theoretical importance, since it provides insights into the achievable gains from FD, thereby allowing to quantify the benefits in relation to the costs (in hardware and algorithmic complexity, power consumption, etc.). It also has practical importance, since it supports the development of algorithms for rate allocation under different UL and DL requirements. Such algorithms will determine the required combinations of the policies illustrated in Fig.~\ref{fig:motivation}. 

We first consider the case where both stations transmit on a \emph{single channel} and the remaining SI is a constant fraction of the transmitted power \cite{full-duplex-sigmetrics,bi2015rate}. We study the structural properties of the FD capacity region  and derive necessary and sufficient conditions for its convexity. Based on the properties, we present a simple and fast algorithm to ``convexify'' the region.\footnote{A convex region is desirable, since most resource allocation and scheduling algorithms rely on convexity and providing performance guarantees for a non-convex region is hard.} The convexified region combines (via time sharing) different FD rate pairs (see Fig.~\ref{fig:motivation}\subref{fig:scheduling}) and we refer to it as the Time Division Full-Duplex (TDFD) region. 
The algorithm finds the points at the region's boundary, given a constraint on one of the (UL or DL) rates.

We then consider the the \emph{multi-channel} case in which channels are orthogonal, as in Orthogonal Frequency Division Multiplexing (OFDM). We assume that the \emph{shape of the power allocation is fixed} but the total transmission power can be varied. Namely, the ratios between power levels at different channels are given. 
For each channel, the remaining SI is some fraction of the transmitted power \cite{full-duplex-sigmetrics,cheng2013optimal,zheng2015joint}. We characterize the FD capacity region and analytically show that any point on the region can be computed with a low-complexity binary search. 
We also focus on determining the TDFD capacity region, which due to the lack of structure cannot in general be obtained via binary search. However, we argue that for any practical input, the TDFD capacity region can be determined in real time.

Finally, we consider the TDFD capacity region in the \emph{multi-channel} case under a \emph{general power allocation}, (i.e., the power level at each channel is a decision variable). In this case, maximizing one of the rates when the other rate is given is a non-convex problem which is hard to solve. However, we develop an algorithm that under certain mild restrictions converges to a stationary point that in practice is a global maximum. Although for most practical cases, the algorithm is near-optimal and runs in polynomial time, its running time is not suitable for a real-time implementation. Hence, we develop a simple heuristic and show numerically that in most cases it has similar performance. 

For all the cases mentioned above, we present extensive numerical results that illustrate the capacity regions and the rate gains (compared to TDD) as a function of the receivers' SNR levels and SIC levels. We also highlight the intuition behind the performance of the different algorithms.

To summarize, the main contributions of the paper are two-fold: (i) it provides a fundamental characterization and structural understanding of the FD capacity regions, and (ii) the rate maximization algorithms, designed for asymmetrical traffic requirements, can serve as resource allocation building blocks for future FD MAC protocols.

The rest of the paper is organized as follows. Sections \ref{sec:related-work} and \ref{sec:model} review related work and outline the model. Section \ref{sec:single} studies the single channel case. Sections \ref{sec:multi} and \ref{sec:multi-power} study the multi-channel cases with fixed and general power allocations. We conclude in Section \ref{sec:conclusion}. \iffullpaper\else Due to space constraints, some of the proofs are omitted and appear in a technical report \cite{capacity-region-full}. \fi

\section{Related Work}\label{sec:related-work}

Various challenges related to FD wireless recently attracted significant attention. These include FD radio/system design \cite{choi2010achieving, jain2011practical,  khojastepour2011case, aryafar2012midu,Zhou_WBSIC_ISSCC15,bharadia2013full} as well as rate gain evaluation and resource allocation \cite{bai2013distributed, ahmed2013rate, li2014rate, xie2014does, cheng2013optimal,yang2015scheduling,zheng2015joint,bi2015rate, full-duplex-sigmetrics}. A large body of (analytical) work \cite{bai2013distributed, xie2014does, yang2015scheduling} focuses on \emph{perfect SIC} while we  focus on the more realistic model of imperfect SIC.

Rate gains and power allocation under \emph{imperfect SIC} were studied in \cite{ahmed2013rate, cheng2013optimal, li2014rate, full-duplex-sigmetrics,bi2015rate,zheng2015joint}. For the single channel case, \cite{ahmed2013rate} derives a sufficient condition for FD to outperform TDD in terms of sum UL and DL rates. 
However, \cite{ahmed2013rate} does not quantify the rate gains nor consider the multi-channel case. 

Power allocation for maximizing the sum of the UL and DL rates for the single- and multi-channel cases was studied in \cite{full-duplex-sigmetrics,cheng2013optimal}. The  maximization only determines a single point on the capacity region and does not imply anything about the rest of the region, which is our focus. 
While \cite{full-duplex-sigmetrics} (implicitly) constructs the FD capacity region in the single channel case (restated here as Proposition \ref{prop:FD-cap-region}), it does not derive any structural properties of the region, nor does it consider the multi-channel case or a combination of FD and TDD.

The capacity region for an FD MIMO two-way relay channel was studied in \cite{zheng2015joint} as a joint problem of beamforming and power allocation. For a fixed beamforming, the problem reduces to determining a single channel FD capacity region. Yet, the joint problem is significantly different from the problems considered here. 
The FD capacity region for multiple channels was considered in \cite{li2014rate}. 
While \cite{li2014rate} considers both fixed and general power allocation for determining an FD capacity region, the analytical results are obtained only for the fixed power case and the non-convex problem of general power allocation was addressed heuristically. Specifically, for the fixed power case, our proof of Lemma \ref{lemma:strictly-fd-ofdm-equal}  is more accurate than the proof of Theorem 3 in \cite{li2014rate} (see \iffullpaper the proof of Lemma \ref{lemma:strictly-fd-ofdm-equal}\else\cite{capacity-region-full}\fi).

The TDFD capacity region was studied in \cite{korpi2015achievable} only via simulation and in \cite{bi2015rate} analytically but mainly for the single-channel case. 
The ``convexification'' of the FD region in \cite{bi2015rate} is performed over a discrete set of rate pairs, which requires linear computation in the set size, assuming that the points are sorted (e.g., Ch.\ 33 in \cite{cormen2009introduction}). 
Our results for a single channel rely on the structural properties of the FD capacity region and do not require the set of FD rate pairs to be discrete. Moreover, the computation for determining the convexified region is logarithmic (see Section \ref{sec:single-algo}).

To the best of our knowledge, this is the first thorough 
study of the capacity region and rate gains of FD and TDFD.

\begin{figure*}[t]
\centering
\subfloat[]{\label{fig:gamma_bb}\includegraphics[scale = 0.22 ]{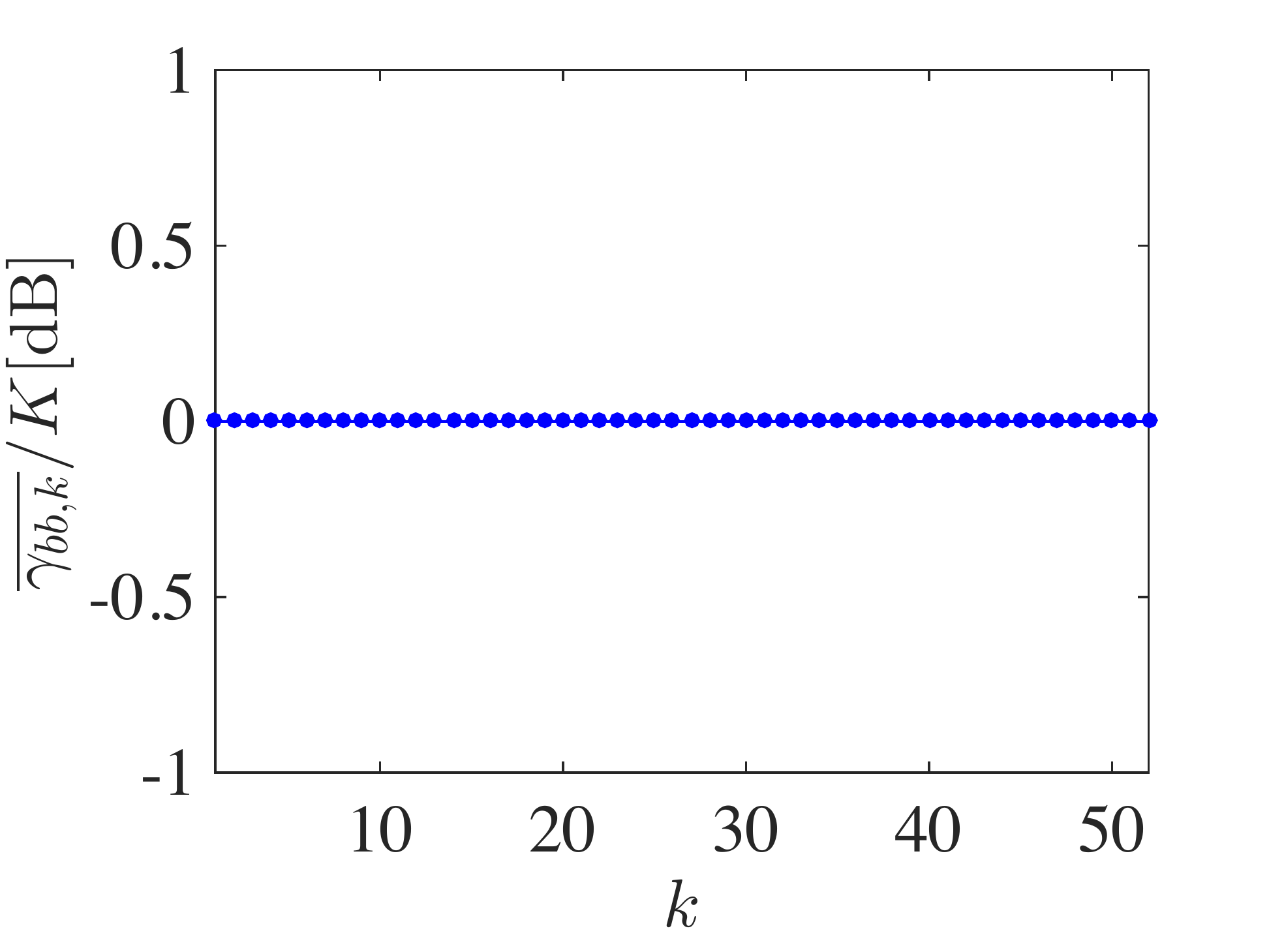}}\hspace{\fill}
\subfloat[]{\label{fig:gamma_mm_conv}\includegraphics[scale = 0.22]{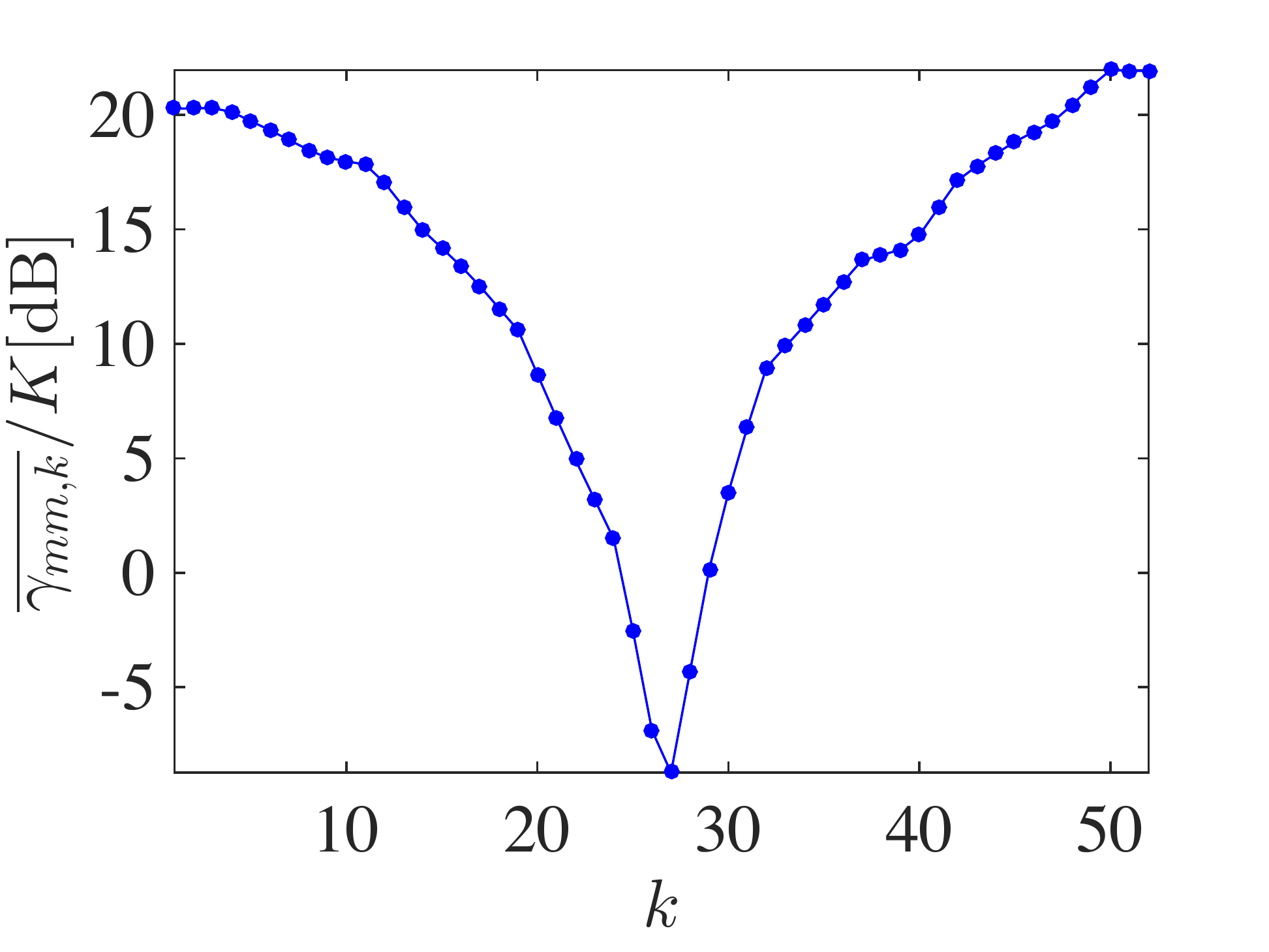}}\hspace{\fill}
\subfloat[]{\label{fig:gamma_mm_FDE_1}\includegraphics[scale = 0.22]{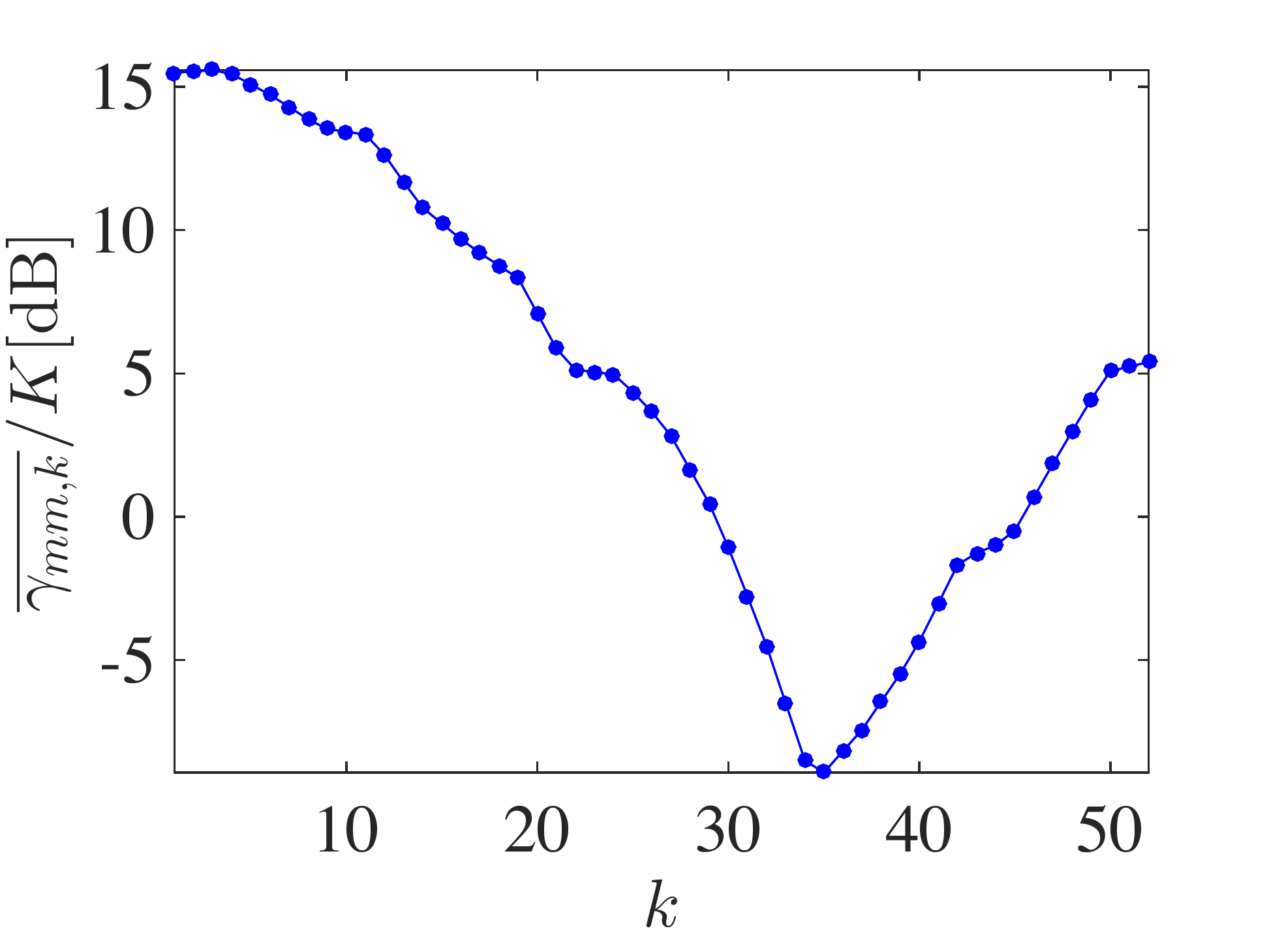}}\hspace{\fill}
\subfloat[]{\label{fig:gamma_mm_FDE_2}\includegraphics[scale = 0.22]{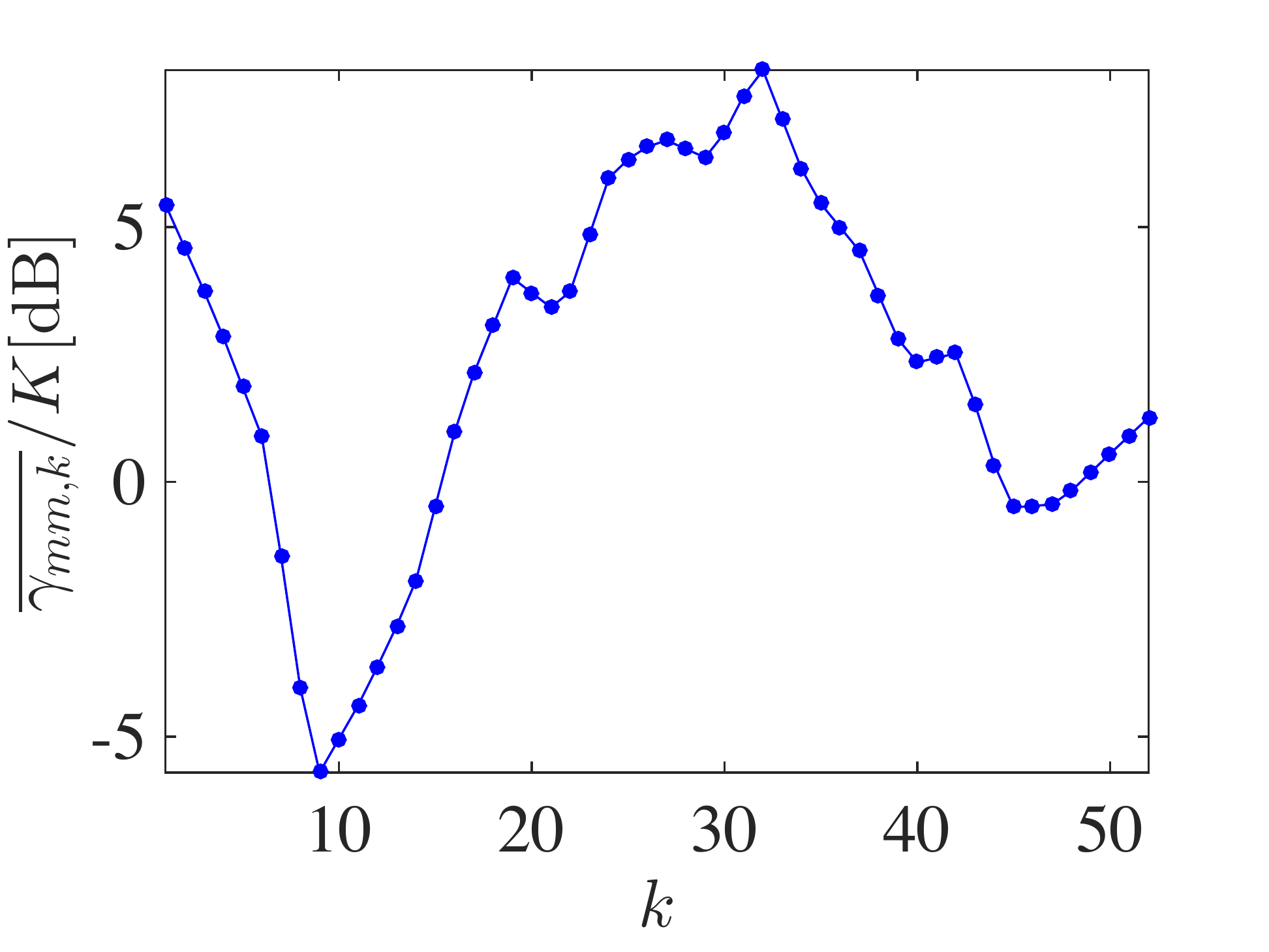}}
\vspace{-5pt}\caption{Considered cancellation profiles for the FD receiver  \protect \subref{fig:gamma_bb} at the BS \cite{bharadia2013full} and \protect\subref{fig:gamma_mm_conv}, \protect\subref{fig:gamma_mm_FDE_1}, \protect\subref{fig:gamma_mm_FDE_2} at the MS \cite{Zhou_WBSIC_ISSCC15}.}
\label{fig:cancellation-profiles}
\vspace{-10pt}
\end{figure*}

\section{Model and Notation}\label{sec:model}
We focus on the problem of determining the capacity region of an FD bidirectional link between two stations. For brevity, we refer to them as a Mobile Station (MS) and a Base Station (BS) and to the corresponding links as uplink (UL) and downlink (DL). 
For the number of channels $K$, we consider: (i) the single-channel case ($K=1$), and (ii) the multi-channel case ($K>1$), where we assume that the channels are orthogonal to each other. In the numerical evaluations, when $K>1$ we adopt $K=52$. We use $k$ to denote the channel index. When $K=1$, we omit the indices.  

$P_{u, k}$ denotes the transmission power level at station $u\in\{m, b\}$ on channel $k$ and $\overline{P_u}$ denotes the maximum sum of transmission power levels at station $u$: $\sum_{k=1}^K P_{u, k}\leq \overline{P_u}$, where $u\in \{m, b\}$. For simplicity, we introduce notation for the normalized transmission power levels: $\alpha_{b, k} = P_{b, k}/\overline{P_b}$, $\alpha_{m, k} = P_{m, k}/\overline{P_m}$. The constraints for the sum of transmission power levels are then: $\sum_k \alpha_{b, k}\leq 1$ and $\sum_{k}\alpha_{m, k}\leq 1$.

$\overline{\gamma_{bm, k}}$ and $\overline{\gamma_{mb, k}}$ denote the SNR of the signal from the BS to the MS and from the MS to the BS, respectively, on channel $k$, when the transmission power level on channel $k$ is set to its maximum value ($\overline{P_b}, \overline{P_m}$, respectively). $\overline{\gamma_{bm}}\equiv \frac{1}{K}\sum_{k}\overline{\gamma_{bm, k}}/K$ and $\overline{\gamma_{mb}}\equiv\frac{1}{K} \sum_k \overline{\gamma_{mb, k}}/K$ denote the average SNR when the power levels are equally allocated over channels (i.e., when $\alpha_{b, 1} =...=\alpha_{b, K}=1/K$ and $\alpha_{m, 1} =...=\alpha_{m, K}=1/K$). In the numerical evaluations, we adopt $\overline{\gamma_{bm, k}} = K \overline{\gamma_{bm}}$ and $\overline{\gamma_{mb, k}} = K \overline{\gamma_{mb}}$, $\forall k$, to focus on the effects caused by FD operation. Our results, however, hold for general values of $\overline{\gamma_{bm, k}}$ and $\overline{\gamma_{mb, k}}$ over channels $k$.

Similarly to \cite{full-duplex-sigmetrics,li2014rate,cheng2013optimal}, we model the remaining SI on channel $k$ as a constant fraction of the transmission power level on channel $k$. The Self-Interference-to-Noise-Ratio (XINR) at the BS on channel $k$ when $\alpha_{b, k} = 1$ is denoted by $\overline{\gamma_{bb, k}}$. The XINR at the MS on channel $k$ when $\alpha_{m, k} = 1$ is denoted by $\overline{\gamma_{mm, k}}$. 
In the numerical evaluations of the multi-channel case, we use $\overline{\gamma_{bb, k}}/K = 1 = 0$dB, as shown in Fig.~\ref{fig:cancellation-profiles}\subref{fig:gamma_bb}, which is motivated by \cite{bharadia2013full}. For $\overline{\gamma_{mm, k}}$, we consider three FD RFIC designs from \cite{Zhou_WBSIC_ISSCC15}, shown in Fig.~\ref{fig:cancellation-profiles}\subref{fig:gamma_mm_conv}--\subref{fig:gamma_mm_FDE_2}. For the FD RFICs from \cite{Zhou_WBSIC_ISSCC15}, we assume additional 50dB of cancellation in the digital domain and 110dB difference between the maximum transmission signal and the noise. 

For the DL rate on channel $k$, $r_{b, k}$, and for the UL rate on channel $k$, $r_{m, k}$, we use the Shannon capacity formula: 
$
r_{b, k} = \log\big(1+\frac{\alpha_{b, k}\overline{\gamma_{bm, k}}}{1+\alpha_{m, k}\overline{\gamma_{mm, k}}}\big)$, 
$r_{m, k} = \log\big(1+\frac{\alpha_{m, k}\overline{\gamma_{mb, k}}}{1+\alpha_{b, k}\overline{\gamma_{bb, k}}}\big), 
$
where $\log$ denotes the base-2 logarithm. $r_b = \sum_k r_{b, k}$ denotes the sum of DL rates over channels $k$, $r_m = \sum_k r_{m, k}$ denotes the sum of UL rates over channels $k$, and $r = r_m + r_b$ denotes the sum of all UL and DL rates over channels $k$ (in the following, we refer to $r$ as the sum rate). 

We denote by $\overline{r_b} = \max \{r_b(\{\alpha_{b, k}\}, \{\alpha_{m, k}\}): \sum_k \alpha_{b, k}\leq 1$, $\sum_{k}\alpha_{m, k}\leq 1\}$ the maximum DL rate. Observe that when $r_b$ is maximized, we have $\sum_{k}\alpha_{b, k} = 1, \alpha_{m, k} = 0, \forall k$, i.e., $\overline{r_b}$ is equal to the maximum HD rate on the DL. Similarly, $\overline{r_m}$ denotes the maximum UL rate. 

\begin{figure}[t]
\centering
\subfloat[]{\label{fig:FD_cap_region}\includegraphics[scale = 0.21]{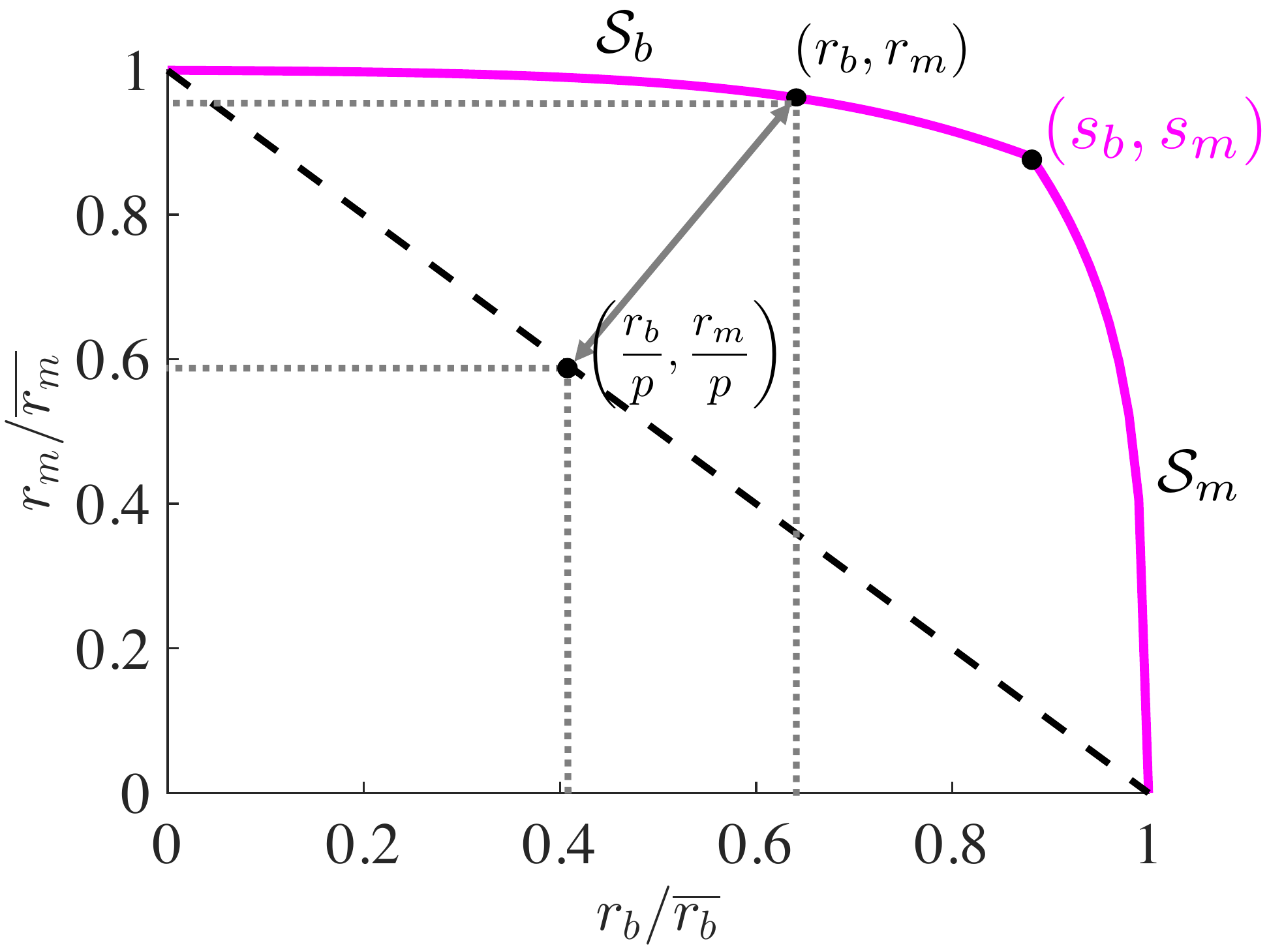}}\hspace{\fill}
\subfloat[]{\label{fig:FD_cap_region_ncvx}\includegraphics[scale = 0.21]{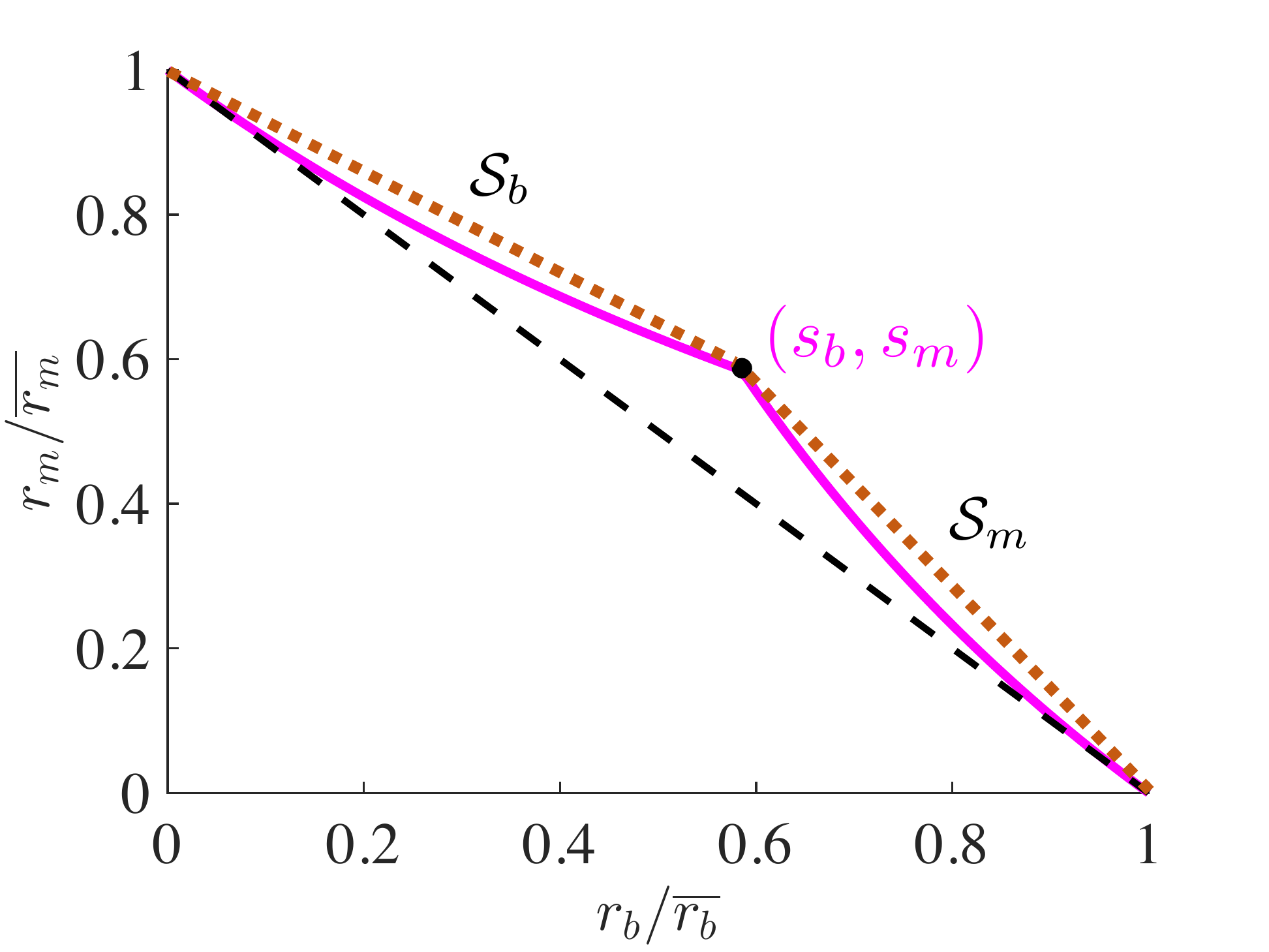}}\vspace{-5pt}
\caption{\protect\subref{fig:FD_cap_region} Convex and \protect\subref{fig:FD_cap_region_ncvx} non-convex FD capacity regions. A dashed line delimits the corresponding TDD region. An FD region is convex, if and only if segments $\mathcal{S}_b$ (connecting $(0, \overline{r_m})$ and $(s_b, s_m)$) and $\mathcal{S}_m$ (connecting $(s_b, s_m)$ and $(\overline{r_b}, 0)$) can be represented by a concave function $r_m(r_b)$.}\label{fig:cap-regions}\vspace{-10pt}
\end{figure}

A capacity region of an FD link is the set of all achievable UL-DL FD rate pairs. Examples of FD regions appear in Fig.~\ref{fig:cap-regions}, where a full line represents the FD  region boundary, and a dashed line represents the TDD region boundary. The problem of determining the FD capacity region is the problem of maximizing one of the rates (e.g., $r_m$) when the other rate ($r_b$) is fixed, subject to the sum power constraints.  

An FD capacity region is not necessarily convex. In such cases, we also consider a \emph{convexified} or \emph{TDFD} capacity region, namely, the convex hull of the FD capacity region. In practice, the TDFD region would correspond to time sharing between different FD rate pairs. Fig.~\ref{fig:cap-regions}\subref{fig:FD_cap_region_ncvx} illustrates a non-convex FD capacity region, with the dotted line representing the boundary of the TDFD capacity region.

To compare an FD or a TDFD capacity region to its corresponding TDD region, we use the following definition (a similar definition appears in \cite{full-duplex-sigmetrics}, see Fig.~\ref{fig:cap-regions}\subref{fig:FD_cap_region} for intuition): 
\begin{definition}\label{def:rate-improvement}
For a given rate pair $(r_b, r_m)$ from an FD or TDFD capacity region, the rate improvement $p$ is defined as the largest (positive) number such that $\Big(\frac{r_b}{p}, \frac{r_m}{p}\Big)$ is at the boundary of the corresponding TDD capacity region. 
\end{definition}
Using simple geometry, $p$ can be computed as follows \cite{full-duplex-sigmetrics}:
\begin{proposition}
$p(r_b, r_m) = {r_b}/{\overline{r_b}}+{r_m}/{\overline{r_m}}$.
\end{proposition}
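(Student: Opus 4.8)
The plan is to pin down the TDD capacity region explicitly, reduce the claim to a one-line algebraic manipulation, and then dispatch the degenerate cases; this is the ``simple geometry'' the statement advertises, so the proof will be short.

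First I would characterize the TDD region. Under time-division duplexing the two stations never transmit simultaneously, so over a unit-length interval in which a fraction $t\in[0,1]$ of the time is devoted to the DL and the remaining $1-t$ to the UL, the best achievable average rates are $t\,\overline{r_b}$ on the DL and $(1-t)\,\overline{r_m}$ on the UL (recall that $\overline{r_b}$ and $\overline{r_m}$ are exactly the maximum HD rates, attained with full power at one station and the other station silent). Hence every TDD rate pair is dominated by one of the form $(t\,\overline{r_b},(1-t)\,\overline{r_m})$, and as $t$ ranges over $[0,1]$ these pairs trace out precisely the segment $r_b/\overline{r_b}+r_m/\overline{r_m}=1$. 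Taking all dominated pairs, the TDD region is the triangle $T=\{(r_b,r_m)\ge 0:\ r_b/\overline{r_b}+r_m/\overline{r_m}\le 1\}$, whose Pareto (upper-right) boundary is that hypotenuse.

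Next, fix a rate pair $(r_b,r_m)$ with $r_b,r_m>0$ from the FD or TDFD region (the membership is not actually used in the computation). The points $(r_b/p,r_m/p)$ for $p>0$ form the open ray from the origin through $(r_b,r_m)$. Writing $c=r_b/\overline{r_b}+r_m/\overline{r_m}$, the scaled pair satisfies $(r_b/p)/\overline{r_b}+(r_m/p)/\overline{r_m}=c/p$, which equals $1$ iff $p=c$, is strictly less than $1$ (so the pair lies in the interior of $T$) iff $p>c$, and exceeds $1$ (the pair lies outside $T$) iff $p<c$. Therefore the scaled pair lies on $\partial T$ for exactly one positive value of $p$, namely $p=c$, and this unique value is a fortiori the largest such $p$. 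Finally, for the degenerate cases in which exactly one coordinate is zero, the scaled pair moves along a coordinate axis, which is a face of $\partial T$; restricting to the efficient portion of the boundary again yields $p=r_b/\overline{r_b}$ (resp.\ $p=r_m/\overline{r_m}$), in agreement with the formula, while the all-zero pair is excluded.

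The only mildly delicate step is the first one --- making rigorous that within the TDD paradigm no scheme beats pure time-sharing between the two HD rates, i.e.\ that the TDD region is exactly the triangle $T$; once that is granted, the rest is the elementary intersection of a ray with a line.
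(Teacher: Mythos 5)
Your proof is correct and is exactly the ``simple geometry'' argument the paper intends (the paper itself states this proposition without proof, citing \cite{full-duplex-sigmetrics}): the TDD region is the triangle with hypotenuse $r_b/\overline{r_b}+r_m/\overline{r_m}=1$, and scaling $(r_b,r_m)$ by $1/p$ hits that hypotenuse precisely at $p=r_b/\overline{r_b}+r_m/\overline{r_m}$. The step you flag as delicate is immediate in the paper's model, since TDD is defined there as time sharing between the two HD operating points $(\overline{r_b},0)$ and $(0,\overline{r_m})$, which yields the triangle by construction.
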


\section{Single Channel}\label{sec:single}
We now study the structural properties of the FD and TDFD capacity regions for a single FD channel and devise an algorithm that determines the points at the boundary of the TDFD capacity region. First, we provide structural results that characterize FD capacity regions. We prove that the FD region boundary, which can be described by a function $r_m(r_b)$, can only have up to four either convex or concave pieces that can only appear in certain specific arrangements. We also provide necessary and sufficient conditions for the region's boundary to take one of the possible shapes. As a corollary, we derive necessary and sufficient conditions for the FD region to be convex as a function of $\overline{\gamma_{bm}}, \overline{\gamma_{mb}}, \overline{\gamma_{mm}}$, and $\overline{\gamma_{bb}}$.  
Based on the structural results, we present a simple and fast algorithm that can determine any point at the boundary of the TDFD capacity region. For a given rate $r_b^*$, to find the maximum rate $r_m$ subject to $r_b = r_b^*$, the algorithm determines the shape of the capacity region as a function of $\overline{\gamma_{bm}}, \overline{\gamma_{mb}}, \overline{\gamma_{mm}}$, and $\overline{\gamma_{bb}}$, and either directly computes $r_m$ or performs a binary search to find it.

\subsection{Capacity Region Structural Results}\label{section:SC-struct-results}
\iffullpaper\else We state all the results in this section for the problem of finding $r_m(r_b)$ when $r_b = r_b^*$. The results for maximizing $r_b(r_m)$ when $r_m = r_m^*$ follow by symmetric arguments. \fi
We start by characterizing the power allocation at the boundary of an FD capacity region, given by the following simple proposition (used implicitly in \cite{full-duplex-sigmetrics}). \iffullpaper\else The proof  appears in \cite{capacity-region-full}. \fi In the rest of the section, $s_b = r_b(1, 1)$, $s_m = r_m(1, 1)$.
\begin{proposition}\label{prop:FD-cap-region}
If $r_b = r_b^*\leq s_b$, then $r_m$ is maximized for $\alpha_m = 1$ and $\alpha_b$ that solves $r_b(\alpha_b, 1) = r_b^*$. \iffullpaper Similarly, if $r_m = r_m^*\leq s_m$, then $r_b$ is maximized for $\alpha_b = 1$ and $\alpha_m$ that solves $r_m(1, \alpha_m) = r_m^*$.\else\fi
\end{proposition}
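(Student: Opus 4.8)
The plan is to prove Proposition~\ref{prop:FD-cap-region} by showing that at the boundary of the FD capacity region, it is never beneficial to leave power ``on the table'' at the station whose rate we are maximizing, and that reducing the interfering station's power only to the level needed to meet its rate constraint is optimal. Concretely, fix the target $r_b = r_b^* \le s_b$ and consider the problem $\max_{\alpha_b,\alpha_m \in [0,1]} r_m(\alpha_b,\alpha_m)$ subject to $r_b(\alpha_b,\alpha_m) = r_b^*$. First I would observe from the Shannon formulas that $r_m(\alpha_b,\alpha_m) = \log\!\big(1 + \frac{\alpha_m \overline{\gamma_{mb}}}{1 + \alpha_b \overline{\gamma_{bb}}}\big)$ is strictly increasing in $\alpha_m$ and strictly decreasing in $\alpha_b$, while $r_b(\alpha_b,\alpha_m) = \log\!\big(1 + \frac{\alpha_b \overline{\gamma_{bm}}}{1 + \alpha_m \overline{\gamma_{mm}}}\big)$ is strictly increasing in $\alpha_b$ and strictly decreasing in $\alpha_m$. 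These monotonicities are the entire engine of the proof.

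Next I would argue $\alpha_m = 1$ at the optimum. Suppose an optimal pair $(\alpha_b,\alpha_m)$ has $\alpha_m < 1$. Since $r_m$ is strictly increasing in $\alpha_m$, raising $\alpha_m$ strictly increases the objective; but raising $\alpha_m$ decreases $r_b$, so feasibility must be restored. Because $r_b$ is strictly increasing in $\alpha_b$ and (by $r_b^* \le s_b = r_b(1,1) \le r_b(1,\alpha_m)$ together with continuity and the intermediate value theorem) there is room to compensate, we can increase $\alpha_b$ back up to re-satisfy $r_b(\alpha_b',\alpha_m') = r_b^*$ with $\alpha_b' \le 1$. It remains to check that this joint move $(\alpha_b,\alpha_m) \to (\alpha_b',\alpha_m')$, which holds $r_b$ fixed while increasing $\alpha_m$, does not decrease $r_m$. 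Here I would use the explicit form: along the level set $r_b = r_b^*$ we have $\frac{\alpha_b \overline{\gamma_{bm}}}{1+\alpha_m \overline{\gamma_{mm}}} = c$ for the constant $c = 2^{r_b^*}-1$, i.e. $\alpha_b = \frac{c}{\overline{\gamma_{bm}}}(1 + \alpha_m \overline{\gamma_{mm}})$, so $1 + \alpha_b \overline{\gamma_{bb}} = 1 + \frac{c\,\overline{\gamma_{bb}}}{\overline{\gamma_{bm}}} + \frac{c\,\overline{\gamma_{bb}}\,\overline{\gamma_{mm}}}{\overline{\gamma_{bm}}}\alpha_m$ is affine and increasing in $\alpha_m$; substituting into $r_m$ gives $r_m$ as an explicit function of $\alpha_m$ alone, and a short check of the derivative (numerator $\alpha_m \overline{\gamma_{mb}}$ grows linearly with slope $\overline{\gamma_{mb}}$ while the denominator's slope is the smaller quantity above, at least when the SNRs dominate the XINRs, and in any case the ratio's monotonicity can be read off directly) shows $r_m$ is nondecreasing in $\alpha_m$ along the constraint. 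Hence pushing $\alpha_m$ to $1$ is weakly optimal, and by strict monotonicity it is the unique optimum. Once $\alpha_m = 1$ is fixed, the constraint $r_b(\alpha_b,1) = r_b^*$ pins down $\alpha_b$ uniquely (strict monotonicity of $r_b$ in $\alpha_b$), and such $\alpha_b \in [0,1]$ exists because $r_b(0,1) = 0 \le r_b^* \le r_b(1,1) = s_b$ by assumption and continuity.

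The main obstacle I anticipate is the monotonicity claim for $r_m$ along the level set $r_b = r_b^*$: it is not immediate that trading $\alpha_b$ for $\alpha_m$ while holding $r_b$ constant helps $r_m$, since increasing $\alpha_m$ raises the signal numerator of $r_m$ but also (via the forced increase in $\alpha_b$) raises its interference denominator. The clean way around this is the substitution above, which collapses the two-variable problem to one variable and reduces everything to showing a single-variable rational function is monotone — a routine but necessary computation. I would present that reduction explicitly and then defer the sign check to a one-line derivative argument. Finally, the symmetric statement (maximizing $r_b$ subject to $r_m = r_m^* \le s_m$) follows verbatim by swapping the roles of $(b,m)$ and the corresponding SNR/XINR parameters, so no separate argument is needed.
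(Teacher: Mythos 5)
Your proposal is correct and follows essentially the same route as the paper's proof: both substitute the level-set relation $\alpha_b = (2^{r_b^*}-1)(1+\alpha_m\overline{\gamma_{mm}})/\overline{\gamma_{bm}}$ coming from $r_b=r_b^*$ into $r_m$ and reduce the claim to a one-variable monotonicity statement in $\alpha_m$ (the paper phrases it contrapositively, deducing $\alpha_m'\geq 1$ from the assumption that another feasible point does at least as well). The only remark is that your hedge ``at least when the SNRs dominate the XINRs'' is unnecessary: with $A = 1+(2^{r_b^*}-1)\overline{\gamma_{bb}}/\overline{\gamma_{bm}}>0$ and $B = (2^{r_b^*}-1)\overline{\gamma_{bb}}\,\overline{\gamma_{mm}}/\overline{\gamma_{bm}}\geq 0$, the fraction $\alpha_m\overline{\gamma_{mb}}/(A+B\alpha_m)$ is increasing in $\alpha_m$ unconditionally, so the sign check needs no assumption on the parameters.
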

\iffullpaper\begin{proof}
We prove the proposition for $r_b = r_b^*$ ($\leq s_b$), while the other part follows by symmetric arguments.

Let $\alpha_b$ be such that $r_b(\alpha_b, 1) = r_b^*$. Fix any $(\alpha_b', \alpha_m')\in [0, 1]^2$ such that $r_b(\alpha_b' , \alpha_m') = r_b^*$, and suppose that:
\begin{align}
r_m(\alpha_b', \alpha_m') \geq r_m(\alpha_b, 1). \label{eq:FD-cap-region-fpoc}
\end{align}
Then, after simple transformations of (\ref{eq:FD-cap-region-fpoc}), we have that:
\begin{align*}
\frac{\alpha_m'\overline{\gamma_{mb}}}{1 + \alpha_b' \overline{\gamma_{bb}}}\geq \frac{\overline{\gamma_{mb}}}{1 + \alpha_b\overline{\gamma_{bb}}}.
\end{align*}
Finally, using that (from $r_b(\alpha_b, 1) = r_b^*$, $r_b(\alpha_b' , \alpha_m') = r_b^*$): $\alpha_b = (2^{r_b^*}-1)\cdot(1+\overline{\gamma_{mm}})/\overline{\gamma_{bm}}$, $\alpha_b' = (2^{r_b^*}-1)\cdot(1+\alpha_m'\overline{\gamma_{mm}})/\overline{\gamma_{bm}}$, it follows that:
\begin{align}
&\alpha_m'\Big(1+ (2^{r_b^*}-1)\cdot\frac{\overline{\gamma_{bb}}}{\overline{\gamma_{bm}}}\Big) \geq 1 + (2^{r_b^*}-1)\cdot\frac{\overline{\gamma_{bb}}}{\overline{\gamma_{bm}}},\notag
\end{align}
and, therefore, $\alpha_m' \geq 1$. As $\alpha_m'\leq 1$, it follows that $\alpha_m' = 1$ and $\alpha_b' = \alpha_b$, thus completing the proof.
\end{proof}
\fi

Proposition \ref{prop:FD-cap-region} implies that to determine any point $(r_b, r_m)$ at the boundary of the capacity region, where $r_b, r_m > 0$, for $r_b \leq s_b$ (resp.\ $r_m \leq s_m$), it suffices to find $\alpha_b$ (resp.\ $\alpha_m$) that satisfies $r_b = r_b(\alpha_b, 1)$ (resp.\ $r_m = r_m(1, \alpha_m)$). The capacity region is convex, if and only if (i) $r_b(r_m)$ is concave for $r_m \in (0, s_m]$ and $r_b$ at the boundary of the capacity region, (ii) $r_m(r_b)$ is concave for $r_b \in (0, s_b]$ and $r_m$ at the boundary of the capacity region, and (iii) the functions $r_m(r_b)$ and $r_b(r_m)$ intersect at $(s_b, s_m)$ under an angle smaller than $\pi$.

If the FD capacity region is convex (Fig.~\ref{fig:cap-regions}\subref{fig:FD_cap_region}), then to maximize $r_m$ subject to $r_b = r_b^*$, it is always optimal to use FD and allocate the power levels according to Proposition \ref{prop:FD-cap-region}. This is not necessarily true, if the capacity region is not convex; in that case, it may be optimal to use a time-sharing scheme between two FD rate pairs (TDFD), since a convex combination of e.g., $(s_b, s_m)$ and $(\overline{r_b}, 0)$ may lie above the FD capacity region boundary (e.g., Fig.~\ref{fig:cap-regions}\subref{fig:FD_cap_region_ncvx}).

The following lemma characterizes the FD capacity region boundary\iffullpaper \else(the proof appears in \cite{capacity-region-full})\fi.
\begin{lemma}\label{lemma:convexity-of-cap-region}
Given positive $\overline{\gamma_{mb}}, \overline{\gamma_{bm}}, \overline{\gamma_{bb}}, \overline{\gamma_{mm}}$, let $r_m(r_b)$ describe the boundary of the FD capacity region for $r_b\in [0, s_b]$, and $r_b(r_m)$ describe the boundary of the FD capacity region for $r_m\in [0, s_m]$. Then $r_m(r_b)$ ($r_b\in [0, s_b]$)
 and $r_b(r_m)$ ($r_m\in [0, s_m]$) can only be described by one of the following three function types: (i) concave, (ii) convex, and (iii) concave for $r_b \in [0, r_b^+]$ for some $r_b^+<s_b$ in the case of $r_m(r_b)$,  concave for $r_m \in [0, r_m^+]$ for some $r_m^+<s_m$ in the case of $r_b(r_m)$, and convex on the rest of the domain. 
\end{lemma}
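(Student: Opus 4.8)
The plan is to analyze the scalar function $r_m(r_b)$ for $r_b \in [0, s_b]$ directly, using the explicit parametrization provided by Proposition~\ref{prop:FD-cap-region}: on this range the boundary is traced out by setting $\alpha_m = 1$ and letting $\alpha_b$ range over $[0,1]$, with $r_b = \log(1 + \alpha_b \overline{\gamma_{bm}}/(1+\overline{\gamma_{mm}}))$ and $r_m = \log(1 + \overline{\gamma_{mb}}/(1+\alpha_b\overline{\gamma_{bb}}))$. First I would invert the $r_b$ relation to write $\alpha_b$ as an (affine-in-$2^{r_b}$) function of $r_b$, namely $\alpha_b = (2^{r_b}-1)(1+\overline{\gamma_{mm}})/\overline{\gamma_{bm}}$, and substitute into $r_m$ to obtain $r_m$ as an explicit function of $r_b$ alone. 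This yields something of the form $r_m(r_b) = \log\!\big(1 + \overline{\gamma_{mb}}/(1 + c(2^{r_b}-1))\big)$ for a positive constant $c = \overline{\gamma_{bb}}(1+\overline{\gamma_{mm}})/\overline{\gamma_{bm}}$.

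Next I would compute the second derivative $r_m''(r_b)$ and determine its sign. The cleanest route is a change of variable $t = 2^{r_b}$ (so $t \geq 1$), turning the question into the concavity/convexity of $g(t) = \log(1 + \overline{\gamma_{mb}}/(1 + c(t-1)))$ viewed as a function of $r_b = \log_2 t$; since $t \mapsto \log_2 t$ is itself concave and increasing, one must track how composition affects curvature, so it is more reliable to just differentiate $r_m(r_b)$ twice in $r_b$ and factor the numerator. I expect the numerator of $r_m''$ to be, after clearing positive denominators, an affine or low-degree polynomial in $2^{r_b}$ whose leading behavior and value at the endpoints $r_b = 0$ and $r_b = s_b$ control the sign. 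The key structural fact to extract is that this numerator changes sign \emph{at most once} on $[0, s_b]$, and when it does, it goes from positive (concave) to negative (convex) as $r_b$ increases — never the other way. That monopolizes exactly the three cases in the statement: always concave, always convex, or concave-then-convex with a single inflection point $r_b^+ < s_b$. The symmetric claim for $r_b(r_m)$ on $[0, s_m]$ follows verbatim by swapping the roles of the MS and BS indices.

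The main obstacle will be the sign analysis of $r_m''$: one has to be careful that the candidate inflection point genuinely lies in the open interval and that the ordering of the sign change is always concave-to-convex and not the reverse. Concretely, after factoring I expect the sign of $r_m''$ to be governed by an expression like $A \cdot 2^{r_b} - B$ (possibly times a manifestly positive factor) with $A, B$ built from $\overline{\gamma_{mb}}$, $c$, and logarithmic factors; since $2^{r_b}$ is strictly increasing, such an expression crosses zero at most once and, if $A > 0$, does so from negative to — wait, I must check the sign convention — so the delicate point is pinning down the signs of $A$ and $B$ in terms of the four SNR parameters and confirming the direction. A secondary subtlety is the boundary behavior: one should verify that at $r_b = s_b$ the parametrization via Proposition~\ref{prop:FD-cap-region} is the correct one (it is, since $s_b = r_b(1,1)$ corresponds to $\alpha_b = \alpha_m = 1$), so there is no hidden transition within $[0, s_b]$ itself. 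Everything else — inverting a logarithm, one round of the quotient and chain rules, and reading off three cases from "at most one sign change, in a fixed direction" — is routine.
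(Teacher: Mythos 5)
Your plan is essentially the paper's own proof: starting from the parametrization of Proposition~\ref{prop:FD-cap-region}, it computes $\tfrac{d^2 r_m}{d r_b^2}$ and reduces the lemma to the fact that its sign is governed by a low-degree expression that changes sign at most once on the domain, from negative (concave) to positive (convex) --- the paper phrases this as an upward-opening quadratic in $\alpha_b$ with negative smaller root, which under your substitution $\alpha_b = (2^{r_b}-1)(1+\overline{\gamma_{mm}})/\overline{\gamma_{bm}}$ is exactly your expression in $t=2^{r_b}$. The direction you flagged as needing a check does come out the right way: with $c = \overline{\gamma_{bb}}(1+\overline{\gamma_{mm}})/\overline{\gamma_{bm}}$ one finds that the sign of $r_m''(r_b)$ equals the sign of $c^2\,2^{2r_b} - (1-c)\bigl(1-c+\overline{\gamma_{mb}}\bigr)$, which is strictly increasing in $r_b$, so any inflection is concave-to-convex and never the reverse, yielding precisely the three cases of the lemma.
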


\begin{figure}[t!]
\center
\subfloat[$\overline{\gamma_{mm}} = 0$dB]{\label{fig:cap_ext_00}\includegraphics[scale = 0.19]{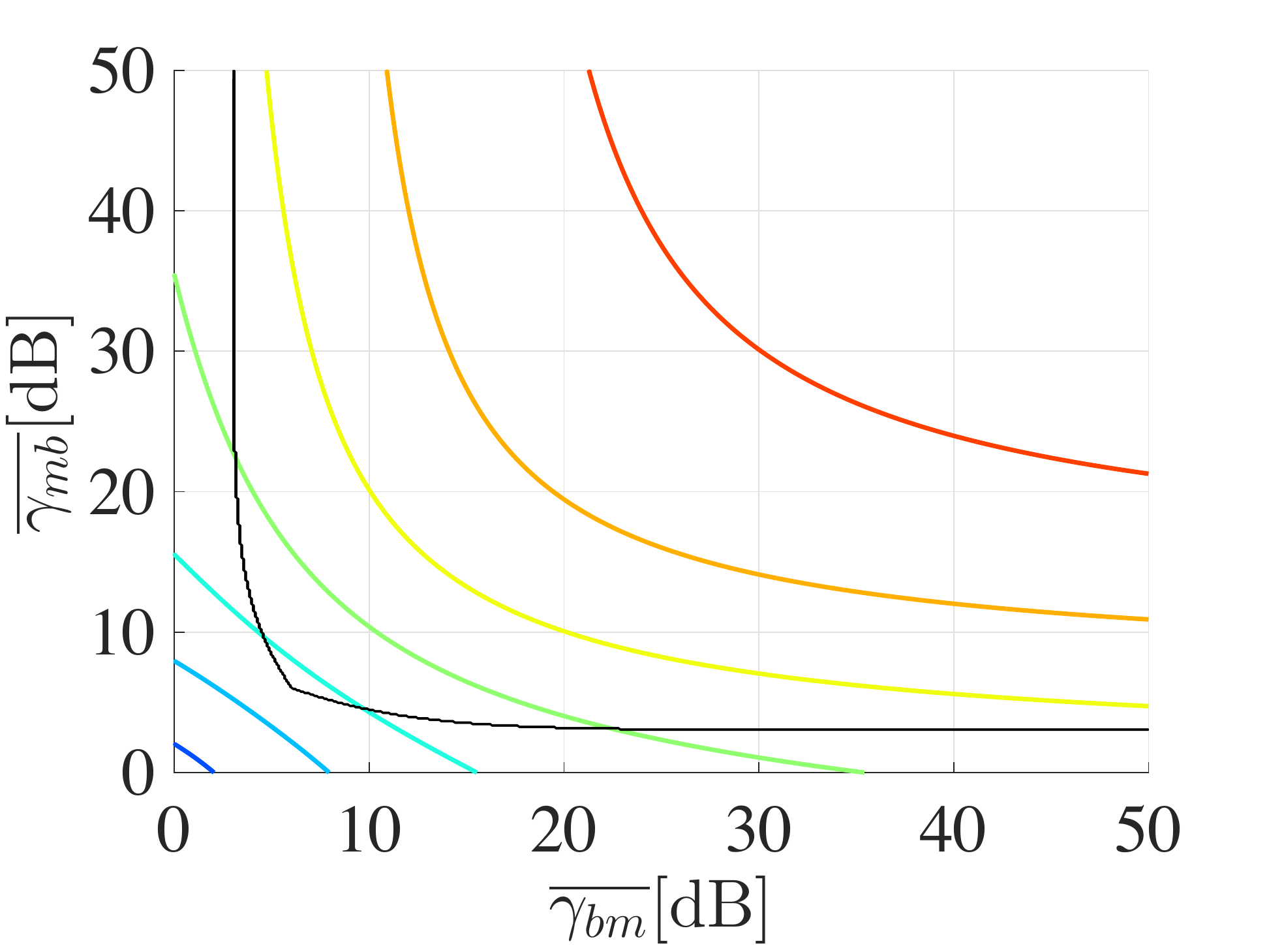}}\hspace{\fill}
\subfloat[$\overline{\gamma_{mm}} = 10$dB]{\label{fig:cap_ext_010}\includegraphics[scale = 0.19]{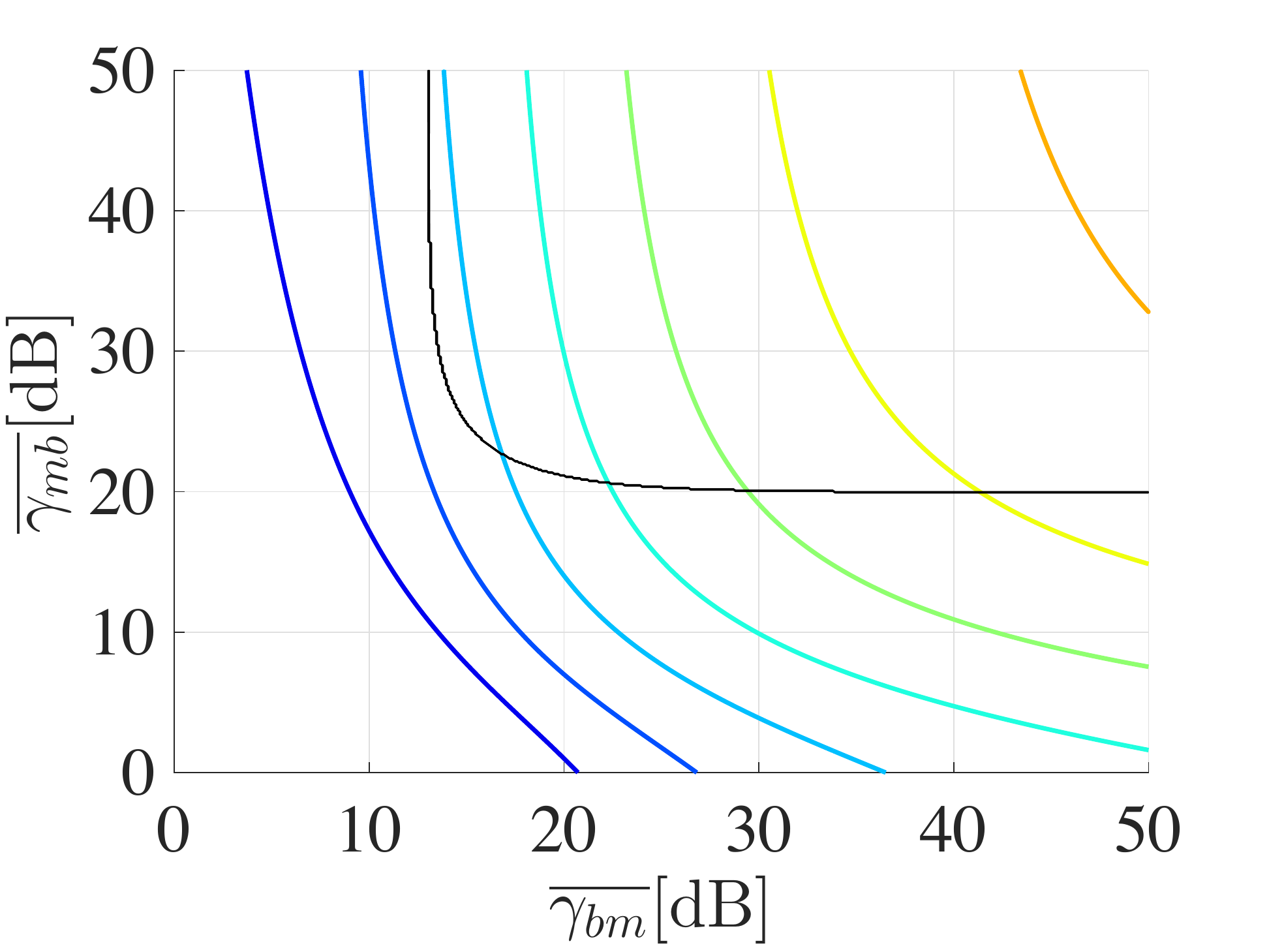}}\hspace{\fill}
\subfloat{\label{fig:colorbar}\includegraphics[scale = 0.19 ]{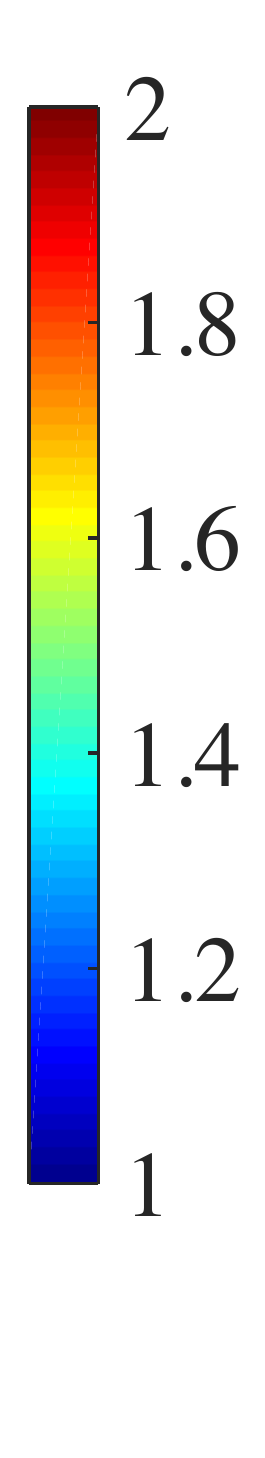}}
\caption{Convexity of the capacity region vs. rate improvement for $\overline{\gamma_{bb}} = 0$dB and: \protect\subref{fig:cap_ext_00} $\overline{\gamma_{mm}} = 0$dB and  
\protect\subref{fig:cap_ext_010} $\overline{\gamma_{mm}} = 10$dB. The capacity region is convex for UL and DL SNRs north and east from the black curve.}
\label{fig:cap-region-convexity}\vspace{-10pt}
\end{figure}

\iffullpaper
\begin{proof}
From Prop. \ref{prop:FD-cap-region}, segment $\mathcal{S}_b$ is described by $r_m(r_b)$, where $r_b\leq s_b$, $\alpha_b \in [0, 1]$, and:
\begin{equation}\label{eq:rb-rm-alpha-b}
r_b = \log\Big(1 + \frac{\alpha_b\overline{\gamma_{bm}}}{1 + \overline{\gamma_{mm}}}\Big)\text{ and } 
r_m = \log\Big(1 + \frac{\overline{\gamma_{mb}}}{1 + \alpha_b \overline{\gamma_{bb}}}\Big).
\end{equation}
Similarly, segment $\mathcal{S}_m$ is described by $r_b(r_m)$, where $r_m\leq s_m$, $\alpha_m\in[0, 1]$, and:
\begin{equation}\label{eq:rb-rm-alpha-m}
r_b = \log\Big(1 + \frac{\overline{\gamma_{bm}}}{1 + \alpha_m\overline{\gamma_{mm}}}\Big), \, 
r_m = \log\Big(1 + \frac{\alpha_m \overline{\gamma_{mb}}}{1 + \overline{\gamma_{bb}}}\Big).
\end{equation}
We prove the lemma only for segment $\mathcal{S}_b$, while the proof for segment $\mathcal{S}_m$ follows by symmetry.

Since, from (\ref{eq:rb-rm-alpha-b}), $r_m(r_b)$ is a continuous and twice differentiable function for $r_b\in[0, s_b]$ (equivalently, $\alpha_b\in[0, 1]$), $r_m(r_b)$ is concave for $r_b\in[0, s_b]$ if and only if $\frac{d^2 r_m}{d {r_b}^2}\leq 0$. Observe that we can write:
\begin{equation}\label{eq:drm-drb}
\frac{d r_m}{d r_b} = \frac{d r_m}{d \alpha_b}\cdot \frac{d \alpha_b}{d r_b}
\end{equation}
and
\begin{equation}\label{eq:d2rm-drb2}
\frac{d^2 r_m}{d {r_b}^2} = \frac{d^2 r_m}{d {\alpha_b}^2}\cdot\left(\frac{d \alpha_b}{d r_b}\right)^2 + \frac{d r_m}{d \alpha_b}\cdot \frac{d^2 \alpha_b}{d {r_b}^2}.
\end{equation}
From the left equality in (\ref{eq:rb-rm-alpha-b}):
\begin{gather}
\alpha_b = (2^{r_b}-1)\cdot\frac{1+\overline{\gamma_{mm}}}{\overline{\gamma_{bm}}},\notag\\
\frac{d \alpha_b}{d r_b} = \ln(2)\cdot 2^{r_b} \cdot\frac{1+\overline{\gamma_{mm}}}{\overline{\gamma_{bm}}}, \text{ and}\label{eq:dalphab-drb}\\
\frac{d^2 \alpha_b}{d {r_b}^2} = \ln^2(2)\cdot 2^{r_b} \cdot\frac{1+\overline{\gamma_{mm}}}{\overline{\gamma_{bm}}}. \label{eq:d2alphab-drb2}
\end{gather}
From the right equality in (\ref{eq:rb-rm-alpha-b}):
\begin{gather}
\frac{d r_m}{d \alpha_b} = - \frac{\overline{\gamma_{bb}}}{\ln(2)}\cdot \Big(\frac{1}{1 + \alpha_b \overline{\gamma_{bb}}} -  \frac{1}{1 + \alpha_b \overline{\gamma_{bb}} + \overline{\gamma_{mb}}}\Big),\label{eq:drm-dalpham}\\
\frac{d^2 r_m}{d {\alpha_b}^2} = \frac{(\overline{\gamma_{bb}})^2}{\ln(2)}\cdot \Big(\frac{1}{1 + \alpha_b \overline{\gamma_{bb}}} -  \frac{1}{1 + \alpha_b \overline{\gamma_{bb}} + \overline{\gamma_{mb}}}\Big)\notag\\
\cdot\Big(\frac{1}{1 + \alpha_b \overline{\gamma_{bb}}} +  \frac{1}{1 + \alpha_b \overline{\gamma_{bb}} + \overline{\gamma_{mb}}}\Big).\label{eq:d2rm-dalpham2}
\end{gather}
Plugging (\ref{eq:dalphab-drb})--(\ref{eq:d2rm-dalpham2}) back into (\ref{eq:d2rm-drb2}), we have that the sign of $\frac{d^2 r_m}{d {r_b}^2}\leq 0$ is equivalent to the sign of:
\begin{equation}\label{eq:rm-concave-cond}
\overline{\gamma_{bb}}\Big(\frac{1}{1 + \alpha_b \overline{\gamma_{bb}}} +  \frac{1}{1 + \alpha_b \overline{\gamma_{bb}} + \overline{\gamma_{mb}}}\Big) \frac{2^{r_b}(1+\overline{\gamma_{mm}})}{\overline{\gamma_{bm}}} - 1.
\end{equation}
Recalling (from (\ref{eq:rb-rm-alpha-b})) that $2^{r_b} = 1 + \frac{\alpha_b \overline{\gamma_{bm}}}{1+\overline{\gamma_{mm}}}$ and using simple algebraic transformations, (\ref{eq:rm-concave-cond}) is equivalent to:
\begin{equation}\label{eq:quad-ineq-alphab}
{\alpha_b}^2 + \alpha_b\cdot\frac{2(1+\overline{\gamma_{mm}})}{\overline{\gamma_{bm}}} + \frac{(2+\overline{\gamma_{mb}})(1+\overline{\gamma_{mm}})}{\overline{\gamma_{bb}}\overline{\gamma_{bm}}} - \frac{1+\overline{\gamma_{mb}}}{(\overline{\gamma_{bb}})^2}.
\end{equation}
(\ref{eq:quad-ineq-alphab}) is a quadratic function whose smaller root is negative. If the discriminant of (\ref{eq:quad-ineq-alphab}) is negative or the larger root is at most 0, (\ref{eq:quad-ineq-alphab}) is non-positive for all $\alpha_b\in [0, 1]$, and therefore $r_m(r_b)$ is convex for all $r_b\in [0, s_b]$. If the discriminant of (\ref{eq:quad-ineq-alphab}) is positive and the larger root is at least 1, (\ref{eq:quad-ineq-alphab}) is non-negative for all $\alpha_b\in [0, 1]$, and therefore $r_m(r_b)$ is concave for all $r_b\in [0, s_b]$. Finally, if the discriminant of (\ref{eq:quad-ineq-alphab}) is positive and the larger root takes value $\alpha_b^+ < 1$, $r_m(r_b)$ is concave for $r_b\in[0, r_b^+]$ and convex for $r_b\in [r_b^+, s_b]$, where $r_b^+ = r_b(\alpha_b^+ \overline{\gamma_{bm}}, \overline{\gamma_{mb}})$.
\end{proof}
\fi
The following corollary of the proof of Lemma \ref{lemma:convexity-of-cap-region} gives necessary and sufficient conditions for $r_m(r_b)$ to be concave for $r_b\in[0, s_b]$\iffullpaper, and, similarly, for $r_b(r_m)$ to be concave for $r_m\in[0, s_m]$\fi \iffullpaper \else (the proof appears in \cite{capacity-region-full})\fi.
\begin{corollary}
For given positive $\overline{\gamma_{mb}}, \overline{\gamma_{bm}}, \overline{\gamma_{bb}}$, and $\overline{\gamma_{mm}}$, $r_m(r_b)$ is concave for $r_b\in[0, s_b]$ if and only if:
\begin{align}
\overline{\gamma_{bm}} > \max\Big\{&(\overline{\gamma_{mm}})^2-1,\; \overline{\gamma_{bb}}(1+\overline{\gamma_{mm}})\frac{2+\overline{\gamma_{mb}}}{1+\overline{\gamma_{mb}}},\notag \\ 
 &(1+\overline{\gamma_{mm}})\frac{2 + (2+\overline{\gamma_{mb}})/\overline{\gamma_{bb}}}{(1+\overline{\gamma_{mb}})/(\overline{\gamma_{bb}})^2 - 1}\Big\}.\label{eq:concave-rm}
\end{align}
\iffullpaper Similarly, $r_b(r_m)$ is concave for $r_m\in[0, s_m]$ if and only if:
\begin{align}
\overline{\gamma_{mb}}>\max\Big\{&(\overline{\gamma_{bb}})^2-1,\; \overline{\gamma_{mm}}(1+\overline{\gamma_{bb}})\frac{2+\overline{\gamma_{bm}}}{1+\overline{\gamma_{bm}}}, \notag\\
& (1+\overline{\gamma_{bb}})\frac{2 + (2+\overline{\gamma_{bm}})/\overline{\gamma_{mm}}}{(1+\overline{\gamma_{bm}})/(\overline{\gamma_{mm}})^2 - 1}\Big\}.\label{eq:concave-rb}
\end{align}\else\fi
\end{corollary}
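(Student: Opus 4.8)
The plan is to read the corollary off the proof of Lemma~\ref{lemma:convexity-of-cap-region} with a little extra algebra. In that proof, segment $\mathcal{S}_b$ is exactly the graph of $r_m(r_b)$ on $[0,s_b]$, parametrized by $\alpha_b\in[0,1]$ through (\ref{eq:rb-rm-alpha-b}), and concavity of $r_m(r_b)$ on the whole interval is shown to be equivalent to $Q(\alpha_b)\le 0$ for every $\alpha_b\in[0,1]$, where $Q$ is the monic quadratic displayed in (\ref{eq:quad-ineq-alphab}) --- this is the ``concave everywhere'' branch of the trichotomy established in the lemma, i.e.\ the case where the larger root of $Q$ is at least $1$. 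So the task reduces to rewriting ``$Q\le 0$ on $[0,1]$'' as explicit inequalities in $\overline{\gamma_{mb}},\overline{\gamma_{bm}},\overline{\gamma_{bb}},\overline{\gamma_{mm}}$.

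First I would use that $\{\alpha_b:Q(\alpha_b)\le 0\}$ is an interval (since $Q$ is a convex monic quadratic), so $Q\le 0$ on all of $[0,1]$ if and only if $Q(0)\le 0$ and $Q(1)\le 0$; in fact the vertex of $Q$ sits at the negative value $-(1+\overline{\gamma_{mm}})/\overline{\gamma_{bm}}$, so $Q$ is increasing on $[0,1]$. Then I would substitute $\alpha_b=0$ and $\alpha_b=1$ into (\ref{eq:quad-ineq-alphab}), clear the strictly positive denominators by multiplying through by $(\overline{\gamma_{bb}})^{2}\,\overline{\gamma_{bm}}$, and solve each resulting inequality for $\overline{\gamma_{bm}}$. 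The condition $Q(0)\le 0$ rearranges directly to the middle entry of (\ref{eq:concave-rm}). The condition $Q(1)\le 0$ is linear in $\overline{\gamma_{bm}}$; solving it for $\overline{\gamma_{bm}}$ requires dividing by its $\overline{\gamma_{bm}}$-coefficient, which forces a definite-sign condition on that coefficient --- this supplies the first entry of the $\max$ --- after which the division produces the third entry. Taking the $\max$ of the three lower bounds on $\overline{\gamma_{bm}}$ gives (\ref{eq:concave-rm}). The companion statement (\ref{eq:concave-rb}) for $r_b(r_m)$ on $[0,s_m]$ follows by the identical argument on segment $\mathcal{S}_m$, i.e.\ under the relabelling $b\leftrightarrow m$ (and the corresponding swaps $\overline{\gamma_{mb}}\leftrightarrow\overline{\gamma_{bm}}$, $\overline{\gamma_{mm}}\leftrightarrow\overline{\gamma_{bb}}$).

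I expect the main obstacle to be purely algebraic: carefully tracking the direction of each inequality through the denominator-clearing, and handling the genuine case split on the $\overline{\gamma_{bm}}$-coefficient of $Q(1)\le 0$, which is what turns one of the three conditions into a constraint of the $(\overline{\gamma_{mm}})^{2}-1$ shape rather than a plain lower bound on $\overline{\gamma_{bm}}$. A secondary check is that the endpoint reduction is legitimate: I should confirm that the smaller root of $Q$ is always negative (as asserted in the lemma's proof), so that no extra constraint arises from the left end of $[0,1]$, and record the sign of the constant term of $Q$, which is what governs this.
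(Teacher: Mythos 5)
Your proposal is correct and is essentially the paper's own argument: both reduce concavity of $r_m(r_b)$ on $[0,s_b]$ to the quadratic (\ref{eq:quad-ineq-alphab}) being non-positive for all $\alpha_b\in[0,1]$ and then unpack this into the three parameter inequalities, the only (harmless, and if anything cleaner) difference being that you read the condition off the endpoint values $Q(0)\le 0$ and $Q(1)\le 0$ of the increasing convex quadratic, whereas the paper writes out the larger root in (\ref{eq:alphab+}) and imposes $\alpha_b^+\ge 1$, with the attendant square-root manipulation. One caution: the sign condition your $Q(1)\le 0$ step actually yields is $\overline{\gamma_{mb}}>(\overline{\gamma_{bb}})^2-1$ --- precisely the paper's (\ref{eq:gammabm-cond-2}) --- and not literally the printed first entry $\overline{\gamma_{bm}}>(\overline{\gamma_{mm}})^2-1$ of (\ref{eq:concave-rm}); this mismatch (the first entries of (\ref{eq:concave-rm}) and (\ref{eq:concave-rb}) appear to be interchanged) lies in the paper's statement rather than in your derivation, which coincides with what the paper's proof establishes.
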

\iffullpaper
\begin{proof}
From the proof of Lemma \ref{lemma:convexity-of-cap-region}, for $r_m(r_b)$ to be concave in all $r_b\in [0, s_b]$, the quadratic function (\ref{eq:quad-ineq-alphab}) in $\alpha_b$ needs to be non-positive for all $\alpha_b\in[0, 1]$. It follows that the discriminant of (\ref{eq:quad-ineq-alphab}) must be positive and the larger of the roots, $\alpha_b^+$, must be greater than or equal to 1 (the smaller root is negative). Finding the larger root of (\ref{eq:quad-ineq-alphab}) gives:
\begin{align}\label{eq:alphab+}
\alpha_b^+ &= \frac{1+\overline{\gamma_{mm}}}{\overline{\gamma_{bm}}}\bigg(-1 \notag\\
&+ \sqrt{1 + \frac{(\overline{\gamma_{bm}})^2}{(\overline{\gamma_{bb}})^2}
\cdot\frac{1+\overline{\gamma_{mb}}}{(1+\overline{\gamma_{mm}})^2}-\frac{\overline{\gamma_{bm}}}{\overline{\gamma_{bb}}}\cdot\frac{2+\overline{\gamma_{mb}}}{1+\overline{\gamma_{mm}}}}\bigg) \geq 1.
\end{align}
From (\ref{eq:alphab+}), as $\alpha_b^+ > 0$, it must also be:
\begin{align}
&\frac{(\overline{\gamma_{bm}})^2}{(\overline{\gamma_{bb}})^2}\cdot\frac{1+\overline{\gamma_{mb}}}{(1+\overline{\gamma_{mm}})^2}-\frac{\overline{\gamma_{bm}}}{\overline{\gamma_{bb}}}\cdot\frac{2+\overline{\gamma_{mb}}}{1+\overline{\gamma_{mm}}} > 0\notag \\
\Rightarrow \quad & \overline{\gamma_{bm}} > \overline{\gamma_{bb}}(1+\overline{\gamma_{mm}})\cdot \frac{2+\overline{\gamma_{mb}}}{1+\overline{\gamma_{mb}}}. \label{eq:gammabm-cond-1}
\end{align}
Note that (\ref{eq:gammabm-cond-1}) implies that the discriminant of (\ref{eq:quad-ineq-alphab}) is greater than 1 and therefore positive. 

Further, solving (\ref{eq:alphab+}) for $\overline{\gamma_{bm}}$, we get:
\begin{align}
&\frac{1+\overline{\gamma_{mm}}}{\overline{\gamma_{bm}}}\bigg(-1 \notag\\
&+ \sqrt{1 + \frac{(\overline{\gamma_{bm}})^2}{(\overline{\gamma_{bb}})^2}\cdot\frac{1+\overline{\gamma_{mb}}}{(1+\overline{\gamma_{mm}})^2}-\frac{\overline{\gamma_{bm}}}{\overline{\gamma_{bb}}}\cdot\frac{2+\overline{\gamma_{mb}}}{1+\overline{\gamma_{mm}}}}\bigg) \geq 1\notag\\
\Leftrightarrow \; &1 + \frac{(\overline{\gamma_{bm}})^2}{(\overline{\gamma_{bb}})^2}\frac{1+\overline{\gamma_{mb}}}{(1+\overline{\gamma_{mm}})^2}-\frac{\overline{\gamma_{bm}}}{\overline{\gamma_{bb}}}\frac{2+\overline{\gamma_{mb}}}{1+\overline{\gamma_{mm}}} \geq \Big(\frac{\overline{\gamma_{bm}}}{1+\overline{\gamma_{mm}}} + 1\Big)^2 
\notag\\
\Leftrightarrow\; &\frac{\overline{\gamma_{bm}}}{1+\overline{\gamma_{mm}}}\left(\frac{1+\overline{\gamma_{mb}}}{(\overline{\gamma_{bb}})^2}-1\right) - \frac{2+\overline{\gamma_{mb}}}{\overline{\gamma_{bb}}} - 2 \geq 0. 
\label{eq:gamma-bm-intermidiate-ineq}
\end{align}
Now, for (\ref{eq:gamma-bm-intermidiate-ineq}) to be possible to satisfy, as $\overline{\gamma_{bb}}, \overline{\gamma_{mm}}, \overline{\gamma_{bm}}, \overline{\gamma_{mb}}$ are all strictly positive, it must be:
\begin{align}
&\frac{1+\overline{\gamma_{mb}}}{(\overline{\gamma_{bb}})^2}-1 > 0\notag\\
\Rightarrow \quad & \overline{\gamma_{mb}} > (\overline{\gamma_{bb}})^2 - 1. \label{eq:gammabm-cond-2}
\end{align}
Finally, solving (\ref{eq:gamma-bm-intermidiate-ineq}) (given that (\ref{eq:gammabm-cond-2}) holds), we get:
\begin{equation}
\overline{\gamma_{bm}} \geq (1+\overline{\gamma_{mm}})\frac{2 + \frac{2+\overline{\gamma_{mb}}}{\overline{\gamma_{bb}}}}{\frac{1+\overline{\gamma_{mb}}}{(\overline{\gamma_{bb}})^2}-1}. \label{eq:gammabm-cond-3}
\end{equation}
Inequalities (\ref{eq:gammabm-cond-1}), (\ref{eq:gammabm-cond-2}), and (\ref{eq:gammabm-cond-3}) and their counterparts obtained when $r_b(r_m)$ is concave give (\ref{eq:concave-rm})--(\ref{eq:concave-rb}) from the statement of the lemma.
\end{proof}
\fi
Finally, we show that whenever both $r_m(r_b)$ is concave for all $r_b\in[0, s_b]$ and $r_b(r_m)$ is concave for all $r_m\in[0, s_m]$, the FD region is convex\iffullpaper \else (the proof of the proposition is in \cite{capacity-region-full})\fi.
\begin{proposition}
If both $r_m(r_b)$ is concave for all $r_b\in[0, s_b]$ and $r_b(r_m)$ is concave for all $r_m\in[0, s_m]$, then the FD capacity region is convex.
\end{proposition}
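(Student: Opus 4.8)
The plan is to reduce the proposition to verifying a single geometric condition and then to check that condition by a short slope computation. The paragraph right after Proposition~\ref{prop:FD-cap-region} records that the FD region is convex if and only if (i)~$r_b(r_m)$ is concave on $(0,s_m]$, (ii)~$r_m(r_b)$ is concave on $(0,s_b]$, and (iii)~the two boundary arcs $\mathcal{S}_b$ and $\mathcal{S}_m$ meet at $(s_b,s_m)$ under an angle smaller than $\pi$. Conditions (i) and (ii) are exactly the two hypotheses of the proposition, so the entire content of the proof is to establish (iii): the boundary has no reflex (inward) corner at the junction $(s_b,s_m)$.

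I would make (iii) quantitative as follows. Parametrize $\mathcal{S}_b$ by $\alpha_b\in[0,1]$ as in~(\ref{eq:rb-rm-alpha-b}) and $\mathcal{S}_m$ by $\alpha_m\in[0,1]$ as in~(\ref{eq:rb-rm-alpha-m}); the junction $(s_b,s_m)$ corresponds to $\alpha_b=1$ on $\mathcal{S}_b$ and to $\alpha_m=1$ on $\mathcal{S}_m$. Writing the whole boundary as a single function $r_m=f(r_b)$ on $[0,\overline{r_b}]$ --- it is $\mathcal{S}_b$ on $[0,s_b]$ and $\mathcal{S}_m$, re-expressed with $r_b$ as the free variable, on $[s_b,\overline{r_b}]$ --- note that each arc is strictly decreasing (increasing a station's normalized power raises its own rate and lowers the other's). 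Hence, using the elementary fact that the inverse of a $C^2$ strictly decreasing concave function is again strictly decreasing and concave, hypotheses (i) and (ii) make $f$ concave on each of the two subintervals, so the FD region (the set below $f$ in the nonnegative quadrant) is convex precisely when the left derivative of $f$ at $r_b=s_b$ --- the tangent slope $\frac{dr_m}{dr_b}$ of $\mathcal{S}_b$ at $\alpha_b=1$ --- is at least the right derivative, namely the tangent slope $\frac{dr_m}{dr_b}$ of $\mathcal{S}_m$ at $\alpha_m=1$. That slope inequality is exactly condition (iii).

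To carry out the comparison I would differentiate~(\ref{eq:rb-rm-alpha-b}) and~(\ref{eq:rb-rm-alpha-m}) and use $\frac{dr_m}{dr_b}=\frac{dr_m/d\alpha}{dr_b/d\alpha}$, evaluated at $\alpha_b=1$ on $\mathcal{S}_b$ and at $\alpha_m=1$ on $\mathcal{S}_m$. After cancelling the strictly positive common factors $\overline{\gamma_{mb}}$, $\overline{\gamma_{bm}}$, $1+\overline{\gamma_{mm}}+\overline{\gamma_{bm}}$, and $1+\overline{\gamma_{bb}}+\overline{\gamma_{mb}}$, the inequality ``left slope $\ge$ right slope'' collapses to
\[
\frac{\overline{\gamma_{bb}}}{1+\overline{\gamma_{bb}}}\;\le\;\frac{1+\overline{\gamma_{mm}}}{\overline{\gamma_{mm}}},
\]
which holds --- in fact strictly --- for every positive $\overline{\gamma_{bb}},\overline{\gamma_{mm}}$, since the left side is below $1$ and the right side is above $1$. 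Hence the angle at $(s_b,s_m)$ is strictly less than $\pi$, condition (iii) holds, and together with the assumed (i) and (ii) the characterization gives that the FD region is convex.

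The main (and essentially only) subtlety is to recognize that (iii) does not follow automatically from (i) and (ii) and must be checked on its own, and that the right thing to check is the ordering of the two arcs' tangent slopes in the common coordinate $r_b$ at the junction, not a statement about their curvatures there. Once that is in place, the inverse-function observation and the one-line slope reduction above complete the proof; the computation involves no delicate estimate.
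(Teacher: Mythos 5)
Your proof is correct, and it follows the paper's overall strategy: reduce the claim to the junction condition (iii) at $(s_b,s_m)$ and verify it by comparing the one-sided tangent slopes obtained from the parametrizations (\ref{eq:rb-rm-alpha-b})--(\ref{eq:rb-rm-alpha-m}). The difference is in the endgame, and it is worth recording. The paper evaluates the slopes in (\ref{eq:drm-drb-alphab=1})--(\ref{eq:drb-drm-alpham=1}), reduces the junction condition to (\ref{eq:gammabm-gammamb}), i.e.\ $\overline{\gamma_{bm}}\,\overline{\gamma_{mb}}\geq \overline{\gamma_{mm}}\,\overline{\gamma_{bb}}/\bigl((1+\overline{\gamma_{mm}})(1+\overline{\gamma_{bb}})\bigr)$, which is \emph{not} automatic, and then needs the necessary conditions for concavity, (\ref{eq:first-half-gammabm-gammamb})--(\ref{eq:second-half-gammabm-gammamb}), to establish it. Your cancellation instead collapses the slope inequality to $\overline{\gamma_{bb}}/(1+\overline{\gamma_{bb}})\leq (1+\overline{\gamma_{mm}})/\overline{\gamma_{mm}}$, which holds strictly for all positive parameters, so in your version the corner at $(s_b,s_m)$ is never reflex and the concavity hypotheses enter only through (i) and (ii). Your computation checks out: the product of (\ref{eq:drm-dalpham}) and (\ref{eq:dalphab-drb}) at $\alpha_b=1$ equals $-\,\overline{\gamma_{bb}}\,\overline{\gamma_{mb}}\,(1+\overline{\gamma_{mm}}+\overline{\gamma_{bm}})\big/\bigl((1+\overline{\gamma_{bb}})(1+\overline{\gamma_{bb}}+\overline{\gamma_{mb}})\,\overline{\gamma_{bm}}\bigr)$, so the paper's (\ref{eq:drm-drb-alphab=1}) is missing the factor $\overline{\gamma_{mb}}$ (and, symmetrically, (\ref{eq:drb-drm-alpham=1}) the factor $\overline{\gamma_{bm}}$), which is precisely why the paper's junction condition retains the product $\overline{\gamma_{bm}}\,\overline{\gamma_{mb}}$ and forces the extra appeal to the concavity conditions. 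Thus your route is valid, slightly stronger (the angle condition is unconditional), and more economical, while the paper's route still yields the proposition because the hypotheses do imply its more demanding inequality. The auxiliary facts you use---each arc is strictly decreasing, and the inverse of a strictly decreasing concave function is again decreasing and concave, so the whole boundary is a single function $r_m=f(r_b)$ that is concave on each piece---are correct and suffice to reduce convexity of the region to the single left-versus-right derivative comparison at the junction.
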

\iffullpaper
\begin{proof}
Showing that the FD capacity region is convex is equivalent to showing that whenever (\ref{eq:concave-rm})--(\ref{eq:concave-rb}) hold, $r_m(r_b)$ and $r_b(r_m)$ intersect over an angle that is smaller than $\pi$ at the point $(s_b, s_m)$. (That is to say, the tangents of $r_m(r_b)$ and $r_b(r_m)$ at $(s_b, s_m)$ form an angle that is smaller than $\pi$.)

Observe the derivative of $r_m(r_b)$ with respect to $r_b$ at $r_b = s_b$ (equivalently $\alpha_b = 1$). From (\ref{eq:drm-drb}), (\ref{eq:dalphab-drb}), and (\ref{eq:drm-dalpham}):
\begin{align}
\left.\frac{d r_m}{d r_b}\right|_{r_b = s_b} &= \left.\left(\frac{d r_m}{d \alpha_b}\cdot \frac{d \alpha_b}{d r_b}\right)\right|_{\alpha_b = 1}\notag\\ 
&= - \frac{1+\overline{\gamma_{mm}}+\overline{\gamma_{bm}}}{1+\overline{\gamma_{bb}} + \overline{\gamma_{mb}}}\cdot \frac{\overline{\gamma_{bb}}}{1+\overline{\gamma_{bb}}}\cdot \frac{1}{\overline{\gamma_{bm}}}.\label{eq:drm-drb-alphab=1}
\end{align}
Symmetrically:
\begin{equation}
\left.\frac{d r_b}{d r_m}\right|_{r_m = s_m} = - \frac{1+\overline{\gamma_{bb}}+\overline{\gamma_{mb}}}{1+\overline{\gamma_{mm}}+\overline{\gamma_{mb}}}\cdot \frac{\overline{\gamma_{mm}}}{1+\overline{\gamma_{mm}}}\cdot \frac{1}{\overline{\gamma_{mb}}}.\label{eq:drb-drm-alpham=1}
\end{equation}
Observe that both $\left.\frac{d r_m}{d r_b}\right|_{r_b = s_b}< 0$ and $\left.\frac{d r_b}{d r_m}\right|_{r_m = s_m}<0$. Whenever $r_m(r_b)$ is concave and $r_b(r_m)$ is concave, for the capacity region to be convex it is necessary and sufficient that (see Fig.~\ref{fig:derivatives}):
\begin{align}
&\left(-\left.\frac{d r_b}{d r_m}\right|_{r_m = s_m}\right)^{-1} \geq - \left.\frac{d r_m}{d r_b}\right|_{r_b = s_b}\notag\\
\Leftrightarrow \quad & {\overline{\gamma_{mb}}}\cdot \frac{1+\overline{\gamma_{mm}}}{\overline{\gamma_{mm}}}\geq \frac{\overline{\gamma_{bb}}}{1+\overline{\gamma_{bb}}}\cdot \frac{1}{\overline{\gamma_{bm}}}\notag\\
\Leftrightarrow \quad & \overline{\gamma_{bm}}\overline{\gamma_{mb}}\geq \frac{\overline{\gamma_{mm}}\overline{\gamma_{bb}}}{(1+\overline{\gamma_{mm}})(1+\overline{\gamma_{bb}})}.\label{eq:gammabm-gammamb}
\end{align}
Recall (from (\ref{eq:gammabm-cond-1})) that for $r_m(r_b)$ to be concave, it must be:
\begin{equation}
\overline{\gamma_{bm}} > \overline{\gamma_{bb}}(1+\overline{\gamma_{mm}})\frac{2+\overline{\gamma_{mb}}}{1+\overline{\gamma_{mb}}} >  \overline{\gamma_{bb}}(1+\overline{\gamma_{mm}}) \geq \frac{\overline{\gamma_{bb}}}{1+\overline{\gamma_{mm}}}. \label{eq:first-half-gammabm-gammamb}
\end{equation}
Symmetrically:
\begin{equation}
\overline{\gamma_{bm}} >  \frac{\overline{\gamma_{mm}}}{1+\overline{\gamma_{bb}}}. \label{eq:second-half-gammabm-gammamb}
\end{equation}
Combining (\ref{eq:first-half-gammabm-gammamb}) and (\ref{eq:second-half-gammabm-gammamb}) gives (\ref{eq:gammabm-gammamb}), and therefore, the capacity region is convex whenever $r_m(r_b)$ and $r_b(r_m)$ are both concave (which, in turn, is equivalent to (\ref{eq:concave-rm})--(\ref{eq:concave-rb}) both being true).
\begin{figure}[t]
\center
\subfloat[]{\label{fig:derivatives-equal}\includegraphics[height = .8in]{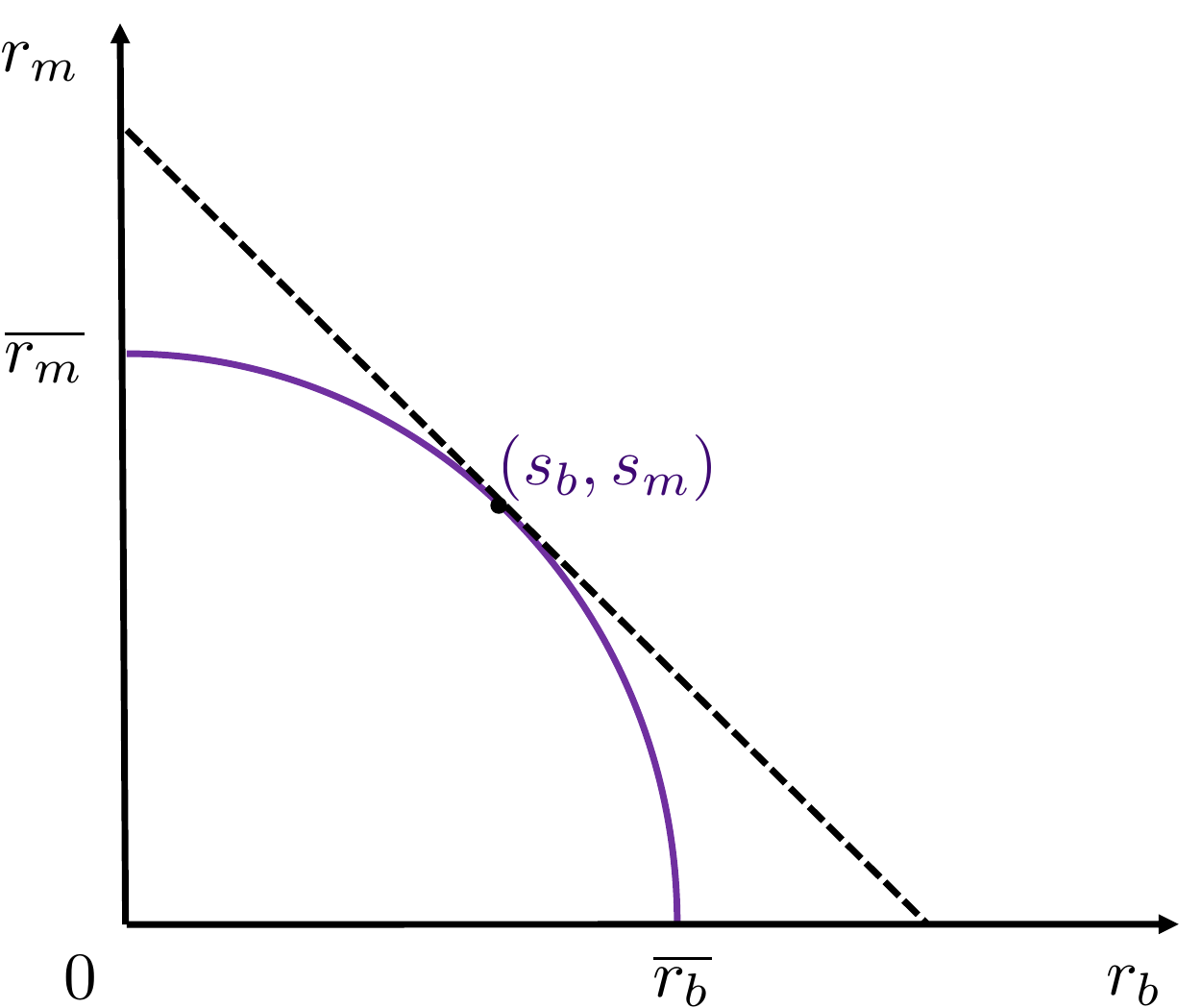}}\hspace{\fill}
\subfloat[]{\label{fig:derivatives-less}\includegraphics[height = .8in]{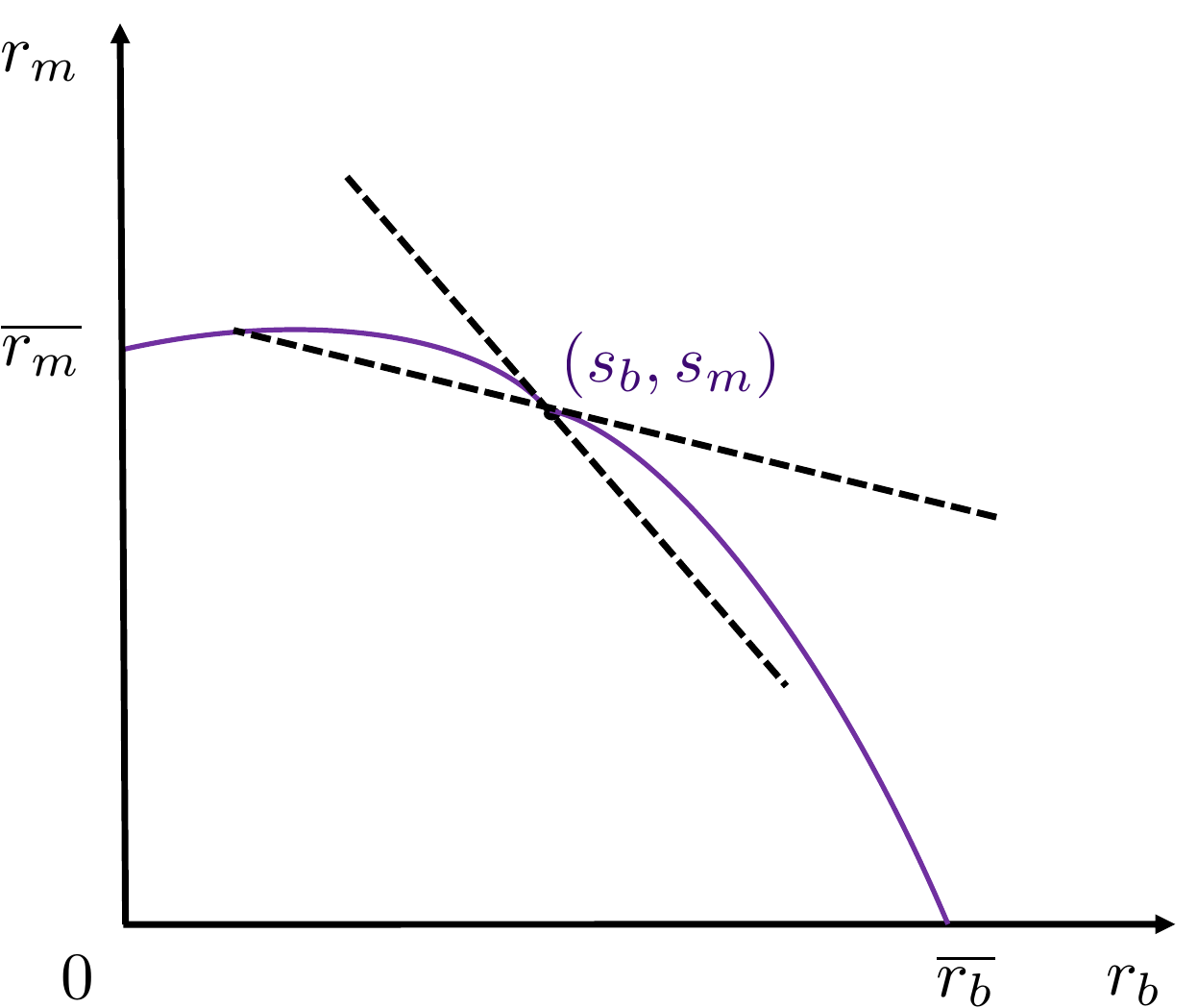}}\hspace{\fill}
\subfloat[]{\label{fig:derivatives-greater}\includegraphics[height = .8in]{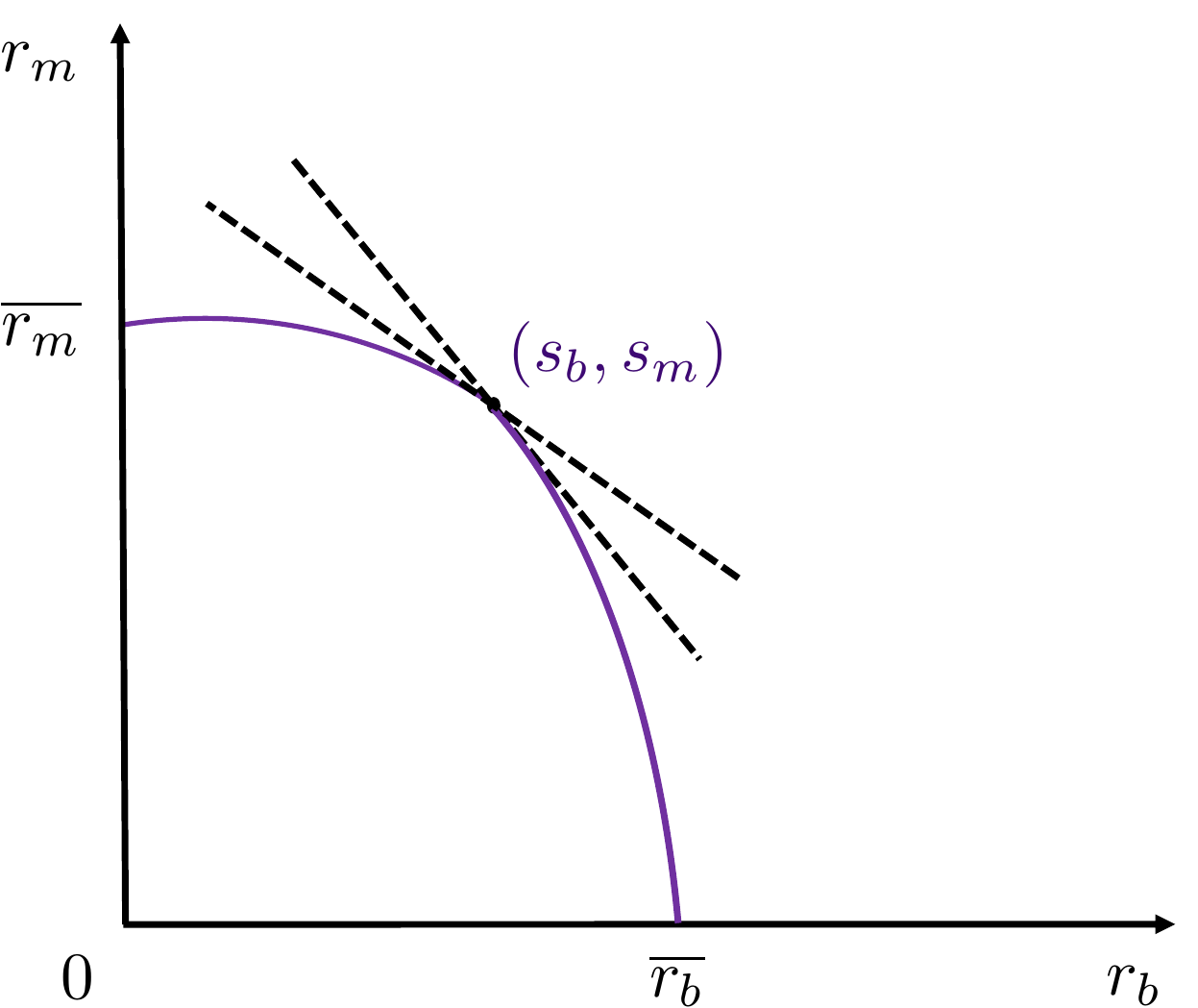}}
\caption{Possible intersections of $r_m(r_b)$ and $r_b(r_m)$ at $(s_b, s_m)$: \protect\subref{fig:derivatives-equal} $-\left(\left.\frac{d r_b}{d r_m}\right|_{r_m = s_m}\right)^{-1} = -\left.\frac{d r_m}{d r_b}\right|_{r_b = s_b}$, \protect\subref{fig:derivatives-less} $-\left(\left.\frac{d r_b}{d r_m}\right|_{r_m = s_m}\right)^{-1} < -\left.\frac{d r_m}{d r_b}\right|_{r_b = s_b}$, and \protect\subref{fig:derivatives-greater} $-\left(\left.\frac{d r_b}{d r_m}\right|_{r_m = s_m}\right)^{-1} > -\left.\frac{d r_m}{d r_b}\right|_{r_b = s_b}$}.
\label{fig:derivatives}
\end{figure}
\end{proof}
\fi

Fig.~\ref{fig:cap-region-convexity} illustrates the regions of (maximum) SNR values $\overline{\gamma_{bm}}$ and $\overline{\gamma_{mb}}$ for which the FD capacity region is convex, for different values of $\overline{\gamma_{mm}}$ and $\overline{\gamma_{bb}}$, compared to the maximum achievable rate improvements. The black line 
delimits the region of $\overline{\gamma_{bm}}$ and $\overline{\gamma_{mb}}$ for which the FD region is convex: north and east from it, the  region is convex, while south and west from it, the region is not convex. As Fig.~\ref{fig:cap-region-convexity} suggests, high (over $1.6\times$) rate improvements are mainly achievable in the area where the FD region is convex, unless one of the SNR values $\overline{\gamma_{bm}}$ and $\overline{\gamma_{mb}}$ is much higher than the other.

\begin{figure*}[t!]
\center
\subfloat[$\overline{\gamma_{bb}}=0$dB, $\overline{\gamma_{mm}}=0$dB]{\label{fig:cap_region_00}\includegraphics[scale = 0.22 ]{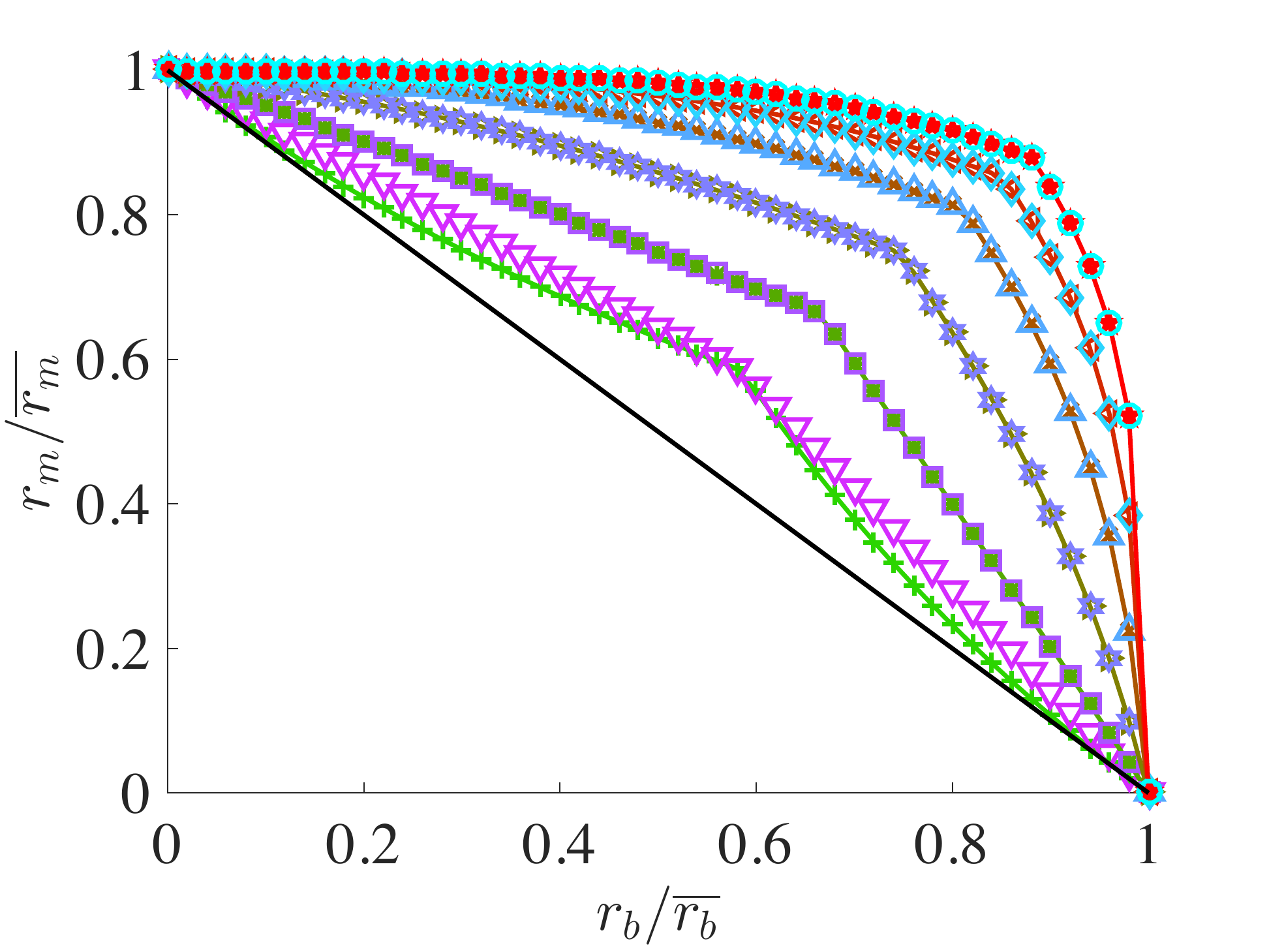}}\hspace{\fill}
\subfloat[$\overline{\gamma_{bb}}=0$dB, $\overline{\gamma_{mm}}=5$dB]{\label{fig:cap_region_05}\includegraphics[scale = 0.22]{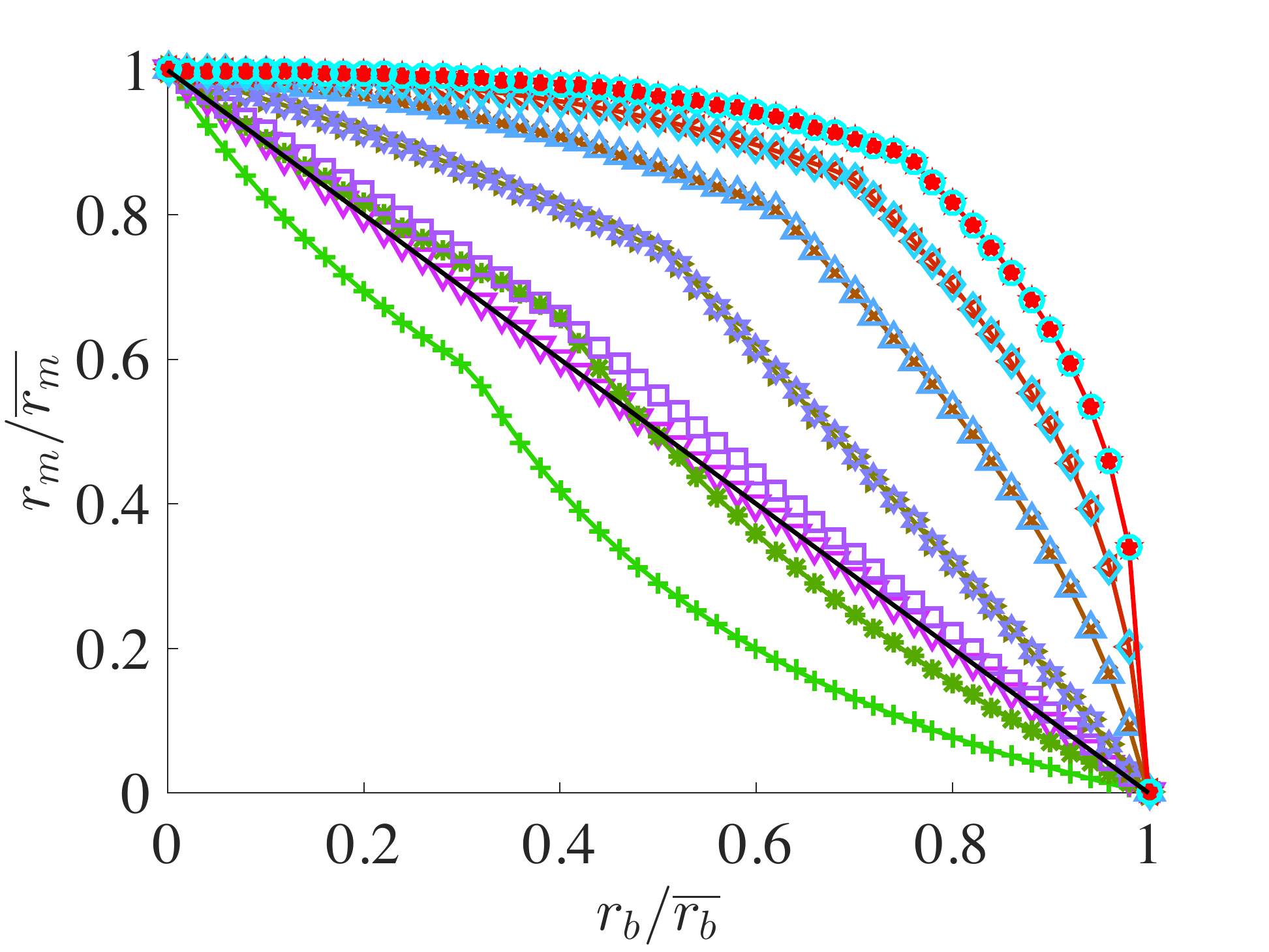}}\hspace{\fill}
\subfloat[$\overline{\gamma_{bb}}=0$dB, $\overline{\gamma_{mm}}=10$dB]{\label{fig:cap_region_010}\includegraphics[scale = 0.22]{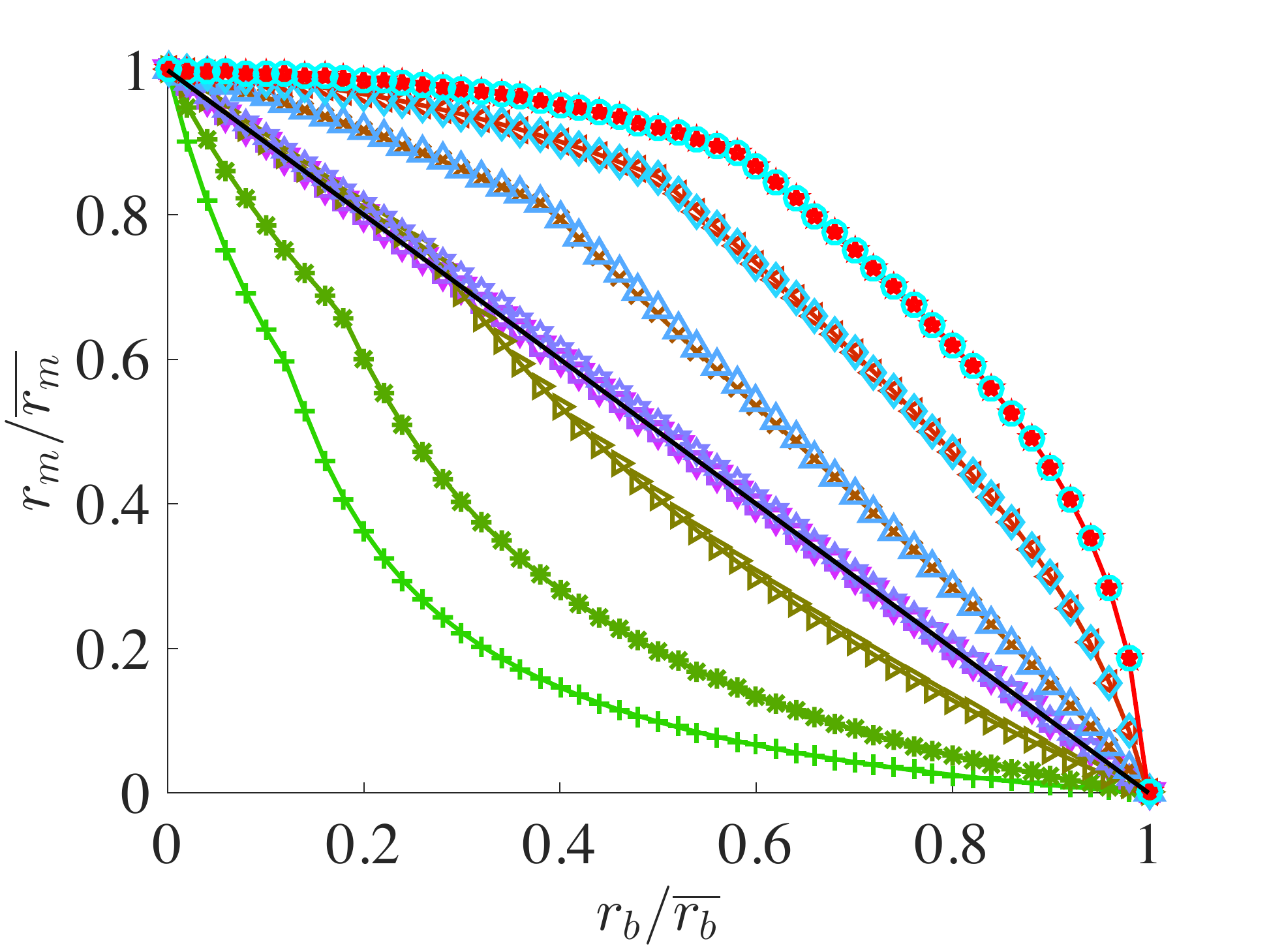}}\hspace{\fill}
\subfloat{\label{fig:cap_region_legend}\includegraphics[scale = 0.25]{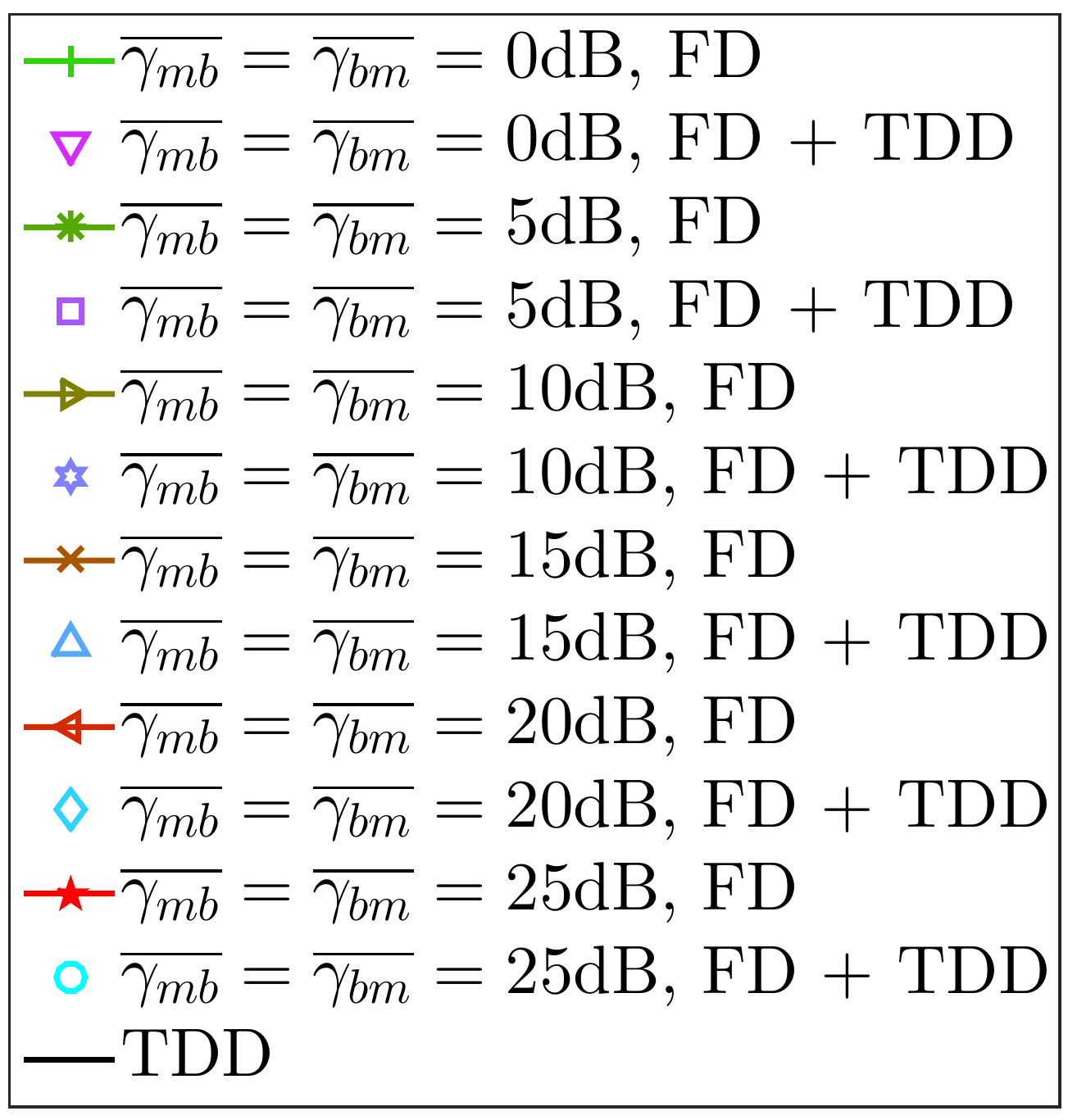}}\\\vspace{-10pt}
\setcounter{subfigure}{3}
\subfloat[$\overline{\gamma_{bb}}=0$dB, $\overline{\gamma_{mm}}=0$dB]{\label{fig:asym_cap_region_00}\includegraphics[scale = 0.22 ]{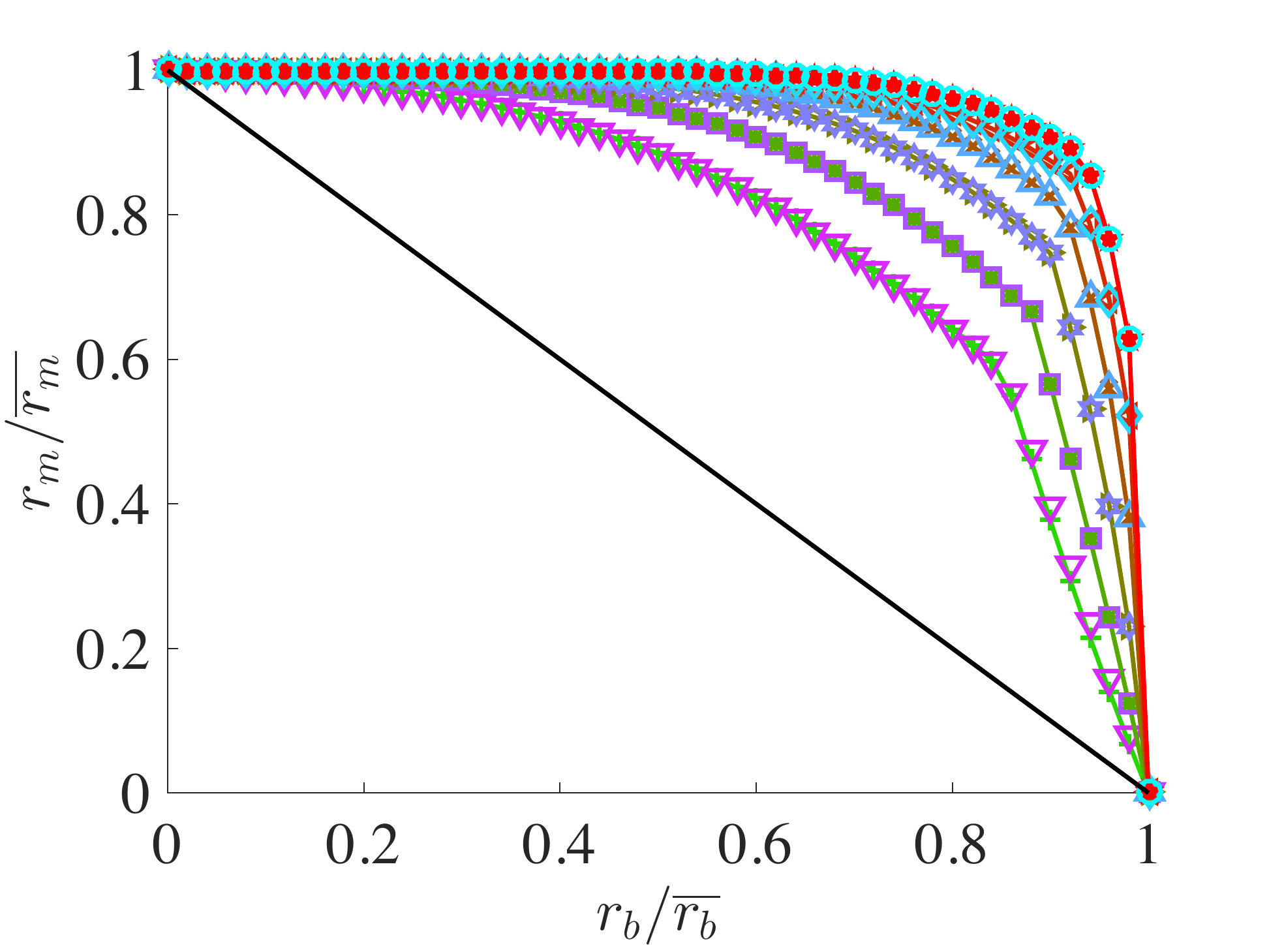}}\hspace{\fill}
\subfloat[$\overline{\gamma_{bb}}=0$dB, $\overline{\gamma_{mm}}=5$dB]{\label{fig:asym_cap_region_05}\includegraphics[scale = 0.22]{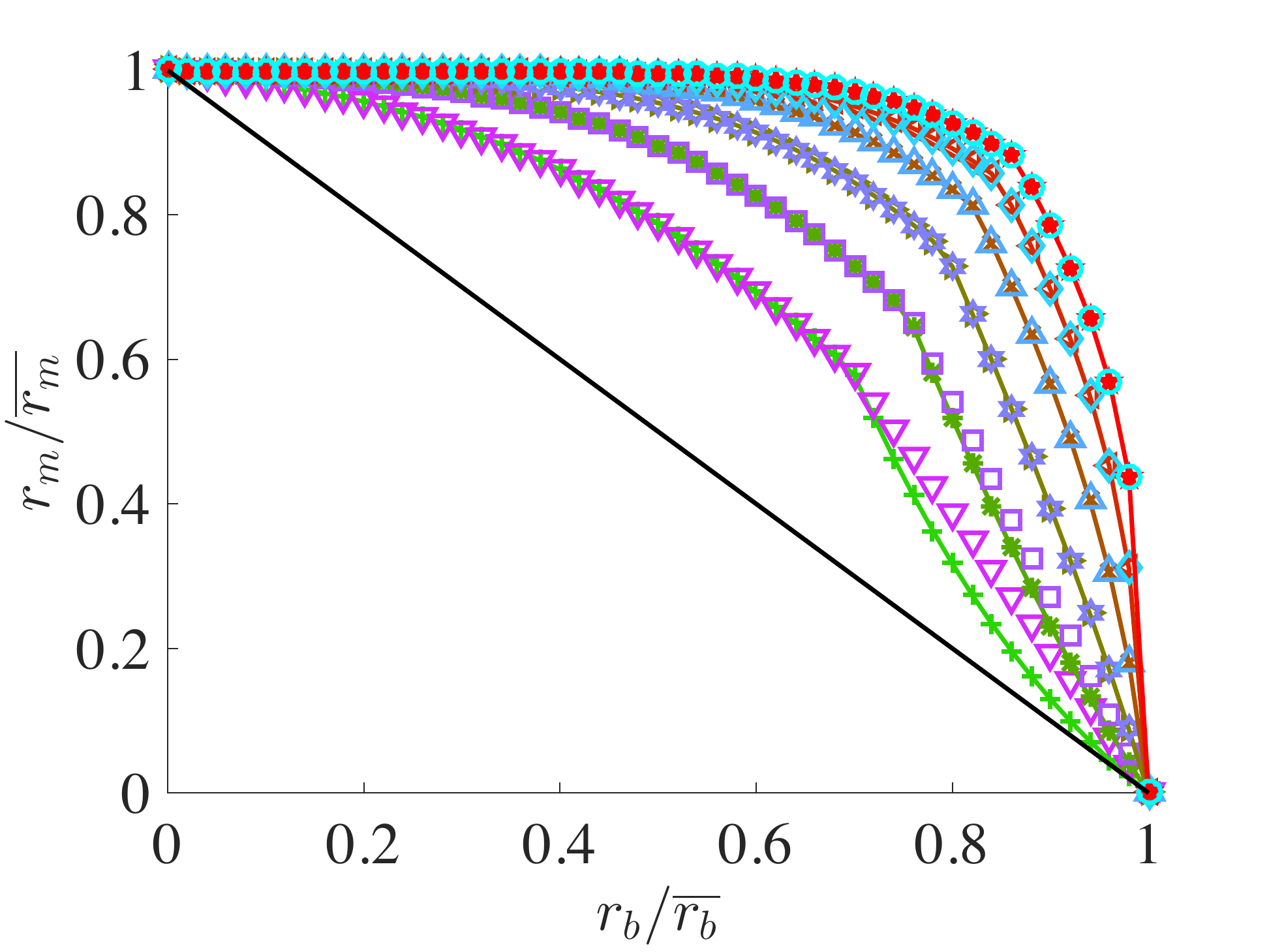}}\hspace{\fill}
\subfloat[$\overline{\gamma_{bb}}=0$dB, $\overline{\gamma_{mm}}=10$dB]{\label{fig:asym_cap_region_010}\includegraphics[scale = 0.22]{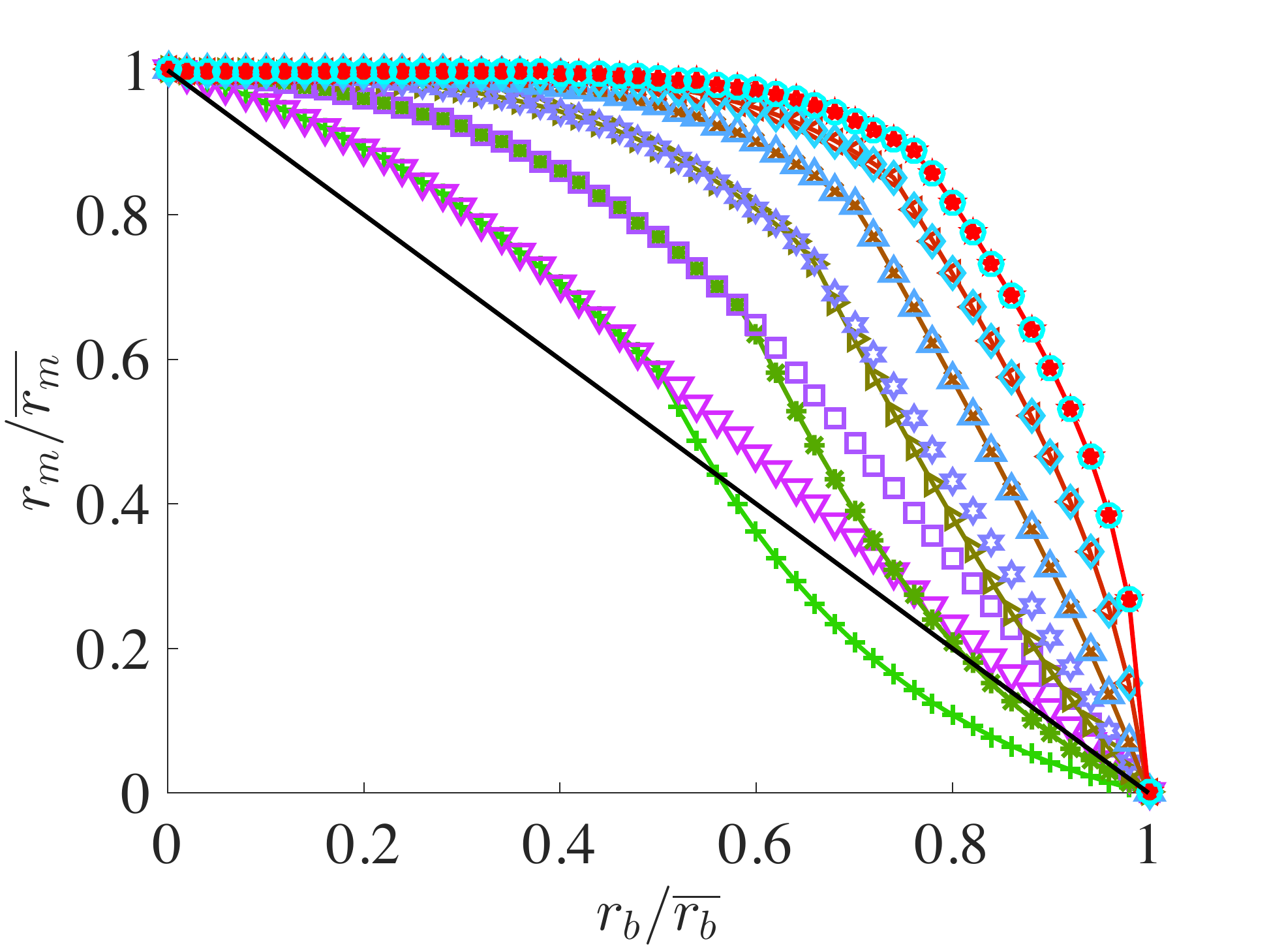}}\hspace{\fill}
\subfloat{\label{fig:asym_cap_region_legend}\includegraphics[scale = 0.25]{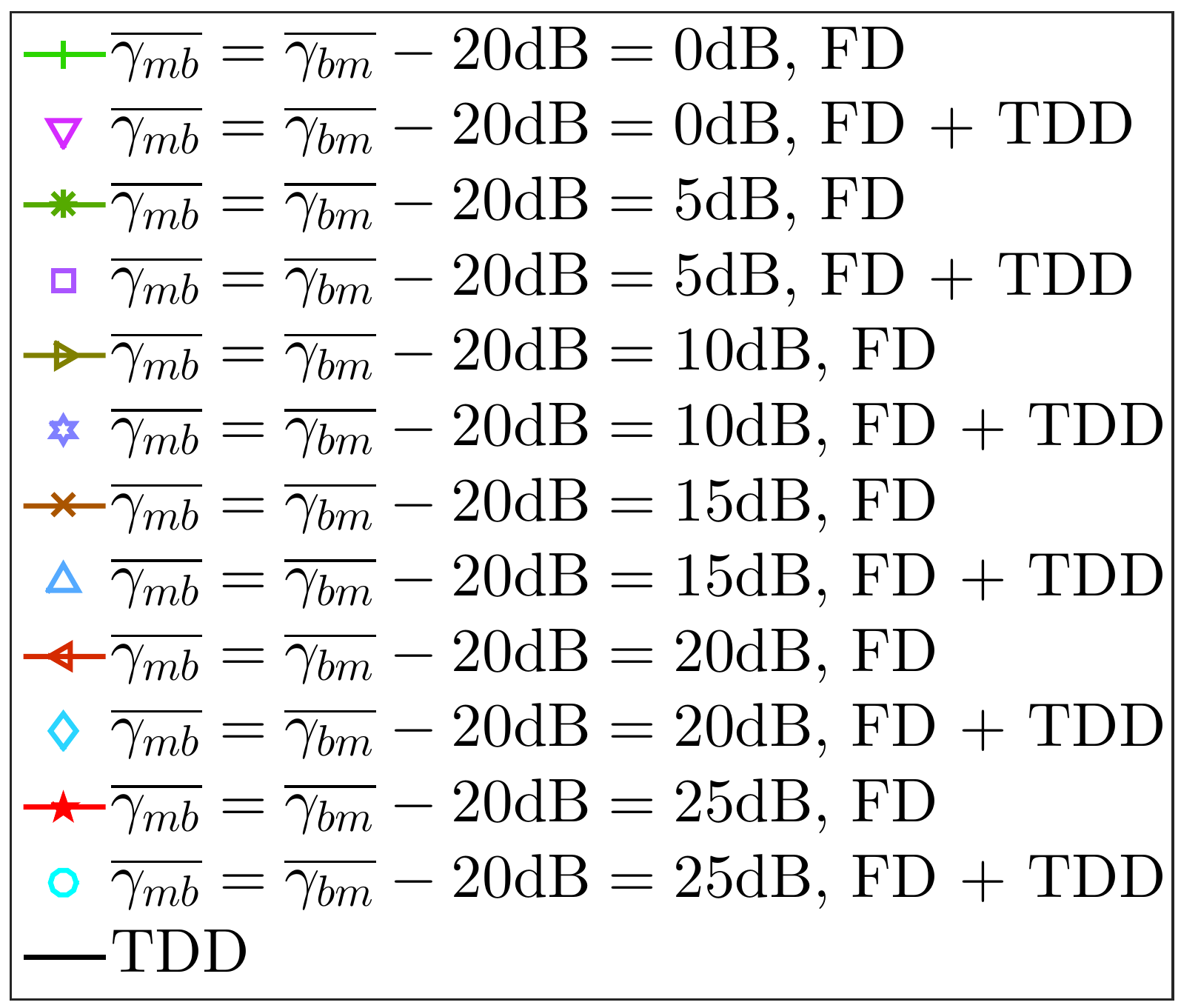}}
\caption{Capacity regions for \protect\subref{fig:cap_region_00}--\protect\subref{fig:cap_region_010} $\overline{\gamma_{bm}} = \overline{\gamma_{mb}}$ and \protect\subref{fig:asym_cap_region_00}--\protect\subref{fig:asym_cap_region_010} $\overline{\gamma_{bm}} > \overline{\gamma_{mb}}$.}\vspace{-10pt}
\label{fig:cap-region-fd-tdd} 
\end{figure*}
\iffullpaper
\begin{figure*}[t!]
\center
\subfloat[$\overline{\gamma_{bb}}=0$dB, $\overline{\gamma_{mm}}=0$dB]{\label{fig:rate_improve_00}\includegraphics[scale = 0.21 ]{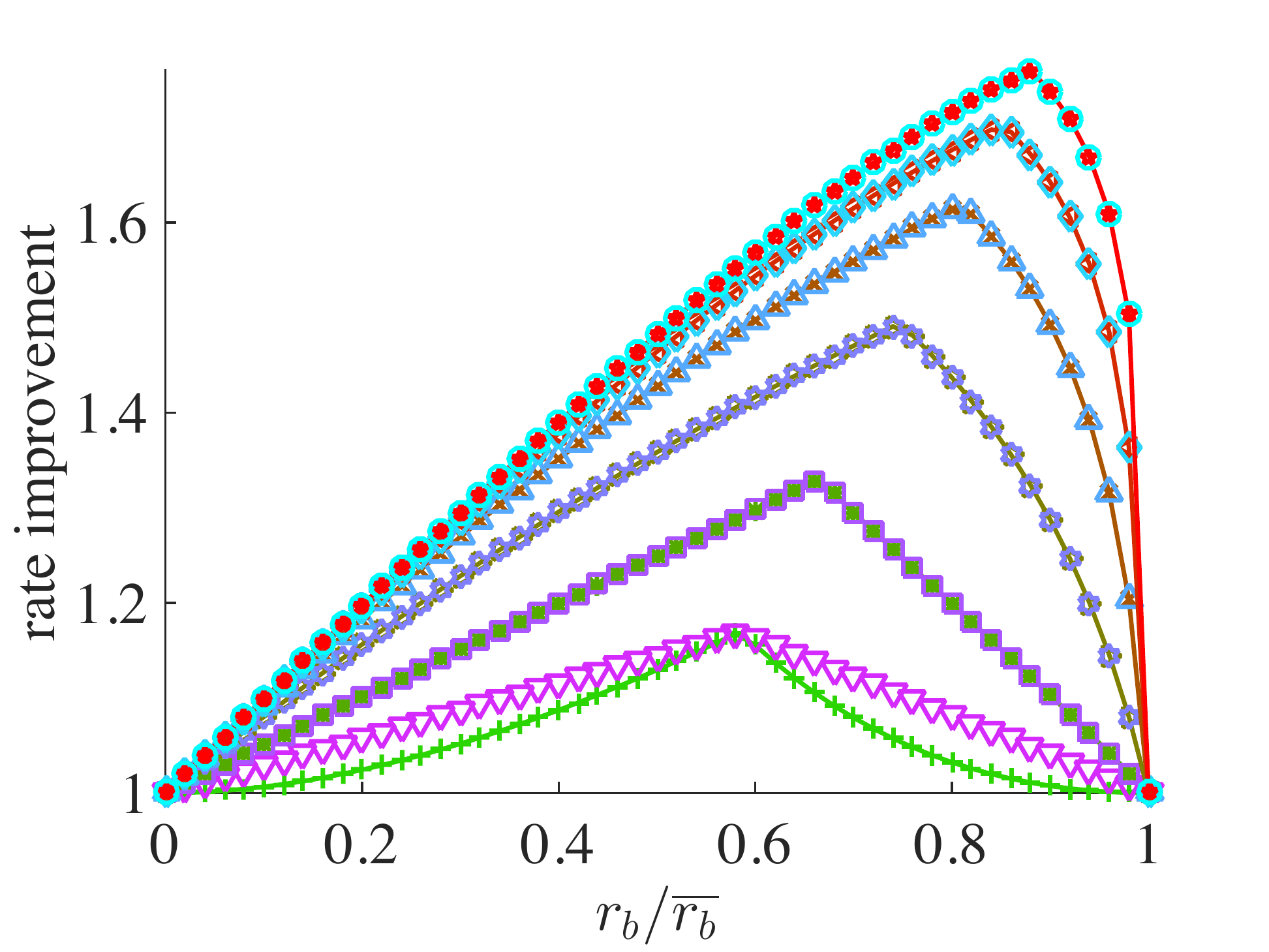}}\hspace{\fill}
\subfloat[$\overline{\gamma_{bb}}=0$dB, $\overline{\gamma_{mm}}=5$dB]{\label{fig:rate_improve_05}\includegraphics[scale = 0.22]{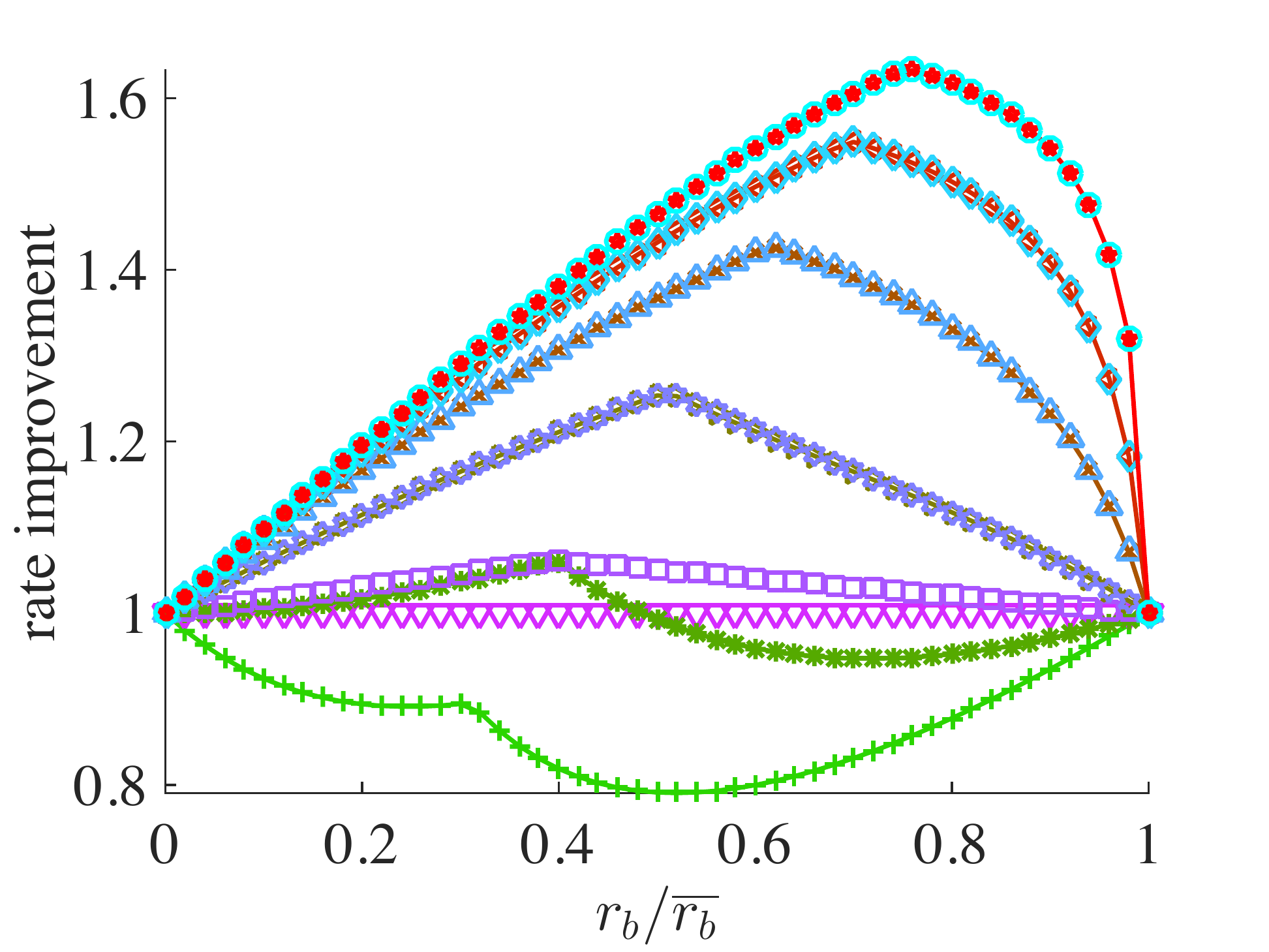}}\hspace{\fill}
\subfloat[$\overline{\gamma_{bb}}=0$dB, $\overline{\gamma_{mm}}=10$dB]{\label{fig:rate_improve_010}\includegraphics[scale = 0.21]{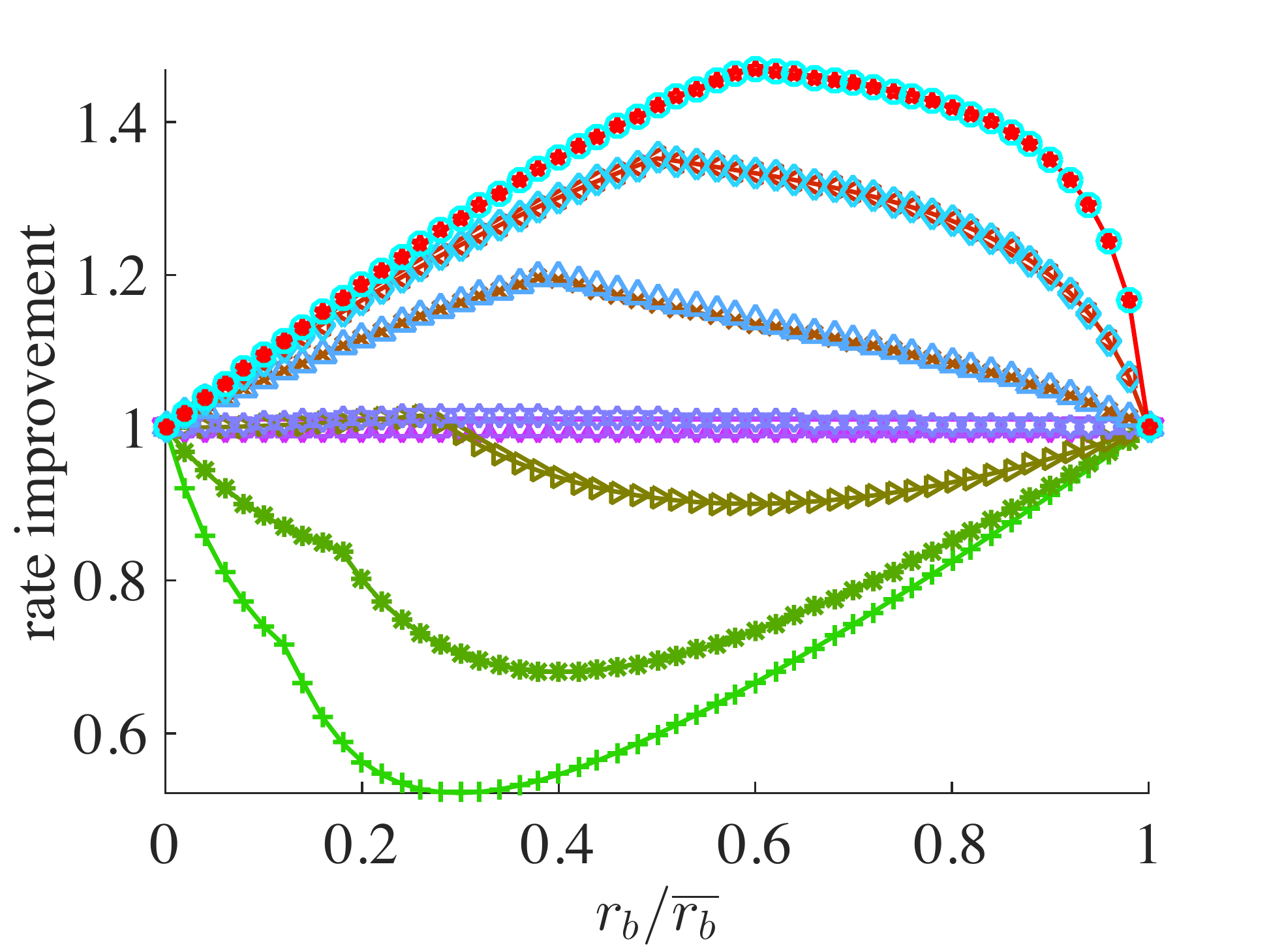}}\hspace{\fill}
\subfloat{\label{fig:rate_improve_legend}\includegraphics[scale = 0.25]{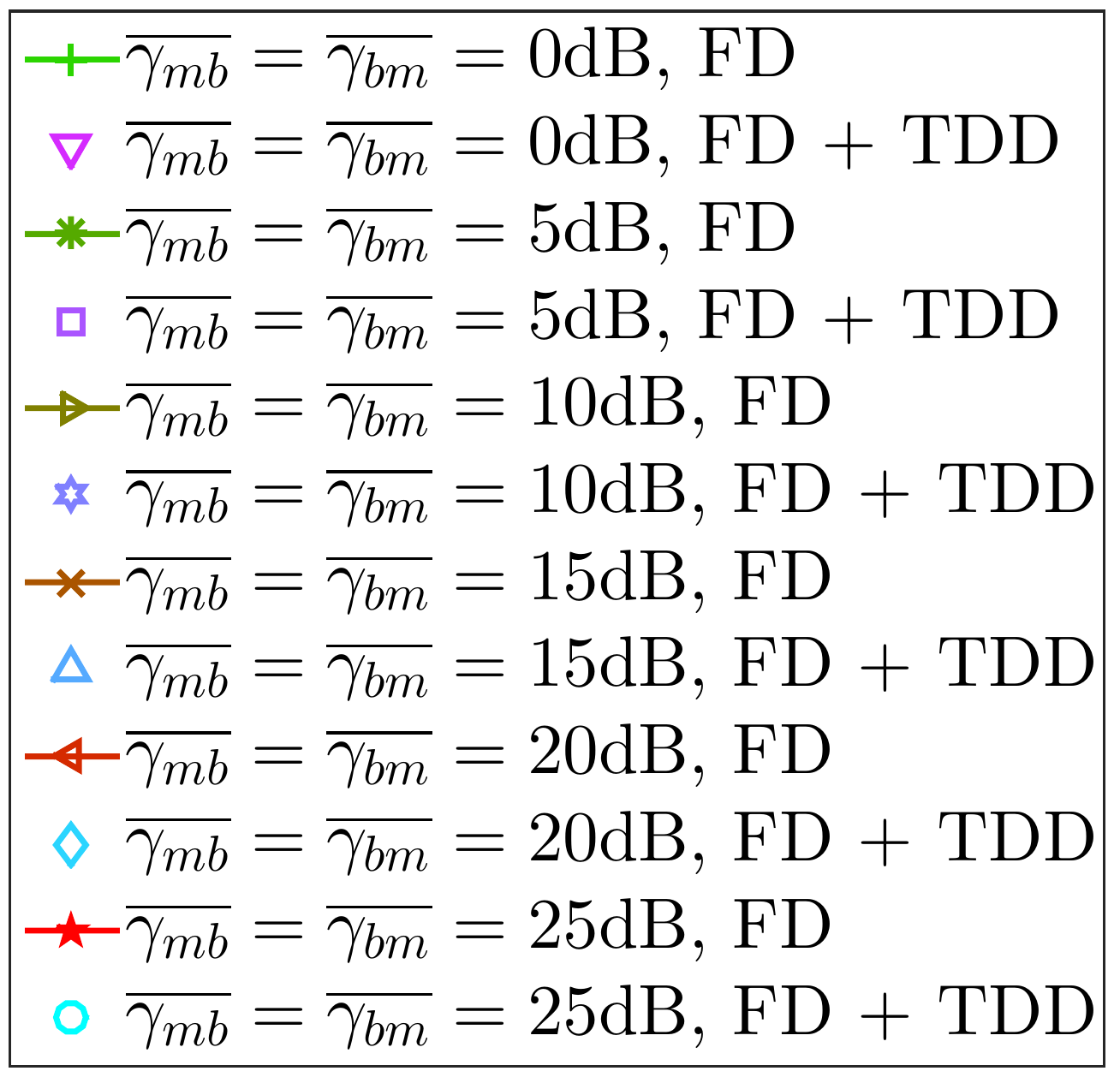}}\\\vspace{-10pt}
\setcounter{subfigure}{3}
\subfloat[$\overline{\gamma_{bb}}=0$dB, $\overline{\gamma_{mm}}=0$dB]{\label{fig:asym_rate_improve_00}\includegraphics[scale = 0.22 ]{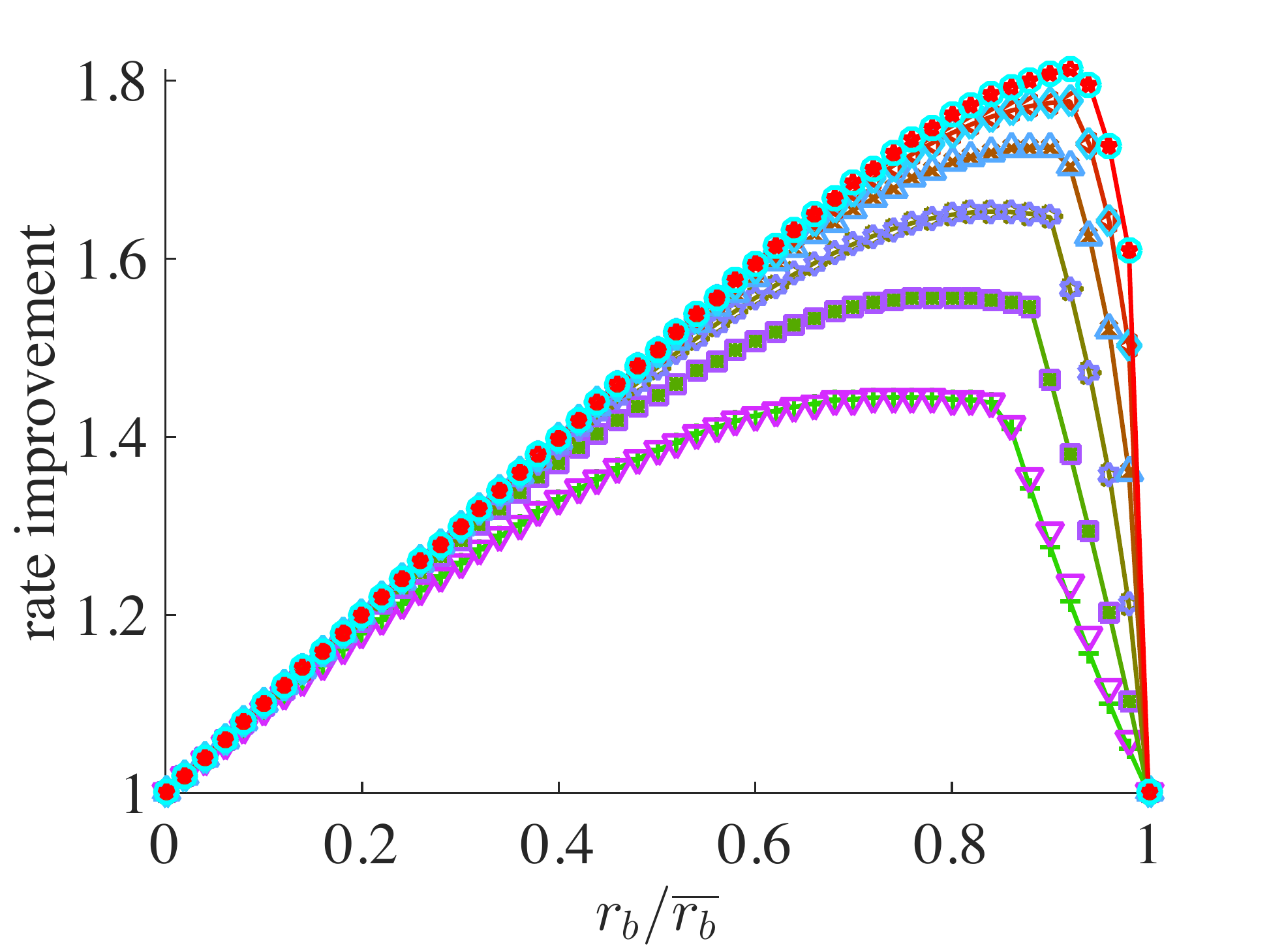}}\hspace{\fill}
\subfloat[$\overline{\gamma_{bb}}=0$dB, $\overline{\gamma_{mm}}=5$dB]{\label{fig:asym_rate_improve_05}\includegraphics[scale = 0.22]{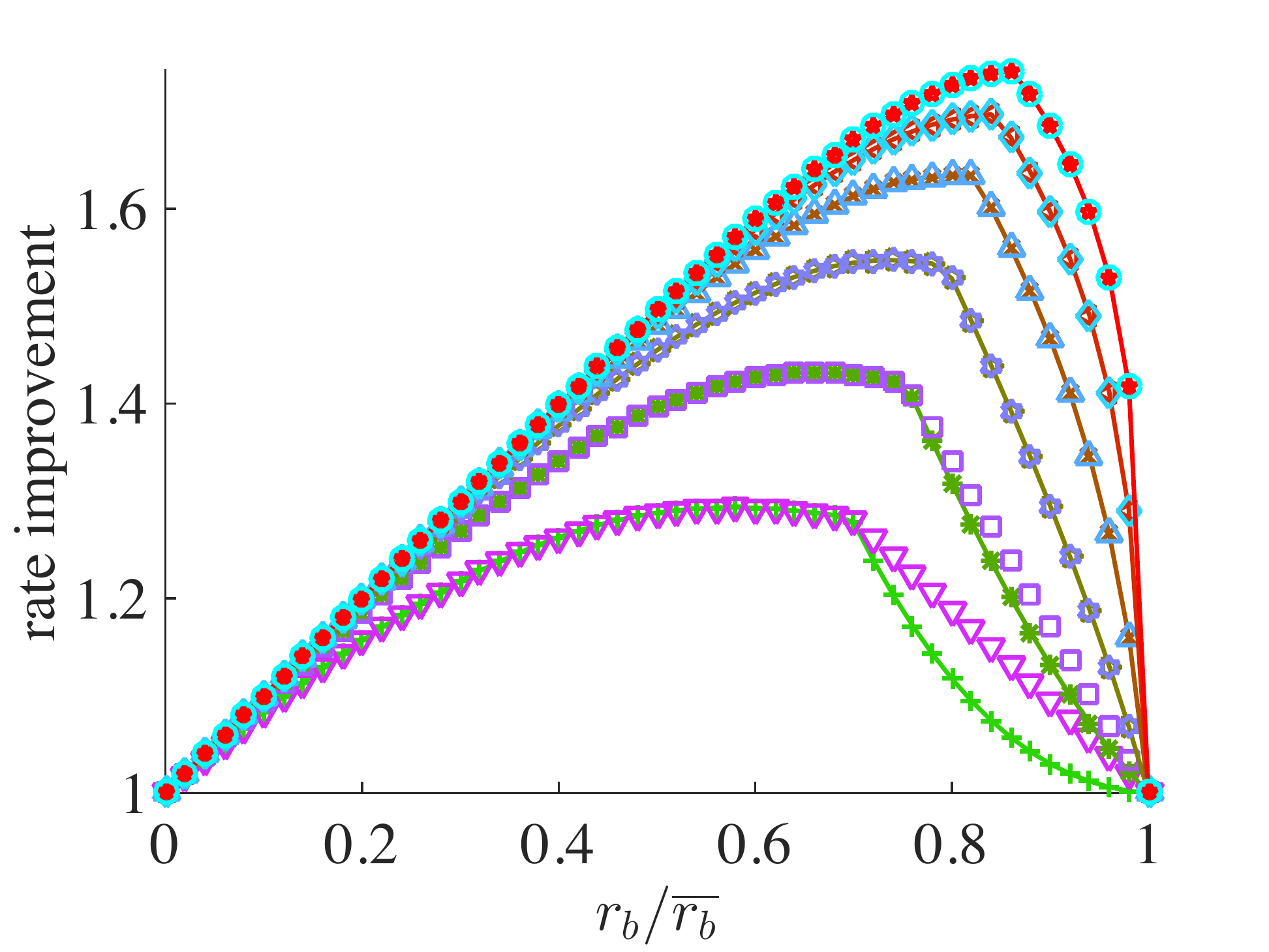}}\hspace{\fill}
\subfloat[$\overline{\gamma_{bb}}=0$dB, $\overline{\gamma_{mm}}=10$dB]{\label{fig:asym_rate_improve_010}\includegraphics[scale = 0.22]{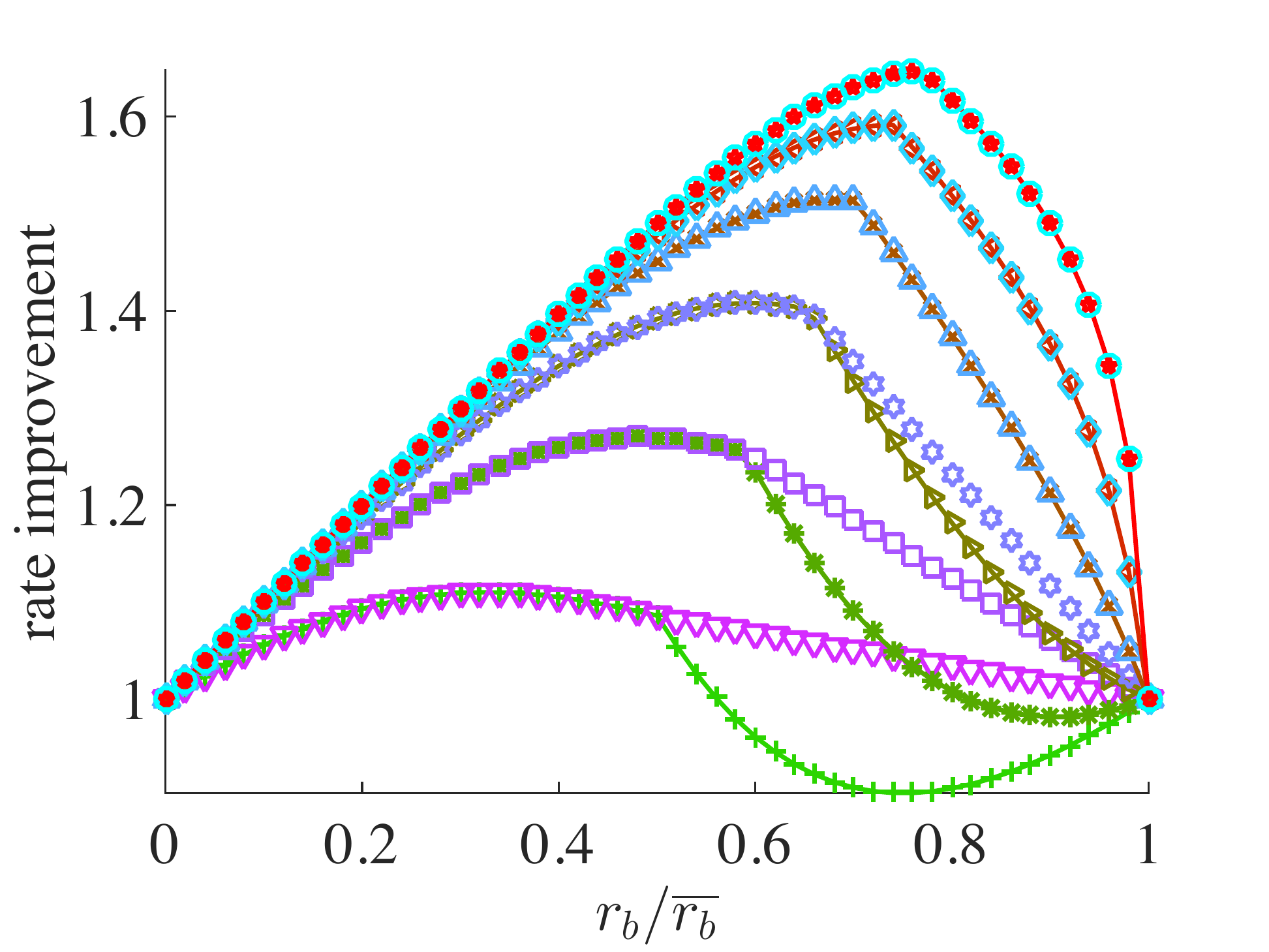}}\hspace{\fill}
\subfloat{\label{fig:asym_rate_improve_legend}\includegraphics[scale = 0.25]{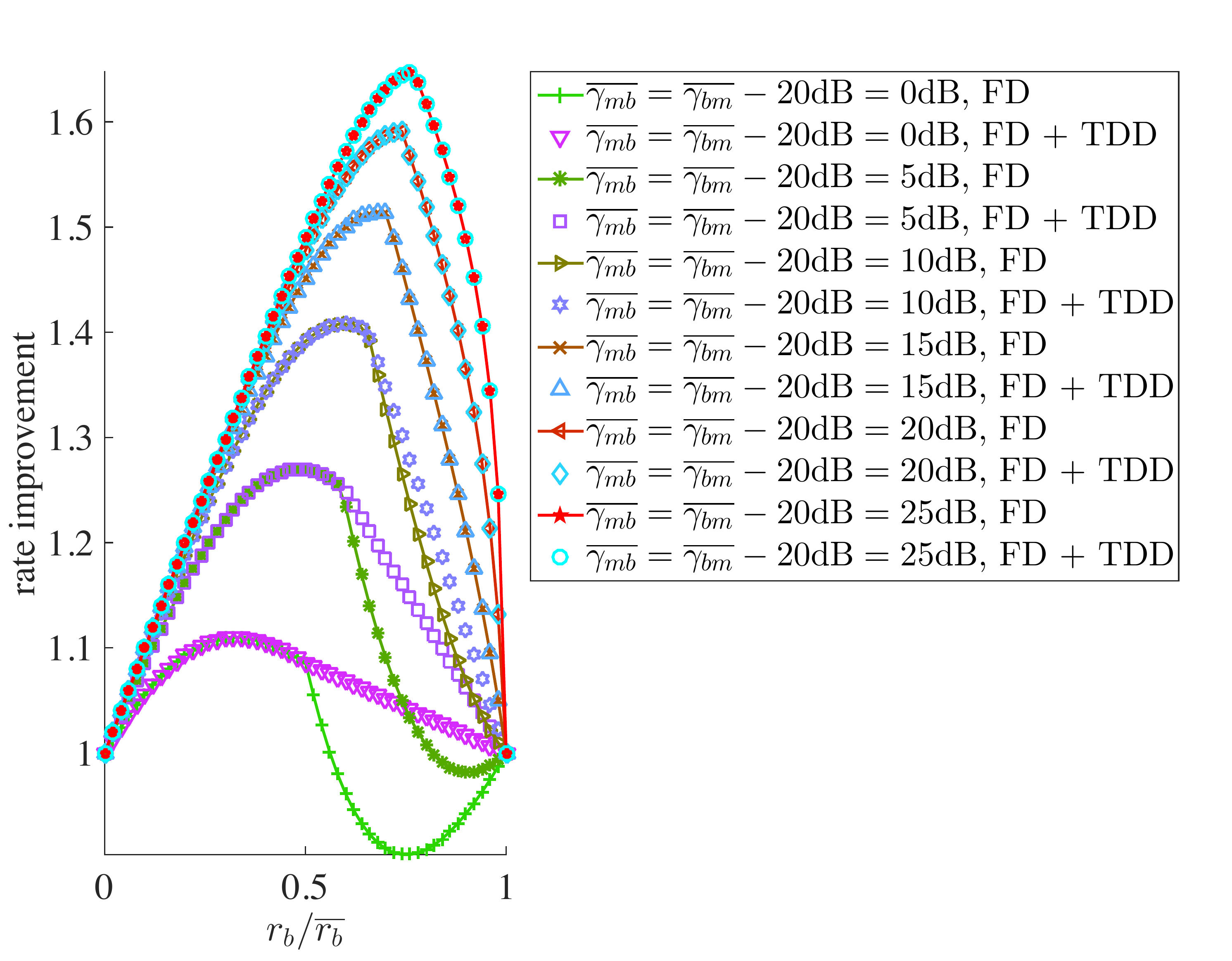}}
\caption{Rate improvements for  \protect\subref{fig:cap_region_00}--\protect\subref{fig:cap_region_010} $\overline{\gamma_{bm}} = \overline{\gamma_{mb}}$ and \protect\subref{fig:asym_cap_region_00}--\protect\subref{fig:asym_cap_region_010} $\overline{\gamma_{bm}} > \overline{\gamma_{mb}}$.}
\vspace{-10pt}
\label{fig:rate_improve-fd-tdd} 
\end{figure*}
\else
\fi

\subsection{Determining TDFD Capacity Region}\label{sec:single-algo}
We now turn to the problem of allocating UL and DL rates, possibly through a combination of FD and TDD, which is equivalent to determining the TDFD capacity region. As before, the problem is to maximize $r_m$ subject to $r_b = r_b^*$ and the power constraints. 
Denote the maximum $r_m$ such that $r_b = r_b^*$ as $r_m^*$. 
From Lemma \ref{lemma:convexity-of-cap-region}, there can be 3 cases:

\noindent\textbf{Case 1:} $r_m(r_b)$ is concave for all $r_b \in [0, s_b]$. From concavity of $r_m(r_b)$, \iffullpaper it follows that \fi$r_m^* = r_m(\alpha_b, 1)$, where $\alpha_b$ solves $r_b(\alpha_b, 1) = r_b^*$.

\noindent\textbf{Case 2:} $r_m(r_b)$ is convex for all $r_b \in [0, s_b]$. Using convexity, if the rate improvement at $(s_b, s_m)$ is less than 1 and $r_b(r_m)$, $r_m \in [0, s_m]$, is convex, it is optimal to use TDD. If the rate improvement is at least 1, it is optimal to place $r_m^*$ on the line connecting $(0, \overline{r_m})$ and $(s_b, s_m)$\iffullpaper through a time sharing between these rate pairs\fi. If $r_b(r_m)$ for $r_m \in [0, s_m]$ is concave for $r_m \in [0, r_m^+]$, where $r_m^+\leq s_b$, and the rate improvement at $(s_b, s_m)$ is less than 1,  $r_m^*$ will lie on the boundary of the TDFD, but not 
FD, capacity region. 

\noindent\textbf{Case 3:} $r_m(r_b)$ is strictly concave for $r_b \in [0, r_b^+)$, strictly convex for $r_b \in (r_b^+, s_b]$, and $\frac{d r_m}{d r_b} = 0$ at $r_b = r_b^+$, where $r_b^+ < s_b$. Then, $r_m^*$ may lie either on (the boundary of) FD or TDFD capacity region, even if $r_b^* \leq r_b^+$.

To determine the optimal $r_m^*$ in Cases 2 and 3, we need to ``convexify'' the capacity region. 
Fortunately, the problem has enough structure so that this ``convexification'' can be done efficiently. We show the following propositions, which will lead to the convexified region. \iffullpaper\else The proofs appear in \cite{capacity-region-full}.\fi

\begin{proposition}\label{prop:sb-sm-extremum}
If $(s_b, s_m)$ maximizes the sum of UL and DL rates, then $(s_b, s_m) \geq \lambda (r_b', r_m') + (1-\lambda) (r_b'', r_m'')$ element-wise for any $\lambda \in [0, 1]$, and any two feasible rate pairs $(r_b', r_m')$ and $(r_b'', r_m'')$.
\end{proposition}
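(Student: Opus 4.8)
The plan is to reduce the whole claim to the single scalar constraint that the hypothesis supplies. Write $\sigma(r_b,r_m)=r_b+r_m$ for the sum functional, which is \emph{affine}. The assumption that $(s_b,s_m)$ maximizes the sum rate means exactly $\sigma(r_b,r_m)\le s_b+s_m$ for every feasible pair $(r_b,r_m)$. Since $\sigma$ is affine it commutes with convex combination, so for any feasible $(r_b',r_m'),(r_b'',r_m'')$ and any $\lambda\in[0,1]$ the combined point $C=\lambda(r_b',r_m')+(1-\lambda)(r_b'',r_m'')$ satisfies $\sigma(C)=\lambda\,\sigma(r_b',r_m')+(1-\lambda)\,\sigma(r_b'',r_m'')\le s_b+s_m$. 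This inequality is the backbone of the argument; the remaining work is to upgrade this bound on the \emph{sum} of the two coordinates of $C$ into the two \emph{separate} coordinate bounds $C_1\le s_b$ and $C_2\le s_m$ asserted by the proposition.

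To carry out that upgrade I would invoke the branch structure of the FD boundary from Proposition~\ref{prop:FD-cap-region} and Lemma~\ref{lemma:convexity-of-cap-region}. Because the feasible region is downward comprehensive, it suffices to treat pairs lying on the Pareto boundary, which splits at the apex $(s_b,s_m)$ into the branch $\mathcal{S}_b$ (parametrized by $\alpha_b\in[0,1]$ with $\alpha_m=1$, along which $r_b\le s_b$ and $r_m\ge s_m$) and the branch $\mathcal{S}_m$ (parametrized by $\alpha_m\in[0,1]$ with $\alpha_b=1$, along which $r_b\ge s_b$ and $r_m\le s_m$). I would then case-split on which branch each of the two pairs lies and show that averaging cannot push either mean coordinate past the corresponding apex coordinate. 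When both pairs sit on the same branch the bound on the ``inner'' coordinate is immediate from the monotonicity of that branch, with the apex as its extreme endpoint; the delicate configuration is the mixed one, in which one pair lies on $\mathcal{S}_b$ (small $r_b$, large $r_m$) and the other on $\mathcal{S}_m$ (large $r_b$, small $r_m$).

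The main obstacle is controlling the coordinate that a single pair is allowed to overshoot — for instance the first coordinate when a pair lies on $\mathcal{S}_m$ with $r_b>s_b$ (in the extreme, the endpoint $(\overline{r_b},0)$ has $r_b=\overline{r_b}>s_b$ precisely because $\overline{\gamma_{mm}}>0$). The affine sum bound alone does not forbid $\lambda r_b'+(1-\lambda)r_b''$ from exceeding $s_b$, so this is exactly the step where the sum-maximality hypothesis must be used quantitatively, not just as a scalar inequality. Concretely, I would try to show that sum-maximality of the apex forces $\mathcal{S}_m$ to leave $(s_b,s_m)$ with slope at most $-1$ — the supporting line $r_b+r_m=s_b+s_m$ is tangent there — so that any displacement along $\mathcal{S}_m$ raising $r_b$ by $\Delta$ lowers $r_m$ by at least $\Delta$, and then pair this with the symmetric statement on $\mathcal{S}_b$ and with the affine sum bound to pin both coordinates of $C$ simultaneously. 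I expect that making this trade-off tight enough to yield the separate bounds $C_1\le s_b$ and $C_2\le s_m$ — and, crucially, delimiting the precise class of feasible pairs for which the coordinate-wise (rather than merely sum-wise) inequality can hold — will be the heart of the proof, and the one place where the hypothesis that $(s_b,s_m)$ maximizes the sum is genuinely indispensable rather than cosmetic.
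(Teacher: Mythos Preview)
Your instinct that the coordinate-wise inequality is the hard part is exactly right --- in fact it is so hard that it is \emph{false} as literally stated. Take $(r_b',r_m')=(r_b'',r_m'')=(\overline{r_b},0)$, which is feasible. The convex combination is $(\overline{r_b},0)$, and its first coordinate satisfies
\[
\overline{r_b}=\log(1+\overline{\gamma_{bm}})>\log\Big(1+\frac{\overline{\gamma_{bm}}}{1+\overline{\gamma_{mm}}}\Big)=s_b
\]
whenever $\overline{\gamma_{mm}}>0$. So $s_b\ge \lambda r_b'+(1-\lambda)r_b''$ fails already for this trivial choice, and no amount of branch analysis or slope control on $\mathcal{S}_m$ will rescue it. Your own sentence ``the affine sum bound alone does not forbid $\lambda r_b'+(1-\lambda)r_b''$ from exceeding $s_b$'' is the correct diagnosis; the missing step you were looking for does not exist.

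What the paper actually \emph{proves} (and what it later uses) is the much weaker statement that no convex combination \emph{strictly dominates} $(s_b,s_m)$ in both coordinates simultaneously --- equivalently, that $(s_b,s_m)$ lies on the boundary of the convex hull of the feasible region. The paper's argument is essentially your first paragraph and nothing more: if $\lambda r_b'+(1-\lambda)r_b''>s_b$ and $\lambda r_m'+(1-\lambda)r_m''>s_m$, add the two inequalities to get $\lambda(r_b'+r_m')+(1-\lambda)(r_b''+r_m'')>s_b+s_m$, whence one of $r_b'+r_m'$ or $r_b''+r_m''$ exceeds $s_b+s_m$, contradicting sum-maximality. (The paper phrases the intermediate step as ``either $(r_b',r_m')>(s_b,s_m)$ or $(r_b'',r_m'')>(s_b,s_m)$'', which is itself not quite right, but the sum inequality it then draws is correct and follows directly.) So the proposition should be read with ``$\ge$'' interpreted as ``not strictly dominated by'', and under that reading the whole proof is the two-line sum argument you already wrote down; the elaborate case split on $\mathcal{S}_b$ versus $\mathcal{S}_m$ is unnecessary.
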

\iffullpaper
\begin{proof}
Suppose that for some $\lambda\in [0, 1]$ and some pairs of feasible rates $(r_b', r_m')$ and $(r_b'', r_m'')$: $(s_b, s_m) < \lambda (r_b', r_m') + (1-\lambda) (r_b'', r_m'')$. Then either $(r_b', r_m') > (s_b, s_m)$ or $(r_b'', r_m'') > (s_b, s_m)$, and therefore $r_b'+r_m' > s_b + s_m$ or $r_b''+r_m''>s_b+s_m$, which is a contradiction, as $s_b+s_m$ maximizes the sum of the (UL and DL) rates.
\end{proof}
\fi
\iffullpaper Proposition \ref{prop:sb-sm-extremum} implies that if $(s_b, s_m)$ maximizes the sum of uplink and downlink rates, it must dominate any convex combination of other points from the capacity region. \fi 

\begin{proposition}\label{prop:sb-sm-not-extremum}
If $s_b+s_m < \overline{r_m}$, then $r_m(r_b)$ is convex on the entire segment from $(0, \overline{r_m})$ to $(s_b, s_m)$. \iffullpaper Similarly, if $s_b+s_m < \overline{r_b}$, then $r_b(r_m)$ is convex on the entire segment from $(\overline{r_b}, 0)$ to $(s_b, s_m)$.\else\fi
\end{proposition}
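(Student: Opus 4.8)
The plan is to collapse the statement to a single scalar inequality and verify it. By Proposition~\ref{prop:FD-cap-region}, the arc of the FD boundary joining $(0,\overline{r_m})$ to $(s_b,s_m)$ is exactly the segment $\mathcal{S}_b$, parameterized by $\alpha_b\in[0,1]$ through~(\ref{eq:rb-rm-alpha-b}); indeed $\alpha_b=0$ yields $r_b=0$ and $r_m=\log(1+\overline{\gamma_{mb}})=\overline{r_m}$, while $\alpha_b=1$ yields $(s_b,s_m)$. So the claim is equivalent to: $r_m(r_b)$ is convex for $r_b\in[0,s_b]$. From the computation in the proof of Lemma~\ref{lemma:convexity-of-cap-region}, this holds precisely when the quadratic~(\ref{eq:quad-ineq-alphab}) in $\alpha_b$ — call it $q(\alpha_b)$, the same polynomial whose sign controls the convex/concave split of $\mathcal{S}_b$, with $r_m(r_b)$ concave $\Leftrightarrow q\le 0$ — is nonnegative on all of $[0,1]$. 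Since $q$ is an upward parabola with vertex at $\alpha_b=-(1+\overline{\gamma_{mm}})/\overline{\gamma_{bm}}<0$, it is increasing on $[0,\infty)$, so $\min_{\alpha_b\in[0,1]}q(\alpha_b)=q(0)$ and the whole claim collapses to showing $q(0)\ge 0$.

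Evaluating~(\ref{eq:quad-ineq-alphab}) at $\alpha_b=0$ and clearing the positive factor $(\overline{\gamma_{bb}})^2$, the inequality $q(0)\ge 0$ is equivalent to
\[
\overline{\gamma_{bm}}\ \le\ \frac{\overline{\gamma_{bb}}\,(2+\overline{\gamma_{mb}})(1+\overline{\gamma_{mm}})}{1+\overline{\gamma_{mb}}}.
\]
Separately, I would rewrite the hypothesis by exponentiating $s_b+s_m<\overline{r_m}$: using $2^{s_b}=(1+\overline{\gamma_{mm}}+\overline{\gamma_{bm}})/(1+\overline{\gamma_{mm}})$, $2^{s_m}=(1+\overline{\gamma_{bb}}+\overline{\gamma_{mb}})/(1+\overline{\gamma_{bb}})$ and $2^{\overline{r_m}}=1+\overline{\gamma_{mb}}$, multiplying out, and cancelling the common summand $(1+\overline{\gamma_{mm}})(1+\overline{\gamma_{bb}}+\overline{\gamma_{mb}})$ from both sides, the hypothesis becomes
\[
\overline{\gamma_{bm}}\ <\ \frac{\overline{\gamma_{bb}}\,\overline{\gamma_{mb}}\,(1+\overline{\gamma_{mm}})}{1+\overline{\gamma_{bb}}+\overline{\gamma_{mb}}}.
\]

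It then remains to note that the right-hand side of the second display is at most that of the first: dividing both through by $\overline{\gamma_{bb}}(1+\overline{\gamma_{mm}})>0$, this reduces to $\overline{\gamma_{mb}}/(1+\overline{\gamma_{bb}}+\overline{\gamma_{mb}})\le(2+\overline{\gamma_{mb}})/(1+\overline{\gamma_{mb}})$, i.e.\ $\overline{\gamma_{mb}}(1+\overline{\gamma_{mb}})\le(2+\overline{\gamma_{mb}})(1+\overline{\gamma_{bb}}+\overline{\gamma_{mb}})$, which is immediate because $2+\overline{\gamma_{mb}}>\overline{\gamma_{mb}}$ and $1+\overline{\gamma_{bb}}+\overline{\gamma_{mb}}>1+\overline{\gamma_{mb}}$. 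Chaining the two displays gives $q(0)>0$, hence $q\ge 0$ on $[0,1]$, hence $r_m(r_b)$ convex on $[0,s_b]$; the statement for $r_b(r_m)$ follows verbatim after exchanging the subscripts $b\leftrightarrow m$. I do not expect a genuine obstacle here — everything is a short manipulation of rational functions — and the only points that need care are bookkeeping the sign convention relating $q$ to convexity versus concavity (the one used in the corollary following Lemma~\ref{lemma:convexity-of-cap-region}) and not dropping a factor when clearing denominators in the two displays.
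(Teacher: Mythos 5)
Your proposal is correct and takes essentially the same route as the paper's proof: both exponentiate the hypothesis $s_b+s_m<\overline{r_m}$ into an inequality on $\overline{\gamma_{bm}}$ and play it off against the sign of the quadratic (\ref{eq:quad-ineq-alphab}) from the proof of Lemma \ref{lemma:convexity-of-cap-region}, your condition $q(0)\ge 0$ being exactly the negation of the necessary condition (\ref{eq:gammabm-cond-1}) for concavity that the paper invokes. The only differences are cosmetic -- you keep both reductions exact and use monotonicity of $q$ on $[0,1]$, while the paper weakens both sides to the single threshold $\overline{\gamma_{bb}}(1+\overline{\gamma_{mm}})$ -- and your sign bookkeeping (concave $\Leftrightarrow q\le 0$) is indeed the correct convention, consistent with the corollary's use of the quadratic.
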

\iffullpaper
\begin{proof}
Suppose that $s_b+s_m < \overline{r_m}$ (the case $s_b+s_m < \overline{r_b}$ is symmetric). Then:
\begin{align}
&\log\Big(1+\frac{\overline{\gamma_{bm}}}{1+\overline{\gamma_{mm}}}\Big) + \log\Big(1+\frac{\overline{\gamma_{mb}}}{1+\overline{\gamma_{bb}}}\Big) < \log\left(1+ \overline{\gamma_{mb}}\right)\notag \\
\Leftrightarrow \quad &\log\bigg(\Big(1+\frac{\overline{\gamma_{bm}}}{1+\overline{\gamma_{mm}}}\Big)\cdot \Big(1+\frac{\overline{\gamma_{mb}}}{1+\overline{\gamma_{bb}}}\Big)\bigg) < \log\left(1+ \overline{\gamma_{mb}}\right)\notag \\
\Leftrightarrow \quad &\Big(1+\frac{\overline{\gamma_{bm}}}{1+\overline{\gamma_{mm}}}\Big)\cdot \Big(1+\frac{\overline{\gamma_{mb}}}{1+\overline{\gamma_{bb}}}\Big) < 1+ \overline{\gamma_{mb}}\notag \\
\Leftrightarrow \quad & \frac{1+ \overline{\gamma_{bb}} + \overline{\gamma_{mb}}}{1+\overline{\gamma_{mb}}}\cdot (1+\overline{\gamma_{mm}}+\overline{\gamma_{bm}}) < (1+\overline{\gamma_{bb}})(1+\overline{\gamma_{mm}})\notag\\
\Leftrightarrow \quad & (1+\overline{\gamma_{mm}})\Big(1+\frac{\overline{\gamma_{bb}}}{1+\overline{\gamma_{mb}}}-1\Big)\notag\\ 
&+ \overline{\gamma_{bm}}\Big(1+\frac{\overline{\gamma_{bb}}}{1+\overline{\gamma_{mb}}}\Big) < \overline{\gamma_{bb}}(1+\overline{\gamma_{mm}})\notag\\
\Rightarrow \quad & \overline{\gamma_{bm}} < \overline{\gamma_{bb}}(1+\overline{\gamma_{mm}}), \label{eq:gbm<gbb(1+gmm)}
\end{align}
as $(1+\overline{\gamma_{mm}})\big(1+\frac{\overline{\gamma_{bb}}}{1+\overline{\gamma_{mb}}}-1\big) = (1+\overline{\gamma_{mm}})\cdot\frac{\overline{\gamma_{bb}}}{1+\overline{\gamma_{mb}}} \geq 0$ and $1+\frac{\overline{\gamma_{bb}}}{1+\overline{\gamma_{mb}}}\geq 1$. 

A necessary condition for $r_m(r_b)$ to be concave for any $r_b \in [0, s_b]$ (see proof of Lemma \ref{lemma:convexity-of-cap-region}, eq. (\ref{eq:gammabm-cond-1})) is that $\overline{\gamma_{bm}} > \overline{\gamma_{bb}}(1+\overline{\gamma_{mm}})$. Therefore, from (\ref{eq:gbm<gbb(1+gmm)}), $r_m(r_b)$ is convex for any $r_b \in [0, s_b]$.
\end{proof}
\fi
Now we are ready to handle Case 3 and the last part of Case 2, in the following (constructive) lemma. \iffullpaper\else The proof of the lemma is constructive, i.e., it describes the algorithm for determining the TDFD capacity region, and can be found in \cite{capacity-region-full}. The algorithm uses Lemma \ref{lemma:convexity-of-cap-region} to determine the shape of the capacity region, and then, relying on Propositions \ref{prop:sb-sm-extremum} and \ref{prop:sb-sm-not-extremum}, performs at most two binary searches. \fi 
Note that the convexification needs to be performed only once; after that, $r_m(r_b)$ (and $r_b(r_m)$) can be represented in a black-box manner, requiring constant computation to determine any rate pair $(r_b^*, r_m^*)$, given either $r_b^*$ or $r_m^*$. 
\begin{lemma}\label{lemma:convexification}
The boundary of the TDFD capacity region can be determined in time {$O(\log(\varepsilon^{-1}\overline{r_b}))$}, where $\varepsilon$ is the additive error of $r_m^* = \max\{r_m: r_b=r_b^*\}$, and the binary search, if employed, takes at most $\lceil\log(\varepsilon^{-1}\cdot 1.4\overline{r_b})\rceil$ steps. 
\end{lemma}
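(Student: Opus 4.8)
The plan is to turn the structural dichotomy from Lemma~\ref{lemma:convexity-of-cap-region} into an explicit procedure and bound the number of arithmetic and binary-search steps it performs. First I would note that checking which of the three cases of Lemma~\ref{lemma:convexity-of-cap-region} applies to $r_m(r_b)$ (and, symmetrically, to $r_b(r_m)$) costs $O(1)$: it amounts to evaluating the discriminant of the quadratic~(\ref{eq:quad-ineq-alphab}) and comparing its larger root $\alpha_b^+$ (given in closed form by~(\ref{eq:alphab+})) against $0$ and $1$. Likewise, $s_b=r_b(1,1)$, $s_m=r_m(1,1)$, $\overline{r_b}$, $\overline{r_m}$, and the rate improvement $p(s_b,s_m)=s_b/\overline{r_b}+s_m/\overline{r_m}$ are all $O(1)$ to compute. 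So the only nontrivial cost is locating $r_m^*$ once the shape is known.

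Next I would dispatch on the cases. In Case~1 (both $r_m(r_b)$ and $r_b(r_m)$ concave) the FD region is convex by the preceding Proposition, so $r_m^*=r_m(\alpha_b,1)$ where $\alpha_b$ solves $r_b(\alpha_b,1)=r_b^*$; but from~(\ref{eq:rb-rm-alpha-b}) this is $\alpha_b=(2^{r_b^*}-1)(1+\overline{\gamma_{mm}})/\overline{\gamma_{bm}}$ in closed form, hence $O(1)$. When $r_m(r_b)$ is convex on $[0,s_b]$ (Case~2) or has the concave-then-convex shape with a flat spot at $r_b^+$ (Case~3), the boundary of the TDFD region on $r_b\in[0,s_b]$ is the upper envelope of: (a) the FD curve $r_m(r_b)$ itself, (b) the chord from $(0,\overline{r_m})$ to $(s_b,s_m)$, and possibly (c) the chord from $(s_b,s_m)$ to $(\overline{r_b},0)$ if $s_b+s_m<\overline{r_b}$ forces a further bypass. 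The two Propositions just proved control exactly which chords matter: Proposition~\ref{prop:sb-sm-extremum} says that when $(s_b,s_m)$ maximizes the sum rate (equivalently $p(s_b,s_m)\ge1$ together with the curves being monotone past $(s_b,s_m)$), $(s_b,s_m)$ dominates every convex combination, so the left chord to $(0,\overline{r_m})$ is the relevant convexification and nothing to the right of $s_b$ can beat it; Proposition~\ref{prop:sb-sm-not-extremum} handles the complementary case $s_b+s_m<\overline{r_m}$, showing $r_m(r_b)$ is already convex on the whole left segment, so the chord from $(0,\overline{r_m})$ to $(s_b,s_m)$ is the entire left boundary and, symmetrically, a chord to $(\overline{r_b},0)$ handles the right. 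In every sub-case the convexified boundary on the relevant piece is either a line segment (giving $r_m^*$ in $O(1)$ by linear interpolation) or a concave arc of $r_m(r_b)$ below some chord, on which we must find the point of tangency.

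Finding that tangency point is the one genuinely iterative step. The tangent from the fixed endpoint $(0,\overline{r_m})$ (or $(\overline{r_b},0)$) to the concave arc is found by a monotone search on the touching abscissa: parametrize the arc by $\alpha_b\in[0,\alpha_b^+]$ via~(\ref{eq:rb-rm-alpha-b}), and use that the chord slope from the endpoint to $(r_b(\alpha_b,1),r_m(\alpha_b,1))$ is monotone in $\alpha_b$ while $dr_m/dr_b$ is monotone on the concave arc, so their crossing is unique and found by binary search on $\alpha_b\in[0,1]$ (equivalently on $r_b\in[0,s_b]\subseteq[0,\overline{r_b}]$); once the tangency abscissa $\hat r_b$ is known, if $r_b^*\le\hat r_b$ then $r_m^*$ lies on the chord ($O(1)$) and otherwise $r_m^*=r_m(\alpha_b,1)$ with $\alpha_b$ from the closed form ($O(1)$). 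Each binary step halves an interval of length at most $\overline{r_b}$, so reaching additive accuracy $\varepsilon$ in $r_m^*$ needs $\lceil\log(\overline{r_b}/\varepsilon)\rceil$ steps; tracking the Lipschitz constant of $r_m(r_b)$ on the relevant range — it is bounded by the magnitude of the slope at $s_b$, which from~(\ref{eq:drm-drb-alphab=1}) is below a small constant, yielding the stated $1.4\,\overline{r_b}$ factor — gives the claimed bound $\lceil\log(\varepsilon^{-1}\cdot1.4\,\overline{r_b})\rceil$ per search. At most two such searches are ever needed (one per side), so the total running time is $O(\log(\varepsilon^{-1}\overline{r_b}))$; after this one-time convexification, $r_m(r_b)$ is stored as a piecewise description (a constant number of closed-form arcs and segments), so any later query costs $O(1)$.

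\emph{Main obstacle.} The delicate part is not any single calculation but the bookkeeping: verifying that in Cases~2 and~3 the convexified boundary is \emph{exactly} the upper envelope of the FD arc and the two extreme chords — i.e., that no interior point of the region and no third chord can lie above this envelope. This is where Propositions~\ref{prop:sb-sm-extremum} and~\ref{prop:sb-sm-not-extremum} must be invoked carefully, together with the fact that on the concave initial piece of $r_m(r_b)$ the curve already lies above any chord with both endpoints in that piece, so the only chords that can improve on the FD curve are those anchored at the rate-axis intercepts. Pinning down, for the flat-spot geometry of Case~3, that the tangency search is still unimodal (the slope $dr_m/dr_b$ decreasing through $0$ at $r_b^+$) and that the binary search interval can be safely taken as $[0,s_b]$ is the last detail to check before the step count follows routinely.
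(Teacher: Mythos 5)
Your high-level plan (classify the shape in $O(1)$ via the quadratic~(\ref{eq:quad-ineq-alphab}), then do at most two binary searches for tangency points on the concave arc) matches the paper's strategy, but two of your key steps are wrong as stated. First, the geometry of the convexification: the TDFD boundary in Case~3 is \emph{not} the upper envelope of the FD curve and chords anchored at the rate-axis intercepts, and the tangent you search for is not ``from the fixed endpoint $(0,\overline{r_m})$'' --- that point lies \emph{on} the concave arc (it is the $\alpha_b=0$ endpoint of $r_m(r_b)$), so a tangent from it is degenerate. The correct construction, which is what Proposition~\ref{prop:sb-sm-extremum} is for, is: when $s_b+s_m\geq\max\{\overline{r_b},\overline{r_m}\}$ the point $(s_b,s_m)$ maximizes the sum rate and is a hull vertex, and the convexified boundary follows $r_m(r_b)$ up to some $r_b'\leq r_b^+$ and then the line through $(r_b',r_m(r_b'))$ and $(s_b,s_m)$ that is tangent to the curve at $r_b'$; that tangent segment is anchored at $(s_b,s_m)$, not at an axis intercept, so your claim that ``the only chords that can improve on the FD curve are those anchored at the rate-axis intercepts'' fails in exactly the main sub-case. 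When $s_b+s_m<\max\{\overline{r_b},\overline{r_m}\}$, Proposition~\ref{prop:sb-sm-not-extremum} is invoked to show the opposite segment is convex, a second tangency point $r_b''$ is found the same way, and the two candidate tangent lines are compared (by evaluating them at $r_b=s_b$) to decide which one delimits the hull; your envelope-of-chords picture skips this comparison entirely.

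Second, the step-count justification is not sound. You derive the $1.4\,\overline{r_b}$ factor from ``the Lipschitz constant of $r_m(r_b)$\dots bounded by the magnitude of the slope at $s_b$, which from~(\ref{eq:drm-drb-alphab=1}) is below a small constant.'' That slope is not bounded by any universal constant: from~(\ref{eq:drm-drb-alphab=1}) it scales like $\frac{1+\overline{\gamma_{mm}}+\overline{\gamma_{bm}}}{1+\overline{\gamma_{bb}}+\overline{\gamma_{mb}}}\cdot\frac{\overline{\gamma_{bb}}}{(1+\overline{\gamma_{bb}})\overline{\gamma_{bm}}}$ and can be made arbitrarily large (e.g., large $\overline{\gamma_{mm}}$ with moderate $\overline{\gamma_{bm}}$). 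Moreover, the quantity that matters is not the Lipschitz constant of $r_m(r_b)$ at all: the binary search locates the tangency abscissa, and an error $\delta$ in that abscissa perturbs the \emph{slope} of the tangent line, which is then extrapolated over a horizontal span of up to $\overline{r_b}$. The paper therefore proves a bound on the second derivative, $\bigl|\tfrac{d^2 r_m}{d r_b^2}\bigr|\leq 2\ln(2)<1.4$ on the portion where $r_m(r_b)$ is concave (using the necessary concavity condition $\overline{\gamma_{bb}}(1+\overline{\gamma_{mm}})<\overline{\gamma_{bm}}$), so that the induced error in $r_m^*$ is at most $1.4\,\delta\,\overline{r_b}$, giving the $\lceil\log(\varepsilon^{-1}\cdot 1.4\,\overline{r_b})\rceil$ step bound. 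Without some such curvature (or slope-variation) bound on the concave arc, your accounting does not yield the stated constant, so this piece of the lemma is missing from your argument.
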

\iffullpaper
\begin{proof}
Note that the time to determine $r_m^*$ on the boundary of TDFD capacity region may not be constant only in Case 3 and the last part of Case 2. We start with the proof for Case 3.

If $s_b + s_m \geq \max\{r_b, r_m\}$, then, from Lemma 5.1 in \cite{full-duplex-sigmetrics}, $(s_b, s_m)$ maximizes the sum of uplink and downlink rates. Using Proposition \ref{prop:sb-sm-extremum} and simple geometric arguments, it follows that in the ``convexified'' capacity region there exists $r_b' \leq r_b^+$ such that the boundary of the region is equal to $r_m(r_b)$ for $r_b \in [0, r_b']$ joined with a line segment from a point $(r_b', r_m(r_b'))$ to $(s_b, s_m)$, where the line through points $(r_b', r_m(r_b'))$ and $(s_b, s_m)$ is tangent to $r_m(r_b)$ at point $(r_b', r_m(r_b'))$ (see Fig. \ref{fig:case 3}\subref{fig:symmetric}). Since the tangent from $(s_b, s_m)$ onto $r_m(r_b)$ must touch $r_m(r_b)$ at a point $(r_b', r_m(r_b'))$ where $r_m(r_b)$ is concave, it follows that we can find $r_b'$ by performing a binary search over $r_b\in [0, r_b^+]$, since every concave function has a monotonically decreasing derivative. It follows that $r_m^* = r_m(r_b^*)$ if $r_b^* \leq r_b'$, and $r_m^* = r_m(r_b') + \left.\frac{d r_m}{d r_b}\right|_{r_b = r_b'}(r_b^* - r_b')$.
\begin{figure}[t!]
\center
\hspace{\fill}\subfloat[]{\label{fig:symmetric}\includegraphics[height = 1in]{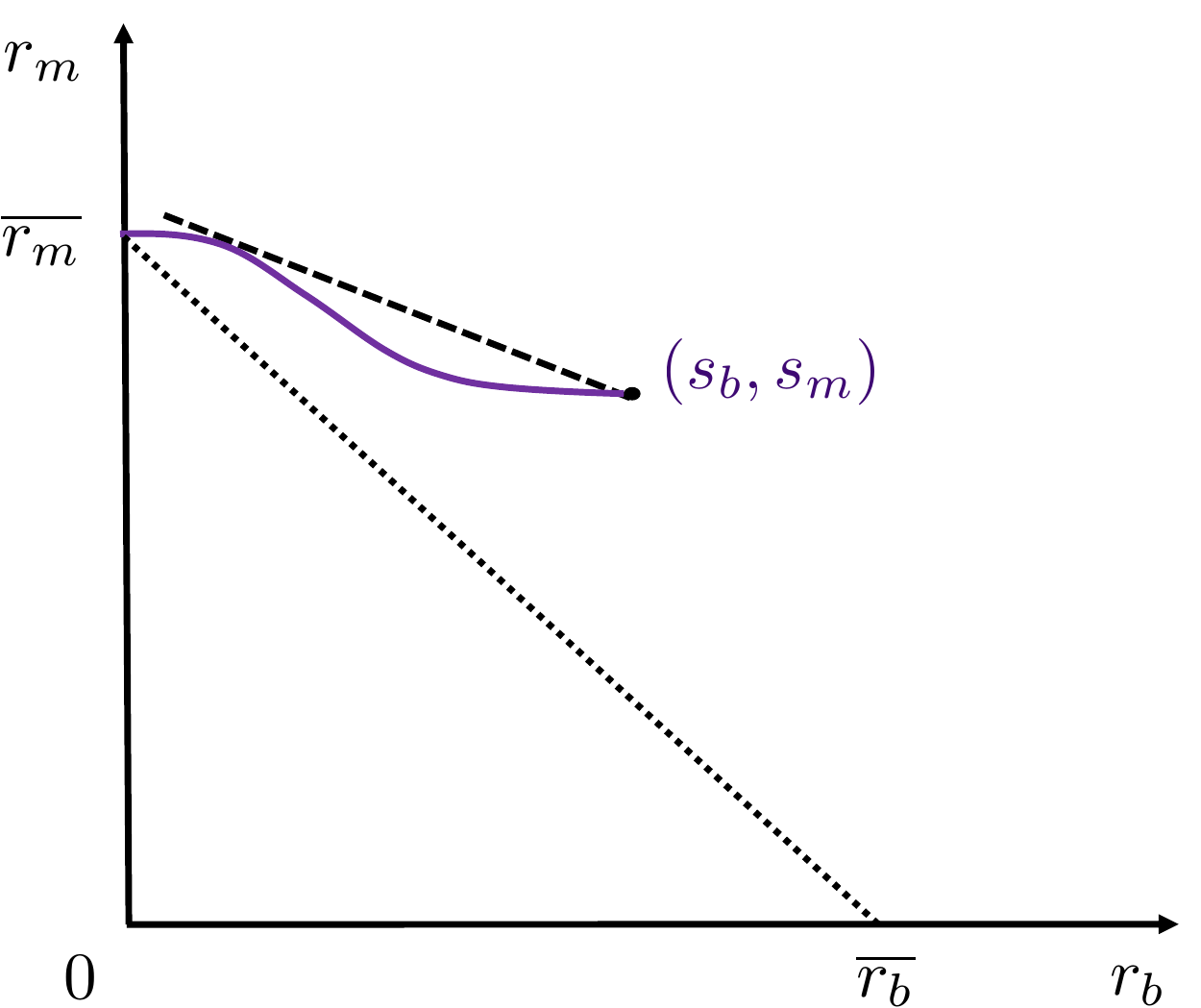}}\hspace{\fill}
\subfloat[]{\label{fig:non-symmetric}\includegraphics[height = 1in]{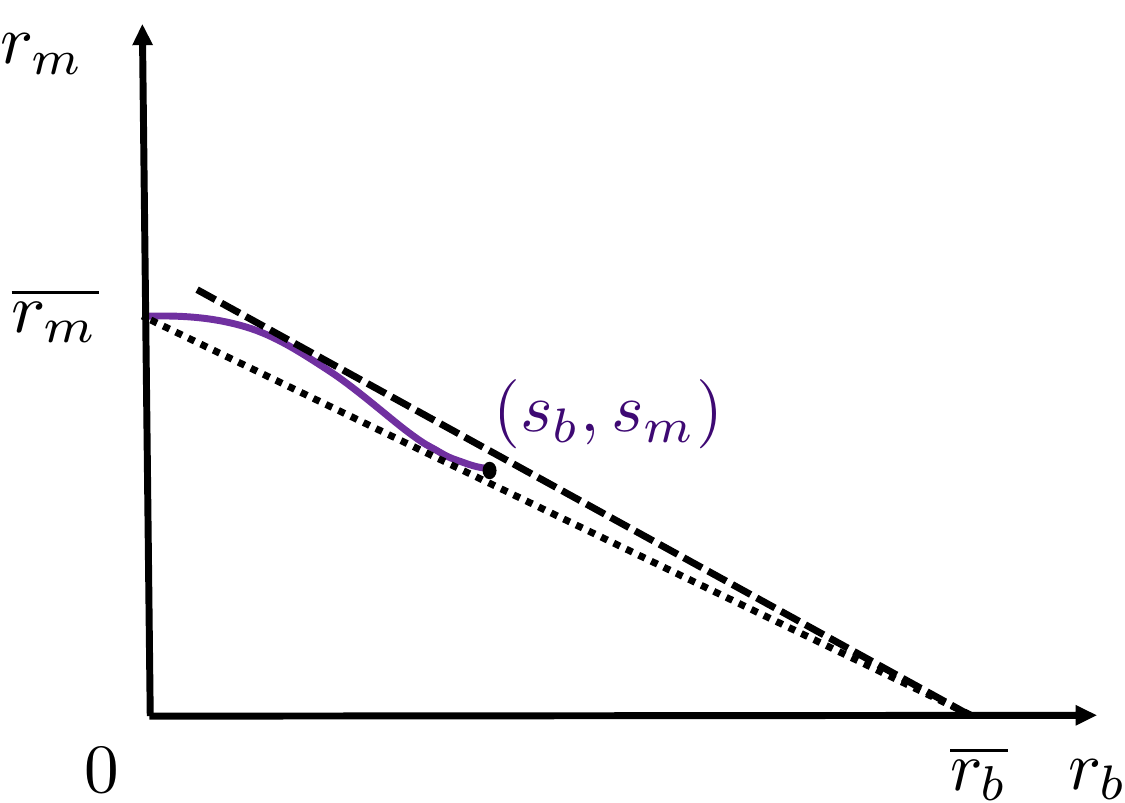}}\hspace*{\fill}
\caption{Two possible scenarios for Case 3.}
\label{fig:case 3}
\end{figure}

Consider now the case that $s_b + s_m < \max\{r_b, r_m\}$. Using the same approach as described above, we can determine a point $r_b'\leq r_b^+)$ such that the line through $(r_b', r_m(r_b'))$ and $(s_b, s_m)$ is tangent onto $r_m(r_b)$. However, this approach may not always lead to the convexified capacity region. 

Consider the case illustrated in Fig.~\ref{fig:case 3}\subref{fig:non-symmetric}. From Proposition \ref{prop:sb-sm-not-extremum}, $r_b(r_m)$ for $r_m \in [0, s_m]$ must be convex, and therefore there exists an $r_b'' \leq r_b^+$ such that the boundary of the convexified capacity region is determined by $r_m(r_b)$ for $r_b\in [0, r_b'']$ and by a line through $(r_b'', r_m(r_b''))$ and $(s_b, s_m)$ for $r_b \in [r_b'', s_b]$, where the line through $(r_b'', r_m(r_b''))$ and $(s_b, s_m)$ is tangent onto $r_m(r_b)$ at point $r_b = r_b''$. Since $r_b''$ must belong to the segment where $r_m(r_b)$ is concave, it follows that $r_b''$ can be found through a binary search over $r_b \in [0, r_b^+]$. To determine which one of the two tangents delimits the convexified capacity region, it is sufficient to compare $r_m(r_b') +  \left.\frac{d r_m}{d r_b}\right|_{r_b = r_b'}(s_b - r_b')$ and $r_m(r_b'') +  \left.\frac{d r_m}{d r_b}\right|_{r_b = r_b''}(s_b - r_b'')$ and choose the one with the maximum value.  

The last part of the Case 2 is symmetric to the case illustrated in Fig.~\ref{fig:case 3}\subref{fig:non-symmetric}, and can be handled by the approach described above.

Finally, we need to show that the binary search can be implemented with low running time. To do so, we first bound the change in the derivative $\frac{d r_m}{d r_b}$ on the segment where $r_m(r_b)$ is concave.

\begin{proposition}
For all $r_b\in [0, s_b]$ such that $r_m(r_b)$ is concave: $\big|\frac{d^2 r_m}{d r_b^2}\big|<1.4$.
\end{proposition}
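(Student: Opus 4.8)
The plan is to avoid recomputing any derivatives and instead reuse the formulas already established in the proof of Lemma~\ref{lemma:convexity-of-cap-region}. Recall the decomposition (\ref{eq:d2rm-drb2}), $\frac{d^2 r_m}{d r_b^2} = \frac{d^2 r_m}{d \alpha_b^2}\big(\frac{d\alpha_b}{dr_b}\big)^2 + \frac{dr_m}{d\alpha_b}\cdot\frac{d^2\alpha_b}{dr_b^2}$. The first summand is nonnegative, since by (\ref{eq:d2rm-dalpham2}) we have $\frac{d^2 r_m}{d\alpha_b^2}>0$ (the two parenthesized factors there are both positive because $\overline{\gamma_{mb}}>0$). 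The second summand is strictly negative, since $\frac{dr_m}{d\alpha_b}<0$ by (\ref{eq:drm-dalpham}) and $\frac{d^2\alpha_b}{dr_b^2}>0$ by (\ref{eq:d2alphab-drb2}). Hence, at any $r_b$ where $r_m(r_b)$ is concave we have $\frac{d^2 r_m}{dr_b^2}\le 0$, so $0\ge \frac{d^2 r_m}{dr_b^2}\ge \frac{dr_m}{d\alpha_b}\cdot\frac{d^2\alpha_b}{dr_b^2}$, and it is enough to bound $\big|\frac{dr_m}{d\alpha_b}\cdot\frac{d^2\alpha_b}{dr_b^2}\big|$.

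Substituting (\ref{eq:drm-dalpham}) and (\ref{eq:d2alphab-drb2}) and using the identity $2^{r_b}(1+\overline{\gamma_{mm}})/\overline{\gamma_{bm}} = (1+\overline{\gamma_{mm}})/\overline{\gamma_{bm}} + \alpha_b$ (which follows from the left equality in (\ref{eq:rb-rm-alpha-b})), this product equals, in absolute value,
$$\ln 2\left(\frac{1}{1+\alpha_b\overline{\gamma_{bb}}} - \frac{1}{1+\alpha_b\overline{\gamma_{bb}}+\overline{\gamma_{mb}}}\right)\left(\frac{\overline{\gamma_{bb}}(1+\overline{\gamma_{mm}})}{\overline{\gamma_{bm}}} + \overline{\gamma_{bb}}\,\alpha_b\right).$$
On the other hand, the local concavity of $r_m$ at $r_b$ is precisely the statement that (\ref{eq:rm-concave-cond}) is $\le 0$, which, after the same substitution, reads
$$\left(\frac{1}{1+\alpha_b\overline{\gamma_{bb}}} + \frac{1}{1+\alpha_b\overline{\gamma_{bb}}+\overline{\gamma_{mb}}}\right)\left(\frac{\overline{\gamma_{bb}}(1+\overline{\gamma_{mm}})}{\overline{\gamma_{bm}}} + \overline{\gamma_{bb}}\,\alpha_b\right) \le 1.$$
Since $\overline{\gamma_{mb}}>0$ gives $\frac{1}{1+\alpha_b\overline{\gamma_{bb}}} - \frac{1}{1+\alpha_b\overline{\gamma_{bb}}+\overline{\gamma_{mb}}} < \frac{1}{1+\alpha_b\overline{\gamma_{bb}}} < \frac{1}{1+\alpha_b\overline{\gamma_{bb}}} + \frac{1}{1+\alpha_b\overline{\gamma_{bb}}+\overline{\gamma_{mb}}}$, combining the last two displays yields $\big|\frac{dr_m}{d\alpha_b}\cdot\frac{d^2\alpha_b}{dr_b^2}\big| < \ln 2$, and therefore $\big|\frac{d^2 r_m}{dr_b^2}\big| < \ln 2 < 1.4$ at every $r_b$ where $r_m(r_b)$ is concave (the claim for $r_b(r_m)$ follows symmetrically).

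I do not anticipate a genuine obstacle: the argument is elementary once one observes (i) that on the concave piece the sign pattern lets us drop the positive term $\frac{d^2 r_m}{d\alpha_b^2}(\frac{d\alpha_b}{dr_b})^2$ and only bound the negative one, and (ii) that the local concavity inequality (\ref{eq:rm-concave-cond}) is exactly the relation needed to neutralize the a-priori-unbounded quantities $\overline{\gamma_{bb}}(1+\overline{\gamma_{mm}})/\overline{\gamma_{bm}}$ and $\overline{\gamma_{bb}}\alpha_b$. The only care needed is bookkeeping: rewriting (\ref{eq:rm-concave-cond}) in terms of $\alpha_b\overline{\gamma_{bb}}$ through the $2^{r_b}$ identity, and applying the elementary bound between the difference and the sum of the two fractions in the correct direction. (This in fact proves the sharper constant $\ln 2$; the weaker bound $1.4$ stated here is all that is needed for the running-time estimate in Lemma~\ref{lemma:convexification}.)
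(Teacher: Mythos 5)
Your proof is correct and takes essentially the same route as the paper's: the same decomposition (\ref{eq:d2rm-drb2}), the same sign observation that on the concave piece only the term $\frac{d r_m}{d \alpha_b}\cdot\frac{d^2 \alpha_b}{d r_b^2}$ needs to be bounded, and the same substitution $2^{r_b}(1+\overline{\gamma_{mm}})/\overline{\gamma_{bm}} = (1+\overline{\gamma_{mm}})/\overline{\gamma_{bm}}+\alpha_b$. The only difference is the final estimate: by invoking the pointwise concavity condition (\ref{eq:rm-concave-cond})$\,\leq 0$ in full (rather than the paper's term-by-term bounds together with the necessary condition $\overline{\gamma_{bb}}(1+\overline{\gamma_{mm}})/\overline{\gamma_{bm}}<1$), you obtain the sharper constant $\ln 2$ in place of the paper's $2\ln 2$, either of which suffices for the stated bound $1.4$.
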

\begin{proof}
Fix any $r_b$ such that $r_m(r_b)$ is concave, and let $\alpha_b$ be such that $r_b = r_b(\alpha_b, 1)$. The proof of Lemma \ref{lemma:convexity-of-cap-region} implies that (using Eq.'s (\ref{eq:d2rm-drb2})--(\ref{eq:rm-concave-cond})):
\begin{align}
\Big|\frac{d^2 r_m}{d r_b^2}\Big| \leq & \frac{\overline{\gamma_{bb}}}{\ln(2)}\Big(\frac{1}{1+\alpha_b \overline{\gamma_{bb}}} - \frac{1}{1+\alpha_b \overline{\gamma_{bb}}+\overline{\gamma_{mb}}}\Big)\notag\\ 
&\cdot \ln^2(2)\cdot 2^{r_b} \frac{1+\overline{\gamma_{mm}}}{\overline{\gamma_{bm}}}\notag\\
=& \ln(2) \overline{\gamma_{bb}}\cdot \frac{\overline{\gamma_{mb}}}{(1+\alpha_b\overline{\gamma_{bb}})(1+\alpha_b\overline{\gamma_{bb}}+\overline{\gamma_{mb}})}\notag\\ 
&\cdot \Big(\alpha_b + \frac{1+\overline{\gamma_{mm}}}{\overline{\gamma_{bm}}}\Big)\notag\\
<&\ln(2) \frac{\overline{\gamma_{bb}}}{1+\alpha_b\overline{\gamma_{bb}}}\Big(\alpha_b + \frac{1+\overline{\gamma_{mm}}}{\overline{\gamma_{bm}}}\Big)\notag\\
=& \ln(2) \Big(\frac{\alpha_b\overline{\gamma_{bb}}}{1+\alpha_b\overline{\gamma_{bb}}} + \frac{\overline{\gamma_{bb}}(1+\overline{\gamma_{mm}})}{\overline{\gamma_{bm}}}\cdot \frac{1}{1+\alpha_b\overline{\gamma_{bb}}}\Big)\notag\\
\leq & 2\ln(2) < 1.4,
\end{align}
where we have used: $\frac{\overline{\gamma_{mb}}}{1+\alpha_b\overline{\gamma_{bb}}+\overline{\gamma_{mb}}}\leq \frac{\overline{\gamma_{mb}}}{1+\overline{\gamma_{mb}}}<1$, $\frac{\alpha_b\overline{\gamma_{bb}}}{1+\alpha_b\overline{\gamma_{bb}}}<1$, $\frac{1}{1+\alpha_b\overline{\gamma_{bb}}}\leq 1$, and $\frac{\overline{\gamma_{bb}}(1+\overline{\gamma_{mm}})}{\overline{\gamma_{bm}}} < 1$ (from a necessary condition (\ref{eq:rm-concave-cond}) for $r_m(r_b)$ to be concave in any $r_b\in [0, s_b]$ in the proof of Lemma \ref{lemma:convexity-of-cap-region}).
\end{proof}

For $r_b'$ or $r_b''$ to be determined with an absolute error $\varepsilon$, it takes at most $\lceil\log(\varepsilon^{-1})\rceil$ binary search steps. In terms of $r_m^*$, the error is then less than $1.4\varepsilon \overline{r_b}$, and to find $r_m^*$ with an absolute error $\varepsilon$, the binary search should perform at most $\lceil\log(\varepsilon^{-1}\cdot 1.4\overline{r_b})\rceil$ steps. 
\qed
\end{proof}
\fi 

\iffullpaper To put the number of binary search steps in perspective, the highest SNR typically measured in Wi-Fi and cellular networks is about 50dB ($10^5$). 50dB SNR maps to $\overline{r_b} \approx 16.61$ b/s/Hz, which result in at most $\lceil 4.53 + \log(\varepsilon^{-1}) \rceil$ binary search steps. Since each step requires constant computation time, the  computation time for determining the convexified capacity region is very low. \fi

Using the methods mentioned above, FD and TDFD capacity regions 
were obtained 
for different combinations of $\overline{\gamma_{bm}}, \overline{\gamma_{mb}}, \overline{\gamma_{mm}}$, and $\overline{\gamma_{bb}}$ (Fig.~\ref{fig:cap-region-fd-tdd}).  
As expected, as $\overline{\gamma_{mm}}$ increases and $\overline{\gamma_{mb}}$ and $\overline{\gamma_{bm}}$ decrease, the rate improvements decrease and more FD regions become non-convex.

\begin{figure*}[t!]
\center
\subfloat[]{\label{fig:cap_region_conv}\includegraphics[scale = 0.22 ]{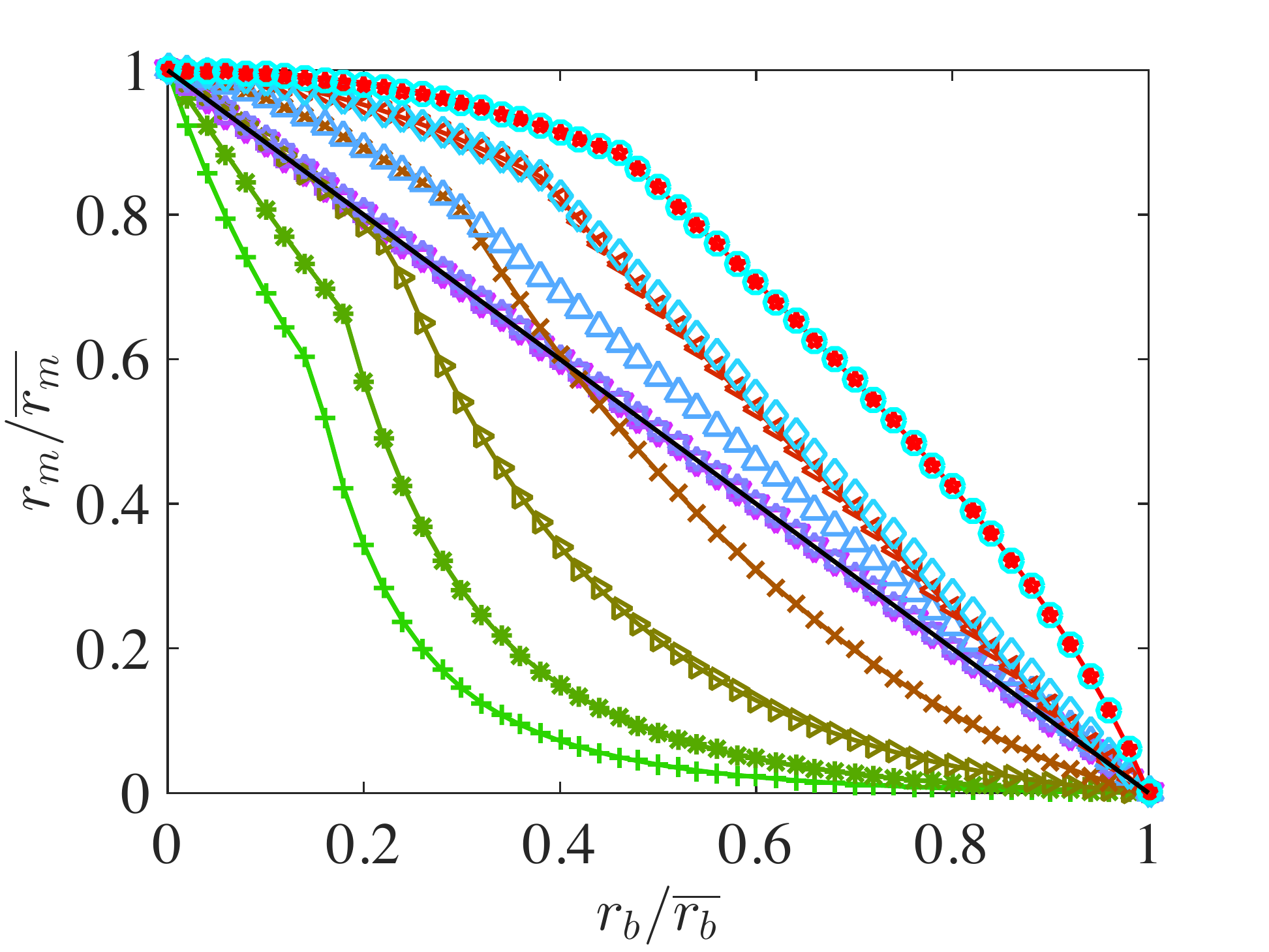}}\hspace{\fill}
\subfloat[]{\label{fig:cap_region_FDE_1}\includegraphics[scale = 0.22]{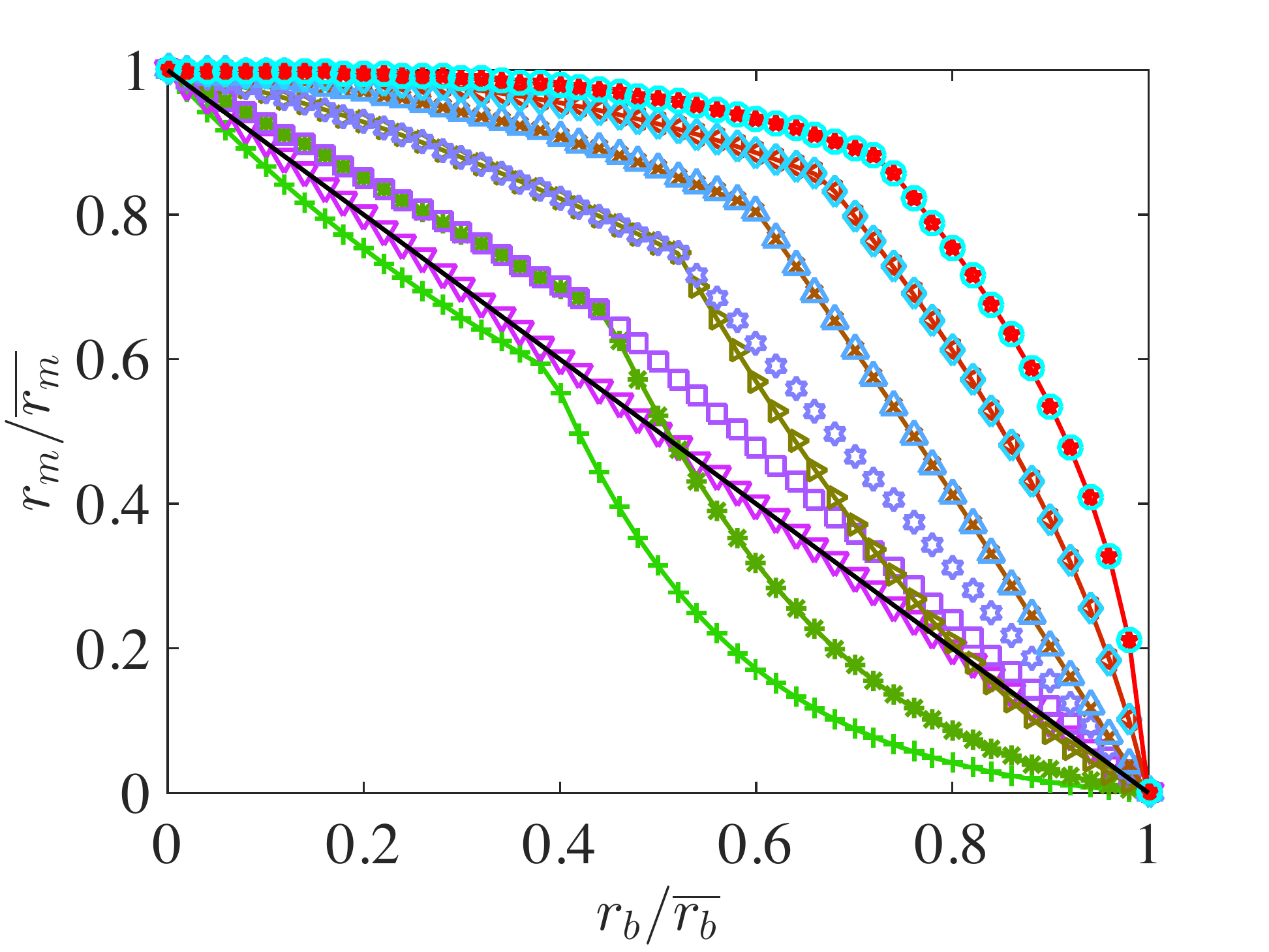}}\hspace{\fill}
\subfloat[]{\label{fig:cap_region_FDE_2}\includegraphics[scale = 0.22]{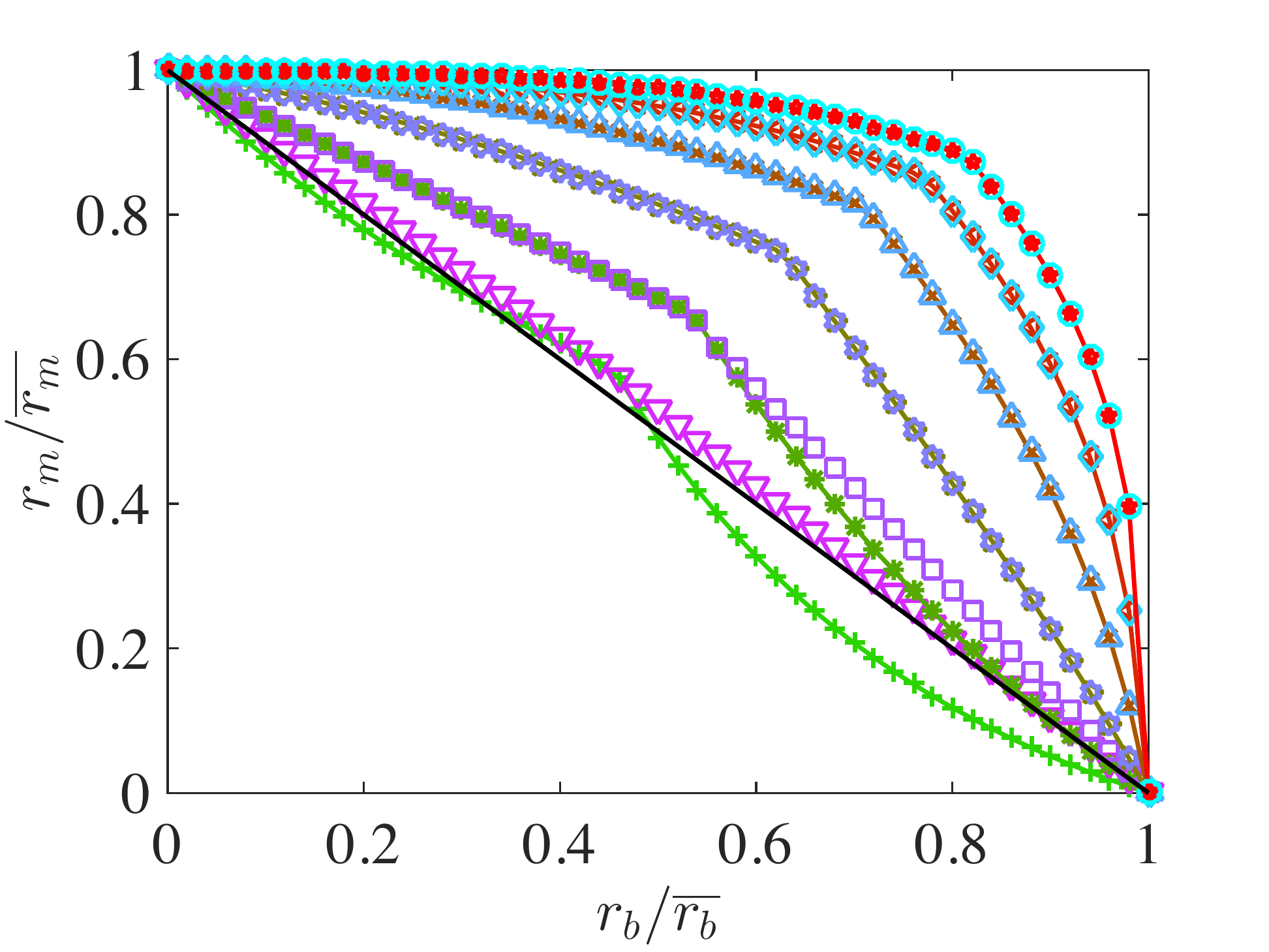}}\hspace{\fill}
\subfloat{\includegraphics[scale = 0.22]{cap_reg_legend.pdf}}\\\vspace{-10pt}
\setcounter{subfigure}{3}
\subfloat[]{\label{fig:asym_cap_region_conv}\includegraphics[scale = 0.22 ]{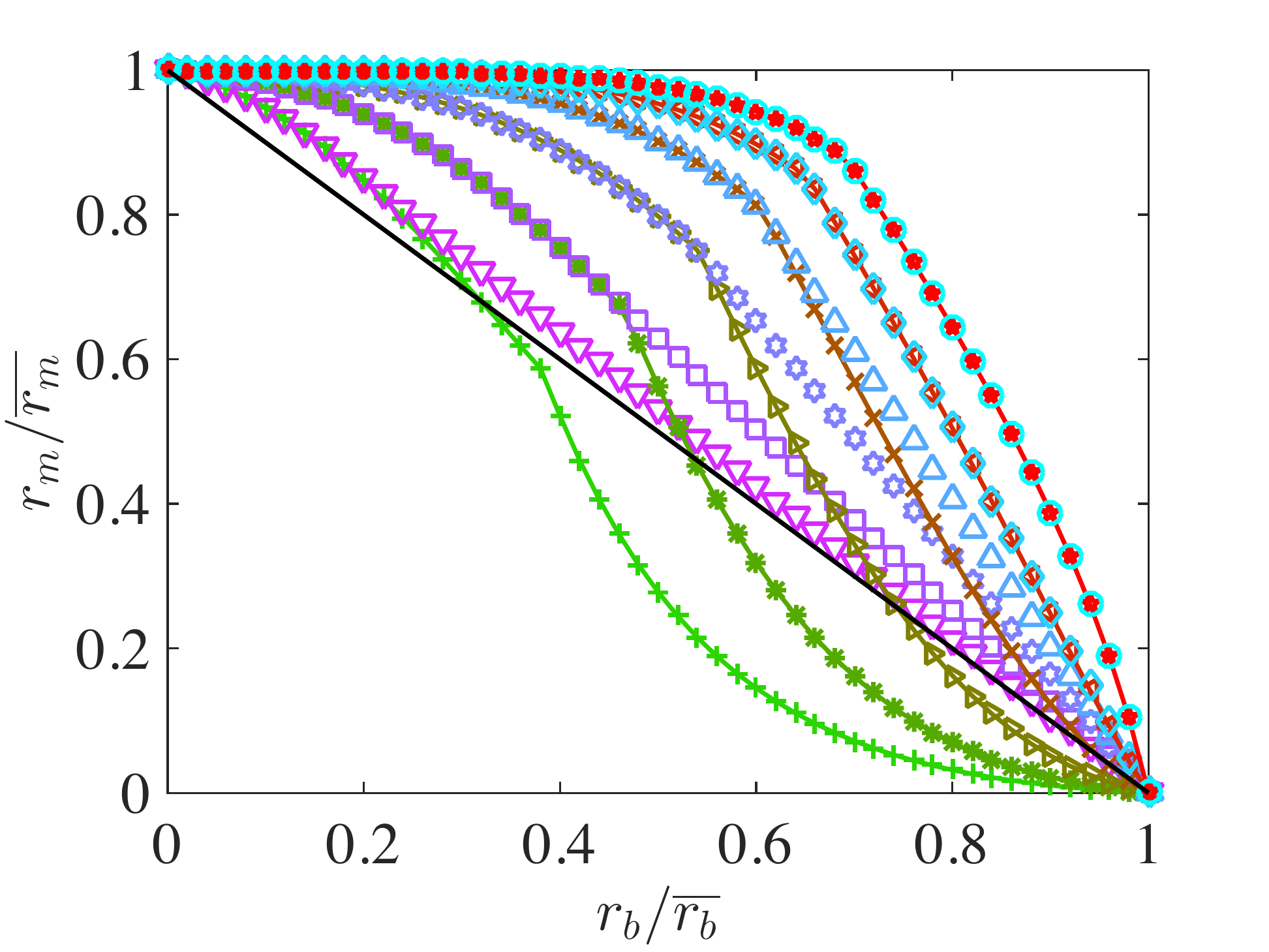}}\hspace{\fill}
\subfloat[]{\label{fig:asym_cap_region_FDE_1}\includegraphics[scale = 0.22]{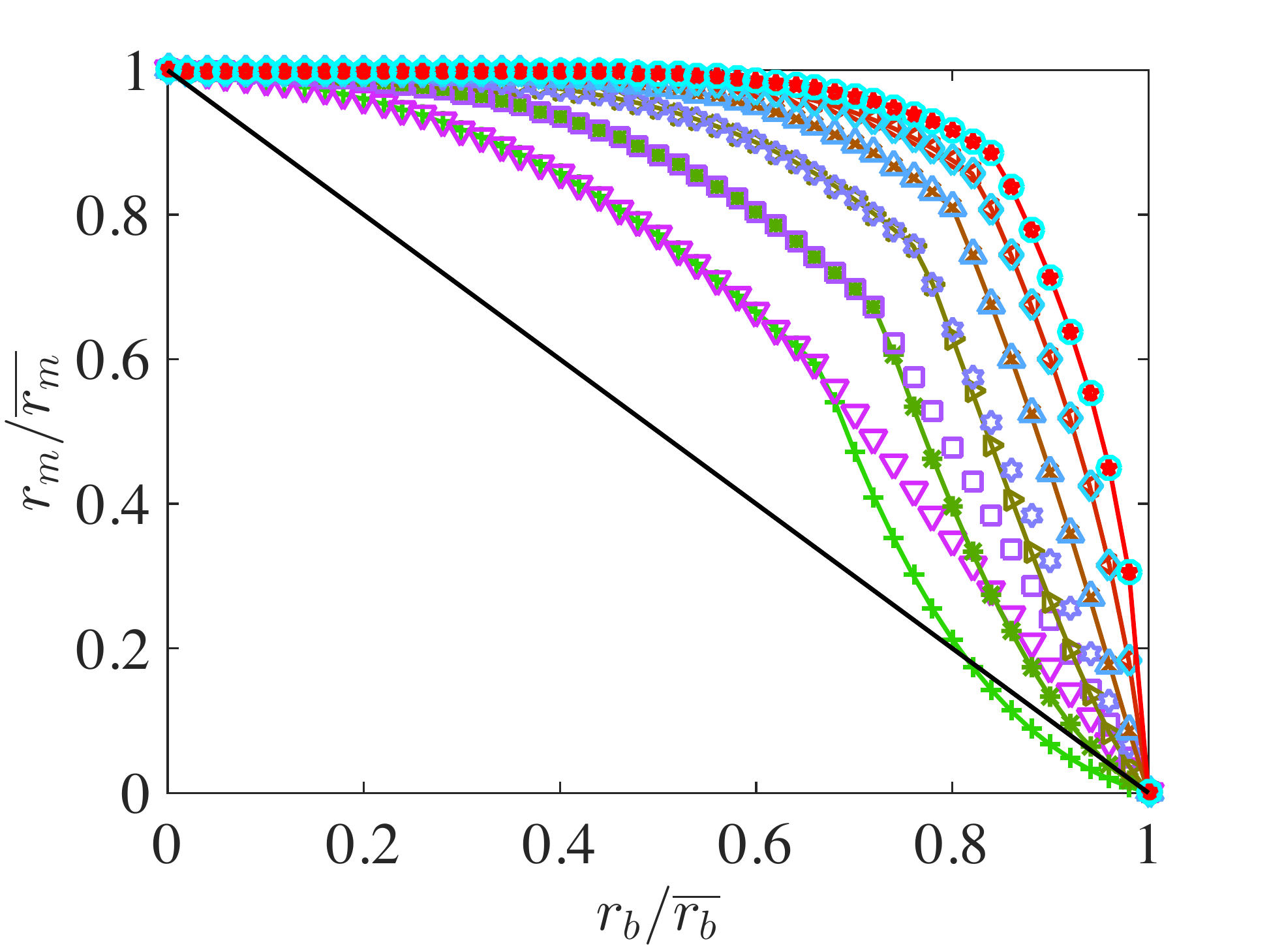}}\hspace{\fill}
\subfloat[]{\label{fig:asym_cap_region_FDE_2}\includegraphics[scale = 0.22]{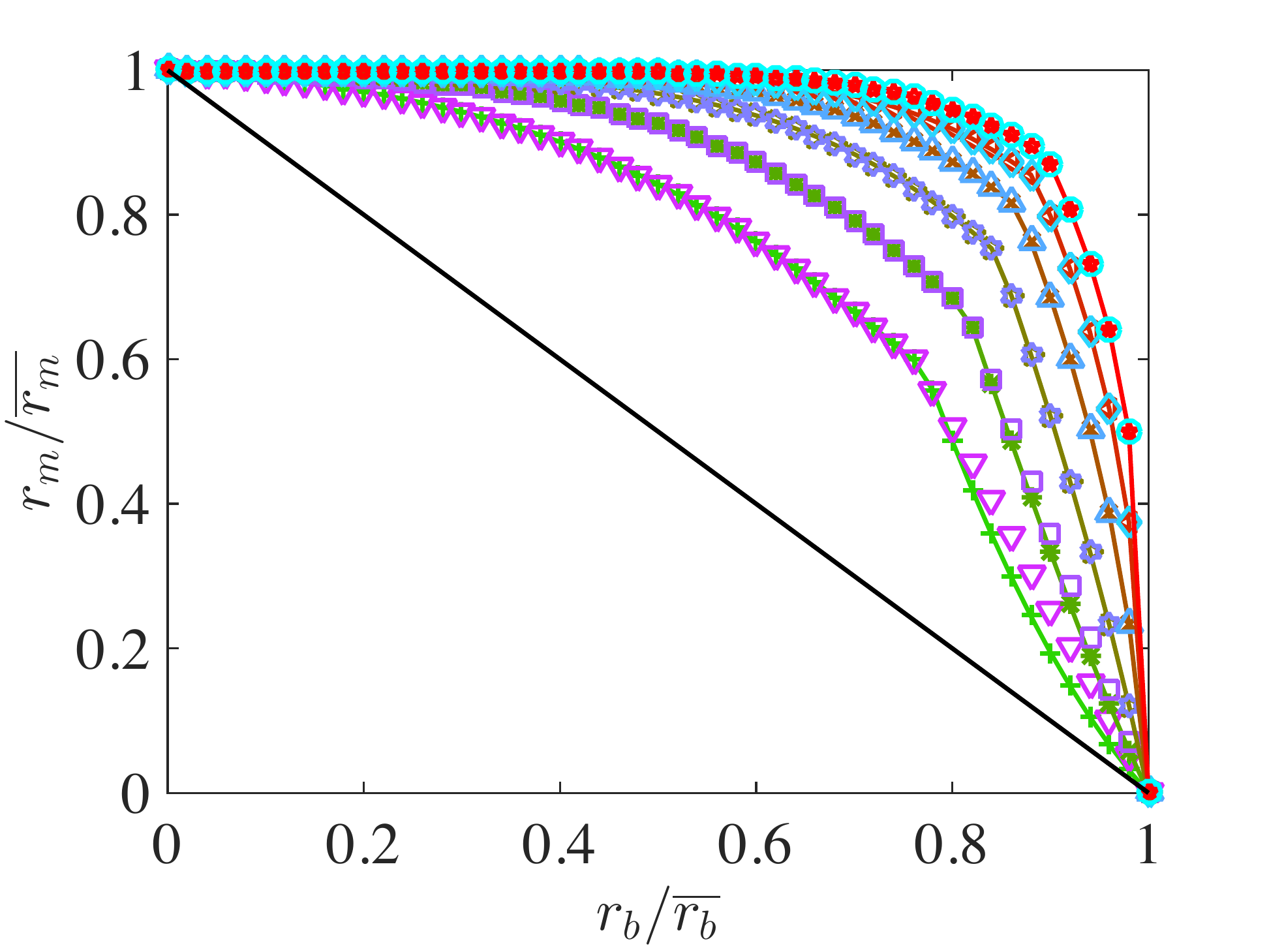}}\hspace{\fill}
\subfloat{\includegraphics[scale = 0.22]{asym_cap_reg_legend.pdf}}
\caption{Capacity regions for $\overline{\gamma_{bb, k}}$ from Fig.~\ref{fig:cancellation-profiles}\protect\subref{fig:gamma_bb}, and $\overline{\gamma_{mm, k}}$ from \protect\subref{fig:cap_region_conv}, \protect\subref{fig:asym_cap_region_conv} Fig.~\ref{fig:cancellation-profiles}\protect\subref{fig:gamma_mm_conv}, \protect\subref{fig:cap_region_FDE_1}, \protect\subref{fig:asym_cap_region_FDE_1} Fig.~\ref{fig:cancellation-profiles}\protect\subref{fig:gamma_mm_FDE_1}, and \protect\subref{fig:cap_region_FDE_2}, \protect\subref{fig:asym_cap_region_FDE_2} Fig.~\ref{fig:cancellation-profiles}\protect\subref{fig:gamma_mm_FDE_2}.}
\label{fig:cap-region-fd-tdd-multi-fixed} \vspace{-10pt}
\end{figure*}
\iffullpaper
\begin{figure*}[t!]
\center
\subfloat[]{\label{fig:rate_improve_conv}\includegraphics[scale = 0.22 ]{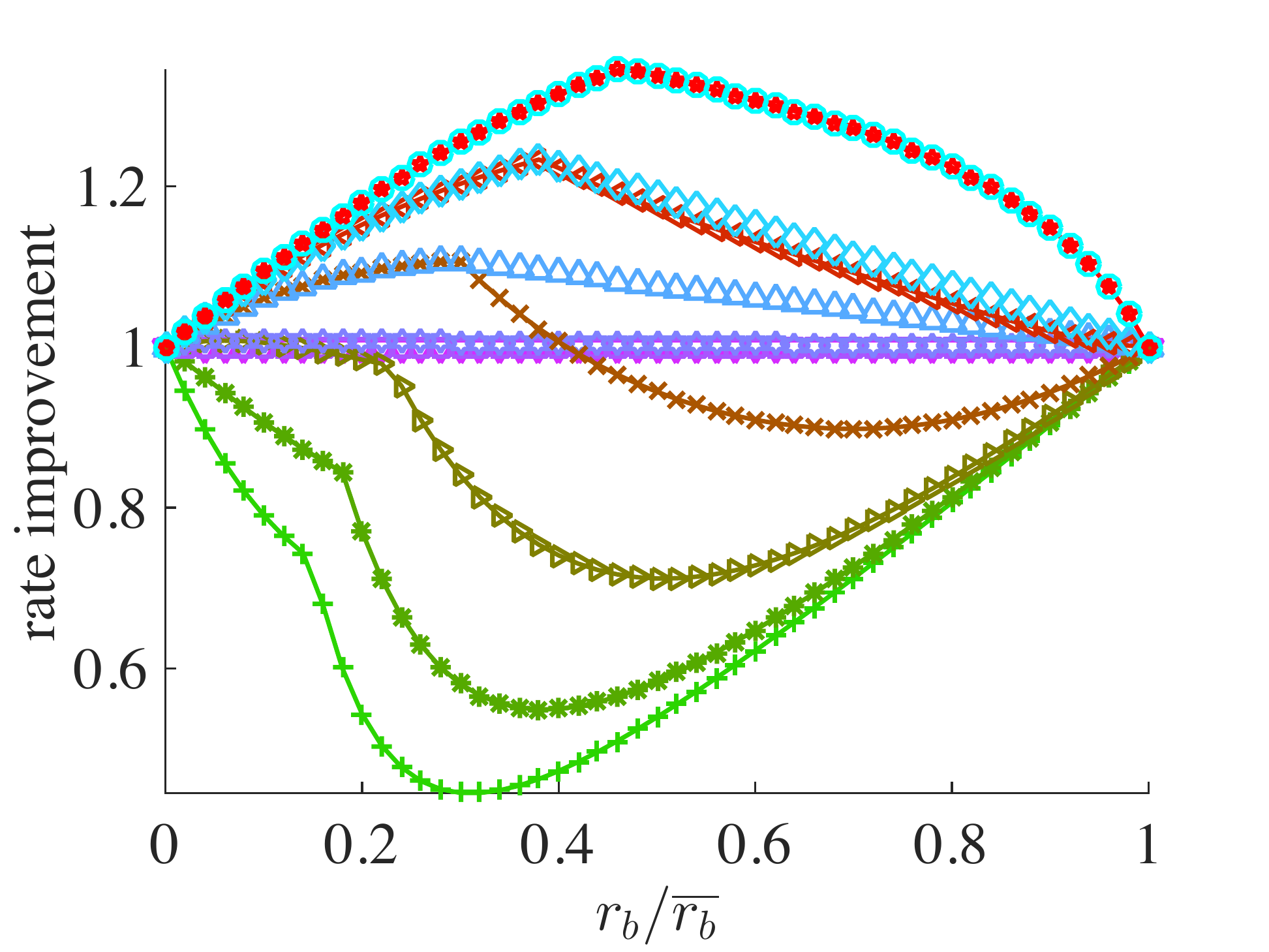}}\hspace{\fill}
\subfloat[]{\label{fig:rate_improve_FDE_1}\includegraphics[scale = 0.22]{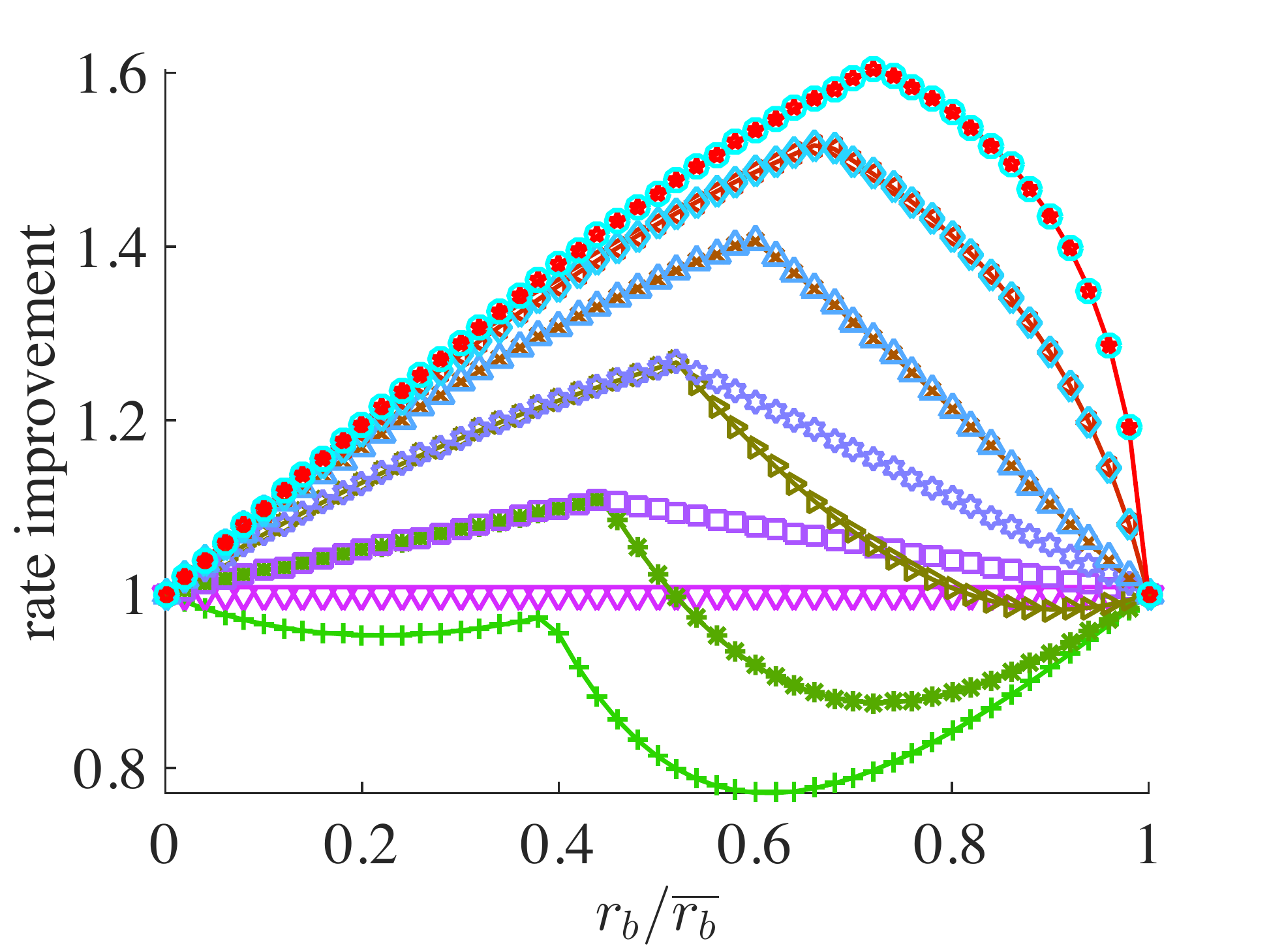}}\hspace{\fill}
\subfloat[]{\label{fig:rate_improve_FDE_2}\includegraphics[scale = 0.22]{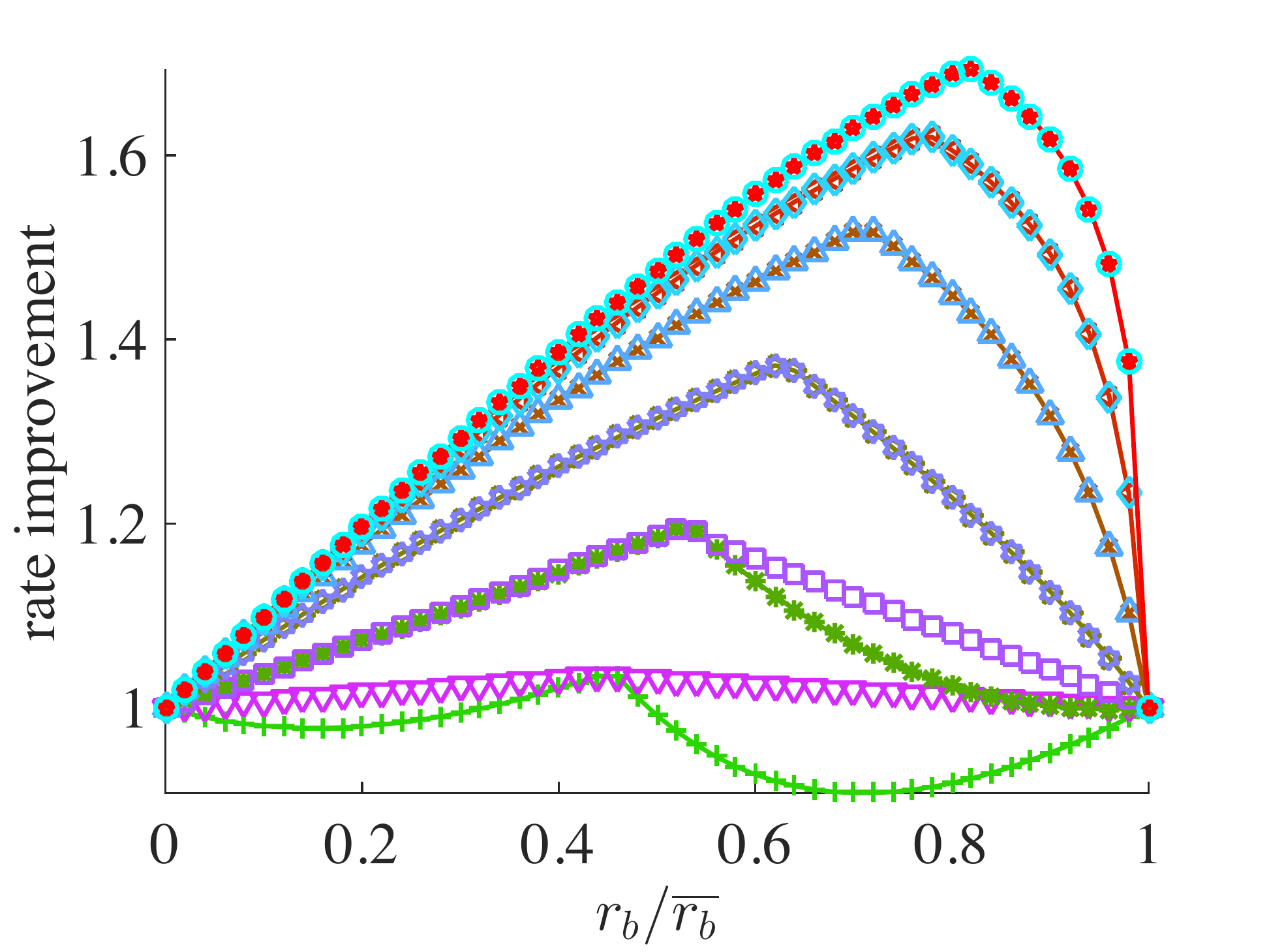}}\hspace{\fill}
\subfloat{\includegraphics[scale = 0.25]{rate_improve_legend.pdf}}\\\vspace{-10pt}
\setcounter{subfigure}{3}
\subfloat[]{\label{fig:asym_rate_improve_conv}\includegraphics[scale = 0.22 ]{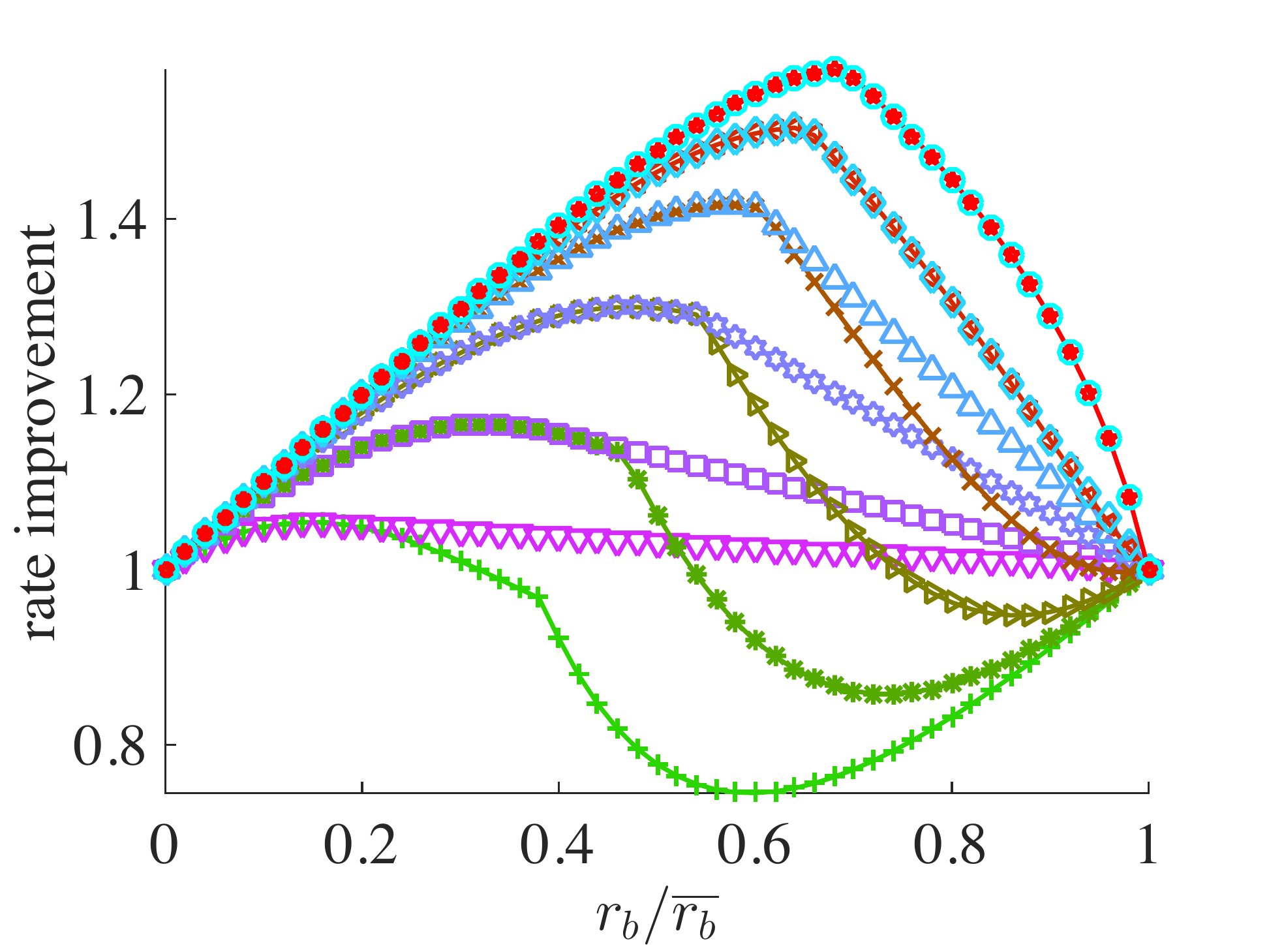}}\hspace{\fill}
\subfloat[]{\label{fig:asym_rate_improve_FDE_1}\includegraphics[scale = 0.22]{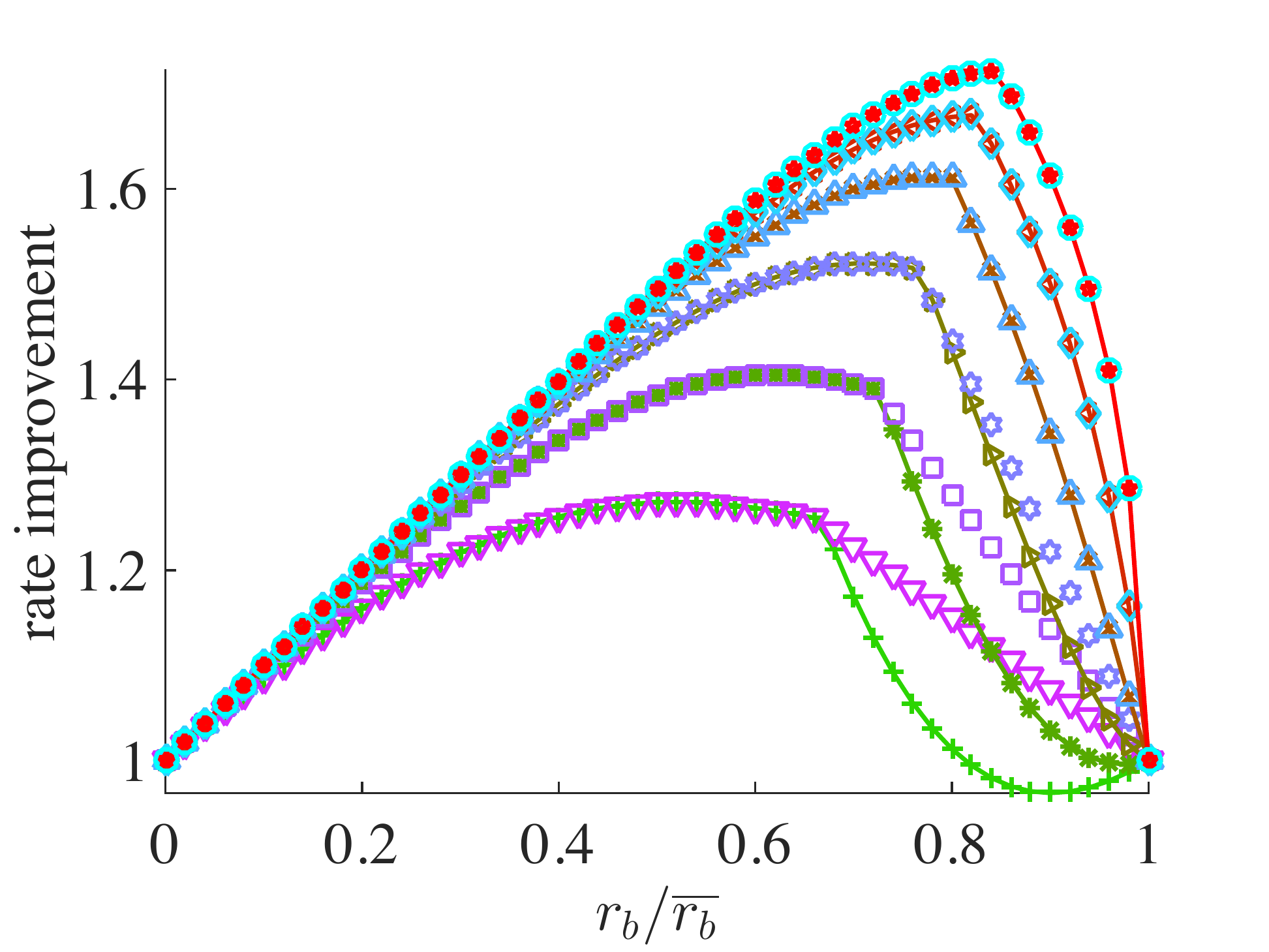}}\hspace{\fill}
\subfloat[]{\label{fig:asym_rate_improve_FDE_2}\includegraphics[scale = 0.22]{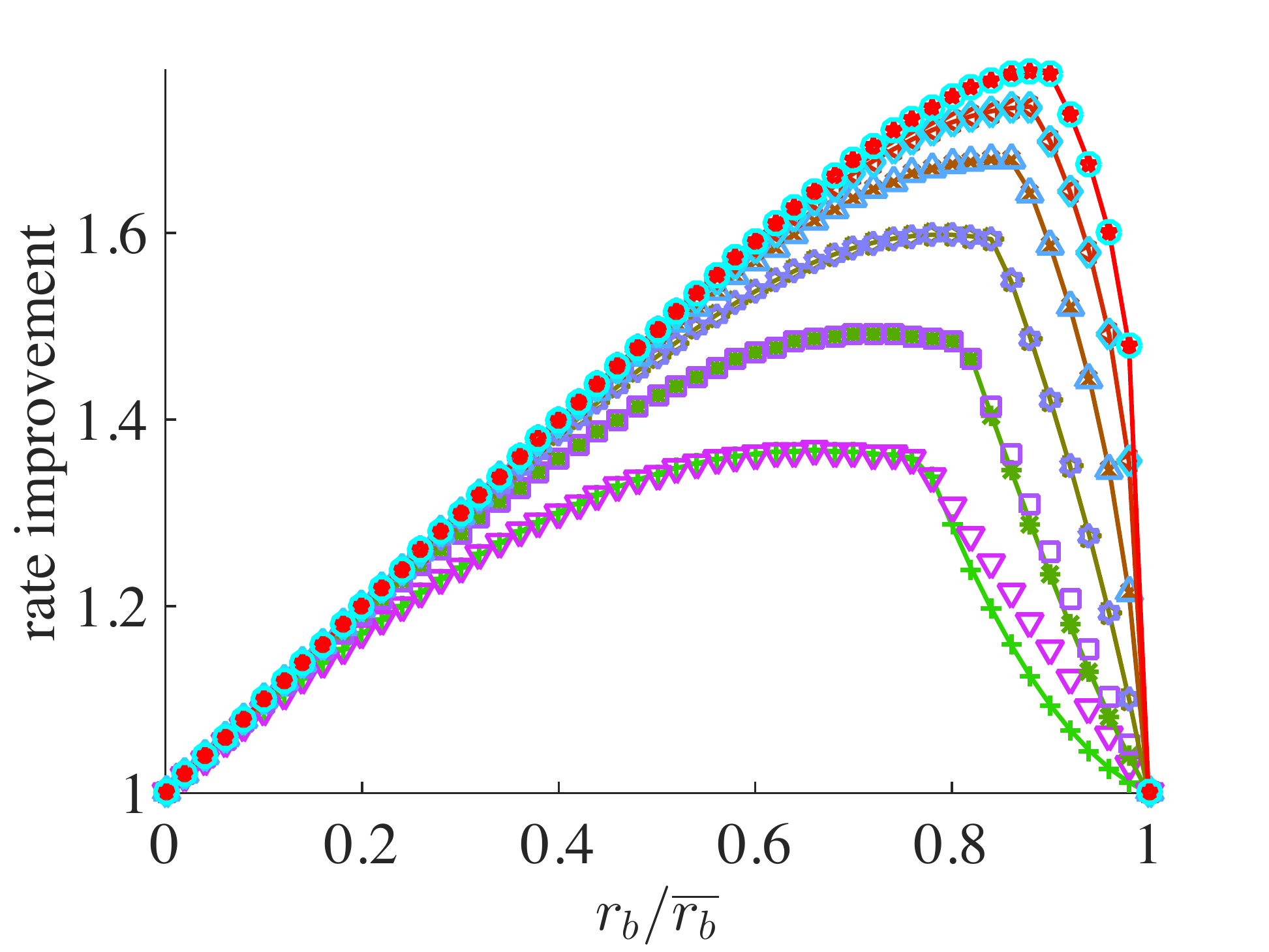}}\hspace{\fill}
\subfloat{\includegraphics[scale = 0.25]{asym_rate_improve_legend.pdf}}
\caption{Rate improvements for $\overline{\gamma_{bb, k}}$ from Fig.~\ref{fig:cancellation-profiles}\protect\subref{fig:gamma_bb}, and $\overline{\gamma_{mm, k}}$ from \protect\subref{fig:cap_region_conv}, \protect\subref{fig:asym_cap_region_conv} Fig.~\ref{fig:cancellation-profiles}\protect\subref{fig:gamma_mm_conv}, \protect\subref{fig:cap_region_FDE_1}, \protect\subref{fig:asym_cap_region_FDE_1} Fig.~\ref{fig:cancellation-profiles}\protect\subref{fig:gamma_mm_FDE_1}, and \protect\subref{fig:cap_region_FDE_2}, \protect\subref{fig:asym_cap_region_FDE_2} Fig.~\ref{fig:cancellation-profiles}\protect\subref{fig:gamma_mm_FDE_2}.}
\label{fig:rate_improve-fd-tdd-multi-fixed} \vspace{-10pt}
\end{figure*}
\else
\begin{figure*}[t!]
\center
\subfloat[]{\label{fig:rate_improve_conv}\includegraphics[scale = 0.22 ]{rate_improve_conv.pdf}}\hspace{\fill}
\subfloat[]{\label{fig:rate_improve_FDE_2}\includegraphics[scale = 0.22]{rate_improve_FDE_2.pdf}}\hspace{\fill}
\subfloat[]{\label{fig:asym_rate_improve_conv}\includegraphics[scale = 0.22 ]{asym_rate_improve_conv.pdf}}\hspace{\fill}
\subfloat[]{\label{fig:asym_rate_improve_FDE_2}\includegraphics[scale = 0.22]{asym_rate_improve_FDE_2.pdf}}\hspace{\fill}
\caption{Rate improvements corresponding to capacity regions from \protect\subref{fig:rate_improve_conv} Fig.~\ref{fig:cap-region-fd-tdd-multi-fixed}\protect\subref{fig:cap_region_conv}, \protect\subref{fig:rate_improve_FDE_2} Fig.~\ref{fig:cap-region-fd-tdd-multi-fixed}\protect\subref{fig:cap_region_FDE_2},\protect\subref{fig:asym_rate_improve_conv} Fig.~\ref{fig:cap-region-fd-tdd-multi-fixed}\protect\subref{fig:asym_cap_region_conv}, and \protect\subref{fig:asym_rate_improve_FDE_2} Fig.~\ref{fig:cap-region-fd-tdd-multi-fixed}\protect\subref{fig:asym_cap_region_FDE_2}.}
\label{fig:rate_improve-fd-tdd-multi-fixed} \vspace{-10pt}
\end{figure*}
\fi

\section{Multi-Channel -- Fixed Power}\label{sec:multi}

In this section, we consider the problem of determining FD and TDFD capacity regions over multiple channels when the (shape of) the power allocation is fixed, but the total transmission power level can be varied. 
We first provide characterization of the FD capacity region, which allows computing any point on the FD capacity region via a binary search. 
Then, we turn to the problem of determining the TDFD capacity region. Due to the lack of structure as in the single channel case, in the multi-channel case the TDFD capacity region cannot in general be determined by  a binary search. We argue, however, that for inputs that are relevant in practice this problem can be solved in real time.

\subsection{Capacity Region}\label{section:fixed-PA}

Suppose that we want to determine the FD capacity region, given a fixed power allocation over $K$ orthogonal channels: $\alpha_{b, 1} = \alpha_{b, 2} =...=\alpha_{b, K} \equiv \alpha_b$ and $\alpha_{m, 1} = \alpha_{m, 2} =...=\alpha_{m, K} \equiv \alpha_m$. Note that setting the power allocation so that all $\alpha_{b, k}$'s and all $\alpha_{m, k}$'s are equal is without loss of generality, since we can represent an arbitrary fixed power allocation in this manner by appropriately scaling the values of $\overline{\gamma_{bm}}, \overline{\gamma_{mb}}, \overline{\gamma_{mm}}$, and $\overline{\gamma_{bb}}$ \iffullpaper (see Eq.'s (\ref{eq:rb-ofdm-equal}) and (\ref{eq:rm-ofdm-equal}) below). \else. \fi The sum of the UL and DL rates over the (orthogonal) channels can then be written as 
$
r = r_b + r_m,
$
where:
\iffullpaper\begin{align}
r_b = \sum_{k=1}^K \log\Big(1+\frac{\alpha_b \overline{\gamma_{bm, k}}}{1+\alpha_m \overline{\gamma_{mm, k}}}\Big), \text{ and} \label{eq:rb-ofdm-equal}\\
r_m = \sum_{k=1}^K \log\Big(1+\frac{\alpha_m \overline{\gamma_{mb, k}}}{1+\alpha_b \overline{\gamma_{bb, k}}}\Big).\label{eq:rm-ofdm-equal}
\end{align} 
\else 
$r_b = \sum_{k=1}^K \log\Big(1+\frac{\alpha_b \overline{\gamma_{bm, k}}}{1+\alpha_m \overline{\gamma_{mm, k}}}\Big)$ and 
$r_m = \sum_{k=1}^K \log\Big(1+\frac{\alpha_m \overline{\gamma_{mb, k}}}{1+\alpha_b \overline{\gamma_{bb, k}}}\Big).$ 
\fi
Let $s_b = r_b(\alpha_b = \frac{1}{K}, \alpha_m = \frac{1}{K})$, $s_m = r_m(\alpha_b = \frac{1}{K}, \alpha_m = \frac{1}{K})$. We characterize the FD capacity region in the following lemma\iffullpaper. \else (the proof is in \cite{capacity-region-full}).\fi

\begin{lemma}\label{lemma:strictly-fd-ofdm-equal}
For a fixed $r_b = r_b^*\leq s_b$, $r_m$ is maximized for $\alpha_m = 1/K$. \iffullpaper Similarly, for a fixed $r_m = r_m^* \leq s_m$, $r_b$ is maximized for $\alpha_b = 1/K$.\fi
\end{lemma}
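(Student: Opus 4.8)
The plan is to emulate Proposition~\ref{prop:FD-cap-region}, but since the per-channel values $\overline{\gamma_{bm,k}},\overline{\gamma_{mm,k}},\overline{\gamma_{bb,k}},\overline{\gamma_{mb,k}}$ no longer collapse the constraint $r_b=r_b^*$ into a single algebraic equation (so the short contradiction argument of Proposition~\ref{prop:FD-cap-region} does not transfer cleanly), I would instead work directly on the level set $\{r_b=r_b^*\}$. I prove the first assertion; the second follows by interchanging the two stations. Assume $r_b^*>0$ (if $r_b^*=0$ then $\alpha_b=0$ is forced and $r_m(0,\alpha_m)$ is plainly increasing in $\alpha_m$, so the claim is immediate). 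Each summand of $r_b$ is continuous and strictly increasing in $\alpha_b$, and $r_b(1/K,\alpha_m)\ge s_b\ge r_b^*$ for $\alpha_m\le 1/K$; hence for each $\alpha_m\in[0,1/K]$ there is a unique $\alpha_b=g(\alpha_m)\in[0,1/K]$ with $r_b(g(\alpha_m),\alpha_m)=r_b^*$. The implicit function theorem (using $\partial r_b/\partial\alpha_b>0$) makes $g$ smooth, and $g$ is increasing since $r_b$ is decreasing in $\alpha_m$; together with $g(1/K)\le 1/K$ this shows that the arc $\{(g(\alpha_m),\alpha_m):\alpha_m\in[0,1/K]\}$ is exactly the set of all feasible power pairs attaining $r_b=r_b^*$. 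So it suffices to show $\phi(\alpha_m):=r_m(g(\alpha_m),\alpha_m)$ is nondecreasing on $[0,1/K]$, which forces its maximum to be at $\alpha_m=1/K$.

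Next I would reduce $\phi'\ge 0$ to one scalar inequality. Writing $p_k=1+\alpha_m\overline{\gamma_{mm,k}}$, $q_k=p_k+\alpha_b\overline{\gamma_{bm,k}}$, $\sigma_k=1+\alpha_b\overline{\gamma_{bb,k}}$, $w_k=\sigma_k+\alpha_m\overline{\gamma_{mb,k}}$, a routine differentiation gives $\partial r_b/\partial\alpha_b=A/\ln 2$, $\partial r_b/\partial\alpha_m=-B/\ln 2$, $\partial r_m/\partial\alpha_b=-C/\ln 2$, and $\partial r_m/\partial\alpha_m=D/\ln 2$, where
\begin{align*}
A=\sum_k\frac{\overline{\gamma_{bm,k}}}{q_k},\qquad B=\sum_k\frac{\alpha_b\,\overline{\gamma_{bm,k}}\,\overline{\gamma_{mm,k}}}{p_kq_k},\qquad C=\sum_k\frac{\alpha_m\,\overline{\gamma_{mb,k}}\,\overline{\gamma_{bb,k}}}{\sigma_kw_k},\qquad D=\sum_k\frac{\overline{\gamma_{mb,k}}}{w_k},
\end{align*}
all nonnegative, with $A,D>0$. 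Implicit differentiation of $r_b(g(\alpha_m),\alpha_m)=r_b^*$ yields $g'(\alpha_m)=B/A$, so $\phi'(\alpha_m)=\frac{1}{\ln 2}\big(D-\tfrac{BC}{A}\big)=\frac{AD-BC}{A\ln 2}$; since $A>0$, it remains to prove $AD\ge BC$.

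Finally I would prove $AD\ge BC$ by a term-by-term comparison of the two double sums
\[
AD=\sum_{k,j}\frac{\overline{\gamma_{bm,k}}\,\overline{\gamma_{mb,j}}}{q_kw_j},\qquad BC=\sum_{k,j}\frac{\alpha_b\alpha_m\,\overline{\gamma_{bm,k}}\,\overline{\gamma_{mm,k}}\,\overline{\gamma_{mb,j}}\,\overline{\gamma_{bb,j}}}{p_kq_k\sigma_jw_j}.
\]
It is enough to show, for every pair $(k,j)$, that $\frac{\alpha_b\overline{\gamma_{mm,k}}}{p_k}\cdot\frac{\alpha_m\overline{\gamma_{bb,j}}}{\sigma_j}<1$, i.e.\ $\frac{\alpha_b\overline{\gamma_{mm,k}}}{1+\alpha_m\overline{\gamma_{mm,k}}}\cdot\frac{\alpha_m\overline{\gamma_{bb,j}}}{1+\alpha_b\overline{\gamma_{bb,j}}}<1$. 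If $\alpha_b=0$ or $\alpha_m=0$ the left side is $0$; otherwise $\frac{\alpha_b\overline{\gamma_{mm,k}}}{1+\alpha_m\overline{\gamma_{mm,k}}}<\frac{\alpha_b}{\alpha_m}$ and $\frac{\alpha_m\overline{\gamma_{bb,j}}}{1+\alpha_b\overline{\gamma_{bb,j}}}<\frac{\alpha_m}{\alpha_b}$, whose product equals $1$. Hence $AD>BC$, so $\phi'>0$ on the arc, $\phi$ is strictly increasing, and $r_m$ is maximized at $\alpha_m=1/K$; the companion statement follows by symmetry.

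The delicate point --- and the reason this is more accurate than the proof of Theorem~3 in~\cite{li2014rate} --- is that the bare claim ``$r_m$ increases with $\alpha_m$'' is simply false: along $\{r_b=r_b^*\}$, raising $\alpha_m$ drags $\alpha_b=g(\alpha_m)$ upward, which lowers $r_m$, so one must genuinely bound $\partial r_m/\partial\alpha_m$ against $(\partial r_m/\partial\alpha_b)\,g'$. The term-by-term double-sum bound is exactly what makes this go through for arbitrary positive per-channel coefficients, without assuming they are equal across channels. The only routine loose end is checking that the level-set arc stays inside $[0,1/K]^2$ (so that it parametrizes all feasible pairs with $r_b=r_b^*$), which is what the hypothesis $r_b^*\le s_b$ and the monotonicity of $g$ deliver.
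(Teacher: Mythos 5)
Your proof is correct and follows essentially the same route as the paper's: parametrize the level set $r_b=r_b^*$ by $\alpha_m$ with $\alpha_b$ an implicit function, compute the total derivative of $r_m$ along it, and show it is strictly positive using the elementary facts $\frac{\alpha_m\overline{\gamma_{mm,k}}}{1+\alpha_m\overline{\gamma_{mm,k}}}<1$ and $\frac{\alpha_b\overline{\gamma_{bb,k}}}{1+\alpha_b\overline{\gamma_{bb,k}}}<1$, correctly flagging $d\alpha_b/d\alpha_m>0$ as the point mishandled in the earlier literature. The only cosmetic difference is that you finish with a term-by-term comparison of the double sums $AD$ and $BC$, whereas the paper bounds $d\alpha_b/d\alpha_m=B/A$ by $\alpha_b\max_j \overline{\gamma_{mm,j}}/(1+\alpha_m\overline{\gamma_{mm,j}})$ and then applies the same two bounds.
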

\iffullpaper
\begin{proof}
We will only prove the first part of the lemma, while the second part will follow using symmetric arguments.

Since $r_m$ is being maximized for a fixed $r_b = r_b^* \leq s_b$, we can think think of maximizing $r_m$ by only varying $\alpha_m$, while $\alpha_b$ changes as a function of $\alpha_m$ to keep $r_b = r_b^*$ as $\alpha_m$ is varied. Observe that for a fixed $\alpha_m \in [0, 1/K]$, $\alpha_b$ such that $r_b = r_b^*$ is uniquely defined since $r_b$ is monotonic in $\alpha_b$. Because $r_b^*\leq s_b$ and $r_b$ is decreasing in $\alpha_m$, a solution for $\alpha_b$ such that $r_b = r_b^*$ exists for any $\alpha_m \in [0, 1/K]$. It is not hard to see that $\alpha_b(\alpha_m)$ that keeps $r_b = r_b^*$ is a continuous and differentiable function. This follows from basic calculus, as $\alpha_b(\alpha_m)$ is an inverse function of $r_b$, $r_b$ is continuous and strictly increasing in $\alpha_b$, with $\frac{\partial r_b}{\partial \alpha_b}\neq 0$, $\forall (\alpha_b, \alpha_m) \in [0, 1]^2$. Therefore, we can write:
\begin{align}
\frac{d r_m(\alpha_m)}{d \alpha_m} = \frac{\partial r_m(\alpha_b, \alpha_m)}{\partial \alpha_m} + \frac{\partial r_m(\alpha_b, \alpha_m)}{\partial \alpha_b}\cdot\frac{d \alpha_b}{d \alpha_m}. \label{eq:drm-dalpham-ofdm-equal}
\end{align}
From (\ref{eq:rm-ofdm-equal}), we have:
\begin{align}
\frac{\partial r_m(\alpha_b, \alpha_m)}{\partial \alpha_m} = \sum_{k=1}^K \frac{\overline{\gamma_{mb, k}}}{1 + \alpha_m \overline{\gamma_{mb, k}} + \alpha_b \overline{\gamma_{bb, k}}}, \label{eq:partialrm-partialalpham-ofdm-equal}
\end{align}
and
\begin{align}
\frac{\partial r_m(\alpha_b, \alpha_m)}{\partial \alpha_b} = -\sum_{k=1}^K \frac{\frac{\overline{\gamma_{bb, k}}}{1 + \alpha_b \overline{\gamma_{bb, k}}}\cdot \alpha_m \overline{\gamma_{mb, k}}}{1 + \alpha_m \overline{\gamma_{mb, k}} + \alpha_b \overline{\gamma_{bb, k}}}.\label{eq:partialrm-partialalphab-ofdm-equal}
\end{align}
To find $\frac{d \alpha_b}{d \alpha_m}$, we will differentiate $r_b = r_b^*$ ($=\text{const.}$) w.r.t. $\alpha_m$, using (\ref{eq:rb-ofdm-equal}):
\begin{align}
\sum_{k=1}^K \frac{\overline{\gamma_{mm, k}} + \overline{\gamma_{bm, k}}\cdot \frac{d \alpha_b}{d \alpha_m}}{1 + \alpha_b \overline{\gamma_{bm, k}}+ \alpha_m \overline{\gamma_{mm, k}}} - \sum_{k=1}^K \frac{\overline{\gamma_{mm, k}}}{1 + \alpha_m \overline{\gamma_{mm, k}}} = 0 \notag 
\end{align}
\begin{align}
\Leftrightarrow \;  \frac{d \alpha_b}{d \alpha_m} =& \left(\sum_{k=1}^K \frac{\overline{\gamma_{bm, k}}}{1 + \alpha_b \overline{\gamma_{bm, k}}+ \alpha_m \overline{\gamma_{mm, k}}}\right)^{-1}\notag\\ 
&\cdot \sum_{k=1}^K \frac{\frac{\overline{\gamma_{mm, k}}}{1 + \alpha_m \overline{\gamma_{mm, k}}}\cdot \alpha_b \overline{\gamma_{bm, k}} }{1 + \alpha_b \overline{\gamma_{bm, k}}+ \alpha_m \overline{\gamma_{mm, k}}} \label{eq:dalphab-dalpham}\\
\leq & \alpha_b\cdot \max_{1\leq j\leq K} \frac{\overline{\gamma_{mm, j}}}{1 + \alpha_m \overline{\gamma_{mm, j}}}. \label{eq:ub-dalphab-dalpham}
\end{align}
Plugging (\ref{eq:partialrm-partialalpham-ofdm-equal}), (\ref{eq:partialrm-partialalphab-ofdm-equal}), and (\ref{eq:ub-dalphab-dalpham}) back into (\ref{eq:drm-dalpham-ofdm-equal}), we have:
\begin{align*}
\frac{d r_m(\alpha_m)}{d \alpha_m} &\geq \sum_{k=1}^K \frac{\overline{\gamma_{mb, k}}}{1 + \alpha_m \overline{\gamma_{mb, k}} + \alpha_b \overline{\gamma_{bb, k}}} \notag \\ 
&-\sum_{k=1}^K \frac{\frac{\overline{\gamma_{bb, k}}}{1 + \alpha_b \overline{\gamma_{bb, k}}}\cdot \alpha_m \overline{\gamma_{mb, k}}}{1 + \alpha_m \overline{\gamma_{mb, k}} + \alpha_b \overline{\gamma_{bb, k}}}\cdot \max_{1\leq j\leq K} \frac{\alpha_b\overline{\gamma_{mm, j}}}{1 + \alpha_m \overline{\gamma_{mm, j}}}\\
&= \sum_{k=1}^K \frac{\overline{\gamma_{mb, k}}}{1 + \alpha_m \overline{\gamma_{mb, k}} + \alpha_b \overline{\gamma_{bb, k}}} - \max_{1\leq j\leq K} \frac{\alpha_m\overline{\gamma_{mm, j}}}{1 + \alpha_m \overline{\gamma_{mm, j}}}  \notag \\  
&\cdot\sum_{k=1}^K \frac{\frac{\alpha_b\overline{\gamma_{bb, k}}}{1 + \alpha_b \overline{\gamma_{bb, k}}}\cdot \overline{\gamma_{mb, k}}}{1 + \alpha_m \overline{\gamma_{mb, k}} + \alpha_b \overline{\gamma_{bb, k}}}\\
&> 0, 
\end{align*}
where the last inequality follows from $\frac{\alpha_m\overline{\gamma_{mm, j}}}{1 + \alpha_m \overline{\gamma_{mm, j}}} < 1$ and $\frac{\alpha_b\overline{\gamma_{bb, k}}}{1 + \alpha_b \overline{\gamma_{bb, k}}}<1$, $\forall j, k$. It follows that $r_m$ is strictly increasing in $\alpha_m$, and, therefore, maximized for $\alpha_m = 1/K$.
\end{proof}

We now point out the difference between the proof of Lemma \ref{lemma:strictly-fd-ofdm-equal} and the proof of Theorem 3 in \cite{li2014rate}. The proof of Theorem 3 in \cite{li2014rate} uses similar arguments as the proof of Lemma \ref{lemma:strictly-fd-ofdm-equal} up to Eq. (\ref{eq:drm-dalpham-ofdm-equal}). However, the proof then concludes with the statement that $\frac{\partial r_m}{\partial \alpha_b}<0$ and $\frac{d \alpha_b}{d \alpha_m}<0$, which is not correct, as we see from (\ref{eq:dalphab-dalpham}) that $\frac{d \alpha_b}{d \alpha_m}>0$.\footnote{In a private communication, the authors of \cite{li2014rate} confirmed that our observation was correct and prepared an erratum.}

\fi
Using Lemma \ref{lemma:strictly-fd-ofdm-equal}, we can construct the entire FD capacity region by solving (i) $r_b = r_b^*$ for $\alpha_b$, when $\alpha_m = 1/K$ and $r_b^* \in [0, s_b]$, and (ii) $r_b = r_b^*$ for $\alpha_m$, when $\alpha_b = 1/K$ and $r_b^* \in (s_b, \overline{r_b}]$. Note that $r_b = r_b^*$ can be solved for $\alpha_b$ when $r_b \in [0, s_b]$ (resp.\ for $\alpha_m$) by using a binary search, since $r_b$ is monotonic and bounded in $\alpha_b$ for $r_b \in [0, s_b]$ (resp.\ $\alpha_m$ for $r_b \in (s_b, \overline{r_b}]$). The pseudocode is provided in Algorithm \ref{algo:multi-channel-fixed} (\textsc{MCFind-}$r_m$). The bound on the running time is provided in Proposition \ref{prop:mc-fixed-FD-algo-running-time}\iffullpaper.\else, whose proof is in \cite{capacity-region-full}.\fi
\begin{algorithm}

\caption{\textsc{MCFind-}$r_m$($r_b^*, K$)}
\begin{algorithmic}[1]
\Statex Input: $\overline{\gamma_{mb}}, \overline{\gamma_{bm}}, \overline{\gamma_{mm}}, \overline{\gamma_{bb}}$
\State $s_b = \sum_{k=1}^K \log(1 + \frac{1 + \overline{\gamma_{bm}}/K}{1+\overline{\gamma_{mm}}/K})$
\If {$r_b^* \leq s_b$}
\State Via binary search, find $\alpha_b$ s.t. $r_b(\alpha_b, 1/K) = r_b^*$
\State $r_m^* = \sum_{k=1}^K \log(1 + \frac{1 + \overline{\gamma_{mb}}/K}{1+\alpha_b\overline{\gamma_{bb}}})$
\Else
\State Via binary search, find $\alpha_m$ s.t. $r_b(1/K, \alpha_m) = r_b^*$
\State $r_m^* = \sum_{k=1}^K \log(1 + \frac{1 + \alpha_m\overline{\gamma_{mb}}}{1+\overline{\gamma_{bb}}/K})$
\EndIf
\Return $r_m^*$
\end{algorithmic}\label{algo:multi-channel-fixed}
\end{algorithm}

\begin{proposition}\label{prop:mc-fixed-FD-algo-running-time}
The running time of \textsc{MCFind-}$r_m$ is $O(K\log(\sum_{k}\frac{\overline{\gamma_{bb, k}}}{K\varepsilon}))$, where $\varepsilon$ is the additive error for $r_m^*$.
\end{proposition}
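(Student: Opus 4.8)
The plan is to bound the running time as (cost of one evaluation of the Shannon sums) times (number of binary-search iterations), plus the cost of line~1. A single evaluation of $r_b(\cdot,\cdot)$ or $r_m(\cdot,\cdot)$, and likewise the computation of $s_b$ in line~1, is a sum of $K$ terms and costs $O(K)$ arithmetic operations, so the whole algorithm runs in $O(K)$ times the number of bisection steps; everything reduces to bounding that count.

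The first key point is the length of the interval over which the search runs. In the equal-allocation model the per-channel powers satisfy $K\alpha_b\le 1$ and $K\alpha_m\le 1$, so the variable bisected (either $\alpha_b$ in the branch $r_b^*\le s_b$, or $\alpha_m$ in the branch $r_b^*>s_b$) lies in $[0,1/K]$, \emph{not} in $[0,1]$. Since $r_b(\alpha_b,1/K)$ is continuous and strictly increasing in $\alpha_b$ with $r_b(0,1/K)=0$ and $r_b(1/K,1/K)=s_b$, a unique root exists in $[0,1/K]$ whenever $r_b^*\le s_b$, so bisection is well defined; symmetrically $r_b(1/K,\alpha_m)$ is strictly decreasing on $[0,1/K]$ and attains every value in $[s_b,\overline{r_b}]$, which covers the other branch. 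After $t$ steps the bracketing interval has width at most $2^{-t}/K$.

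Next I would convert a small error in the searched power into a small error in the returned $r_m^*$ via a Lipschitz bound. Differentiating $r_m(\alpha_b,1/K)=\sum_k\log\!\big(1+\tfrac{\overline{\gamma_{mb,k}}/K}{1+\alpha_b\overline{\gamma_{bb,k}}}\big)$ gives $\big|\partial r_m/\partial\alpha_b\big|=\tfrac{1}{\ln 2}\sum_k\tfrac{\overline{\gamma_{bb,k}}}{1+\alpha_b\overline{\gamma_{bb,k}}}\cdot\tfrac{\overline{\gamma_{mb,k}}/K}{1+\alpha_b\overline{\gamma_{bb,k}}+\overline{\gamma_{mb,k}}/K}<\tfrac{1}{\ln 2}\sum_k\overline{\gamma_{bb,k}}$, using the trivial bounds $\tfrac{\overline{\gamma_{bb,k}}}{1+\alpha_b\overline{\gamma_{bb,k}}}\le\overline{\gamma_{bb,k}}$ and $\tfrac{\overline{\gamma_{mb,k}}/K}{1+\alpha_b\overline{\gamma_{bb,k}}+\overline{\gamma_{mb,k}}/K}<1$; the complementary branch is handled symmetrically and yields a bound of the same shape (with $\overline{\gamma_{mb,k}}$ in place of $\overline{\gamma_{bb,k}}$). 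Hence, by the mean value theorem, the error of the returned $r_m^*$ after $t$ steps is at most $\tfrac{2^{-t}}{K}\cdot\tfrac{1}{\ln 2}\sum_k\overline{\gamma_{bb,k}}$, which is $\le\varepsilon$ once $t=\big\lceil\log_2\big(\tfrac{1}{K\varepsilon\ln 2}\sum_k\overline{\gamma_{bb,k}}\big)\big\rceil=O\big(\log\big(\sum_k\tfrac{\overline{\gamma_{bb,k}}}{K\varepsilon}\big)\big)$. Multiplying by the $O(K)$ per-step cost and adding the $O(K)$ of line~1 yields the claimed $O\big(K\log\big(\sum_k\tfrac{\overline{\gamma_{bb,k}}}{K\varepsilon}\big)\big)$.

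The main obstacle is this last, Lipschitz, step: one needs a parameter-explicit bound on $\big|\partial r_m/\partial\alpha_b\big|$ (resp.\ $\big|\partial r_m/\partial\alpha_m\big|$) and must check that both branches of the algorithm produce a bound of the stated form. The factor $1/K$ inside the logarithm is the delicate part, and it comes \emph{solely} from the fact that the search interval has length $1/K$ rather than $1$; that observation must be made explicit. A minor degenerate regime, $\sum_k\overline{\gamma_{bb,k}}/K\le\varepsilon$, requires no extra iterations, so the bound should there be read as $O(K)=O\big(K\big(1+\log(\cdot)\big)\big)$.
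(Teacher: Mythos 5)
Your proof is correct and takes essentially the same route as the paper's own: bisection over the length-$1/K$ interval for the searched power, the derivative bound $\big|\partial r_m/\partial\alpha_b\big| = O\big(\sum_k \overline{\gamma_{bb, k}}\big)$ to convert power accuracy into rate accuracy $\varepsilon$, and an $O(K)$ cost per bisection step. The extra points you make explicit (the mean-value step, the symmetric second branch with $\overline{\gamma_{mb, k}}$ in place of $\overline{\gamma_{bb, k}}$, and the degenerate regime) are left implicit in the paper's proof but do not change the argument.
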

\iffullpaper
\begin{proof}
To determine $\alpha_b$ with the accuracy $\varepsilon_{\alpha}$, the binary search takes $\lceil\log({\varepsilon_{\alpha}}^{-1}/K)\rceil$ steps, as $\alpha_b \in [0, 1/K]$. From (\ref{eq:partialrm-partialalphab-ofdm-equal}), we can bound $|\frac{d r_m}{d \alpha_b}|$  as:
\begin{equation*}
\Big|\frac{d r_m}{d \alpha_b}\Big|\leq \sum_k \frac{\overline{\gamma_{bb, k}}}{1+\alpha_b\overline{\gamma_{bb, k}}}\leq \sum_k \overline{\gamma_{bb, k}},
\end{equation*}
as $\frac{\alpha_m \overline{\gamma_{mb, k}}}{1 + \alpha_m \overline{\gamma_{mb, k}}+\alpha_b\overline{\gamma_{bb, k}}}\leq 1$, and $1+\alpha_b\overline{\gamma_{bb, k}}\geq 1$, $\forall k$. Therefore, to find $r_m$ with the accuracy $\varepsilon$, it suffices to take $\varepsilon = \frac{\varepsilon_{\alpha}}{\sum_k \overline{\gamma_{bb, k}}}$. As each binary search step takes $O(K)$ computation (due to the computation of $r_b(\alpha_b, 1/K)$), we get the claimed running time bound.
\end{proof}
\fi
Notice that in practice $\overline{\gamma_{bb, k}}/K\leq 1$, $\overline{\gamma_{mm, k}}/K\leq 100$, and $K$ is at the order of 100, which makes the running time of \textsc{MCFind-}$r_m$ suitable for a real-time implementation.

Unlike in the single channel case, where the shape of the FD region boundary is very structured, in the multi-channel case the region does not necessarily have the property that $r_m(r_b)$ (and $r_b(r_m)$) has at most one concave and at most one convex piece\iffullpaper. \else 
(for a more through discussion, see \cite{capacity-region-full}).\fi \iffullpaper To see why this holds, consider the following proposition.

\begin{proposition}\label{prop:fixed-multi-d2rm-drb2}
If $r_b\in[0, s_b]$, then $\frac{d^2 r_m}{d {r_b}^2} =\big(\frac{d r_b}{d \alpha_b}\big)^{-3} \frac{d^2 r_m}{d {\alpha_b}^2}\cdot \frac{d r_b}{d \alpha_b} - \frac{d r_m}{d {\alpha_b}}\cdot \frac{d^2 r_b}{d {\alpha_b}^2}$.
\end{proposition}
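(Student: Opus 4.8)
The plan is to treat the FD region boundary for $r_b\in[0,s_b]$ as a smooth curve parametrized by the single variable $\alpha_b$, and then to differentiate twice via the chain and quotient rules. By Lemma~\ref{lemma:strictly-fd-ofdm-equal}, when $r_b=r_b^*\le s_b$ the maximizing allocation has $\alpha_m=1/K$ held fixed; hence along the boundary both $r_b$ and $r_m$ are functions of $\alpha_b$ alone, with $\alpha_b$ ranging over $[0,1/K]$, and from (\ref{eq:rb-ofdm-equal})--(\ref{eq:rm-ofdm-equal}) with $\alpha_m=1/K$ these are $C^\infty$ in $\alpha_b$ on this interval.

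First I would record that $r_b$ is strictly increasing in $\alpha_b$ on $[0,1/K]$ (each summand $\log(1+\alpha_b\overline{\gamma_{bm,k}}/(1+\overline{\gamma_{mm,k}}/K))$ is strictly increasing), so $\frac{d r_b}{d\alpha_b}>0$ throughout; this is the same monotonicity used in the proof of Lemma~\ref{lemma:strictly-fd-ofdm-equal}. Consequently $\alpha_b\mapsto r_b$ is a $C^2$ diffeomorphism of $[0,1/K]$ onto $[0,s_b]$ whose inverse $r_b\mapsto\alpha_b$ satisfies $\frac{d\alpha_b}{d r_b}=\big(\frac{d r_b}{d\alpha_b}\big)^{-1}$. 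Writing $r_m$ as a function of $r_b$ via the composition $r_m=r_m(\alpha_b(r_b))$, the chain rule gives $\frac{d r_m}{d r_b}=\frac{d r_m}{d\alpha_b}\cdot\frac{d\alpha_b}{d r_b}=\big(\frac{d r_m}{d\alpha_b}\big)\big/\big(\frac{d r_b}{d\alpha_b}\big)$.

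Differentiating once more with respect to $r_b$, I would use $\frac{d}{d r_b}=\frac{d\alpha_b}{d r_b}\cdot\frac{d}{d\alpha_b}$ together with the quotient rule applied to $\big(\frac{d r_m}{d\alpha_b}\big)\big/\big(\frac{d r_b}{d\alpha_b}\big)$: the inner derivative equals $\big[\frac{d^2 r_m}{d\alpha_b^2}\cdot\frac{d r_b}{d\alpha_b}-\frac{d r_m}{d\alpha_b}\cdot\frac{d^2 r_b}{d\alpha_b^2}\big]\big/\big(\frac{d r_b}{d\alpha_b}\big)^2$, and multiplying by $\frac{d\alpha_b}{d r_b}=\big(\frac{d r_b}{d\alpha_b}\big)^{-1}$ produces the factor $\big(\frac{d r_b}{d\alpha_b}\big)^{-3}$ and yields exactly the claimed identity. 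There is no real obstacle here; the only point requiring care is the legitimacy of inverting $r_b(\alpha_b)$, i.e.\ that $\frac{d r_b}{d\alpha_b}$ stays strictly positive (including at the endpoint $\alpha_b=1/K$, $r_b=s_b$), which is precisely the strict monotonicity noted above. The remainder is a mechanical two-step differentiation.
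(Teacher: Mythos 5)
Your proposal is correct and follows essentially the same route as the paper: fix $\alpha_m=1/K$ (justified by Lemma~\ref{lemma:strictly-fd-ofdm-equal}), use strict monotonicity of $r_b$ in $\alpha_b$ to invert, and differentiate twice through the change of variables. The paper organizes the second differentiation via $\frac{d^2\alpha_b}{d r_b^2}=-\big(\frac{d r_b}{d\alpha_b}\big)^{-3}\frac{d^2 r_b}{d\alpha_b^2}$ rather than your quotient-rule form, but the two computations are algebraically identical and yield the same identity.
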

\begin{proof}
Fix $\alpha_m = 1/K$. As both $r_b(\alpha_b)$ and $\frac{d r_b}{d \alpha_b}$ are increasing and differentiable w.r.t. $\alpha_b$ and $\frac{d r_b}{d \alpha_b}\neq 0$, $\frac{d^2 r_b}{d {\alpha_b}^2}\neq 0$, $\forall \alpha_b\in[0, 1/K]$, it follows that $\alpha_b(r_b)$ is continuous and twice-differentiable w.r.t. $r_b$. Therefore, we can write:
\begin{align}
\frac{d^2 r_m}{d {r_b}^2} = \frac{d^2 r_m}{d {\alpha_b}^2}\cdot \left(\frac{d \alpha_b}{d r_b}\right)^2 + \frac{d r_m}{d \alpha_b} \cdot \frac{d^2 \alpha_b}{d {r_b}^2}. \label{eq:d2rm-drb2-ofdm-equal}
\end{align}
From (\ref{eq:rm-ofdm-equal}), we can determine $\frac{d r_m}{d \alpha_b}$ and $\frac{d^2 r_m}{d {\alpha_b}^2}$:
\begin{align}
\frac{d r_m}{d \alpha_b} = \sum_{k=1}^K \Big( \frac{\overline{\gamma_{bb, k}}}{1+\alpha_b \overline{\gamma_{bb, k}}+ \overline{\gamma_{mb, k}}/K} - \frac{\overline{\gamma_{bb, k}}}{1+\alpha_b \overline{\gamma_{bb, k}}}\Big), \label{drm-dalphab-ofdm-equal}
\end{align}
\begin{align}
\frac{d^2 r_m}{d {\alpha_b}^2} = \sum_{k=1}^K \bigg( \Big(\frac{\overline{\gamma_{bb, k}}}{1+\alpha_b \overline{\gamma_{bb, k}}}\Big)^2 - \Big(\frac{\overline{\gamma_{bb, k}}}{1+\alpha_b \overline{\gamma_{bb, k}}+ \overline{\gamma_{mb, k}}/K}\Big)^2\bigg). \label{d2rm-dalphab2-ofdm-equal}
\end{align}
To find $\frac{d \alpha_b}{d r_b}$ and $\frac{d^2 \alpha_b}{d {r_b}^2}$, we differentiate (\ref{eq:rb-ofdm-equal}) w.r.t. $r_b$. This gives:
\begin{align}
\frac{d \alpha_b}{d r_b} = \bigg(\sum_{k=1}^K \frac{\overline{\gamma_{bm, k}}}{1 + \alpha_b \overline{\gamma_{bm, k}} + \overline{\gamma_{mm, k}}/K}\bigg)^{-1} = \Big(\frac{d r_b}{d \alpha_b}\Big)^{-1}, \label{eq:dalphab-drb-ofdm-equal}
\end{align}
\begin{align}
\frac{d^2 \alpha_b}{d {r_b}^2} =& \left(\sum_{k=1}^K \frac{\overline{\gamma_{bm, k}}}{1 + \alpha_b \overline{\gamma_{bm, k}} + \overline{\gamma_{mm, k}}/K}\right)^{-3} \notag\\
&\cdot\sum_{k=1}^K \left(\frac{\overline{\gamma_{bm, k}}}{1 + \alpha_b \overline{\gamma_{bm, k}} + \overline{\gamma_{mm, k}}/K}\right)^2  \notag\\
&= -\left(\frac{d r_b}{d \alpha_b}\right)^{-3}\cdot \frac{d^2 r_b}{d {\alpha_b}^2}. \label{eq:d2alphab-drb2-ofdm-equal}
\end{align}
Plugging (\ref{eq:dalphab-drb-ofdm-equal}) and (\ref{eq:d2alphab-drb2}) back into (\ref{eq:d2rm-drb2}), we have:
\begin{align}
\frac{d^2 r_m}{d {r_b}^2} = \left(\frac{d r_b}{d \alpha_b}\right)^{-3} \left(\frac{d^2 r_m}{d {\alpha_b}^2}\cdot \frac{d r_b}{d \alpha_b} - \frac{d r_m}{d \alpha_b}\cdot \frac{d^2 r_b}{d {\alpha_b}^2} \right). 
\end{align}
\end{proof}
From Proposition \ref{prop:fixed-multi-d2rm-drb2}, as $\big(\frac{d r_b}{d \alpha_b}\big)^{-3}>0$, the sign of $\frac{d^2 r_m}{d {r_b}^2}$ is determined by the sign of $\frac{d^2 r_m}{d {\alpha_b}^2}\cdot \frac{d r_b}{d \alpha_b} - \frac{d r_m}{d \alpha_b}\cdot \frac{d^2 r_b}{d {\alpha_b}^2}$, which can be equivalently written as a rational function of $\alpha_b$ with linear-in-$K$ degree of the polynomial in its numerator. Therefore, the number of roots of $\frac{d^2 r_m}{d {r_b}^2}$ can be linear in $K$, and so $r_m$ can have up to linear in $K$ concave and convex pieces. When $K=1$, $\frac{d^2 r_m}{d {\alpha_b}^2}\cdot \frac{d r_b}{d \alpha_b} - \frac{d r_m}{d \alpha_b}\cdot \frac{d^2 r_b}{d {\alpha_b}^2}$ can be factored as:

\begin{align}
&\frac{\overline{\gamma_{bm}}}{1 + \alpha_b \overline{\gamma_{bm}} + \overline{\gamma_{mm}}} 
\cdot\Big(\frac{\overline{\gamma_{bb}}}{1+\alpha_b \overline{\gamma_{bb}}+ \overline{\gamma_{mb}}} - \frac{\overline{\gamma_{bb}}}{1+\alpha_b \overline{\gamma_{bb}}}\Big)\notag\\
&\cdot \Big(\frac{\overline{\gamma_{bb}}}{1+\alpha_b \overline{\gamma_{bb}}+ \overline{\gamma_{mb}}} + \frac{\overline{\gamma_{bb}}}{1+\alpha_b \overline{\gamma_{bb}}} - \frac{\overline{\gamma_{bm}}}{1 + \alpha_b \overline{\gamma_{bm}} + \overline{\gamma_{mm}}}\Big).\label{eq:}
\end{align}
Simplifying the rational expressions in $\Big(\frac{\overline{\gamma_{bb}}}{1+\alpha_b \overline{\gamma_{bb}}+ \overline{\gamma_{mb}}} + \frac{\overline{\gamma_{bb}}}{1+\alpha_b \overline{\gamma_{bb}}} - \frac{\overline{\gamma_{bm}}}{1 + \alpha_b \overline{\gamma_{bm}} + \overline{\gamma_{mm}}}\Big)$, we can recover the same quadratic function in the numerator as we had in (\ref{eq:quad-ineq-alphab}) and yield the same conclusions as in Lemma \ref{lemma:convexity-of-cap-region}, since $\frac{\overline{\gamma_{bm}}}{1 + \alpha_b \overline{\gamma_{bm}} + \overline{\gamma_{mm}}} 
\cdot\Big(\frac{\overline{\gamma_{bb}}}{1+\alpha_b \overline{\gamma_{bb}}+ \overline{\gamma_{mb}}} - \frac{\overline{\gamma_{bb}}}{1+\alpha_b \overline{\gamma_{bb}}}\Big)>0$. However, there does not seem to be a direct extension of this result to the $K>1$ case.
\fi

Although in general the problem of convexifying the FD region seems difficult in the multi-channel case, in practice 
it can be solved efficiently. The reason is that in Wi-Fi and cellular networks the output power levels take values from a discrete set of size $N$, where $N<100$. Therefore, (for fixed $\overline{\gamma_{mb, k}}$, $\overline{\gamma_{bm, k}}$, $\overline{\gamma_{bb, k}}$, $\overline{\gamma_{mm, k}}$, $\forall k$) $r_b$ can take at most $N$ distinct values. 
 To find the TDFD capacity region, since the points of the FD  region are determined in order increasing in $r_b$, $\Theta(N)$ computation suffices (Ch.\ 33, \cite{cormen2009introduction}).

The capacity regions and the rate improvements for $\overline{\gamma_{bb, k}}$ described by Fig.~\ref{fig:cancellation-profiles}\subref{fig:gamma_bb} and the three cases of $\overline{\gamma_{mm, k}}$ described by Fig.~\ref{fig:cancellation-profiles}\subref{fig:gamma_mm_conv}--\subref{fig:gamma_mm_FDE_2}, for equal power allocation and equal SNR over channels, are shown in Fig.~\ref{fig:cap-region-fd-tdd-multi-fixed} and \ref{fig:rate_improve-fd-tdd-multi-fixed}, respectively. As the cancellation becomes more broadband, namely as $\overline{\gamma_{mm, k}}$'s change from those described in Fig.~\ref{fig:cancellation-profiles}\subref{fig:gamma_mm_conv} over \ref{fig:cancellation-profiles}\subref{fig:gamma_mm_FDE_1} to \ref{fig:cancellation-profiles}\subref{fig:gamma_mm_FDE_2}, the rate improvements become higher and the capacity region becomes convex for lower values of $\overline{\gamma_{mb}}$ and $\overline{\gamma_{bm}}$.

\section{Multi-Channel -- General Power}\label{sec:multi-power}

We now consider the computation of TDFD capacity regions under general power allocations. In this case there are $2K$ variables ($\alpha_{b, 1},..., \alpha_{b, K}$, $\alpha_{m, 1},...,\alpha_{m, K}$), compared to 2 variables ($\alpha_b$ and $\alpha_m$) from the previous section. 

Computing $r_m^* = \max\{r_m:r_b = r_b^*\}$ is a non-convex problem, and  is hard to optimize in general. Yet, we present an algorithm that is guaranteed to converge to a stationary point, under certain restrictions. In practice, the stationary point to which it converges is also a global maximum. The restrictions are based on \cite{full-duplex-sigmetrics} and they guarantee that $r_b+r_m$ is concave when either the $\alpha_{b, k}$'s or $\alpha_{m, k}$'s are fixed. Note that the restrictions do not make the problem  $r_m^* = \max\{r_m:r_b = r_b^*\}$ convex (see Section \ref{section:gpa-cap-region}). 
The restrictions are mild in the sense that they do not affect the optimum by much whenever $\overline{\gamma_{bm, k}}$ and $\overline{\gamma_{mb, k}}$ do not differ much. 

Though for many practical cases the algorithm is near-optimal and runs in polynomial time, its running time in general is not suitable for a real-time implementation. To combat the high running time, in Section \ref{section:heuristic} we develop a simple heuristic that in most cases has similar performance.

\subsection{Capacity Region}\label{section:gpa-cap-region}

Determining the FD region under a general power allocation is equivalent to solving $\{\max r_m: r_b = r_b^*\}$ for any $r_b^* \in [0, \overline{r_b}]$ over $\alpha_{b, k}, \alpha_{m, k} \geq 0$, $\sum_{k}\alpha_{b, k}\leq 1$, $\sum_k \alpha_{m, k}\leq 1$. It is not hard to show that $\frac{d r_m}{d r_b}<0$, and, therefore, the problem is equivalent to $(P) = \{\max r_m: r_b \geq r_b^*\}$. 

Problem $(P)$ is not convex, even when some of the variables are fixed. When the $\alpha_{m, k}$'s are fixed, $r_b$ is concave in $\alpha_{b, k}$'s and the feasible region is convex, however, $r_m$ is convex as well. Conversely, when the $\alpha_{b, k}$'s are fixed, $r_m$ is concave in $\alpha_{m, k}$'s, but the feasible region is not convex since $r_b$ is convex in $\alpha_{m, k}$'s. Therefore, the natural approach to determining the FD region fails.

On the other hand, \cite{full-duplex-sigmetrics} provides conditions that guarantee that $\forall k$, $r = r_b + r_m$ is (i) concave and increasing in $\alpha_{m, k}$ when $\alpha_{b, k}$ is fixed, and (ii) concave and increasing in $\alpha_{b, k}$ when $\alpha_{m, k}$ is fixed. 
These conditions are not very restrictive: when they cannot be satisfied, one cannot gain much from FD additively -- the additive gain is less than 1b/s/Hz compared to the maximum of the UL and DL rates. However, these conditions can be very restrictive when the difference between $\overline{r_b}$ and $\overline{r_m}$ is high. The conditions are:
\begin{align}
\overline{\gamma_{bm, k}} \geq \overline{\gamma_{bb,k}}(1+\alpha_{m, k}\overline{\gamma_{mm, k}}), \forall k \tag{C1} \label{eq:C1}\\
\overline{\gamma_{mb, k}} \geq \overline{\gamma_{mm,k}}(1+\alpha_{b, k}\overline{\gamma_{bb, k}}), \forall k. \label{eq:C2}\tag{C2}
\end{align}
Notice that when $\overline{\gamma_{bm, k}} \geq \overline{\gamma_{bb,k}}(1+\overline{\gamma_{mm, k}})$ and $\overline{\gamma_{mb, k}} \geq \overline{\gamma_{mm,k}}(1+\overline{\gamma_{bb, k}})$, conditions (\ref{eq:C1}) and (\ref{eq:C2}) are non-restrictive (as they hold for any $\alpha_{b, k} \leq 1$, $\alpha_{mk}\leq 1$). When $\overline{\gamma_{bm, k}} < \overline{\gamma_{bb,k}}$, 
(\ref{eq:C1}) 
cannot be satisfied for any $\alpha_{m, k}$ 
as $\alpha_{m, k} \geq 0$. Similarly for $\overline{\gamma_{mb, k}} < \overline{\gamma_{mm,k}}$, (\ref{eq:C2}) cannot hold for any $\alpha_{b, k}$. 

We will use conditions (\ref{eq:C1}) and (\ref{eq:C2}) to formulate a new problem that is still non-convex, but more tractable than the original problem $(P)$. This way, we will get an upper bound on the capacity region and rate improvements when the conditions are non-restrictive. The new problem will also allow us to make a good estimate of the capacity region in the cases when $\overline{\gamma_{bm, k}}$ and $\overline{\gamma_{mb, k}}$ do not differ much. 

Let $(s_b, s_m)$ denote the UL-DL rate pair that maximizes the sum of the rates over UL and DL channels.
\begin{lemma}\label{lemma:motivation-for-q}
If conditions (\ref{eq:C1}) and (\ref{eq:C2}) are non-restrictive, then, given $\overline{\gamma_{bm, k}}, \overline{\gamma_{mb, k}}, \overline{\gamma_{mm, k}}, \overline{\gamma_{bb, k}}$ for $k\in\{1,...,K\}$, the TDFD capacity region can be determined by solving:
\begin{align*}
(Q) = \begin{cases}
\max &\sum_{k=1}^K (r_{b, k}(\alpha_{b, k}, \alpha_{m, k}) + r_{m, k}(\alpha_{b, k}, \alpha_{m, k}))\\
\mathrm{s.t.} & \sum_{k=1}^K r_{b, k}(\alpha_{b, k}, \alpha_{m, k})\; \mathrm{op} \; r_b^*\\
& \sum_{k=1}^K \alpha_{b, k} \leq 1, \sum_{k=1}^K \alpha_{m, k} \leq 1\\
& \alpha_{b, k} \geq 0, \alpha_{m, k} \geq 0, \forall k
\end{cases},
\end{align*}
where $\mathrm{op} = '\leq'$, if $r_b^* \leq s_b$ and $\mathrm{op} = '\geq'$, if $r_b^* \geq s_b$.
\end{lemma}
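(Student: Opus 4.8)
The plan is to identify the TDFD region with $T=\mathrm{conv}(\mathcal R)$, where $\mathcal R\subseteq\mathbb R_{\ge0}^{2}$ is the (closed, downward-closed) set of FD-achievable rate pairs, and to show that as $r_b^{*}$ ranges over $[0,\overline{r_b}]$ the optimal solutions of $(Q)$ sweep out exactly the extreme points of $T$, so that $T$ is the convex hull of those solutions together with the corner $(0,0)$. First I would record the elementary geometry: $T$ is convex and compact, so its Pareto (upper-right) boundary is the graph of a concave, non-increasing function $\rho\colon[0,\overline{r_b}]\to\mathbb R_{\ge0}$ with $\rho(0)=\overline{r_m}$, $\rho(\overline{r_b})=0$, and recovering $T$ amounts to recovering the extreme points of $T$ on that boundary. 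Here is where I would invoke (\ref{eq:C1})--(\ref{eq:C2}): when they are non-restrictive, $\sum_k(r_{b,k}+r_{m,k})$ is concave in $\{\alpha_{b,k}\}$ for fixed $\{\alpha_{m,k}\}$ and conversely \cite{full-duplex-sigmetrics}, so the sum-rate maximizer $(s_b,s_m)$ is well defined and attains $\max_{\mathcal R}(r_b+r_m)=\max_{T}(r_b+r_m)$. Since that maximum is attained on the Pareto boundary, $r_b\mapsto\rho(r_b)+r_b$ is concave with maximum at $r_b=s_b$; hence $\rho'\ge-1$ on $[0,s_b]$ and $\rho'\le-1$ on $[s_b,\overline{r_b}]$. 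This is precisely why the constraint in $(Q)$ must be $'\le'$ below $s_b$ and $'\ge'$ above it.

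The crux is the claim that every extreme point $e=(e_b,e_m)$ of $T$ on the Pareto boundary is an optimal solution of $(Q)$ at $r_b^{*}=e_b$, with $\mathrm{op}$ chosen as above. I would prove it for $e_b\le s_b$, the other case being symmetric. Pick a supporting line of $T$ at $e$ with outward normal $(c,1)$, $c\ge0$; by the slope bound above, $c\le1$. For any feasible point $(r_b,r_m)$ of $(Q)$ --- i.e.\ $(r_b,r_m)\in\mathcal R$ with $r_b\le e_b$ --- the supporting inequality $c\,r_b+r_m\le c\,e_b+e_m$ gives $r_b+r_m\le e_m+c\,e_b+(1-c)\,r_b\le e_m+c\,e_b+(1-c)\,e_b=e_b+e_m$, using $r_b\le e_b$ and $1-c\ge0$; since $e\in\mathcal R$ is itself feasible with objective $e_b+e_m$, it is optimal. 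Conversely, every optimal solution of $(Q)$ lies in $\mathcal R\subseteq T$. Hence the set of all $(Q)$-optima over $r_b^{*}\in[0,\overline{r_b}]$, together with $(0,0)$, is contained in $T$ and contains every extreme point of $T$, so its convex hull is $T$ --- which is the sense in which $(Q)$ ``determines'' the TDFD region.

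I expect the main obstacle to be the uniqueness bookkeeping needed to turn ``$e$ is \emph{an} optimum of $(Q)$ at $r_b^{*}=e_b$'' into ``the solver returns $e$'': a priori $(Q)$ could have several optima. I would resolve this by noting that equality in the displayed chain forces $(1-c)(r_b-e_b)=0$, so whenever we can take the supporting line with slope strictly above $-1$ (i.e.\ $c<1$) the optimum is unique and equals $e$; the only obstruction to $c<1$ is a $(-1)$-slope edge of $T$'s boundary, which by the slope monotonicity above must contain $(s_b,s_m)$ and would render the sum-rate maximizer non-unique --- ruled out under (\ref{eq:C1})--(\ref{eq:C2}) by (strict, in the interior) concavity of $\sum_k(r_{b,k}+r_{m,k})$. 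A secondary, mostly cosmetic point is stating the conclusion precisely: ``determining the TDFD region by solving $(Q)$'' means ``the convex hull of the rate pairs produced by $(Q)$ over all $r_b^{*}$, closed downward, coincides with $T$'', and one should also check the endpoints $(0,\overline{r_m})$ and $(\overline{r_b},0)$ are themselves returned by $(Q)$ at $r_b^{*}=0$ and $r_b^{*}=\overline{r_b}$.
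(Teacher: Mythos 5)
Your main argument is sound but follows a genuinely different route from the paper. You prove that every extreme point $e=(e_b,e_m)$ of the convexified region $T$ is an optimum of $(Q)$ at $r_b^*=e_b$ via a supporting-line argument (normal $(c,1)$ with $0\le c\le 1$ below $s_b$, justified by the fact that $r_b\mapsto\rho(r_b)+r_b$ is concave with maximum at $s_b$), and then invoke that $T$ is the hull of its extreme points. The paper argues in the opposite direction and more directly: fixing a target $r_b^*\le s_b$, if the $(Q)$-optimum $(r_b^Q,r_m^Q)$ has $r_b^Q<r_b^*$, it takes $\lambda$ with $r_b^*=\lambda r_b^Q+(1-\lambda)s_b$ and shows, using only $s_b+s_m\ge r_b^P+r_m^P$ and $r_b^Q+r_m^Q\ge r_b^P+r_m^P$, that $\lambda(r_b^Q,r_m^Q)+(1-\lambda)(s_b,s_m)$ dominates the FD-optimal point at $r_b^*$; hence the hull of the $(Q)$-points is the TDFD region. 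The paper's route buys robustness: it works for \emph{any} optimum the solver returns, so no uniqueness bookkeeping is needed; your route buys a cleaner geometric picture (it explains the choice of $\mathrm{op}$ via the slope of $\rho$ at $s_b$) and characterizes exactly which points $(Q)$ produces.

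The one genuine flaw is in your resolution of the degenerate case. You dismiss the possibility of a slope-$(-1)$ edge of $\partial T$ by appealing to ``(strict, in the interior) concavity of $\sum_k(r_{b,k}+r_{m,k})$'' under (\ref{eq:C1})--(\ref{eq:C2}). Those conditions only give concavity of the sum rate in $\{\alpha_{b,k}\}$ for fixed $\{\alpha_{m,k}\}$ and vice versa; the joint problem remains non-convex (the paper stresses this for $(Q_R)$), so block-wise strict concavity does not rule out multiple sum-rate-maximizing rate pairs, i.e., a flat $(-1)$-slope edge. Fortunately the degeneracy is harmless and your conclusion survives without it: if such an edge $[e^1,e^2]$ exists, its endpoints are still the \emph{unique} optima of $(Q)$ at $r_b^*=e^1_b$ (op $'\le'$) and $r_b^*=e^2_b$ (op $'\ge'$), because any other optimum with the same sum and $r_b$ strictly below $e^1_b$ (resp.\ above $e^2_b$) would force, by concavity and monotonicity of $\rho+\mathrm{id}$, the flat edge to extend past the endpoint, contradicting its extremeness; interior points of the edge are not extreme and need not be returned. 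Either patch the argument this way, or drop the uniqueness discussion altogether and argue as the paper does, where multiplicity of $(Q)$-optima never matters. Also note that, as in the paper's proof, conditions (\ref{eq:C1})--(\ref{eq:C2}) play no role in this geometric step (they matter for solving $(Q)$, not for Lemma \ref{lemma:motivation-for-q} itself), so leaning on them here is unnecessary as well as insufficient.
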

\iffullpaper\begin{proof}
First, observe that if we had $op = '='$, then $(Q)$ would be equivalent to $(P)$. Therefore, if an optimal solution to $(Q)$ satisfies $r_b = r_b^*$, then it also optimally solves $(P)$.

Suppose that $r_b^* \leq s_b$ and that an optimal solution $(r_b^Q, r_m^Q)$ to $(Q)$ satisfies $r_b^Q < r_b^*$. Let $(r_b^P, r_m^P)$, where $r_b^P = r_b^*$, be the optimal solution to $(P)$. Observe that $r_b^P + r_m^P \leq r_b^Q + r_m^Q$, and, as $r_b^Q < r_b^* = r_b^P$, it also holds that $r_m^Q > r_m^P$. Let $\lambda \in (0, 1)$ be the solution to $r_b^* = \lambda r_b^Q + (1-\lambda)s_b$ (such a $\lambda$ exists and is unique as $r_b < s_b$). Then, as $s_b+s_m \geq r_b^P + r_m^P$ and $r_b^Q + r_m^Q \geq r_b^P + r_m^P$, we have:
\begin{align*}
\lambda(r_b^Q + r_m^Q) + (1-\lambda)(s_b+s_m) &= r_b^P + \lambda r_m^Q + (1-\lambda)s_m\\
&\geq r_b^P + r_m^P,
\end{align*}
and we have $\lambda r_m^Q + (1-\lambda)s_m\geq r_m^P$. Therefore, we can get a point $(r_b^*, r_m)$ with $r_m \geq r_m^P$ as a convex combination of the points that optimally solve both $(P)$ and $(Q)$. In other words, the convex hull of the points determined by $(Q)$ is the TDFD capacity region. To find the the convex hull of the points determined by $(Q)$, we can employ an algorithm for finding a convex hull of given points from e.g., \cite{cormen2009introduction}.

A similar argument follows for $r_b^* > s_b$.
\end{proof}
\else The proof of Lemma \ref{lemma:motivation-for-q} is provided in \cite{capacity-region-full}.
\fi

When conditions (\ref{eq:C1}) and (\ref{eq:C2}) are restrictive, they provide upper bounds on $\alpha_{b, k}$ and $\alpha_{m, k}$ and they do not affect the optimal solution to $(Q)$ unless $\overline{\gamma_{bm, k}}>>\overline{\gamma_{mb, k}}$ or $\overline{\gamma_{mb, k}}>>\overline{\gamma_{bm, k}}$ for some $k$. To avoid infeasibility when restricting the feasible region of $(Q)$ by requiring (\ref{eq:C1}) and (\ref{eq:C2}), similar to \cite{full-duplex-sigmetrics}, we will set either $\alpha_{b, k} = 0$ or $\alpha_{m, k} = 0$.\footnote{Recall that when $\alpha_{b, k} = 0$, the sum of the rates is concave in $\alpha_{m, k}$ for any $\alpha_{m, k} \in [0, 1]$. Similarly when $\alpha_{b, k} = 0$.}

We write the restrictions imposed by (\ref{eq:C1}) and (\ref{eq:C2}) on the feasible region of $(Q)$ as follows, where $\alpha_{b, k}\leq A_b(k)$ and $\alpha_{m, k}\leq A_m(k)$, $\forall k$. Notice that the restrictions are fixed for fixed $\overline{\gamma_{bm, k}}, \overline{\gamma_{mb, k}}, \overline{\gamma_{mm, k}}, \overline{\gamma_{bb, k}}$, and $r_b^*$. We refer to the restricted version of problem $(Q)$ as $(Q_R)$.
\begin{algorithm}

\begin{algorithmic}
\State Let $A_b$ and $A_m$ be size-$K$ arrays
\For{k = 1 \textbf{to} K}
\State $A_b(k) = \frac{\overline{\gamma_{mb, k}}/\overline{\gamma_{mm, k}}-1}{\overline{\gamma_{bb, k}}}$, $A_m(k) = \frac{\overline{\gamma_{bm, k}}/\overline{\gamma_{bb, k}}-1}{\overline{\gamma_{mm, k}}}$
\If{$r_b^* \leq s_b$}
\If {$A_{b}(k) \leq 0$}
 $A_{b}(k) = 0$, $A_{m}(k) = 1$
\EndIf
\If {$A_{m}(k) \leq 0$}
 $A_{m}(k) = 0$, $A_{b}(k) = 1$
\EndIf
\Else
\If {$A_{m}(k) \leq 0$}
 $A_{m}(k) = 0$, $A_{b}(k) = 1$
\EndIf
\If {$A_{b}(k) \leq 0$}
 $A_{b}(k) = 0$, $A_{m}(k) = 1$
\EndIf
\EndIf
\EndFor
\end{algorithmic}
\end{algorithm}

To solve $(Q_R)$, we will use a well-known practical method called alternating minimization (or maximization, as in our case) \cite{ortega1970iterative}. For a given problem $(P_i)$, the method partitions the variable set $x$ into two sets $x_1$ and $x_2$, and then iteratively applies the following procedure: (i) optimize $(P_i)$ over $x_1$ by treating the variables from $x_2$ as constants, (ii) optimize $(P_i)$ over $x_2$ by treating the variables from $x_1$ as constants, until a stopping criterion is reached. 

\begin{algorithm}

\caption{\textsc{AltMax}($(Q_R), \varepsilon$)}
\begin{algorithmic}[1]
\State Let $\{\alpha_{b, k}^0\}, \{\alpha_{m ,k}^0\}$ be a feasible solution to $(Q_R)$, $n = 0$
\Repeat
\State $n = n+1$
\State $\{\alpha_{b, k}^n\} = \arg\max \{(Q_{R, b}): \{\alpha_{m ,k}^n\} = \{\{\alpha_{m ,k}^{n-1}\}\}$
\State $\{\alpha_{m, k}^n\} = \arg\max \{(Q_{R, m}): \{\alpha_{b ,k}^n\} = \{\alpha_{b ,k}^{n-1}\}\}$
\Until{$\max_k \{|\alpha_{b, k}^n - \alpha_{b, k}^{n-1}| + |\alpha_{m, k}^n - \alpha_{m, k}^{n-1}|\} < \varepsilon$}
\end{algorithmic}\label{algo:alpt-max}
\end{algorithm}
Even in the cases when $(P_i)$ is non-convex, if subproblems from (i) and (ii) have unique solutions and are solved optimally in each iteration, the method converges to a stationary point with rate $O(1/\sqrt{n})$, where $n$ is the iteration count \cite{beck2015convergence}. In the cases when, in addition, for each of the subproblems the objective is convex (concave for maximization problems), for each stationary point there exists an initial point such that the alternating minimization converges to that stationary point \cite{gorski2007biconvex}. A common approach that works well in practice is to generate many random initial points  and choose the best solution found. In our experiments, choosing $\alpha_{b, k} = \alpha_{m, k} = 0$ as the initial point typically led to the best solution.

Due to the added restrictions in problem $(Q_R)$ imposed by (\ref{eq:C1}) and (\ref{eq:C2}), the objective in $(Q_R)$ is concave whenever either all $\alpha_{b, k}$'s or all $\alpha_{m, k}$'s are fixed, while the remaining variables are varied. Hence, our two subproblems for $Q_R$ will be: (i) $(Q_{R, b})$, which is equivalent to $(Q_R)$ except that it treats $\alpha_{b, k}$'s as variables and $\alpha_{m, k}$'s as constants, and (ii) $(Q_{R, m})$, which is equivalent to $(Q_R)$ except that it treats $\alpha_{m, k}$'s as variables and $\alpha_{b, k}$'s as constants. Given accuracy $\varepsilon$, the pseudocode is provided in Algorithm \ref{algo:alpt-max} (\textsc{AltMax}). The rate pair $(s_b, s_m)$ can be determined using the same algorithm by omitting the constraint $r_b\leq r_b^*$ (or $r_b\geq r_b^*$).

What remains to show is that both $(Q_{R, b})$ and $(Q_{R, m})$ have unique solutions that can be found in polynomial time. We do that in the following (constructive) lemma. Note that without the constraint $r_b^*\leq s_b$ or $r_b^* \geq s_b$, both $(Q_{R, b})$ and $(Q_{R, m})$ are convex and have strictly concave objectives, and therefore, we can determine $s_b$ using \textsc{AltMax}. 

\begin{lemma}
Starting with a feasible solution $\{\alpha_{b, k}^0, \alpha_{m ,k}^0\}$ to $(Q_R)$,
in each iteration of \textsc{AltMax} the solutions to $(Q_{R, b})$ and $(Q_{R, m})$ are unique and can be found in polynomial time.
\end{lemma}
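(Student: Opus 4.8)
The plan is to analyze each subproblem separately and show it is a smooth convex optimization with a strictly concave objective, so that a unique maximizer exists and can be computed in polynomial time (e.g., via interior-point methods or even in closed form per channel after handling the coupling constraints). I will carry out the argument for $(Q_{R,m})$ (treating $\{\alpha_{m,k}\}$ as variables and $\{\alpha_{b,k}\}$ as fixed constants); the argument for $(Q_{R,b})$ is symmetric. First I would note that, for fixed $\alpha_{b,k}$, each term $r_{b,k}(\alpha_{b,k},\alpha_{m,k})=\log\!\big(1+\tfrac{\alpha_{b,k}\overline{\gamma_{bm,k}}}{1+\alpha_{m,k}\overline{\gamma_{mm,k}}}\big)$ is a convex, decreasing function of $\alpha_{m,k}$ on $[0,A_m(k)]$, while $r_{m,k}(\alpha_{b,k},\alpha_{m,k})=\log\!\big(1+\tfrac{\alpha_{m,k}\overline{\gamma_{mb,k}}}{1+\alpha_{b,k}\overline{\gamma_{bb,k}}}\big)$ is a concave, increasing function of $\alpha_{m,k}$. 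The key point, already established in \cite{full-duplex-sigmetrics} and invoked via conditions (\ref{eq:C1})--(\ref{eq:C2}), is that on the restricted domain $\alpha_{m,k}\le A_m(k)$ the sum $r_{b,k}+r_{m,k}$ is concave (and increasing) in $\alpha_{m,k}$; hence the objective $\sum_k (r_{b,k}+r_{m,k})$ is a separable concave function of $\{\alpha_{m,k}\}$.

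Next I would verify that the feasible set of $(Q_{R,m})$ is convex. The box constraints $0\le\alpha_{m,k}\le A_m(k)$ and the simplex-type constraint $\sum_k\alpha_{m,k}\le 1$ are linear. The remaining constraint is the rate constraint on $r_b=\sum_k r_{b,k}$; since each $r_{b,k}$ is convex in $\alpha_{m,k}$ for fixed $\alpha_{b,k}$, a constraint of the form $\sum_k r_{b,k}\le r_b^*$ defines a convex set, and we are in exactly the case $\mathrm{op}='\le'$ when $r_b^*\le s_b$. (In the case $r_b^*\ge s_b$ the relevant coupled constraint in $(Q_{R,m})$ is $\sum_k r_{b,k}\ge r_b^*$, which is also convex since $-r_{b,k}$ is concave — wait, $r_{b,k}$ convex means $\{\,r_{b,k}\ge c\,\}$ need not be convex, so here I would instead use that for $r_b^*\ge s_b$ the binding subproblem is the one over $\alpha_{b,k}$, and symmetrically in $(Q_{R,m})$ one treats the $r_m$ direction; I would spell out the correct pairing so that in each subproblem the coupled inequality is convex.) Thus $(Q_{R,m})$ minimizes a strictly concave objective over a convex compact set.

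To get \emph{uniqueness}, I would argue that the objective is in fact \emph{strictly} concave in $\{\alpha_{m,k}\}$: each $r_{m,k}$ is strictly concave, and although $r_{b,k}$ is only convex, the sum $r_{b,k}+r_{m,k}$ has strictly negative second derivative in $\alpha_{m,k}$ on the restricted range — this is the content of the strict inequality versions of (\ref{eq:C1})--(\ref{eq:C2}) and can be read off from the second-derivative computation behind the concavity claim. A strictly concave function on a convex set has at most one maximizer, and existence follows from compactness (Weierstrass). For the \emph{polynomial running time}, I would observe that $(Q_{R,m})$ is a convex program with a self-concordant (logarithmic) objective and linear plus convex smooth constraints, so it is solvable to any fixed accuracy in polynomial time by standard interior-point theory; alternatively, since the only coupling is through two scalar multipliers (one for $\sum_k\alpha_{m,k}\le 1$ and one for the rate constraint), a KKT/dual-decomposition argument reduces the problem to a two-dimensional monotone root-finding over Lagrange multipliers, with each inner per-channel problem solved in closed form, again giving polynomial time.

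The main obstacle I anticipate is getting the constraint-direction bookkeeping exactly right: one must check that in each of $(Q_{R,b})$ and $(Q_{R,m})$ the coupled inequality constraint ($\mathrm{op}\in\{\le,\ge\}$ depending on whether $r_b^*\lessgtr s_b$) indeed cuts out a convex set given which block of variables is frozen, and that convexity of the feasible region is not lost — this is where the convexity of $r_{b,k}$ in $\alpha_{m,k}$ (respectively concavity of $r_{m,k}$ in the appropriate variable) must be matched to the sense of the inequality. Once that pairing is confirmed, strict concavity of the separable objective and standard convex-optimization complexity bounds finish the proof.
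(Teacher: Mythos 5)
There is a genuine gap, and it sits exactly where you flagged your own discomfort about ``constraint-direction bookkeeping.'' In every iteration, \textsc{AltMax} must solve \emph{both} $(Q_{R,b})$ and $(Q_{R,m})$, so you cannot ``pair'' the subproblem to the sense of the inequality: when $r_b^*\leq s_b$ the coupled constraint is $r_b\leq r_b^*$, which is a convex constraint in the $\alpha_{m,k}$ block (since $r_b$ is convex in the $\alpha_{m,k}$'s), but in the $\alpha_{b,k}$ block $r_b$ is \emph{concave}, so $\{r_b\leq r_b^*\}$ is not a convex set and $(Q_{R,b})$ is not a convex program (symmetrically, for $r_b^*\geq s_b$ it is $(Q_{R,m})$ that is non-convex). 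Your argument---strictly concave objective over a convex compact feasible set, solved by interior-point or dual decomposition---covers only the ``easy'' subproblem, which is also how the paper disposes of that half in a few lines. The substance of the lemma is the other half, and for that your proposed route (convexity plus Weierstrass plus standard complexity bounds) simply does not apply: uniqueness cannot be deduced from strict concavity on a non-convex feasible region, and off-the-shelf convex-programming guarantees do not give polynomial time.

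The paper closes this gap with a structural argument specific to the non-convex subproblem: it defines $k^*=\arg\max_k \frac{d r}{d\alpha_{b,k}}\big|_{\alpha_{b,k}=0}$ and an explicit greedy procedure (\textsc{SolveSubproblem}b) that raises $\alpha_{b,k^*}$ via binary search while equalizing the marginal sums $\frac{d r}{d\alpha_{b,k}}$ across channels (possible because conditions (\ref{eq:C1})--(\ref{eq:C2}) make each $\frac{d r}{d\alpha_{b,k}}$ strictly decreasing), stopping when $\sum_k\alpha_{b,k}=1$ or $r_b=r_b^*$. It then verifies the first-order optimality of the output, shows every local maximum must have its derivatives dominated by $\frac{d r}{d\alpha_{b,k^*}}$ (otherwise a feasible improving perturbation exists, constructed separately for the cases $\frac{d r_b}{d\alpha_{b,j}}\leq\frac{d r_b}{d\alpha_{b,k^*}}$ and $>$), and concludes that the computed point coordinatewise dominates all other local maxima, hence is the unique global maximum since the objective is strictly increasing in the $\alpha_{b,k}$'s. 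Some version of this (or another argument tailored to the non-convex constraint) is indispensable; without it your proposal proves only half of the statement.
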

\begin{proof}
Suppose that $r_b^* \leq s_b$. Then it is not hard to verify that $(Q_{R, m})$ is a convex problem with a strictly concave objective. The objective is strictly concave due to the enforcement of conditions (\ref{eq:C1}) and (\ref{eq:C2}), while all the constraints except for $r_b \leq r_b^*$ are linear. The constraint $r_b \leq r_b^*$ is convex as $r_b$ is convex in $\alpha_{m, k}$'s. Therefore, $(Q_{R, m})$ admits a unique solution that can be found in polynomial time through convex programming. By similar arguments, when $r_b^* > s_b$, $(Q_{R, b})$ admits a unique solution that can be found in polynomial time through convex programming.

Consider $(Q_{R, b})$ when $r_b^* \leq s_b$. This problem is not convex due to the constraint $r_b \leq r_b^*$, as $r_b$ is concave in $\alpha_{b, k}$'s. However, we will show that the problem has enough structure so that it is solvable in polynomial time. 

Let $k^* = \arg\max_k\big\{\frac{\gamma_{bm, k}}{1 + \alpha_{m, k}\overline{\gamma_{mm, k}}} - \gamma_{bb, k} + \frac{\gamma_{bb, k}}{1+\alpha_{m, k}\overline{\gamma_{mb, k}}}\big\}$ ($= \arg\max_k\big\{\frac{d r}{d \alpha_{b, k}}\big|_{\alpha_{b, k} = 0}\big\}$). 
Recall that, due to conditions (\ref{eq:C1}) and (\ref{eq:C2}), we have that $\frac{d^2 r}{d {\alpha_{b, k}}^2} < 0$, and therefore $\frac{d r}{d \alpha_{b, k}}$ is monotonically decreasing, $\forall k$. It follows that for any $\alpha_{b, k^*}\in[0, 1]$ and any $k \in \{1,...,K\}$, either there exists a (unique) $\alpha_{b, k}\in [0, 1]$ such that $\frac{d r}{d \alpha_{b, k}} = \frac{d r}{d \alpha_{b, k^*}}$, or $\frac{d r}{d \alpha_{b, k}} < \frac{d r}{d \alpha_{b, k^*}}$, $\forall \alpha_{b, k}\in [0, 1]$.

Consider Algorithm \ref{alg-sol} (\textsc{SolveSubproblem}b) and let $\{\alpha_{b, k}^*\}$ be the solution returned by the algorithm. Note that the binary search for finding $\alpha_{b, k^*}^*$ and for determining $\alpha_{b, k}^*$'s in \textsc{SolveSubproblem}b is correct from the choice of $k^*$ and because $\frac{d r}{d \alpha_{b, k}}$ is monotonically decreasing, $\forall k$. 
\begin{algorithm}

\caption{\textsc{SolveSubproblem}b}\label{alg-sol}
\begin{algorithmic}[1]
\State $k^* = \arg\max_k\big\{\frac{\gamma_{bm, k}}{1 + \alpha_{m, k}\overline{\gamma_{mm, k}}} - \gamma_{bb, k} + \frac{\gamma_{bb, k}}{1+\alpha_{m, k}\overline{\gamma_{mb, k}}}\big\}$
\State For $\alpha_{b, k^*}\in [0, 1]$, via binary search, find the maximum $\alpha_{b, k^*}$ such that $r+b \leq r_b^*$ and $\sum_k \alpha_{b, k} \leq 1$, where:
\If{$\frac{d r}{d \alpha_{b, k}}\big|_{\alpha_{b, k} = 0} < \frac{d r}{d \alpha_{b, k^*}}$}
 $\alpha_{b, k} = 0$
\Else 
\State {Via binary search over $\alpha_{b, k}\in [0, 1]$, find $\alpha_{b, k}$ such that $\frac{d r}{d \alpha_{b, k}} = \frac{d r}{d \alpha_{b, k^*}}$}
\EndIf
\end{algorithmic}
\end{algorithm}

We first show that $\{\alpha_{b, k}^*\}$ is a local maximum for $(Q_b)$. Because of the algorithm's termination conditions, it must be either $\sum_k \alpha_{b, k}^* = 1$ or $r_b = r_b^*$. If $\sum_k \alpha_{b, k}^* = 1$, then to move to any alternative solution, the total change must be $\sum_k \Delta \alpha_{b, k} \leq 0$, or, equivalently $\Delta \alpha_{b, k^*}\leq -\sum_{k\neq k^*}\Delta \alpha_{b, k}$. As $\frac{d r}{d \alpha_{b, k}} \leq \frac{d r}{d \alpha_{b, k^*}}$, it follows that $\sum_k \frac{d r}{d \alpha_{b, k}}  \Delta \alpha_{b, k}\leq 0$, which is the first-order optimality condition. Now suppose that $r_b = r_b^*$. Since $\frac{d r}{d \alpha_{b, k}} = \frac{d r_b}{d \alpha_{b,k}} + \frac{d r_m}{d \alpha_{b,k}} > 0$ and $\frac{d r_b}{d \alpha_{b,k}} > 0$, $\frac{d r_m}{d \alpha_{b,k}}<0$, to keep the solution feasible (i.e., to keep $r_b \leq r_b^*$), we must have $\sum_k \frac{d r_{b, k}}{d \alpha_{b,k}} \Delta \alpha_{b, k}\leq 0$, which implies $\sum_k \frac{d r}{d \alpha_{b, k}} \Delta \alpha_{b, k}\leq 0$. Therefore, $\{\alpha_{b, k}^*\}$ computed by \textsc{SolveSubproblem}b is a local optimum.

In fact, for any local optimum: $\frac{d r}{d \alpha_{b, k}} \leq \frac{d r}{d \alpha_{b, k^*}}$, otherwise we can construct a better solution. Suppose that $\frac{d r}{d \alpha_{b, j}} > \frac{d r}{d \alpha_{b, k^*}}$ for some $j$. Then if $\frac{d r_b}{d \alpha_{b, j}} \leq \frac{d r_b}{d \alpha_{b, k^*}}$, we can choose a sufficiently small $\Delta > 0$, so that the solution $\{\alpha_{b, k}'\}$ with $\alpha_{b, j}' = \alpha_{b, j}+\Delta$, $\alpha_{b, k^*}' = \alpha_{b, k^*}-\Delta$, and $\alpha_{b, k}' = \alpha_{b, k}$ for $k\notin\{j, k^*\}$ is feasible. For such a solution $\sum_k \frac{d r}{d \alpha_{b, k}} (\alpha_{b, k}'- \alpha_{b, k}) > 0$, and therefore, it is not a local optimum. Conversely, if $\frac{d r_b}{d \alpha_{b, j}} > \frac{d r_b}{d \alpha_{b, k^*}}$, we can choose sufficiently small $\Delta_1, \Delta_2 > 0$ such that  $\Delta_2 > \Delta_1$ and $\frac{d r}{d \alpha_{b, j}}\Delta_1 > \frac{d r}{d \alpha_{b, k^*}}\Delta_2$. Then, we can construct an $\{\alpha_{b, k}'\}$ with $\alpha_{b, j}' = \alpha_{b, j}+\Delta_1$, $\alpha_{b, k^*}' = \alpha_{b, k^*}-\Delta_2$, and $\alpha_{b, k}' = \alpha_{b, k}$ for $k\notin\{j, k^*\}$ that is feasible. Again, we have $\sum_k \frac{d r}{d \alpha_{b, k}}  (\alpha_{b, k}'- \alpha_{b, k}) > 0$, and $\{\alpha_{b, k}\}$ cannot be a local maximum.

Finally, since $\{\alpha_{b, k}^*\}$ returned by \textsc{SolveSubproblem}b satisfies $\alpha_{b, k}^*\geq \alpha_{b, k}'$ for any other local maximum $\{\alpha_{b, k}'\}$ and the objective is strictly increasing in all $\alpha_{b, k}$'s, $\{\alpha_{b, k}^*\}$ must be a global maximum. From the strict monotonicity of $\frac{d r}{d \alpha_{b, k}}$, this maximum is unique. 
The proof for $(Q_{R, m})$ when $r_b^* \geq s_b$ uses similar arguments and is omitted.
\end{proof}

\begin{figure*}[t!]
\center
\subfloat[]{\label{fig:cmph_rate_improve_conv}\includegraphics[scale = 0.22 ]{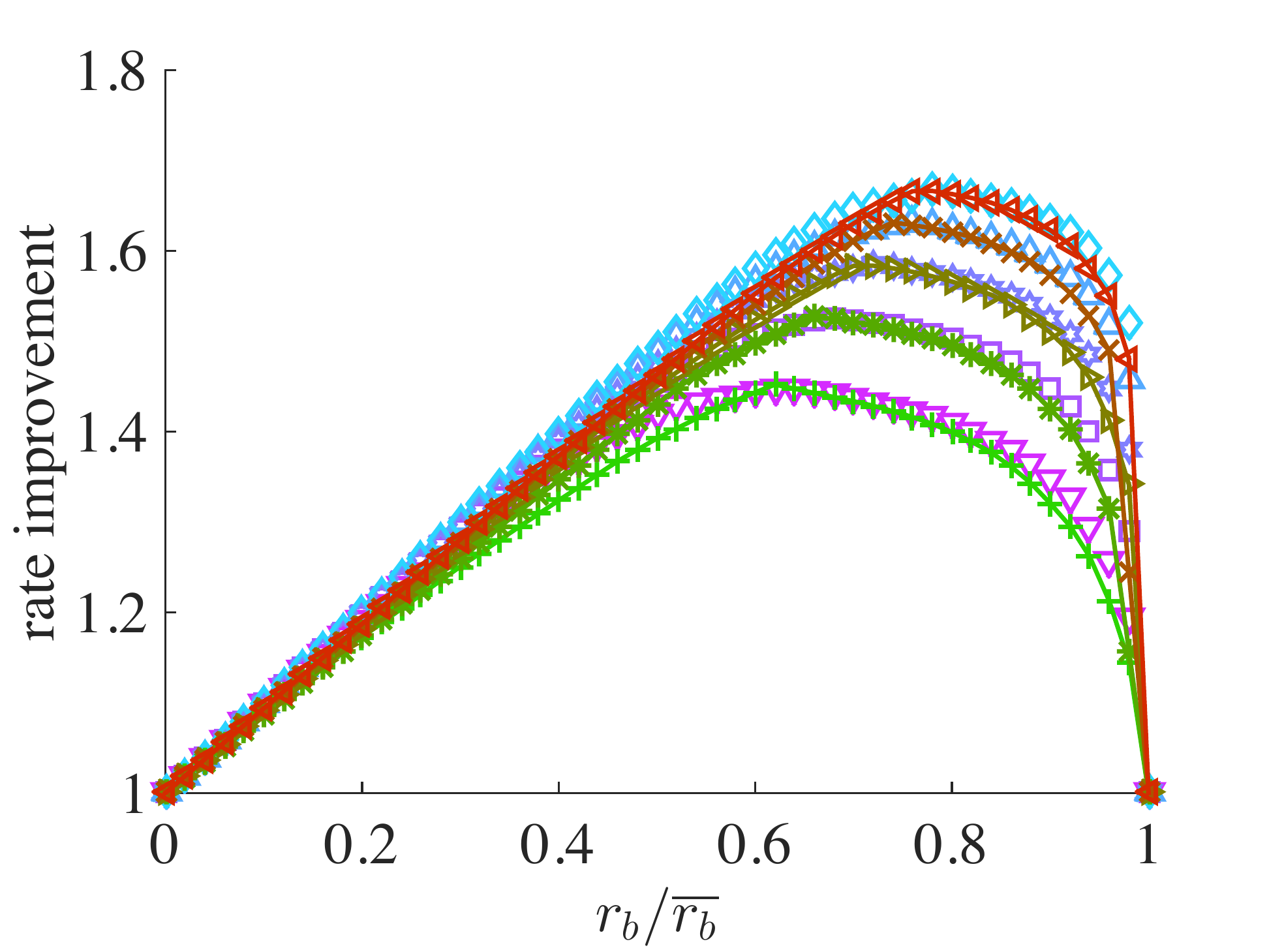}}\hspace{\fill}
\subfloat[]{\label{fig:cmph_rate_improve_FDE_1}\includegraphics[scale = 0.22]{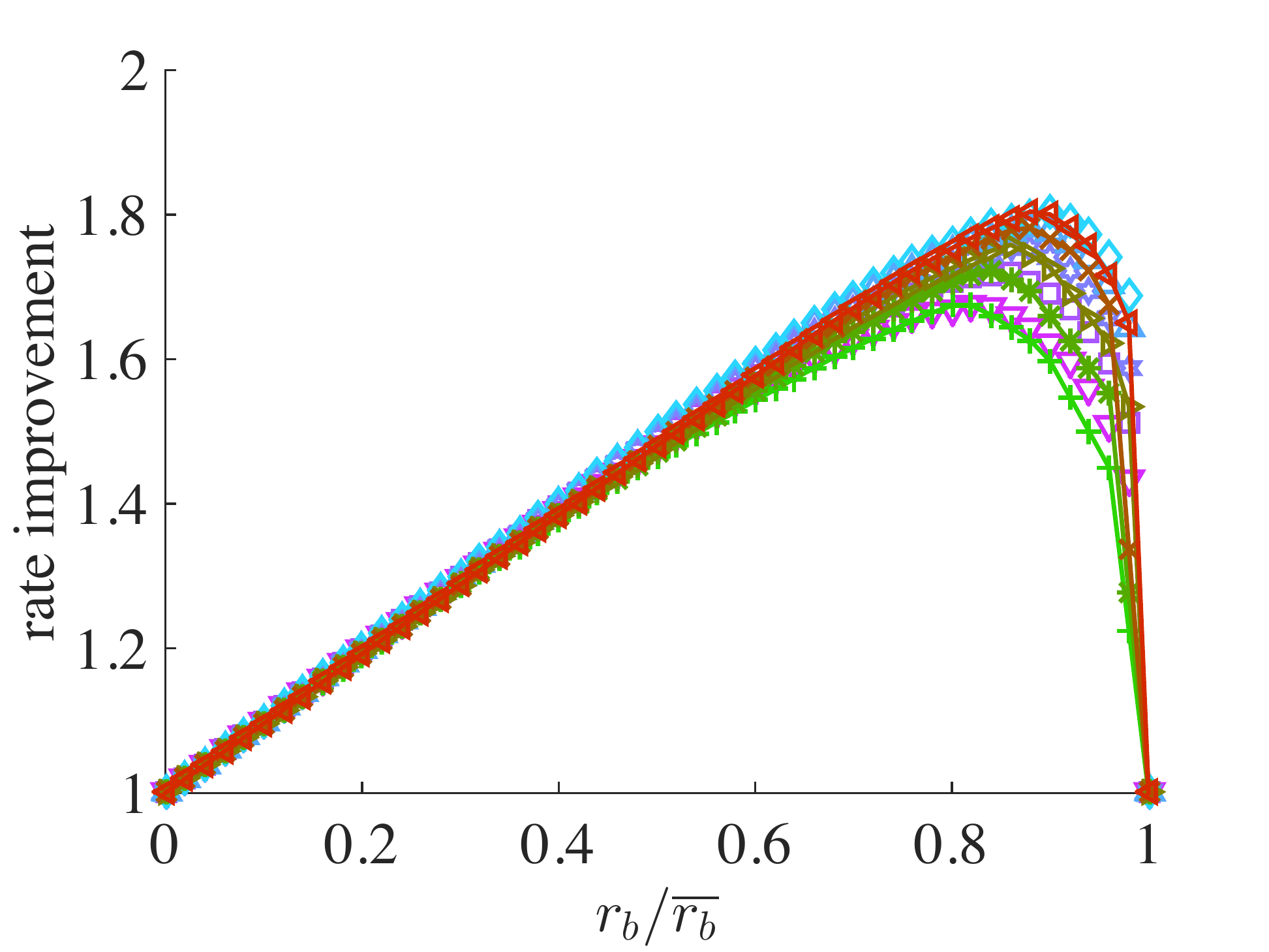}}\hspace{\fill}
\subfloat[]{\label{fig:cmph_rate_improve_FDE_2}\includegraphics[scale = 0.22]{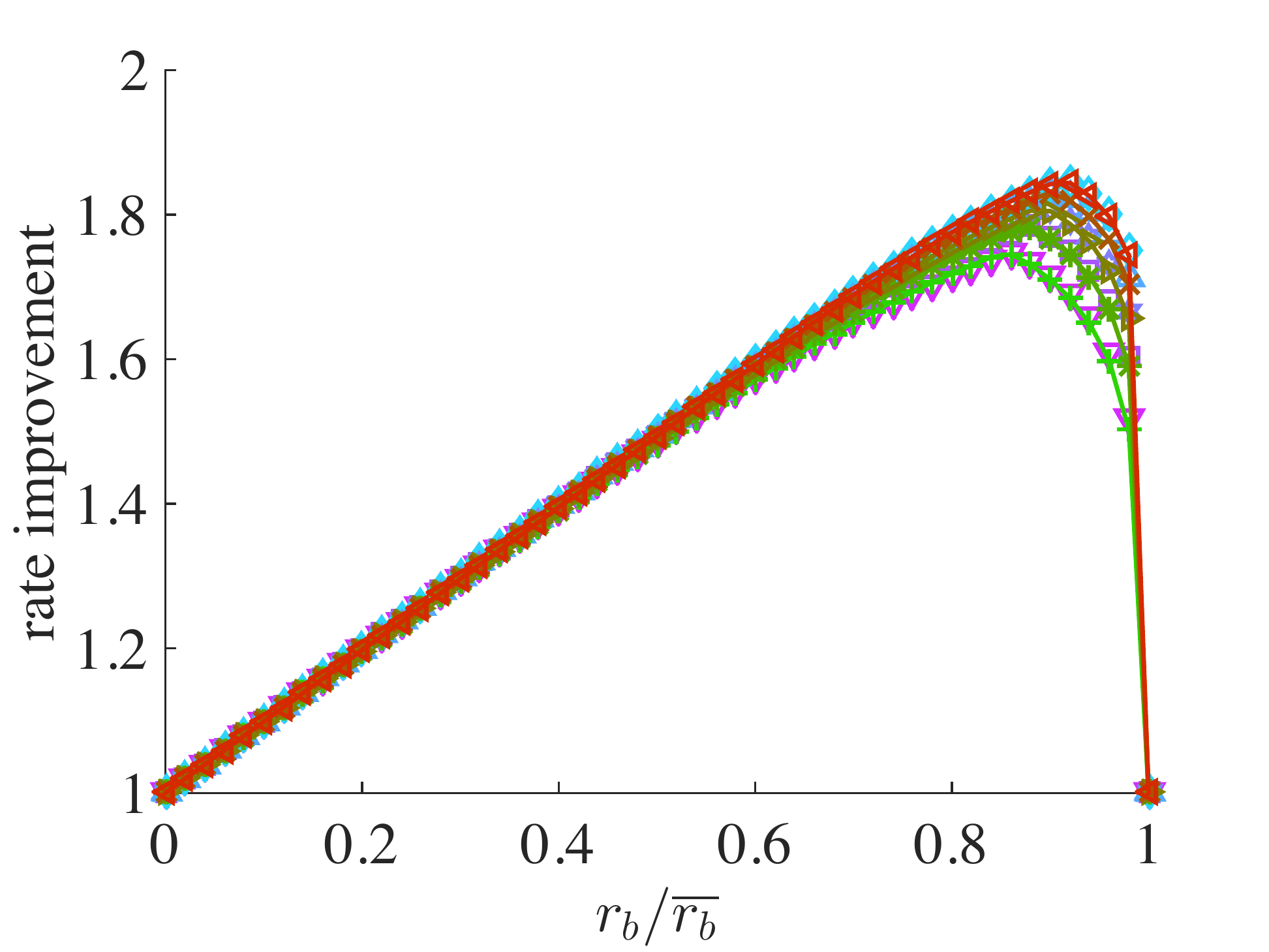}}\hspace{\fill}
\subfloat{\includegraphics[scale = 0.25]{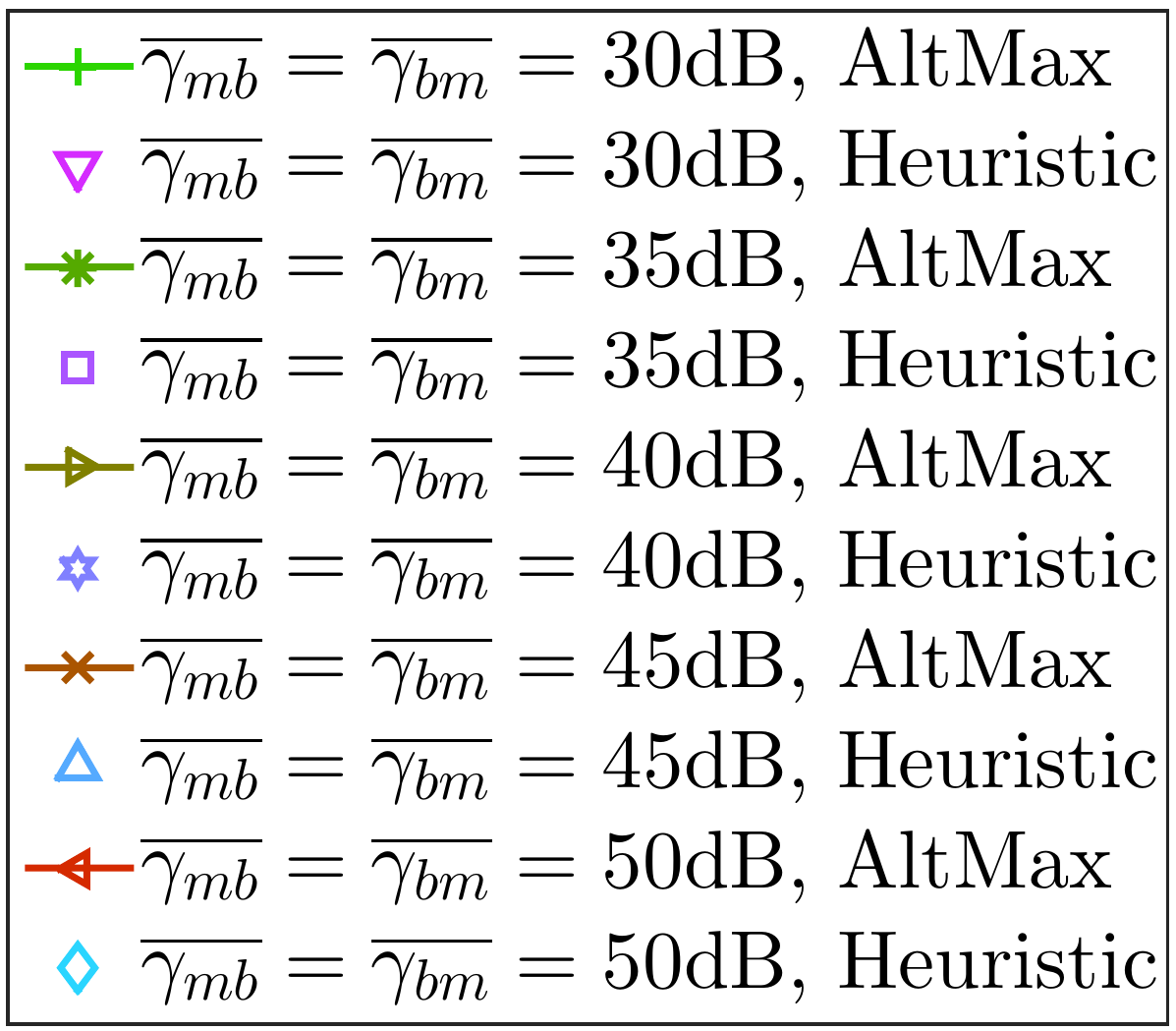}}\\\vspace{-10pt}
\setcounter{subfigure}{3}
\subfloat[]{\label{fig:cmp_rate_improve_conv}\includegraphics[scale = 0.22 ]{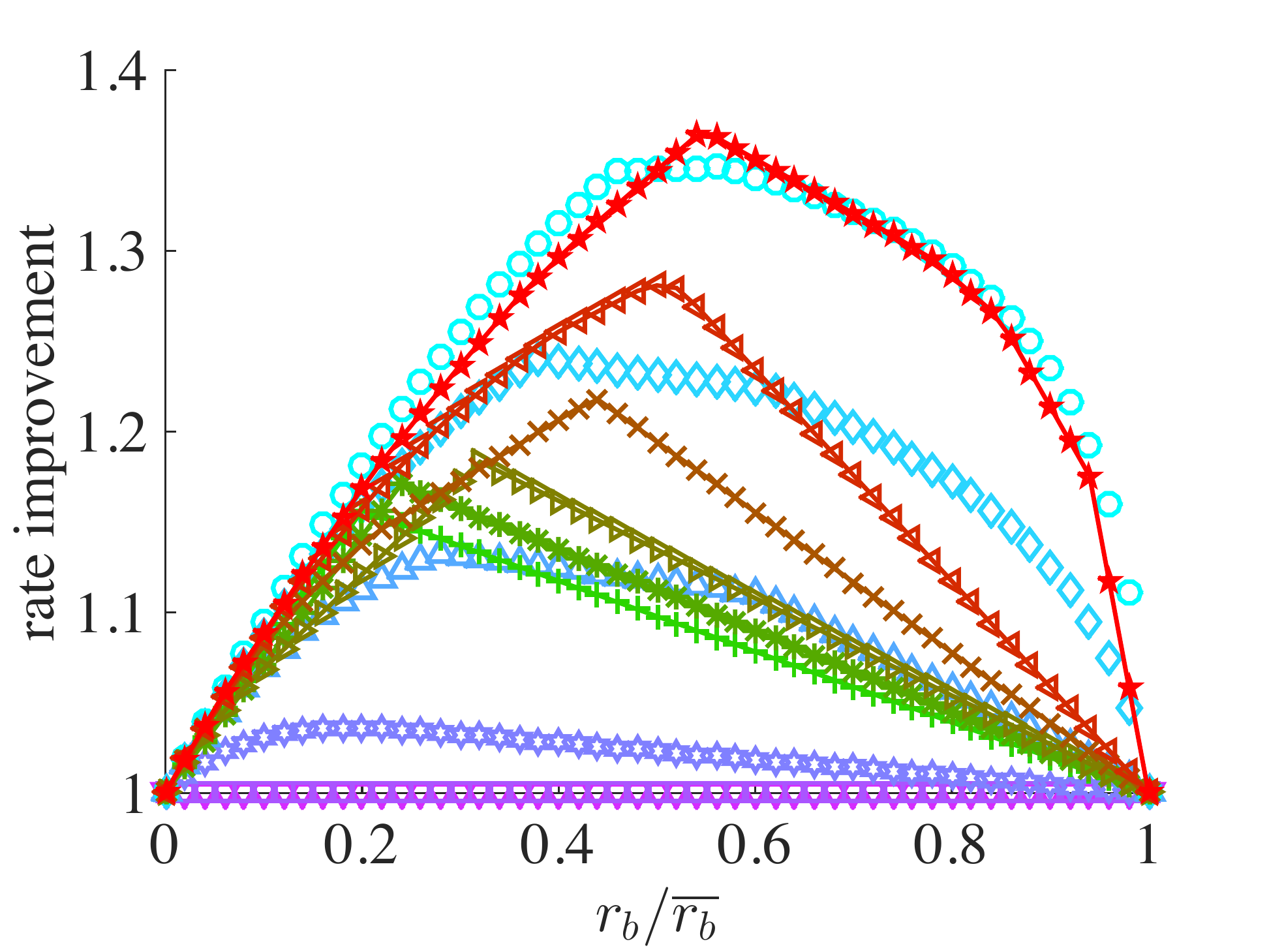}}\hspace{\fill}
\subfloat[]{\label{fig:cmp_rate_improve_FDE_1}\includegraphics[scale = 0.22]{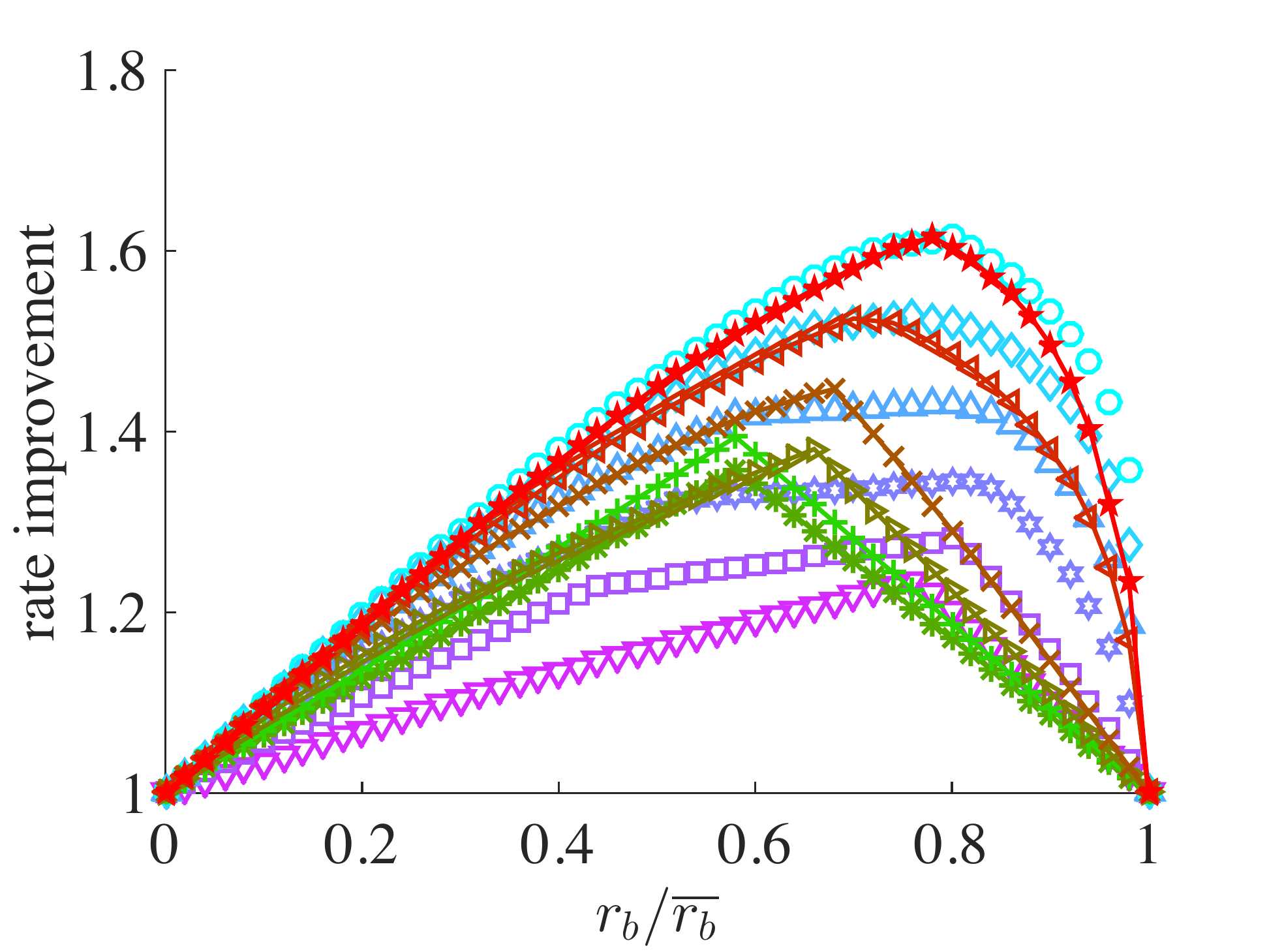}}\hspace{\fill}
\subfloat[]{\label{fig:cmp_rate_improve_FDE_2}\includegraphics[scale = 0.22]{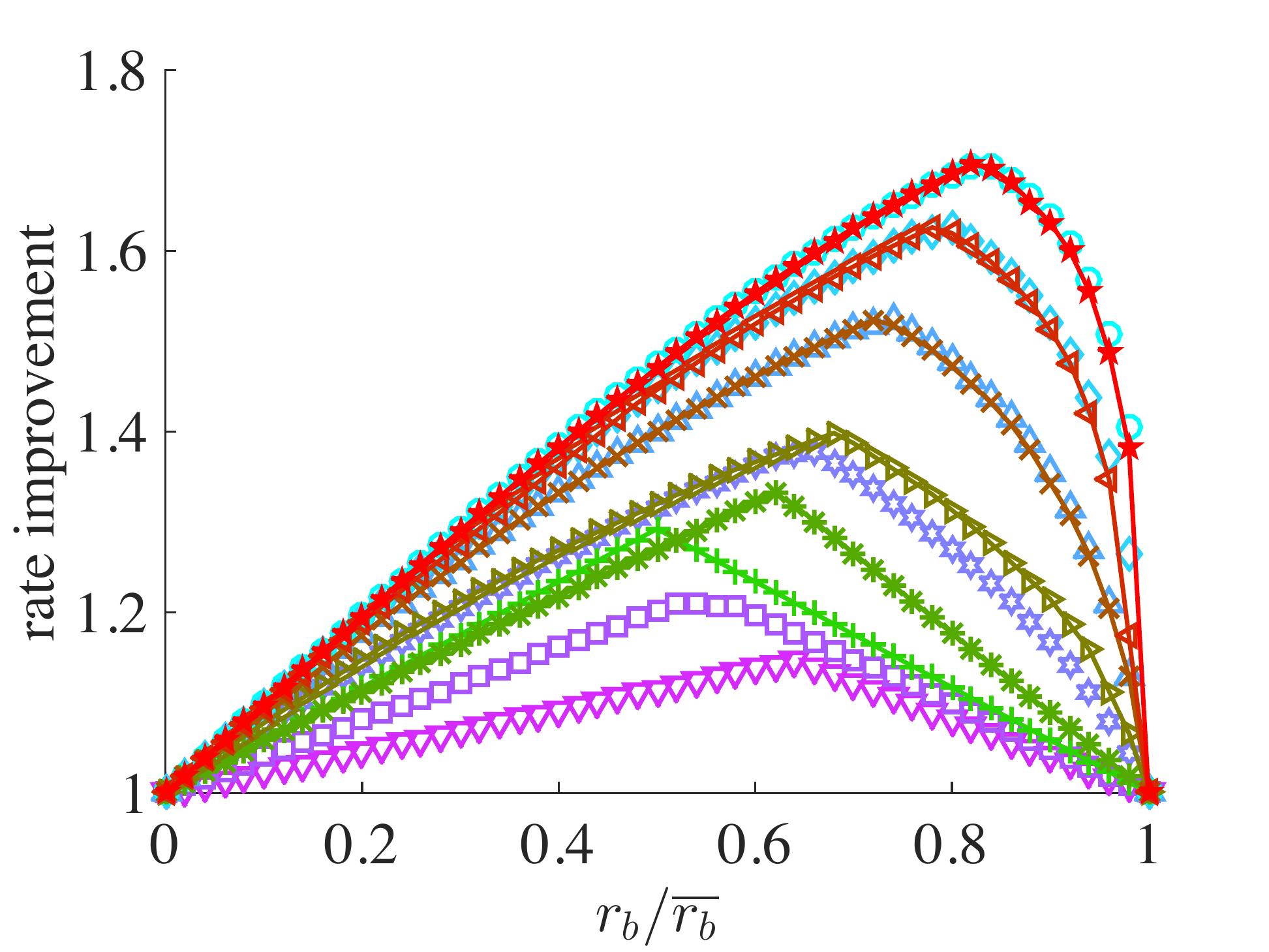}}\hspace{\fill}
\subfloat{\includegraphics[scale = 0.25]{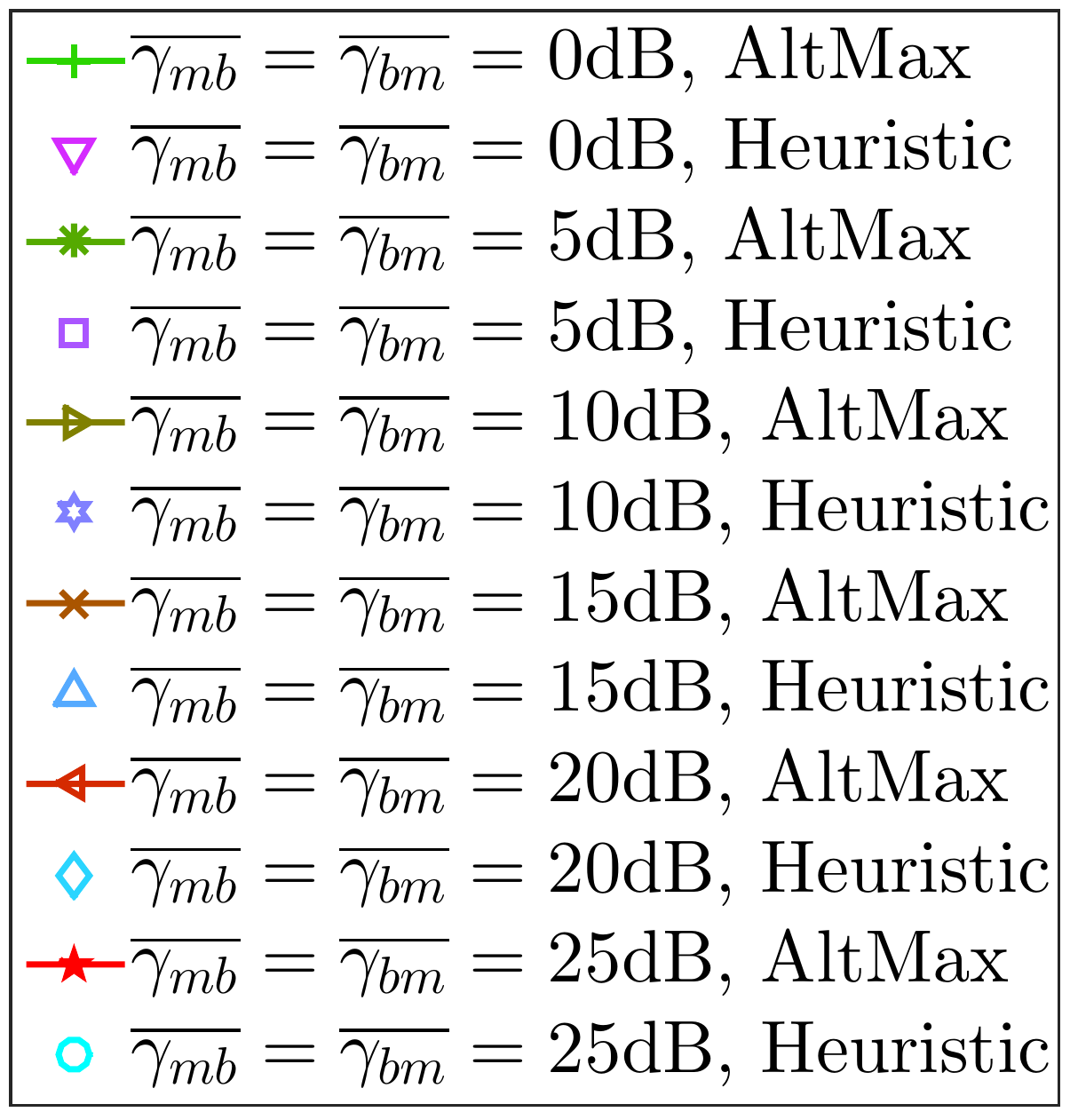}}\\\vspace{-10pt}
\setcounter{subfigure}{6}
\subfloat[]{\label{fig:cmp_asym_rate_improve_conv}\includegraphics[scale = 0.22 ]{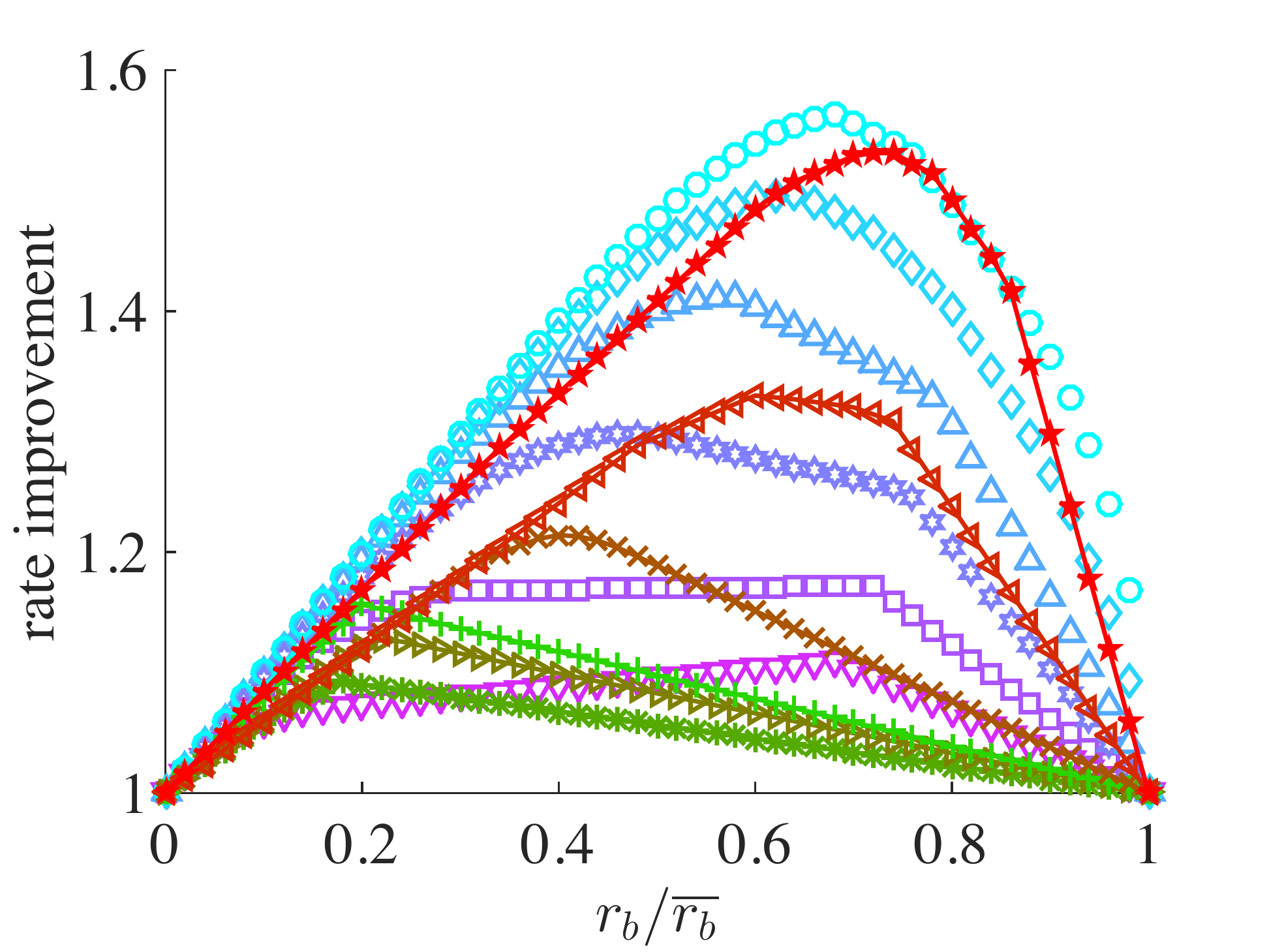}}\hspace{\fill}
\subfloat[]{\label{fig:cmp_asym_rate_improve_FDE_1}\includegraphics[scale = 0.22]{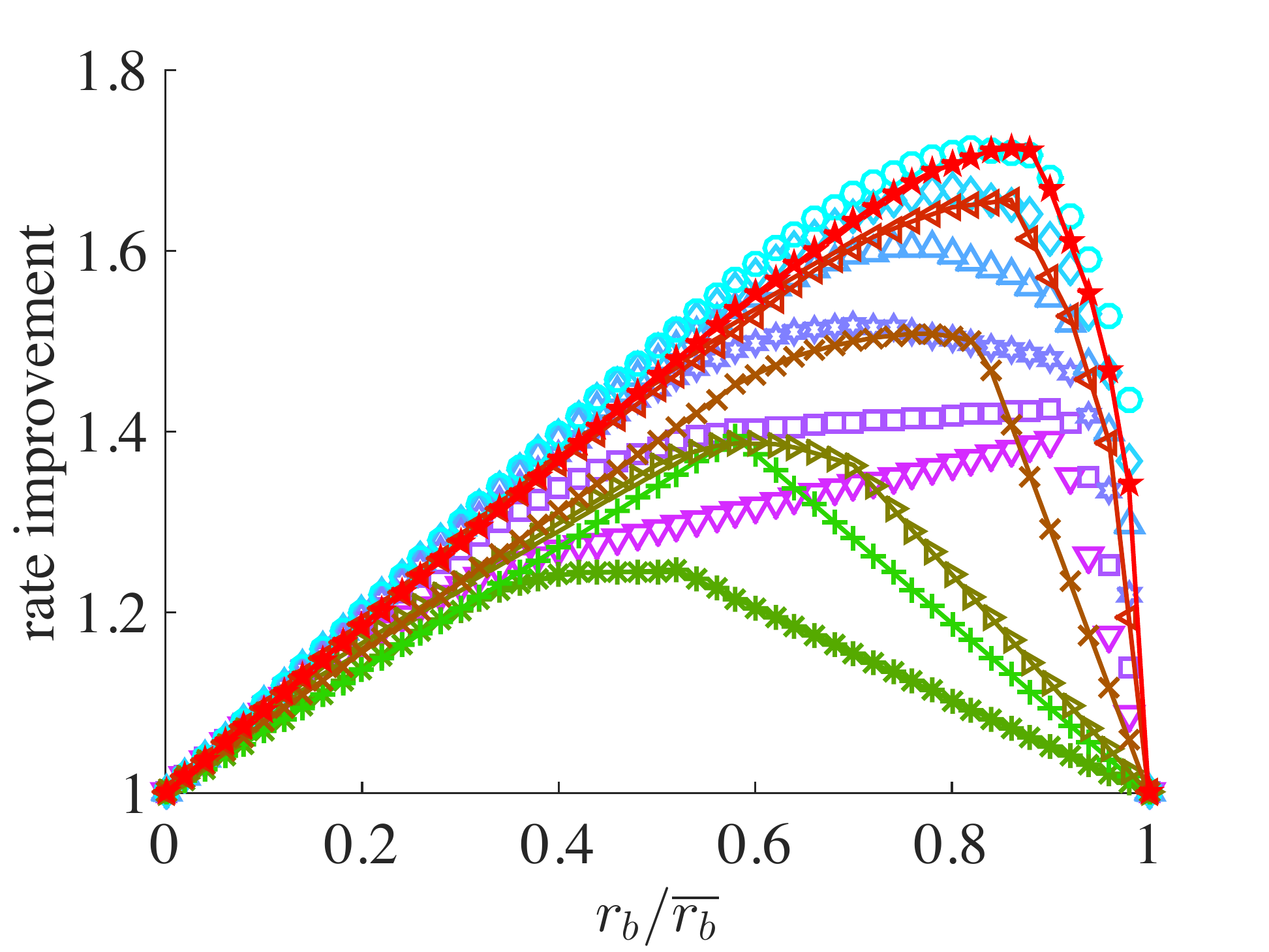}}\hspace{\fill}
\subfloat[]{\label{fig:cmp_asym_rate_improve_FDE_2}\includegraphics[scale = 0.22]{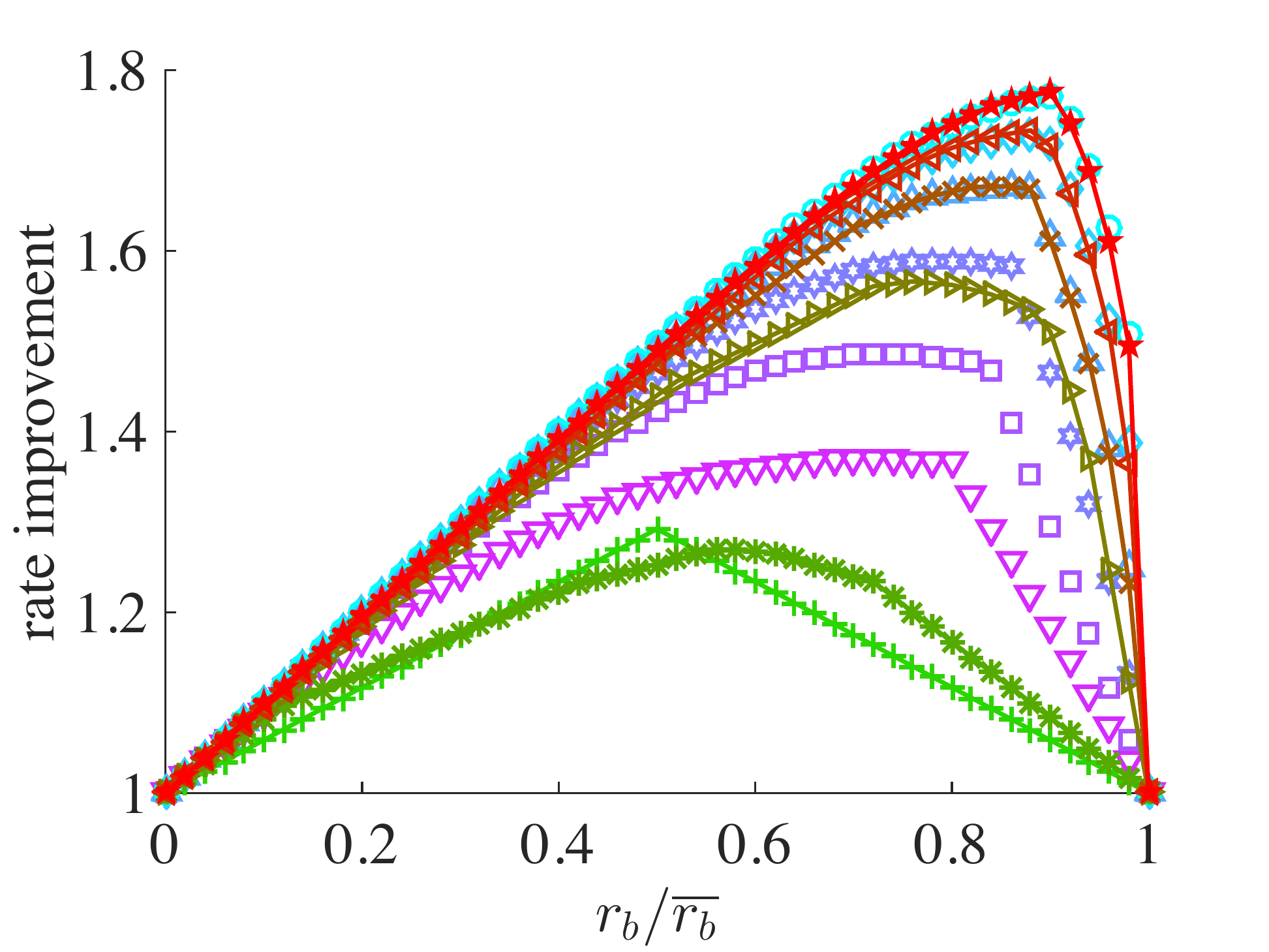}}\hspace{\fill}
\subfloat{\includegraphics[scale = 0.25]{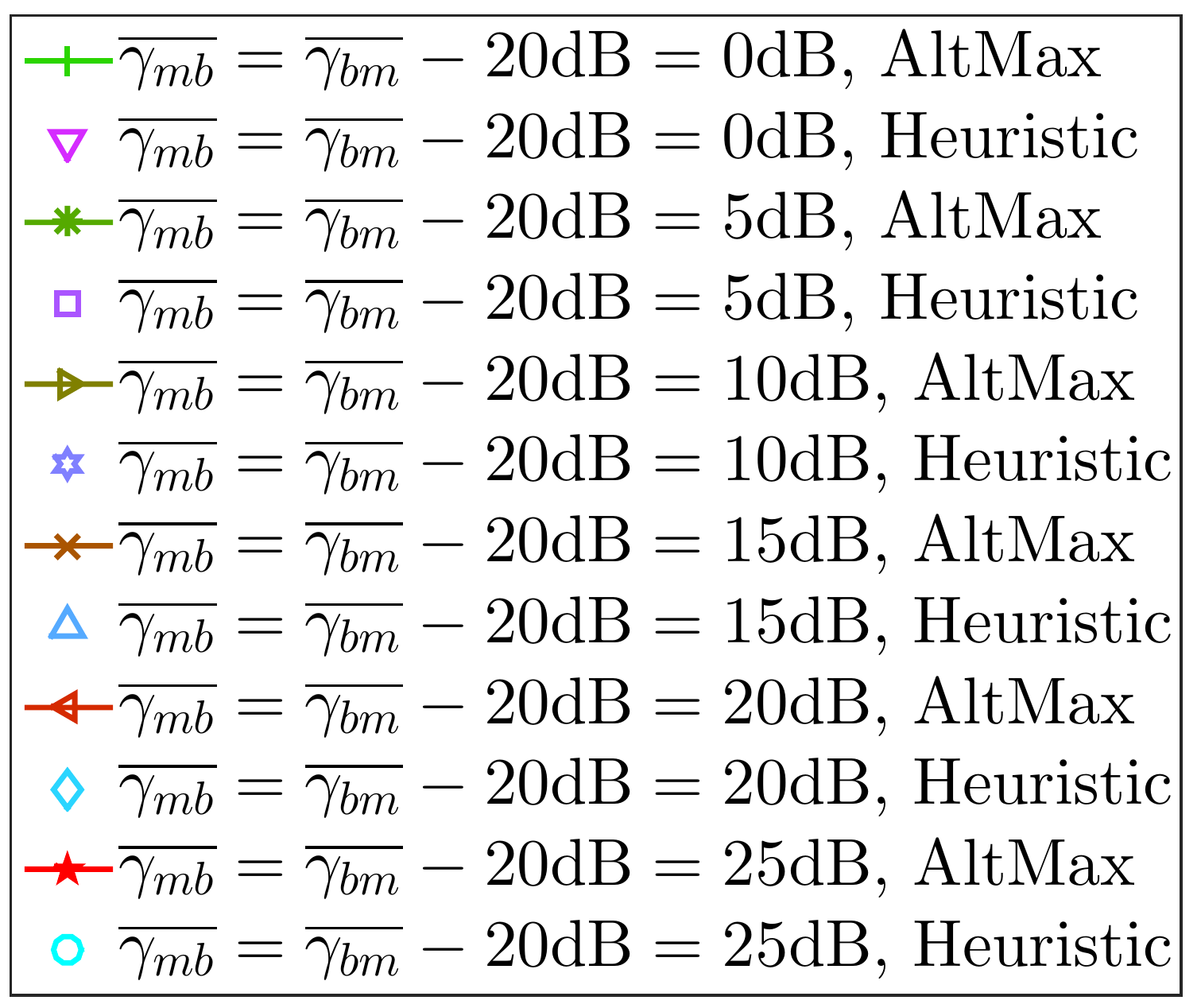}}\vspace{-10pt}
\caption{Rate improvements for $\overline{\gamma_{bb, k}}$ and $\overline{\gamma_{mm, k}}$ from Fig.~\ref{fig:cancellation-profiles}. The leftmost column of graphs corresponds to $\overline{\gamma_{mm, k}}$  from  Fig.~\ref{fig:cancellation-profiles}\protect\subref{fig:gamma_mm_conv}, the middle column corresponds to $\overline{\gamma_{mm, k}}$ from  Fig.~\ref{fig:cancellation-profiles}\protect\subref{fig:gamma_mm_FDE_1}, and the rightmost column corresponds to $\overline{\gamma_{mm, k}}$ from Fig.~\ref{fig:cancellation-profiles}\protect\subref{fig:gamma_mm_FDE_2}. $\overline{\gamma_{bb, k}}$ is selected according to Fig.~\ref{fig:cancellation-profiles}\protect\subref{fig:gamma_bb}. When rate improvements are at least $1.4\times$, the heuristic performs similar to or better than the alternating maximization.}\vspace{-10pt}
\label{fig:gpa-heuristic-comparison} 
\end{figure*}

\subsection{A Simple Power Allocation Heuristic}\label{section:heuristic}

Even though the algorithm described in the previous section will lead to the optimal or a near-optimal TDFD capacity region in many cases of interest, it may not be suitable for a real-time implementation.  This motivates us to develop a simple heuristic that performs well in most cases and is based on the observations we made while implementing the algorithms described in previous sections.

The intuition for the heuristic is that around the points $(0, \overline{r_m})$ and $(\overline{r_b}, 0)$, one of the two rates is very low, and the power allocation at the station with the high rate behaves as the optimal HD power allocation. When the SNR on each channel and at both stations is high compared to the XINR, the power allocation around the point $(s_b, s_m)$ has the shape of the power allocation in the high SINR approximation\footnote{See \cite{full-duplex-sigmetrics} for the high SINR approximation power allocation.}. When the SNR compared to the XINR is high on some channels, but not high on the other channels, then it may be better to use some of the channels with low SNR as HD. For practical implementations of compact FD transceivers, the channels with the higher XINR typically appear closer to the edges of the frequency band. The pseudocode of the heuristic for the case $r_b^* \leq s_b$ is provided in Algorithm \ref{algo:pa-heuristic} (\textsc{PA-Heuristic}) in the appendix. The pseudocode for the case $r_b^* > s_b$ is analogous to the $r_b^* \leq s_b$ case and is omitted. Here, $(s_b, s_m)$ is obtained as the rate pair that maximizes the sum rate under the high SINR approximation, as in \cite{full-duplex-sigmetrics}.

For the FD capacity region determined by the heuristic, we further run a convex hull computation algorithm \cite{cormen2009introduction} to determine the FD + TDD capacity region. The total running time is $O(NK^2 \log(\sum_k \overline{\gamma_{bb, k}}/(K\varepsilon)))$ for computing $N$ points on the FD capacity region boundary by using \textsc{PA-Heuristic}, plus additional $O(N)$ for convexifying the capacity region. Note that in practice $K$ and $N$ are at the order of 100, which makes this algorithm real-time.

The comparison of the rate improvement for FD + TDD operation determined by \textsc{PA-Heuristic} and the alternating maximization algorithm described in the previous section is shown in {Fig.~\ref{fig:gpa-heuristic-comparison}}. The results shown in {Fig.~\ref{fig:gpa-heuristic-comparison}} were obtained assuming that $\overline{\gamma_{bm, 1}} =\overline{\gamma_{bm, K}} ...\equiv K\overline{\gamma_{bm}}$, $\overline{\gamma_{mb, 1}}=...=\overline{\gamma_{mb, K}}\equiv K\overline{\gamma_{mb}}$, and $\overline{\gamma_{mm, k}}, \overline{\gamma_{bb, k}}$ from Fig.~\ref{fig:cancellation-profiles}.  
The alternating maximization algorithm can provide an optimal solution only when conditions (\ref{eq:C1}) and (\ref{eq:C2}) are non-restrictive, i.e., when $\overline{\gamma_{bm, k}}\geq \overline{\gamma_{bb, k}}(1+\overline{\gamma_{mm, k}})$ and $\overline{\gamma_{mb, k}}\geq \overline{\gamma_{mm, k}}(1+\overline{\gamma_{bb, k}})$, $\forall k$. For $\overline{\gamma_{bb, k}}$ from Fig.~\ref{fig:cancellation-profiles}\subref{fig:gamma_bb} and $\overline{\gamma_{mm, k}}$ from Fig.~\ref{fig:cancellation-profiles}\subref{fig:gamma_mm_conv}, \subref{fig:gamma_mm_FDE_1}, and \subref{fig:gamma_mm_FDE_2}, (\ref{eq:C1}) and (\ref{eq:C2}) are non-restrictive when (i) $\overline{\gamma_{bm}}\geq 39.1$dB, $\overline{\gamma_{mb}}\geq 39.2$dB, (ii) $\overline{\gamma_{bm}}\geq 32.8$dB, $\overline{\gamma_{mb}}\geq 32.3$dB, and (iii) $\overline{\gamma_{bm}}\geq 25.3$dB, $\overline{\gamma_{mb}}\geq 25.3$dB, respectively.

As {Fig.~\ref{fig:gpa-heuristic-comparison}}\subref{fig:cmph_rate_improve_conv}--\subref{fig:cmph_rate_improve_FDE_2} shows, when (\ref{eq:C1}) and (\ref{eq:C2}) are non-restrictive, the alternating maximization algorithm and the \textsc{PA-Heuristic} provide almost identical results (minor differences are mainly due to a numerical error in computation). Moreover, when the smallest upper bound on $\alpha_{b, k}$'s and $\alpha_{m, k}$'s imposed by (\ref{eq:C1}) and (\ref{eq:C2}) is no higher than $5/K$, i.e., for (i) $\overline{\gamma_{bm}}\geq 28.9$dB, $\overline{\gamma_{mb}}\geq 29.7$dB, (ii) $\overline{\gamma_{bm}}\geq 22.6$dB, $\overline{\gamma_{mb}}\geq 23.4$dB, and (iii) $\overline{\gamma_{bm}}\geq 15.2$dB, $\overline{\gamma_{mb}}\geq 15.9$dB, for $\overline{\gamma_{mm, k}}$ from Fig.~\ref{fig:cancellation-profiles}\subref{fig:gamma_mm_conv}, \subref{fig:gamma_mm_FDE_1}, and \subref{fig:gamma_mm_FDE_2}, respectively, the differences between the alternating maximization algorithm and the \textsc{PA-Heuristic} are still negligible (Fig.~\ref{fig:gpa-heuristic-comparison}\subref{fig:cmph_rate_improve_conv}--\subref{fig:cmp_asym_rate_improve_FDE_2}).

When (\ref{eq:C1}) and (\ref{eq:C2}) are restrictive (Fig.~\ref{fig:gpa-heuristic-comparison}\subref{fig:cmp_rate_improve_conv}--\subref{fig:cmp_asym_rate_improve_FDE_2}), all following cases may happen: (i) the alternating maximization outperforms the \textsc{PA-Heuristic}, (ii) the \textsc{PA-Heuristic} outperforms the alternating maximization, and (iii) both have similar performance. Case (i) typically happens when most channels are allocated as HD by the alternating maximization, with some of them allocated to the BS, and others to the MS. In this case the rate improvements predominantly come from using higher total irradiated power compared to TDD, rather than from using full-duplex. Note that the \textsc{PA-Heuristic} allows the HD channels to be assigned either only to the BS or only to the MS, but not both. Case (ii) happens when (\ref{eq:C1}) and (\ref{eq:C2}) restrict the part of the feasible region where high rate improvements are possible; namely, when either both $\overline{\gamma_{bm}}$ and $\overline{\gamma_{mb}}$ are low, or when $\overline{\gamma_{bm}}$ is much ($20$dB) higher than $\overline{\gamma_{mb}}$.

\section{Conclusion}\label{sec:conclusion}
We presented a theoretical study of the capacity region of FD in both the single and multi-channel cases. We developed algorithms that not only allow characterizing the region but can also be used for asymmetrical rate allocation. We numerically demonstrated the gains from FD. 

While significant attention has been given to resource allocation in HD OFDM networks (e.g., \cite{HSAB2009} and references therein), as we demonstrated, the special characteristics of FD pose many new challenges. In particular, the design of MAC protocols that support the co-existence of HD and FD users while providing fairness is an open problem. Moreover, there is a need for experimental evaluation of scheduling, power control, and channel allocation algorithms tailored for the special characteristics of FD. 

\section{Acknowledgements}
This work was supported in part by the NSF grant ECCS-1547406, the Qualcomm Innovation Fellowship, and the People Programme (Marie Curie Actions) of the European Union's Seventh Framework Programme (FP7/2007-2013) under REA grant agreement n${^{\text{o}}} $[PIIF-GA-2013-629740].11. We thank Jin Zhou and Harish Krishnaswamy for providing us with the cancellation data from \cite{Zhou_WBSIC_ISSCC15}.

\bibliographystyle{abbrv}
\bibliography{references_FD}
\newpage
\appendix{}
\vspace{-10pt}\begin{algorithm}
\caption{\textsc{PA-Heuristic}($K, r_b^*$)}
\begin{algorithmic}[1]
\State Input: $\{\overline{\gamma_{bm, k}}, \overline{\gamma_{mb, k}}, \overline{\gamma_{mm, k}}, \overline{\gamma_{bb, k}}\}$
\State $\{\alpha_{b, k}^L\} = \arg\{\overline{r_b}\}$, $\{\alpha_{m, k}^L\} = \arg\{\overline{r_m}\}$
\State $\{\alpha_{b, k}^H\}, \{\alpha_{b, k}^H\} = \arg\{s_b + s_m\}$
\State $f_1 = \mathrm{true}$, $f_2 = \mathrm{true}$, $k=1$
\If{$r_b^*\leq s_b$}
\State $j=0$, $\{\alpha_{b, k}^1\}=\{\alpha_{b, k}^L\}$, $\{\alpha_{b, k}^2\}=\{\alpha_{b, k}^H\}$
\While{$j \leq K/2$ \textbf{and} ($f_1$ \textbf{or} $f_2$)}
\State $\{\overline{\gamma_{bm, k}^1}, \overline{\gamma_{mb, k}^1}, \overline{\gamma_{mm, k}^1}, \overline{\gamma_{bb, k}^1}\}$ = \textsc{Scale}($\{\alpha_{b, k}^1, \alpha_{m, k}^L\}$)
\State $r_m^1$ = \textsc{MCFind-}$r_m$($r_b^*, K$) for input above
\State $\{\overline{\gamma_{bm, k}^2}, \overline{\gamma_{mb, k}^2}, \overline{\gamma_{mm, k}^2}, \overline{\gamma_{bb, k}^2}\}$ = \textsc{Scale}($\{\alpha_{b, k}^2, \alpha_{m, k}^H\}$)
\State $r_m^2$ = \textsc{MCFind-}$r_m$($r_b^*, K$) for input above
\If{$j = 0$}
\State $r_m^* = \max\{r_m^1, r_m^2\}$
\Else
\State $\{\alpha_{b, k}^t\} = \{\alpha_{b, k}^1\}/(\sum_k \alpha_{b, k}^1)$, $\alpha_{b, j}^t = 0$
\State \textbf{if }{\parbox[t]{\dimexpr\linewidth-\algorithmicindent}{$r_b(\{\alpha_{b, k}^t\}, \{\alpha_{m, k}^L\}) \geq r_b^*$ \textbf{and} \textsc{MCFind-}$r_m$($r_b^*, K$)$ > r_m^*$ with input = $\{\overline{\gamma_{bm, k}^1}, \overline{\gamma_{mb, k}^1}, \overline{\gamma_{mm, k}^1}, \overline{\gamma_{bb, k}^1}\}$}\strut}
\State \textbf{then}
\State $\>\>\>\>$ $r_m^* = $\textsc{MCFind-}$r_m$($r_b^*, K$), $\{\alpha_{b, k}^1\} = \{\alpha_{b, k}^t\}$
\State \textbf{else}
 \;$f_1 = \mathrm{false}$
\State $\{\alpha_{b, k}^t\} = \{\alpha_{b, k}^2\}/(\sum_k \alpha_{b, k}^2)$, $\alpha_{b, K-j+1}^t = 0$
\State \textbf{if }\parbox[t]{\dimexpr\linewidth-\algorithmicindent}{$r_b(\{\alpha_{b, k}^t\}, \{\alpha_{m, k}^L\}) \geq r_b^*$ \textbf{and} \textsc{MCFind-}$r_m$($r_b^*, K$)$ > r_m^*$ with input = $\{\overline{\gamma_{bm, k}^2}, \overline{\gamma_{mb, k}^2}, \overline{\gamma_{mm, k}^2}, \overline{\gamma_{bb, k}^2}\}$}
\State \textbf{then}
\State $\>\>\>\>$ $r_m^* = $\textsc{MCFind-}$r_m$($r_b^*, K$), $\{\alpha_{b, k}^2\} = \{\alpha_{b, k}^t\}$
\State \textbf{else}
 \; $f_2 = \mathrm{false}$
\EndIf
\EndWhile
\Else
\State $\dots$
\EndIf
\end{algorithmic}\label{algo:pa-heuristic}
\end{algorithm}\vspace{-18pt}
\begin{algorithm}
\caption{\textsc{Scale}($\{\alpha_{b, k}, \alpha_{m, k}\}$)}
\begin{algorithmic}[1]
\State Input: $\{\overline{\gamma_{bm, k}}, \overline{\gamma_{mb, k}}, \overline{\gamma_{mm, k}}, \overline{\gamma_{bb, k}}\}$
\For{$k = 1$ \textbf{to} $K$} 
\State $\overline{\gamma_{bm, k}}^s = K\alpha_{b, k}\overline{\gamma_{bm, k}}$, $\overline{\gamma_{mb, k}}^s = K\alpha_{m, k}\overline{\gamma_{mb, k}}$ 
\State $\overline{\gamma_{mm, k}}^s = K\alpha_{m, k}\overline{\gamma_{mm, k}}$, $\overline{\gamma_{bb, k}}^s = K\alpha_{b, k}\overline{\gamma_{bb, k}}$
\EndFor
\Return  $\{\overline{\gamma_{bm, k}}^s, \overline{\gamma_{mb, k}}^s, \overline{\gamma_{mm, k}}^s, \overline{\gamma_{bb, k}}^s\}$
\end{algorithmic}
\end{algorithm}
\end{document}